\newcommand{\newcustomtheorem}[2]{%
	\newenvironment{#1}[1]
	{%
		\renewcommand\customgenericname{#2}%
		\renewcommand\theinnercustomgeneric{##1}%
		\innercustomgeneric
	}
	{\endinnercustomgeneric}
}
\providecommand{\customgenericname}{}
\newtheorem{theorem}{Theorem}[section] 
\newtheorem*{theorem*}{Theorem}
\newtheorem{proposition}{Proposition}[section]
\newtheorem*{proposition*}{Proposition}
\newtheorem{corollary}{Corollary}[section]
\newtheorem*{corollary*}{Corollary}
\newtheorem{lemma}{Lemma}[section]
\newtheorem*{lemma*}{Lemma*}
\newtheorem{remark}{Remark}[section]
\newtheorem*{remark*}{Remark}
\newtheorem{definition}{Definition}[section]
\newtheorem*{definition*}{Definition}
\newtheorem*{conjecture*}{Conjecture}
\newtheorem*{question*}{Question}
\newcommand{\Id}{\textnormal{Id}}
\newcommand{\og}{\overline{g}}
\newcommand{\onabla}{\overline{\nabla}}
\newcommand{\reals}{\mathbb{R}}
\newcommand{\stkout}[1]{\ifmmode\text{\sout{\ensuremath{#1}}}\else\sout{#1}\fi}
\newcommand{\Mg}{\big(\mcalm, g_M\big)}
\newcommand{\twosphere}{\mathbb{S}^2}
\newcommand{\roundmetric}{g_{\twosphere}}
\newcommand{\bg}{\boldsymbol{g}}
\newcommand{\bog}{\boldsymbol{\overline{g}}}
\newcommand{\qg}{\widetilde{g}_M}
\newcommand{\sg}{\slashed{g}_M}
\newcommand{\qmsm}{Q\astrosun S}
\newcommand{\ric}{\textnormal{Ric}}
\newcommand{\eh}{\mcalH^+}
\newcommand{\nullinf}{\mcalI^+}
\newcommand{\taus}{\tau^\star}
\newcommand{\depsilon}{\mathring{\epsilon}}
\newcommand{\smfun}{\Gamma(\mcalm)}
\newcommand{\smfunrad}{\Gamma_{{l\geq 2}}(\mcalm)}
\newcommand{\smonecov}{\Gamma(T^*\mcalm)}
\newcommand{\smonecovQ}{\Gamma(T^*Q)}
\newcommand{\smonecovS}{\Gamma(T^*S)}
\newcommand{\smonecon}{\Gamma(T\mcalm)}
\newcommand{\smoneconQ}{\Gamma(TQ)}
\newcommand{\smoneconS}{\Gamma(TS)}
\newcommand{\smtwocov}{\Gamma(T^2T^*\mcalm)}
\newcommand{\smsymtwocov}{\Gamma(S^2T^*\mcalm)}
\newcommand{\smsymtwocovQ}{\Gamma(S^2T^*Q)}
\newcommand{\smsymtwocovS}{\Gamma(S^2T^*S)}
\newcommand{\smsymtratwocovS}{\Gamma(\widehat{S}^2T^*S)}
\newcommand{\smthreecov}{\Gamma(T^3T^*\mcalm)}
\newcommand{\smncovQ}{\Gamma(T^nT^*Q)}
\newcommand{\smsymncovQ}{\Gamma(S^nT^*Q)}
\newcommand{\smncovS}{\Gamma(T^nT^*S)}
\newcommand{\smsymncovS}{\Gamma(S^nT^*S)}
\newcommand{\smcovS}[1]{\Gamma(T^{#1}T^*S)}
\newcommand{\smsymcovS}[1]{\Gamma(S^{#1}T^*S)}
\newcommand{\smsymtfcovS}[1]{\Gamma(\widehat{S}^{#1}T^*S)}
\newcommand{\smcovQ}[1]{\Gamma(T^{#1}T^*Q)}
\newcommand{\smsymcovQ}[1]{\Gamma(S^{#1}T^*Q)}
\newcommand{\smsymtfcovQ}[1]{\Gamma(\widehat{S}^{#1}T^*Q)}
\newcommand{\smqmsm}{\Gamma(T^*Q)\astrosun\Gamma(T^*S)}
\newcommand{\fgaumap}{f_{\textnormal{lin}}}
\newcommand{\frw}{f_{\textnormal{RW}}}
\newcommand{\fkerr}{f_{\textnormal{Kerr}}}
\newcommand{\fmrw}{f_{\textnormal{mRW}}}
\newcommand{\exd}{\textnormal{d}}
\newcommand{\pt}{\partial_{t^*}}
\newcommand{\pr}{\partial_r}
\newcommand{\px}{\partial_x}
\newcommand{\qtr}{\textnormal{tr}_{\qg}}
\newcommand{\qhd}{\widetilde{\star}}
\newcommand{\qsharp}{\widetilde{\sharp}}
\newcommand{\qexd}{\widetilde{\textnormal{d}}}
\newcommand{\qn}{\widetilde{\nabla}}
\newcommand{\qdiv}{\widetilde{\delta}}
\newcommand{\qbox}{\widetilde{\Box}}
\newcommand{\qepsilon}{\widetilde{\epsilon}}
\newcommand{\str}{\slashed{\textnormal{tr}}}
\newcommand{\shd}{\slashed{\star}}
\newcommand{\otimeshat}{\widehat{{\otimes}}}
\newcommand{\astrosunhat}{\widehat{{\astrosun}}}
\newcommand{\even}[1]{{#1}_{\textnormal{e}}}
\newcommand{\odd}[1]{{#1}_{\textnormal{o}}}
\newcommand{\sepsilon}{\slashed{\epsilon}}
\newcommand{\sdiv}{\slashed{\textnormal{div}}}
\newcommand{\scurl}{\slashed{\textnormal{curl}}}
\newcommand{\sexd}{\slashed{\textnormal{d}}}
\newcommand{\sn}{\slashed{\nabla}}
\newcommand{\sdso}{\slashed{\mathcal{D}}_1^{\star}}
\newcommand{\sdst}{\slashed{\mathcal{D}}_2^{\star}}
\newcommand{\sdo}{\slashed{\mathcal{D}}_1}
\newcommand{\sdt}{\slashed{\mathcal{D}}_2}
\newcommand{\slap}{\slashed{\Delta}}
\newcommand{\zslapinv}[1]{\slap^{-{#1}}_{\mathfrak{Z}}}
\newcommand{\qX}{\widetilde{X}}
\newcommand{\sX}{\slashed{X}}
\newcommand{\qP}{\widetilde{P}}
\newcommand{\qtau}{\widetilde{\tau}}
\newcommand{\qhattau}{\widehat{\widetilde{\tau}}}
\newcommand{\qtrtau}{\qtr\qtau}
\newcommand{\mtau}{\stkout{\tau}}
\newcommand{\stau}{\slashed{\tau}}
\newcommand{\shattau}{\widehat{\slashed{\tau}}}
\newcommand{\strtau}{\str\stau}
\newcommand{\sxi}{\slashed{\xi}}
\newcommand{\shatxi}{\widehat{\slashed{\xi}}}
\newcommand{\ssigma}{\slashed{\sigma}}
\newcommand{\srho}{\slashed{\rho}}
\newcommand{\qq}{\widetilde{q}}
\newcommand{\mdpart}[1]{{#1}_{l=0,1}}
\newcommand{\rpart}[1]{{#1}_{l\geq 2}}
\newcommand{\smcA}[1]{\slashed{\mathcal{A}}_f^{[#1]}}
\newcommand{\vmcA}[1]{\slashed{\mathcal{A}}_\xi^{[#1]}}
\newcommand{\tmcA}[1]{\slashed{\mathcal{A}}_\theta^{[#1]}}
\newcommand{\lin}[1]{\accentset{\scalebox{.6}{\mbox{\tiny\textnormal{(1)}}}}{#1}}
\newcommand{\Philin}{\lin{\Phi}}
\newcommand{\Psilin}{\lin{\Psi}}
\newcommand{\plin}{\lin{p}}
\newcommand{\qlin}{\lin{q}}
\newcommand{\quhatlin}{\lin{\underaccent{\check}q}}
\newcommand{\puhatlin}{\lin{\underaccent{\check}p}}
\newcommand{\trglin}{\textnormal{tr}_{{g_M}}\glin}
\newcommand{\qfnorwlin}{\lin{{\tilde{f}}}_{\diamond}}
\newcommand{\sfnorwlin}{\lin{\slashed{f}}_{\diamond}}
\newcommand{\Plin}{\lin{P}}
\newcommand{\Puhatlin}{\lin{\underaccent{\check}P}}
\newcommand{\flin}{\lin{f}}
\newcommand{\qglin}{\lin{\widetilde{g}}}
\newcommand{\qtrglin}{\qtr{\qglin}}
\newcommand{\qhatglin}{\lin{\widehat{\widetilde{g}}}}
\newcommand{\mglin}{\text{\sout{\ensuremath{\lin{g}}}}}
\newcommand{\strglin}{\str{\lin{\slashed{g}}}}
\newcommand{\shatglin}{\lin{\widehat{\slashed{g}}}}
\newcommand{\qplin}{\lin{\widetilde{p}}}
\newcommand{\slplin}{\lin{\slashed{p}}}
\newcommand{\glin}{\lin{g}}
\newcommand{\mfa}{\mathfrak{a}}
\newcommand{\mfm}{\mathfrak{m}}
\newcommand{\gkerr}{\glin_{\textnormal{Kerr}}}
\DeclareDocumentCommand\norm{m m o o o} {{\mathbb{#1}}_{\IfNoValueF{#4}{\mathsmaller{#4}}}^{\IfNoValueF{#3}{#3}}[#2]{\IfNoValueF{#5}{({#5})}}}
\newcommand{\inttwosphere}[3]{\int_{\twosphere_{{\taus_{#1}}, {#2}}}{#3}\,\sepsilon}
\newcommand{\mcalH}{\mathcal{H}}
\newcommand{\mcalI}{\mathcal{I}}
\newcommand{\mcalL}{\mathcal{L}}
\newcommand{\mcalN}{\mathcal{N}}
\newcommand{\mcalm}{\mathcal{M}}
\newcommand{\mfZ}{\slashed{\mathfrak{Z}}}
\newcommand{\uV}{{V}^{\perp}}
\newcommand{\oV}{\overline{V}}
\newcommand{\smfunsig}{\Gamma(\Sigma_0)}
\newcommand{\smonecovsig}{\Gamma(T^*\Sigma_0)}
\newcommand{\gammalin}{\lin{\gamma}}
\newcommand{\qexdL}{\qexd^{\mathsmaller{\mathcal{I}}}}
\newcommand{\qgamma}{\widetilde{\gamma}}
\newcommand{\mgamma}{\stkout{\gamma}}
\newcommand{\sgamma}{\slashed{\gamma}}
\newcommand{\seedspace}{\mathcal{D}}
\newcommand{\solnspace}{\mathcal{S}}
\newcommand{\solnmap}{\mathscr{S}}
\newcommand{\radseedspace}{\mathcal{D}_{\mathcal{R}}}
\newcommand{\iotalin}{\lin{\iota}}
\newcommand{\qiotalin}{\lin{\widetilde{\iota}}}
\newcommand{\qtriotalin}{\qtr\qiotalin}
\newcommand{\qhatiotalin}{\lin{\widehat{\widetilde{\iota}}}}
\newcommand{\miotalin}{\stkout{\lin{\iota}}}
\newcommand{\siotalin}{\lin{\slashed{\iota}}}
\newcommand{\striotalin}{\str\siotalin}
\newcommand{\qastrosunhat}{\widehat{\widetilde{\astrosun}}}
\newcommand{\sastrosunhat}{\widehat{\slashed{\astrosun}}}
\newcommand{\zetalin}{\lin{\zeta}}
\newcommand{\varphilin}{\lin{\varphi}}
\newcommand{\philin}{\lin{\phi}}
\newcommand{\psilin}{\lin{\psi}}
\newcommand{\Vlin}{\lin{V}}
\newcommand{\smfunsigrad}{\Gamma_{l\geq 2}(\Sigma_0)}
\newcommand{\smonesig}{\Gamma(T^*\Sigma_0)}
\newcommand{\gidnlin}{\lin{\underline{g}}}
\newcommand{\normtwosphere}[3]{||{#1}||^2_{{\twosphere_{{\taus_{#2}}, {#3}}}}}
\newcommand{\geomtwosphere}[2]{\twosphere_{{\taus_{#1}}, {#2}}}
\newcommand{\opmu}{(1+\mu)}
\newcommand{\ommu}{(1-\mu)}
\newcommand{\qm}{\mcalq}
\newcommand{\mcalq}{\mathcal{Q}}
\newcommand{\pa}{\partial_\alpha}
\newcommand{\pb}{\partial_\beta}
\newcommand{\pg}{\partial_\gamma}
\newcommand{\pA}{\partial_A}
\newcommand{\pB}{\partial_B}
\newcommand{\pC}{\partial_C}
\newcommand{\qomega}{\widetilde{\omega}}
\newcommand{\somega}{\slashed{\omega}}
\newcommand{\tr}{\textnormal{tr}_{g_M}}
\newcommand{\qhatgamma}{\widehat{\widetilde{\gamma}}}
\newcommand{\qtrgamma}{\qtr\qgamma}
\newcommand{\shatgamma}{\widehat{\slashed{\gamma}}}
\newcommand{\strgamma}{\str\sgamma}
\newcommand{\qzeta}{\widetilde{\zeta}}
\newcommand{\szeta}{\slashed{\zeta}}
\newcommand{\qV}{\widetilde{V}}
\newcommand{\sV}{\slashed{V}}
\newcommand{\ssharp}{\slashed{\sharp}}
\newcommand{\g}{g_M}
\newcommand{\mff}{\mathfrak{f}}
\newcommand{\mfg}{\mathfrak{g}}
\newcommand{\momega}{\text{\sout{\ensuremath{\omega}}}}
\newcommand{\mcala}{\slashed{\mathcal{A}}}
\title{The linear stability of the Schwarzschild solution to gravitational perturbations in the generalised wave gauge}
\author{Thomas William Johnson}
\affil{University of Cambridge\footnote{The majority of this work was carried out whilst the author was at Imperial College London.}}
\begin{document}

\maketitle

\begin{abstract}
We prove in this paper that the Schwarzschild family of black holes are linearly stable as a family of solutions to the system of equations that result from expressing the Einstein vacuum equations in a generalised wave gauge. In particular we improve on our recent work \cite{Johnsonlinstabschwarzold} by modifying the generalised wave gauge employed therein so as to establish asymptotic flatness of the associated linearised system. The result thus complements the seminal work \cite{DHRlinstabschwarz} of Dafermos--Holzegel--Rodnianski in a similar vein as to how the work \cite{LRstabmink} of Lindblad--Rodnianski complemented that of Christodoulou--Klainerman \cite{CKstabmink} in establishing the nonlinear stability of the Minkowski space. 
\end{abstract}

\tableofcontents

\section{Introduction}\label{Introduction}

The Schwarzschild exterior family of spacetimes $\Mg$ with $M>0$ comprise a 1-parameter family of Lorentizian manifolds with boundary that solve the Einstein vacuum equations of general relativity,
\begin{align}\label{introEE}
\text{Ric}[g]=0.
\end{align}
They each describe the region of spacetime exterior to the black hole region of a member of the Schwarzschild family \cite{Schwarzschild} of stationary black hole spacetimes with mass parameter $M$ -- the boundary of $\Mg$ then corresponds to the event horizon. The stability of this Schwarzschild exterior family as a family of solutions to \eqref{introEE} is thus fundamental to the physical significance of black holes:
\begin{question*}
Is $\Mg$ stable as a family of solutions to \eqref{introEE}?
\end{question*}

Although originally geometrically obscured by the diffeomorphism invariance of the theory, classical work \cite{CBlocwellpos} of Choquet-Bruhat showed that the correct way to pose this question is in the context of the associated hyperbolic initial value formulation of \eqref{introEE}. This question is further complicated however by the fact that the family $\Mg$ actually sits as a subfamily within the more elaborate 2-parameter family of stationary Kerr \cite{Kerr} exterior spacetimes $\big(\mcalm, g_{M,a}\big)$ with $|a|\leq M$. The stability of the Schwarzschild exterior family thus fits more correctly within the conjectured stability of the subextremal\footnote{The extremal case is more subtle -- see \cite{Aretakisextremal}.}  ($|a|<M$) Kerr exterior family:
\begin{conjecture*}
	The subextremal Kerr exterior family $\big(\mcalm, g_{M,a}\big)$ is stable as a family of solutions to \eqref{introEE}.
\end{conjecture*}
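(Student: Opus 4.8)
The plan is to follow the philosophy of the present paper and attack the conjecture directly at the level of the Einstein vacuum equations \eqref{introEE} expressed in a carefully chosen, Kerr-adapted generalised wave gauge, rather than through a geometric double-null decomposition; this would be to the programs of Dafermos--Holzegel--Rodnianski(--Taylor) and Klainerman--Szeftel what \cite{LRstabmink} was to \cite{DHRlinstabschwarz}'s Minkowskian analogue. First I would fix the smooth background family $\big(\mcalm, g_{M,a}\big)$ of subextremal Kerr exteriors and pose the nonlinear problem as a small-data Cauchy problem on an asymptotically flat hypersurface crossing the future event horizon, imposing a generalised wave coordinate condition $\Box_g x^\mu = F^\mu(g)$ with the gauge source functions $F^\mu$ chosen --- as in this paper's improvement over \cite{Johnsonlinstabschwarzold} --- so that the linearisation is manifestly asymptotically flat and so that the stationary/growing gauge modes and the linearised-Kerr directions are quotiented out of the dynamical sector. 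The reduced system is then a quasilinear system of wave equations whose nonlinearity is quadratic and higher in the first derivatives of the perturbation, and one hopes it satisfies a version of the weak null condition in the sense of Lindblad--Rodnianski.

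The core input is a robust decay theory for the linearised operator about $g_{M,a}$ in this gauge: boundedness together with an $r^p$-weighted (Dafermos--Rodnianski) hierarchy of polynomial decay estimates. For $a=0$ this is exactly what the present paper supplies; for $a\neq 0$ one must combine (i) the Teukolsky formalism --- Chandrasekhar-type transformations from the spin $\pm 2$ Teukolsky scalars to Regge--Wheeler-type quantities, together with mode stability (Whiting; Shlapentokh-Rothman--Teixeira da Costa) and the physical-space multiplier and $r^p$ estimates of Dafermos--Holzegel--Rodnianski --- with (ii) a transport-and-elliptic reconstruction of the remaining metric components and of the gauge from the gauge-invariant quantities. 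The genuinely new analytic work is to transport this control, which is naturally formulated geometrically, into the fixed generalised wave gauge: one reconstructs the metric perturbation by integrating the gauge condition and the linearised Bianchi and constraint identities along a suitable foliation, tracking the loss of $r$-weights carefully and proving that no superpolynomial growth is introduced.

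With the linear theory in hand the nonlinear argument is a continuity/bootstrap in a norm built from the $r^p$-weighted energies and the commuted quantities $\{T,\ \Omega_{ij},\ \text{the red-shift vector field},\ \dots\}$, feeding the quadratic errors back through Duhamel and the weak-null structure, while simultaneously allowing the final parameters $(M_\infty,a_\infty)$ to be determined dynamically: one shows the solution modulates toward a nearby member of the Kerr family, which is precisely why one carries the full two-parameter family and imposes the mode-quotient in the gauge. I expect the main obstacles, in decreasing order of severity, to be: (1) trapping together with the degeneration of the relevant elliptic structures at the horizon in the rotating case, which forces genuinely Kerr-adapted weighted and vector-field-multiplier estimates rather than perturbations of the Schwarzschild ones; (2) closing the interaction between the slowly decaying stationary sector --- the mass- and angular-momentum-modulation modes, organised by the $\mdpart{\cdot}$ versus $\rpart{\cdot}$ split --- and the radiative sector without circularity; and (3) showing that the generalised wave gauge chosen to make the linear theory asymptotically flat stays globally well-defined along the flow and does not reintroduce spurious growth. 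Lacking a complete treatment of (1)--(3) over the entire subextremal range, I would first settle $|a|/M$ sufficiently small, where all the Kerr estimates degenerate continuously onto the Schwarzschild ones established here.
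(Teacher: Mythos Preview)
The statement you are addressing is a \emph{conjecture}, not a theorem: the paper does not prove it, and indeed states explicitly that this is the conjectured nonlinear stability of the subextremal Kerr exterior family, whose precise mathematical formulation is deferred to \cite{DHRscattering}. The paper's contribution is the much more restricted Theorem~2, namely \emph{linear} stability of \emph{Schwarzschild} (the $a=0$ case) in a particular generalised wave gauge, offered as evidence toward the conjecture and as a potential ingredient in any eventual proof.

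What you have written is therefore not a proof but a sketch of a research programme. As such it is a reasonable and well-informed outline of how one might hope to approach the problem along the lines of this paper --- extend the gauge-map construction and the Regge--Wheeler/Zerilli-type decoupling to Kerr, establish linear decay via Teukolsky and Chandrasekhar transformations, then run a bootstrap exploiting the weak null structure --- but it does not constitute a proof, and you correctly identify the principal obstacles (trapping on Kerr, modulation of the Kerr parameters, global well-definedness of the gauge) as genuinely open. In particular, the step you label (1), obtaining integrated local energy decay and $r^p$ estimates for the full linearised system on Kerr in a generalised wave gauge, is not available in the literature and is not a perturbation of anything in this paper; the paper's decoupling into scalar Regge--Wheeler and Zerilli equations relies essentially on the spherical symmetry of Schwarzschild and has no known analogue for $a\neq 0$ within a wave-gauge formulation. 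Until such an analogue (or a substitute) is produced, the programme cannot close, and the conjecture remains open.
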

The precise mathematical formulation of this conjecture can be found in \cite{DHRscattering}. Note that as a consequence of their stationarity it is the exterior Kerr family itself which is posited to be stable as opposed to a single member of this family\footnote{This also agrees with the physical expectation that a nontrivial perturbation of a black hole should add both mass and angular momentum (the parameters $M$ and $a$ of the Kerr exterior family).}.

At the level of a \emph{global} statement about the Einstein vacuum equations \eqref{introEE} the above conjecture in particular demands that the maximal Cauchy development under \eqref{introEE} of smooth geometric data $(\Sigma, h, k)$ suitably close to the geometric data $\big(\Sigma_{M,a}, h_{M,a}, k_{M,a}\big)$ for a member of the subextremal Kerr exterior family possesses a complete future null infinity $\nullinf$ in addition to a non-empty future affine-complete null boundary $\eh$. Yet the only known mechanism for treating the nonlinearities in \eqref{introEE} so as to obtain global control of solutions is to exploit the dispersion provided by waves radiating towards $\nullinf$. Since moreover in 1+3 dimensions the expected rate of this dispersion is borderline it follows that one must identify a special structure in the nonlinear terms in \eqref{introEE} if this scheme is to prove suitable for resolving the conjecture.

One way of identifying this required structure is to express \eqref{introEE} relative to a \emph{generalised wave gauge}. For in this gauge the Einstein vacuum equations \eqref{introEE} reduce to a system of \emph{quasilinear wave equations}:
\begin{align}
\big(\widetilde{\Box}_{g, \og}g\big)_{\mu\nu}&=\mcalN_{\mu\nu}\big(g, \overline{\nabla}g, \onabla f(g)\big),\label{introwaveeqn}\\
g_{\mu\nu}(g^{-1})^{\xi o}\big(\Gamma^\nu_{\xi o}-\overline{\Gamma}^\nu_{\xi o}\big)&=\big(f(g)\big)_\mu\label{introwavegauge}.
\end{align}
Here $g$ and $\og$ are smooth Lorentzian metrics on a smooth manifold $\mcalL$ with $(x^\mu)$ any local coordinate system, $\mcalN$ is a nonlinear expression in its arguments and $f:\Gamma(T^2T^*\mcalL)\rightarrow \Gamma(T^*\mcalL)$ is a given map. We have also defined the wave operator $\widetilde{\Box}_{g, \og}:=(g^{-1})^{ab}\onabla_a\onabla_b$ with $\onabla$ the Levi-Civita connection of $\og$. The condition \eqref{introwavegauge} is thus equivalent to $g$ being in a generalised wave gauge \emph{with respect to $(\og, f)$} from which equation \eqref{introwaveeqn} follows after imposing \eqref{introEE} on $(\mcalL, g)$. Note that given \emph{any} Lorentzian metric $g$ on $\mcalL$ then there exists a wave-map operator $\Box_{g,\og}^f$ such that if $\phi:\mcalL\rightarrow\mcalL$ is a smooth solution to $\Box_{g,\og}^f\phi=0$ then $\phi^*g$ is in a generalised wave gauge with respect to $(\og, f)$.

Indeed the pioneering work of Lindblad--Rodnianski \cite{LRstabmink} established that the non-linearities in the coupled system \eqref{introwaveeqn}-\eqref{introwavegauge} when expressed on $\reals^4$ with $\og=\eta$ the Minkowski metric and $f=0$ satisfy a \emph{hierarchical} form of the \emph{weak null condition} -- see \cite{Keirweaknull} for the precise definition. This in principle provides sufficient structure so as for a ``small data global existence'' result to be established in a neighbourhood of geometric data for a globally hyperbolic, ``global'' solution $g_{\text{ini}}$ to the system \eqref{introwaveeqn}-\eqref{introwavegauge} purely by exploiting the dispersion embodied in a sufficiently robust statement of \emph{linear stability} for the solution $g_{\text{dyn}}$ one expects to approach in evolution. Such a scheme was successfully implemented by Lindblad--Rodnianski in \cite{LRstabmink} for the case where $g_{\text{ini}}=g_{\text{dyn}}=\og=\eta$. Here ``global existence'' of the constructed globally hyperbolic spacetimes $(\reals^4, g)$ was the statement that all causal geodesics in $(\reals^4, g)$ are future affine-complete. Since moreover the classical work \cite{CBlocwellpos} of Choquet-Bruhat established the local well-posedness of the generalised wave gauge with respect to $(\eta, 0)$ on any globally hyperbolic spacetime the result \cite{LRstabmink} also provided an additional proof of the fact that Minkowski space is nonlinearly stable as a solution to \eqref{introEE}, a statement which was originally provided by the monumental work of Christodoulou--Klainerman \cite{CKstabmink}. We note that the approach pioneered by Lindblad--Rodnianski in \cite{LRstabmink} has since been extended to various matter models \cite{LRstabmink}, \cite{SpeckstabminkEM}, \cite{LMstabminkEKG}, \cite{LTstabminkEV} and \cite{FJSstabminkEV} or different asymptotics \cite{Huneauthesis}. See also \cite{Limprovedasymptotics} and \cite{HVstabmink}. We further mention the recent mammoth work \cite{Keirweaknull} of Keir which establishes small data global existence results for a large class of nonlinear wave equations on $\reals^4$ satisfying the hierarchical weak null condition that includes the Einstein vacuum equations \eqref{introEE} expressed relative to a generalised wave gauge with respect to $(\eta, 0)$ as a special case. 

The statement of linear stability implicitly exploited by Lindblad--Rodnianski in \cite{LRstabmink} was that solutions to the linearised system behave like solutions to the free scalar wave equation $\Box_{\eta}\psi=0$. One is therefore lead to consider the following question:
\begin{question*}
	Is there a pair $(\og, f)$ for which residual pure gauge and linearised Kerr normalised solutions to the linearisation of  \eqref{introwaveeqn}-\eqref{introwavegauge} about any fixed member of the subextremal Kerr exterior family $\big(\mcalm, g_{M,a}\big)$ behave like solutions to the free scalar wave equation $\Box_{g_{M,a}}\psi=0?$
\end{question*}
Indeed the idea would then be to use the dispersion embodied in a positive answer to the above question to treat the nonlinear terms in \eqref{introwaveeqn}-\eqref{introwavegauge} in a similar way as to the treatment employed by Lindblad--Rodnianski in \cite{LRstabmink} -- that solutions to $\Box_{g_{M,a}}\psi=0$ do indeed disperse was established in the seminal \cite{DRS-Rwaveeqnkerr}. Combining this with a statement of well-posedness for the associated generalised wave gauge would then yield a positive resolution to the conjectured stability of the Kerr family. Note that the gauge-normalisation in the above merely reflects the fact that there is no unique way to express \eqref{introEE} relative to a generalised wave gauge with respect to $(\og, f)$ without first imposing extra gauge conditions -- one is therefore free to exploit this freedom in view of the fact that the stability of the Kerr family is a statement about the \emph{maximal} Cauchy development of geometric data under \eqref{introEE}. Moreover the Kerr-normalisation reflects the fact that solutions to the linearised system should in fact only disperse to a stationary linearised Kerr solution.

In this paper we will provide a positive answer to this question for any fixed member of the Schwarzschild family $\Mg$ with $(\og, f)=(g_M, \fgaumap)$ where $\fgaumap$ is an explicit \emph{gauge-map}. Note that the gauge-map $\fgaumap$ is $\reals$-linear and satisfies $\fgaumap(g_M)=0$ each of which ensure that the linearisation of the system \eqref{introwaveeqn}-\eqref{introwavegauge} around $g_M$ is well defined. The statement we are to prove is then given as follows.

\begin{theorem*}
	We consider the equations of linearised gravity around Schwarzschild, namely the system of equations that result from linearising the Einstein vacuum equations \eqref{introEE}, as expressed in a generalised wave gauge with respect to the pair $(g_M, \fgaumap)$, about a fixed member of the Schwarzschild exterior family $\Mg$. Then all solutions arising from smooth, asymptotically flat and gauge-normalised seed data prescribed on a Cauchy hypersurface $\Sigma_0$:
	\begin{enumerate}[i)]
		\item remain uniformly bounded on $\mcalm$ (up to and including the boundary $\eh$) and in fact decay at an inverse polynomial rate to a linearised Kerr solution which is itself determined from  initial data on $\Sigma_0$
		\item remain asymptotically flat on $\mcalm$.
	\end{enumerate}
\end{theorem*}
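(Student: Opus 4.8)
The plan is to follow the Dafermos--Holzegel--Rodnianski strategy of decoupling the system into gauge-invariant quantities and gauge-dependent quantities, but now working within the generalised wave gauge rather than a double-null gauge. First I would decompose the linearised metric perturbation $\glin$ into its $l=0,1$ spherical-harmonic part and its $l\geq 2$ (``radiative'') part, exploiting the $SO(3)$ symmetry of the Schwarzschild background and the Hodge decomposition of symmetric $2$-tensors on $\Sigma_0$ into scalar, vector (even/odd), and tensor pieces. The $l\geq 2$ part is where the dynamics lives: here I would show that from the linearised system \eqref{introwaveeqn}--\eqref{introwavegauge} one can extract a gauge-invariant quantity — morally the Regge--Wheeler/Zerilli scalar $\Psilin$ (the linearisation of $\fmrw$ applied to $g$), in both parities — that decouples and satisfies a Regge--Wheeler-type wave equation $\Box_{g_M}\Psilin - V\Psilin = 0$ with a potential $V>0$. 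Decay for this equation (integrated local energy decay, the $r^p$-hierarchy of Dafermos--Rodnianski, and the resulting inverse-polynomial pointwise decay on $\mcalm$ up to and including $\eh$) is then quoted from the known scalar theory, as in \cite{DRS-Rwaveeqnkerr} and the adaptation to Regge--Wheeler in \cite{DHRlinstabschwarz}.

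The second step is to recover the full metric perturbation from the gauge-invariant scalars together with the gauge data. This is where the generalised wave gauge condition \eqref{introwavegauge} — linearised, $\fgaumap(\glin) = \big(\text{the linearised wave-gauge defect}\big)$ — does the essential work: it furnishes \emph{transport equations} (along the null generators, or in $r$) for the remaining components of $\glin$, as well as elliptic relations on the spheres $\twosphere_{\taus,r}$ coming from the constraint-like structure of the gauge. Concretely, having bounded and decayed $\Psilin$, I would integrate these transport/elliptic equations to bound each Hodge component of $\glin$, being careful that the integrations do not lose the decay rate — this is the point where the precise choice of $\fgaumap$ (the improvement over \cite{Johnsonlinstabschwarzold}) is used, since the older gauge failed to close asymptotic flatness precisely because some transport equation had an inhomogeneity that did not decay fast enough towards $\nullinf$. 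The $l=0,1$ part is handled separately and essentially by hand: here the linearised equations can be integrated explicitly, and after subtracting off the residual pure gauge freedom and the linearised Kerr solution $\gkerr$ (whose mass and angular-momentum parameters $\lin\mfm$, $\lin\mfa$ are read off from conserved quantities on $\Sigma_0$), what remains decays.

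For part (ii), asymptotic flatness, I would set up weighted norms $\nisnorm{\cdot}$ adapted to null infinity and commute the Regge--Wheeler equation and the transport equations with the $r^p$-weighted vector fields and with angular derivatives, propagating the decay-in-$r$ of the seed data (which is part of the hypothesis ``asymptotically flat seed data'') forward in time. The key structural input is that the nonlinear-in-$\glin$ terms are absent (we are linearised) and the inhomogeneities in the transport equations for $\glin$, once expressed through $\Psilin$ and the lower-order gauge quantities, inherit good $r$-weights — again a consequence of the tailored $\fgaumap$.

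The main obstacle I expect is exactly this last closure: arranging that the hierarchy of transport and elliptic equations used to reconstruct $\glin$ from the gauge-invariant data neither loses decay in $t^*$ nor in $r$. In particular the ``bad'' metric components (those that are not directly controlled by a good wave equation) must be shown to satisfy transport equations whose right-hand sides, after all substitutions, decay strictly faster than the threshold needed — and verifying that the modified gauge-map $\fgaumap$ achieves this, uniformly up to and including the horizon $\eh$ and out to $\nullinf$, is the crux of the argument and the reason the paper's gauge differs from that of \cite{Johnsonlinstabschwarzold}.
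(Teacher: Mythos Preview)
Your first step---extracting gauge-invariant scalars $\Philin,\Psilin$ satisfying the Regge--Wheeler and Zerilli equations and proving decay for them via the Dafermos--Rodnianski machinery---matches the paper exactly (Theorem~\ref{thmgaugeinvariantquantintermsofRWandZ} and Theorem~\ref{mainthmdecayRWandZ}). Note though that these are two distinct equations, odd and even parity respectively, and the Zerilli equation involves the nonlocal angular operator $\zslapinv{p}$, not just a positive potential.

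Your second step, however, misidentifies the mechanism. You propose to reconstruct $\glin$ from $\Philin,\Psilin$ by integrating a hierarchy of transport equations along null directions together with elliptic estimates on spheres, in the style of \cite{DHRlinstabschwarz}, and you flag the closure of this hierarchy as ``the crux of the argument''. But the paper's point is precisely that \emph{no such hierarchy is needed}: the gauge-map $\fgaumap$ is engineered so that, after subtracting the residual pure gauge and linearised Kerr pieces, the solution is given \emph{explicitly} by a local-in-$(\Philin,\Psilin)$ formula
\[
\gidnlin=\gamma(\Psilin,\Philin),
\]
where $\gamma$ is a second-order differential operator with $|r\gamma(\Psilin,\Philin)|\lesssim|\partial^2\Psilin|+|\partial^2\Philin|$ (Proposition~\ref{propglobalproperties} and the map $\gamma$ of section~\ref{SolutionstotheequationsoflinearisedgravitygeneratedbysolutionstotheRegge--WheelerandZerilliequations}). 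Decay and asymptotic flatness for $\gidnlin$ then follow \emph{immediately} from the corresponding estimates on $\Philin,\Psilin$---there is nothing to integrate. The paper says this explicitly in the introduction when contrasting with \cite{DHRlinstabschwarz}: ``the structure of the linearised system we consider is such that, under a judicious choice of `gauge', the analogous task of estimating the `gauge-dependent' part of the solution is trivial''.

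So the role of $\fgaumap$ is not, as you suggest, to arrange good $r$-weights in transport inhomogeneities; rather, it is chosen so that the ``Regge--Wheeler gauge'' (in which the metric is an explicit expression in $\Philin,\Psilin$) is realisable within the generalised wave gauge framework, and moreover so that this expression has the correct $r$-decay towards $\nullinf$ (this last refinement is what distinguishes the present paper from \cite{Johnsonlinstabschwarzold}). The obstacle you anticipate simply does not arise.
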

In the above seed data is a collection of freely prescribed quantities on $\Sigma$ which fully parametrises the solution space -- note that full Cauchy data cannot be prescribed in view of constraints inherited from the nonlinear theory. Gauge normalisation of the seed then reflects the fact that we obtain decay only after the addition of a residual pure gauge solution to a general solution which serves to normalise the seed of this latter solution. Here residual pure gauge solutions to the equations of linearised gravity arise from pulling back $\g$ by infinitesimal diffeomorphisms preserving the generalised wave gauge with respect to $(g_M, \fgaumap)$. In addition the linearised Kerr solutions of the theorem are those that arise from linearising the subextremal Kerr exterior family $g_{M,a}$ in the parameters. We stress therefore that the conclusion of our theorem is thus consistent with the statement that the \emph{maximal} Cauchy development of suitably small perturbations of the geometric data $(\Sigma, h, k)$ induced by $\g$ on $\Sigma$ under \eqref{introwaveeqn}-\eqref{introwavegauge} with $(\og, f)=(g_M, \fgaumap)$ dynamically asymptotes to a nearby member of the subextremal Kerr exterior family. We moreover emphasize that part $i)$ of our theorem should be viewed as a boundedness statement at the level of certain natural energy fluxes which does not lose derivatives -- see section \ref{OVTheorem2} of the overview for a more comprehensive version.

We in addition stress that the choice of the gauge-map $\fgaumap$ is crucial if the above theorem is to hold. Indeed whereas the generalised wave gauge as a whole reveals a special structure in the non-linear terms the gauge-map $\fgaumap$ is designed to unlock a special structure in the linear terms. We note also our previous \cite{Johnsonlinstabschwarzold} which provided a version of the above theorem without the asymptotic flatness criterion of part $ii)$ and with a different choice of gauge-map $f$. 

To understand this special structure we briefly discuss the proof of the theorem-- for further details one should consult the overview. First we show that any smooth solution to the equations of linearised gravity arising from initial data as in the theorem statement can be decomposed into the sum of a linearised Kerr solution, a residual pure gauge solution and a symmetric 2-covariant tensor field  whose components are given by derivatives of two scalar waves $\big(\Philin, \Psilin\big)$ on $\Mg$ each of which both completely decouple and vanish for all linearised Kerr and residual pure gauge solutions. We then show using the methods developed by Dafermos--Rodnianski in \cite{DRlecturenotes}--\cite{DRredshift} for analysing the scalar wave equation $\Box_{g_M}\psi=0$ on $\mcalm$ that the asymptotic flatness of the initial data implies that the invariant pair $\big(\Philin, \Psilin\big)$ decay at an inverse polynomial rate towards the future on $\mcalm$. Moreover, the ``gauge-conditions'' on $\Sigma$ ensure that the residual pure gauge part of the solution vanishes. It is then a simple matter to show that the decay bounds on the pair $\big(\Philin, \Psilin\big)$ yield the desired bounds on the solution of the theorem statement.

Key to the above decomposition is the gauge-map $\fgaumap$. Indeed consider instead the system of equations that result from linearising \eqref{introwaveeqn}-\eqref{introwavegauge} with $(\og, f)=(g_M, 0)$ about $\Mg$. Then the best one can show is that a general solution decomposes as a linearised Kerr solution plus a solution determined by six scalar waves $\big(\Philin, \Psilin,\plin, \puhatlin, \qlin, \quhatlin\big)$ with the invariant pair $\big(\Philin, \Psilin\big)$ appearing as inhomogeneous terms in the subsystem satisfied by $\big(\plin, \puhatlin, \qlin, \quhatlin\big)$\footnote{One shows this by first extracting a solution to Maxwell's equations in a generalised Lorentz gauge from the associated linearised system. We leave it to the reader to confirm this however.}. Since by definition it is impossible to `gauge-away' the pair $\big(\Philin, \Psilin\big)$ it follows that one is forced to derive decay estimates on a coupled system of \emph{linear wave equations} -- the subsequent loss of decay this yields is then problematic for potential nonlinear applications\footnote{Perhaps more important however is the fact that one would then expect a loss of derivatives in the associated energy norms due to an `amplification' of the celebrated trapping effect on Schwarzschild when passing from the scalar wave equation to \emph{linear systems of waves}.}. In particular, solutions to this linearised system do not behave like solutions to the free scalar wave equation. More generally then, we see that the purpose of introducing the gauge-map $\fgaumap$ is to negate a certain undesirable coupling in the linearised system\footnote{Note that this coupling occurs due to the fact that the linearised operator $g_M^{ab}\nabla_a\nabla_b$, with $\nabla$ the Levi-Civita connection of $g_M$, has tensorial structure.}.

The fact that one can extract two fully decoupled scalar waves from the linearised Einstein equations around Schwarzschild has been well known in the literature since the works of Regge--Wheeler in \cite{RWrweqn} and Zerilli in \cite{Zzeqn} where it was discovered that two \emph{gauge-invariant} scalars decouple from the full system into the Regge--Wheeler and Zerilli equations respectively. Decay estimates on these quantities were subsequently derived in the independent works \cite{BSdecayrw}, \cite{Holzegelultschwarz} and \cite{Johnsondecayz}, \cite{HKWlinstabschwarz} respectively although we shall reprove them. Moreover, it is also well known \cite{RWrweqn} that given any smooth solution to the \emph{linearised Einstein equations around Schwarzschild} then one could subtract from it a pure gauge and linearised Kerr solution such that the resulting solution can be expressed in terms of these Regge--Wheeler and Zerilli quantities. It was not known however until our previous \cite{Johnsonlinstabschwarzold} that one could in fact realise this ``Regge--Wheeler gauge'' within the context of the linear theory associated to the Einstein vacuum equations expressed in a generalised wave gauge. It was also not known how to modify the gauge so as to achieve asymptotic flatness. An additional aspect of our work is therefore to both modify and identify this remarkably useful ``Regge--Wheeler gauge'' as a ``gauge'' within a formulation of linearised gravity around Schwarzschild that has direct consequences for the associated nonlinear theory\footnote{In particular note that no dispersion for the linearisation of \eqref{introEE} can help overcome the structural deficit in the nonlinear terms!}.

Of course imposing a generalised wave gauge is not the only way to study the nonlinear terms in \eqref{introEE}. Indeed the monumental work of Christodoulou--Klainerman \cite{CKstabmink} showed that expressing \eqref{introEE} relative to a \emph{double null frame} reveals a certain \emph{null structure} in the nonlinear terms which is in principle sufficiently good so as to treat via the dispersion embodied in a statement of linear stability. This latter statement on Schwarzschild was provided in the seminal paper \cite{DHRlinstabschwarz} of Dafermos--Holzegel--Rodnianski. It is interesting to note however that there decay was obtained only in a ``gauge'' normalised along a `future' hypersurface (namely the horizon) in contrast to the ``initial-data gauge'' employed here. This discrepancy in ``gauge'' is consistent with the comparison between \cite{LRstabmink} with \cite{CKstabmink} and in fact provided one of the main reasons why the proof of the former was dramatically simply than that of the latter.

It is moreover worthwhile to contrast the difficulty involved in proving our theorem with that of proving the analogous result of Dafermos--Holzegel--Rodnianski in \cite{DHRlinstabschwarz}. To this end, we briefly summarise their approach as follows. They begin by extracting a pair of \emph{gauge-invariant} quantities $\big(\Plin, \Puhatlin\big)$ from the linearised system under consideration, each of which completely decouple into a wave equation that can be analysed using the methods of \cite{DRlecturenotes}--\cite{DRredshift}. A collection of quantities $X$ are then identified which fully determine solutions to the linearised equations in the sense that decay bounds on the former translate to decay bounds on the latter. An additional feature of the collection $X$ is that it can be arranged hierarchically so that each member of $X$ is estimable from a previous member of the hierarchy by solving a \emph{transport equation}. What's more, the pair $\big(\Plin, \Puhatlin\big)$ serve as originators for this hierarchy. Dafermos--Holzegel--Rodnianski are then able to upgrade decay bounds on the pair $\big(\Plin, \Puhatlin\big)$ to decay bounds for the full system by ascending this hierarchy \emph{when supplemented with certain `gauge conditions' along the horizon} -- it is ascending this hierarchy that comprises the bulk of the work in \cite{DHRlinstabschwarz}. In contrast, the structure of the linearised system we consider is such that, under a judicious choice of `gauge', the analogous task of estimating the ``gauge-dependent'' part of the solution is trivial\footnote{In particular, it seems that one is unable to replicate this in the approach of \cite{DHRlinstabschwarz} as the underlying equations do not permit the gauge-invariant pair $\big(\Plin, \Puhatlin\big)$ to formally decouple from the collection $X$.}. 

We turn now to a brief discussion of other results related to our work. We begin by noting that the first instance of employing a generalised wave gauge as a means to analyse the Einstein equations was that of Friedrich's in \cite{Friedrichgenwavgau}. Moreover, a more recent application of said gauge can be found in the seminal work \cite{HVstabkerrDS} of Hintz--Vasy where the nonlinear stability of the subextremal Kerr--de Sitter exterior family of black holes was established for small values in the rotation parameter\footnote{Note that the exponential decay expected from the presence of a positive cosmological constant makes this problem less complex than that of the analogous problem for the subextremal Kerr exterior family.}. See also the paper \cite{HintzstabKNdS} of Hintz for the analogous result for the subextremal Kerr--Newman--de Sitter exterior family. In regards to the Schwarzschild exterior family, recent work \cite{HKWlinstabschwarz} of Hung--Keller--Wang showed that sufficiently regular solutions to the linearisation of \eqref{introEE} about $\Mg$ decay to the sum of a pure gauge and linearised Kerr solution. Here the ``gauge'' adopted was a so-called Chandrasekhar gauge which, similarly to the ``gauge'' we adopt, expresses the solution in terms of the Regge--Wheeler and Zerilli quantities. In contrast to our work however it is not clear how one is to exploit this decay statement in order to treat the nonlinear terms in \eqref{introEE}. In addition, recent work \cite{Hunglinstabschwarz} of Hung gave a partial result towards establishing a decay statement for solutions to the linearisation of \eqref{introwaveeqn}-\eqref{introwavegauge} with $(g_M, f)=(g_M, 0)$ and which arise from a restricted class of initial data on $\Mg$. Here the restriction on the data ensures that the solutions under consideration are effectively governed by solutions to the Regge--Wheeler equation\footnote{This class of ``perturbations'' are more commonly known in the literature as odd perturbations owing to how they transform under a parity transformation of Schwarzschild. Note also that we attach the precondition partial as decay was only obtained in \cite{Hunglinstabschwarz} for the $l\geq 3$ angular frequencies.}. Whilst we emphasize that the full problem is significantly more complicated in any case our previous discussion indicates the potential obstructions to employing such a gauge in the fully nonlinear problem. Finally, we collect the following references pertaining to the stablilty problem on Schwarzschild at large: \cite{Jezierskizerilli}, \cite{STschwarz}, \cite{Moncriefschwarz}, \cite{Moncrieflinstab}, \cite{GStwoplustwo}, \cite{COSschwarz}, \cite{Holzegelconslaw}, \cite{ABWzer}, \cite{KWlinstabschwarz}, \cite{MPlinstabschwarz}, \cite{Vishveshwaralinstabschwarz}, \cite{BPlinstabschwarz}, \cite{Chandrasekharlinstabschwarz}, \cite{Chandrasekharbook}, \cite{Dottilinstabschwarz}, \cite{FSteu}, \cite{FMschwarz} and \cite{Tteu}.

Let us now conclude this introduction where it began, namely the question of the nonlinear stability of the Schwarzschild exterior family. Indeed, in view of the fact that one must linearise about the solution one expects to approach, providing a positive answer to this question would require first upgrading the linear theory established here to the subextremal Kerr exterior family (with small rotation parameter $a$). However, in \cite{DHRlinstabschwarz} Dafermos--Holzegel--Rodnianski formulated a restricted nonlinear stability conjecture regarding the Schwarzschild exterior family which should in principle be sufficient to resolve by exploiting the rate of dispersion embodied in part i) of our Theorem to treat the nonlinearities present in expressing \eqref{introEE} relative to a generalised wave gauge. We note that a proof of said conjecture in the symmetry class of axially symmetric and polarised perturations has recently been announced by Klainerman--Szeftel over a series of three papers, the first of which is to be found here \cite{KSstabschwarzsymm}.

\section{Overview}\label{Overview}

\noindent We shall now give a complete overview of this paper. 

We begin in section \ref{OVTheequationsoflinearisedgravityaroundSchwarzschild} by presenting the equations of linearised gravity around Schwarzschild, namely the system of equations that result from expressing the Einstein vacuum equations in a generalised wave gauge and then linearising about a fixed member of the Schwarzschild family. Then in section \ref{OVSpecialsolutionstothelinearisedequations} we discuss special solutions to the equations of linearised gravity arising from both residual gauge freedom and the existence of an explicit family of stationary solutions in the nonlinear theory. Then in section \ref{OVDecouplingtheequationsoflinearisedgravity} we exploit certain classical insights to show that a smooth solution to the equations of linearised can be decomposed into the sum of the special solutions of the previous section with a solution determined by derivatives of two scalar waves satisfying the Regge--Wheeler and Zerilli equations respectively. Then in section \ref{OVWellposednessoftheCauchyproblem} we use this decomposition to develop a well-posedness theory for the equations of linearised gravity. Then in section \ref{OVGaugenormalisationofinitialdata} we discuss initial-data normalised solutions to the equations of linearised gravity. A decay statement for these solutions will follow from a decay statement for solutions to the Regge--Wheeler and Zerilli equations. In section \ref{OVAside:thescalarwaveequationontheSchwarzschildexteriorspacetime} we thus make an aside to discuss the techniques developed for establishing a decay statement for solutions to the scalar wave equation on Schwarzschild. Finally in section \ref{OVThemaintheorems} we give rough statements and outlines of the proofs of the main two theorems of this paper, the first of which concerns a decay statement for solutions to the Regge--Wheeler and Zerilli equations and the second of which concerns a decay statement for
initial-data normalised solutions to the equations of linearised gravity. 

\subsection{The equations of linearised gravity around Schwarzschild}\label{OVTheequationsoflinearisedgravityaroundSchwarzschild}

In order to present the linearised equations we first define the Schwarzschild exterior family of spacetimes in section \ref{OVTheSchwarzschildexteriorfamily}. Then in section \ref{OVTheEinsteinvacuumequationsasexpressedinageneralisedwavegauge} we introduce the generalised wave gauge and present how the Einstein vacuum equations appear in such a gauge. Finally in section \ref{OVTheequationsoflinearisedgravity} we present the equations of linearised gravity.

This section of the overview corresponds to section \ref{TheequationsoflinearisedgravityaroundSchwarzschild} in the main body of the paper.

\subsubsection{The Schwarzschild exterior family}\label{OVTheSchwarzschildexteriorfamily}

Let $M>0$ and let $\mcalm$ be the manifold with boundary
\begin{align*}
\mcalm:=\reals_{t^*}\times [1, \infty)_x\times\twosphere
\end{align*}
which we equip with the 1-parameter family of Lorentzian metrics defined by
\begin{align*}
g_M:=-\bigg(1-\frac{1}{x}\bigg)\exd {t^*}^2+\frac{4M}{x}\exd t^*\exd x+4M^2\bigg(1+\frac{1}{x}\bigg)\exd x^2+4Mx^2\,\roundmetric
\end{align*}
with $\roundmetric$ the unit metric on the round sphere. Then the 1-parameter family of Lorentzian manifolds with boundary $\Mg$ define the Schwarzschild exterior family of spacetimes. They each satisfy the Einstein vacuum equations,
\begin{align}\label{OVeineqn}
\textnormal{Ric}[g_M]=0
\end{align}
and arise as the maximal Cauchy development under \eqref{OVeineqn} of the \emph{asymptotically flat} geometric data
\begin{align*}
\big(\Sigma, h_M, k_M\big):=\Big([1, \infty)_x\times\twosphere, 4M^2\big(1+\tfrac{1}{x}\big)\exd x^2+4Mx^2\,\roundmetric, \big({1+\tfrac{1}{x}}\big)^{-\tfrac{1}{2}}\big(\tfrac{M}{x^2}\big({2+\tfrac{1}{x}}\big)\exd x^2-2M\roundmetric\big)\Big)
\end{align*}
subject to the embedding $i(\Sigma)=\Sigma_0:=\{0\}\times\Sigma.$ In particular, observe that the boundary
\begin{align*}
\eh:=\big\{p\in\mcalm | x(p)=1\big\}
\end{align*}
is a \emph{null hypersurface}. Moreover as the causal vector field $\pt$ is manifestly Killing it follows that the family $g_M$ are both static and spherically symmetric.

\subsubsection{The Einstein vacuum equations as expressed in a generalised wave gauge}\label{OVTheEinsteinvacuumequationsasexpressedinageneralisedwavegauge}

Let now $\bg$ be a smooth Lorentzian metric on $\mcalm$ and let $\fgaumap:\smtwocov\rightarrow \smonecov$ be the $\reals$-linear map defined as in section \ref{Thegeneralisedwavegaugewithrespecttothepair}. We define the connection tensor $C_{\bg, g_M}\in\smthreecov$ between $\bg$ and $\g$ according to
\begin{align*}
(C_{\bg, \g})_{abc}=\frac{1}{2}\Big(2\nabla_{(b}\bg_{c)a}-\nabla_a\bg_{bc}\Big)
\end{align*}
with $\nabla$ the Levi-Civita connection of $\g$, noting therefore that $C_{\g, \g}=0$. Then we say that $\boldsymbol{g}$ is in a generalised wave gauge with respect to the pair $(\g,\fgaumap)$ iff 
\begin{align}\label{OVwavegauge}
(\boldsymbol{g^{-1}})^{bc}(C_{\bg, \g})_{abc}=\fgaumap(\bg)_a.
\end{align}
Assuming this to be the case then the Einstein vacuum equations on $\bg$ reduce to a quasilinear wave equation having the schematic representation
\begin{align}
\widetilde{\Box}_{\bg, g_M}\bg+C_{\bg, g_M}\cdot C_{\bg, g_M}+\textnormal{Riem}\cdot\boldsymbol{g}\cdot\bg&=\bg\cdot{\nabla}\fgaumap(\bg)\label{OVeineqningenwavgaufull},\\
(\boldsymbol{g^{-1}})\cdot C_{\bg, g_M}&=\fgaumap(\bg)\label{OVwavegaugefull}
\end{align}
where Riem is the Riemann tensor of $g_M$.

\subsubsection{The equations of linearised gravity}\label{OVTheequationsoflinearisedgravity}

Observing from section \ref{Thegeneralisedwavegaugewithrespecttothepair} that by construction $\fgaumap(g_M)=0$ we have that $g_M$ defines a solution to the system \eqref{OVeineqningenwavgaufull}-\eqref{OVwavegaugefull}. Pursuing the formal linearisation procedure developed in section \ref{Theformallinearisationoftheequationsofsection} one then finds that the linearisation of \eqref{OVeineqningenwavgaufull}-\eqref{OVwavegaugefull} about the solution $g_M$ is
\begin{align}
\big(\Box_{g_M}\glin\big)_{ab}-2\textnormal{Riem}\indices{^c_{ab}^d}\glin_{cd}&=2\nabla_{(a}\flin_{b)},\label{OVeqnlingrav1}\\
\nabla^b\glin_{ab}-\frac{1}{2}\nabla_a\trglin&=\flin_a\label{OVeqnlingrav2}.
\end{align}
 Here $\Box_{g_M}=\nabla^a\nabla_a$, $\trglin=g_M^{ab}\glin_{ab}$ and we have defined
\begin{align*}
\flin:=\fgaumap(\glin).
\end{align*}
Note in particular that one exploits the linearity of $\fgaumap$ over $\reals$ to derive the above.

The equations of linearised gravity on $\Mg$ thus describe a \emph{tensorial system of linear wave equations \eqref{OVeqnlingrav1} coupled with the divergence relation \eqref{OVeqnlingrav2}}\footnote{We remark that these are nothing but the linearised Einstein equations on $\Mg$ as expressed in a \emph{generalised} Lorentz gauge. See the book of Wald \cite{Waldbook}.}.\newline

\subsection{Special solutions to the equations of linearised gravity}\label{OVSpecialsolutionstothelinearisedequations}

One has the aim of establishing a \emph{decay} statement for solutions to the equations of linearised gravity. This is complicated however by the existence of both geometrically spurious and stationary solutions to the linearised system which we discuss now.

This section of the overview corresponds to section \ref{Specialsolutionstotheequationsoflinearisedgravity} in the main body of the paper.\newline

Let $\mfm, \mfa\in\reals$ be fixed and define the 1-parameter family of functions $M_\epsilon:=M+\epsilon\cdot\mfm, a_\epsilon:=a+\epsilon\cdot\mfa$. We subsequently consider the following two-parameter family of Kerr exterior metrics \cite{DRwaveeqnkerrII} on $\mcalm$, neglecting to write down higher than linear terms in $a_\epsilon$\footnote{Here we identify the $t^*, x$ coordinates on $\mcalm$ with the rescaled Kerr-star coordinates $t^*, \tfrac{r}{M+\sqrt{M^2-a^2}}$ of (21) in \cite{DRwaveeqnkerrII}.}:
\begin{align}\label{OVeqnkerrmetric}
g_{M_\epsilon, a_\epsilon}:=g_{M_\epsilon}-2a_\epsilon Y{\otimes}\Big(\tfrac{1}{x}\exd t^*+2M_\epsilon\big(1+\tfrac{1}{x}\big)\exd x\Big)+o\big(a_\epsilon^2\big)
\end{align}
where $Y\in\smonecov$ is such that $Y=\sin^2\theta\exd\varphi$ in spherical coordinates $(\theta, \varphi)$ on $\twosphere$. We then assume the following:
\begin{enumerate}[i)]
	\item there exists a 1-parameter family of diffeomorphims $\phi_\epsilon:\mcalm\rightarrow\mcalm$, with $\phi_0=\Id$, such that $\phi_\epsilon^*g_{M_0}$ is in a generalised wave gauge with respect to the pair $(g_M, \fgaumap)$
	\item for each $\epsilon$ there exists a diffeomorphism $\phi_\epsilon:\mcalm\rightarrow\mcalm$ such that $\phi_\epsilon^*g_{M_\epsilon, a_\epsilon}$ is in a generalised wave gauge with respect to the pair $(g_M, \fgaumap)$
\end{enumerate}
Diffeomorphism invariance of \eqref{OVeineqn} thus yields that $\phi_\epsilon^*g_{M_0}$ and $\phi_\epsilon^*g_{M_\epsilon, a_\epsilon}$ each comprise a 1-parameter family of solutions to the system of equations that result from expressing \eqref{OVeineqn} in a generalised wave gauge with respect to the pair $(g_M, \fgaumap)$. This leads to the following:

\begin{proposition*}
Let $\mfm, \mfa\in\reals$ and let $V\in\Gamma(T^*M)$ satisfy
\begin{align*}
\Box_{g_M}V=\fgaumap(\mcalL_{V^\sharp }g_M).
\end{align*}
Then the following are smooth solutions to the equations of linearised gravity \eqref{OVeqnlingrav1}-\eqref{OVeqnlingrav2}:
\begin{align*}
\glin_{\mfm,\mfa}:&=\tfrac{4\mfm}{ x}\exd t^*\exd x+8M\mfm\big(1+\tfrac{1}{x}\big)\exd x^2-2\mfa Y\otimes\Big(\tfrac{1}{x}\exd t^*+2M\big(1+\tfrac{1}{x}\big)\exd x\Big)+8M\mfm x^2\roundmetric,\\
\glin_V:&=\mcalL_Vg_M.
\end{align*}
\end{proposition*}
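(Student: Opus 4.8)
The plan is to exploit the $\reals$-linearity of the system \eqref{OVeqnlingrav1}--\eqref{OVeqnlingrav2} (recall $\flin=\fgaumap(\glin)$ with $\fgaumap$ linear over $\reals$) and to treat $\glin_{\mfm,\mfa}$ and $\glin_V$ separately; here, consistently with the hypothesis, $\glin_V:=\mcalL_{V^\sharp}g_M$ with $V^\sharp$ the $g_M$-metric dual of $V$. The structural fact I would use, which is read off from the derivation of \eqref{OVeqnlingrav1}--\eqref{OVeqnlingrav2} in section \ref{Theformallinearisationoftheequationsofsection}, is that on the Ricci-flat background $g_M$ a symmetric $2$-covariant tensor field $h$ satisfying the divergence relation \eqref{OVeqnlingrav2} with $\flin=\fgaumap(h)$, namely
\begin{align*}
\nabla^bh_{ab}-\tfrac12\nabla_a\big(g_M^{cd}h_{cd}\big)=\fgaumap(h)_a,
\end{align*}
solves the wave equation \eqref{OVeqnlingrav1} if and only if $D\ric\big|_{g_M}[h]=0$, where $D\ric\big|_{g_M}$ denotes the linearisation of the Ricci operator at $g_M$. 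So for each of the two tensor fields it will suffice to verify the divergence relation \eqref{OVeqnlingrav2} and to check that it is linearised-Ricci-flat; smoothness (including up to and on $\eh$) is immediate from the explicit form of $\glin_{\mfm,\mfa}$ and, for $\glin_V$, from smoothness of $V$.

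For the pure gauge solution $\glin_V=\mcalL_{V^\sharp}g_M$ both steps are soft. For the divergence relation I would use the standard identity, valid for any one-form $V$ on any pseudo-Riemannian manifold,
\begin{align*}
\nabla^b\big(\mcalL_{V^\sharp}g_M\big)_{ab}-\tfrac12\nabla_a\big(g_M^{cd}(\mcalL_{V^\sharp}g_M)_{cd}\big)=\big(\Box_{g_M}V\big)_a+\ric[g_M]_{ab}(V^\sharp)^b,
\end{align*}
so that, since $\ric[g_M]=0$ and $V$ satisfies the hypothesis $\Box_{g_M}V=\fgaumap(\mcalL_{V^\sharp}g_M)$, the right-hand side equals $\fgaumap(\glin_V)_a$, which is exactly \eqref{OVeqnlingrav2}. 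For linearised-Ricci-flatness I would invoke naturality of the Ricci operator, $\ric[\psi^*g_M]=\psi^*\ric[g_M]$ for every local diffeomorphism $\psi$ of $\mcalm$, and differentiate along the local flow of $V^\sharp$ to obtain $D\ric\big|_{g_M}[\mcalL_{V^\sharp}g_M]=\mcalL_{V^\sharp}\ric[g_M]=0$. The reduction above then yields \eqref{OVeqnlingrav1} for $\glin_V$.

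For the linearised Kerr solution $\glin_{\mfm,\mfa}$ the linearised-Ricci-flatness is again soft: writing $g^{\textnormal{Kerr}}_{M',a'}$ for the exact Kerr metric in the coordinates of \eqref{OVeqnkerrmetric}, the family $(M',a')\mapsto g^{\textnormal{Kerr}}_{M',a'}$ is a smooth two-parameter family of Ricci-flat metrics on $\mcalm$ passing through $g_M$ at $(M,0)$, and since the $o(a_\epsilon^2)$ remainder in \eqref{OVeqnkerrmetric} contributes nothing to the first $\epsilon$-derivative one has $\glin_{\mfm,\mfa}=\tfrac{\exd}{\exd\epsilon}\big|_{\epsilon=0}g^{\textnormal{Kerr}}_{M+\epsilon\mfm,\,\epsilon\mfa}$, whence $D\ric\big|_{g_M}[\glin_{\mfm,\mfa}]=\tfrac{\exd}{\exd\epsilon}\big|_{\epsilon=0}\ric\big[g^{\textnormal{Kerr}}_{M+\epsilon\mfm,\,\epsilon\mfa}\big]=0$. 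It then only remains to verify the divergence relation \eqref{OVeqnlingrav2} for $\glin_{\mfm,\mfa}$, and this is the one place where a genuine computation -- and the precise definition of $\fgaumap$ from section \ref{Thegeneralisedwavegaugewithrespecttothepair} -- must be used. Concretely I would compute $\nabla^b(\glin_{\mfm,\mfa})_{ab}-\tfrac12\nabla_a\big(g_M^{cd}(\glin_{\mfm,\mfa})_{cd}\big)$ from the explicit Christoffel symbols of $g_M$, splitting by linearity into the $\mfm$-part (the $M$-derivative of the Schwarzschild family) and the $\mfa$-part (the linear Kerr perturbation $-2\mfa Y\otimes(\tfrac1x\exd t^*+2M(1+\tfrac1x)\exd x)$), and confirm the outcome is $\fgaumap(\glin_{\mfm,\mfa})_a$; indeed the component $\fkerr$ of $\fgaumap$ is built precisely so that this holds, which is the sense in which $\glin_{\mfm,\mfa}$ is the canonical representative of the linearised Kerr family inside the generalised wave gauge with respect to $(g_M,\fgaumap)$. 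With \eqref{OVeqnlingrav2} established, the reduction of the first paragraph then gives \eqref{OVeqnlingrav1} for $\glin_{\mfm,\mfa}$ as well, and $\reals$-linearity completes the proof.

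I expect this last computation to be the main obstacle: it is the only step that cannot be handled by a soft argument, precisely because being in the (linearised) generalised wave gauge is a property of the particular coordinate representative \eqref{OVeqnkerrmetric} rather than of the Kerr family as such, so one genuinely has to unwind the definition of $\fgaumap$. (The assumptions recorded before the proposition supply the conceptual reason that all of these tensor fields solve \eqref{OVeqnlingrav1}--\eqref{OVeqnlingrav2}, namely that they are linearisations of one-parameter families of solutions to \eqref{OVeineqningenwavgaufull}--\eqref{OVwavegaugefull}; but the direct verification above is cleaner and avoids any smoothness issue with the families $\phi_\epsilon$.)
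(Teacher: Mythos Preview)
Your proposal is correct and follows essentially the same route as the paper. The paper offers no detailed proof: in the overview it derives the proposition heuristically from the assumed one-parameter families $\phi_\epsilon^*g_{M_\epsilon,a_\epsilon}$ and $\phi_\epsilon^*g_{M_0}$, and then remarks that ``the above can be verified explicitly from the equations of linearised gravity'' with the gauge condition for $\glin_{\mfm,\mfa}$ holding ``as a consequence of how the map $\fgaumap$ was defined''; in the body (Propositions \ref{proplinkerr} and \ref{proppuregauge}) it again says only that the claims ``can be verified by explicit computation''. Your structural reduction (gauge relation \eqref{OVeqnlingrav2} plus $D\ric|_{g_M}[h]=0$ $\Longleftrightarrow$ \eqref{OVeqnlingrav1}) is exactly the mechanism behind that explicit verification, and your soft arguments for $\glin_V$ and for the linearised Ricci-flatness of $\glin_{\mfm,\mfa}$ are precisely the ``geometric derivation'' the paper alludes to.

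One small caveat: there is no named component ``$\fkerr$'' of $\fgaumap$ in this paper; $\fgaumap$ is defined directly in Definition \ref{defngaugemap}. Your computation for the gauge relation of $\glin_{\mfm,\mfa}$ therefore has to go through that definition. Since $\glin_{\mfm,\mfa}$ is supported only on $l=0,1$, the maps $\Phi,\Psi$ (and hence all $\mfZ$-terms) vanish on it, and what survives of $\fgaumap(\glin_{\mfm,\mfa})$ is $\tfrac{2}{r}\qtau_{\qP}+\tfrac{2}{r}\mtau_{\qP}-\tfrac{1}{r}\qexd r\,\strtau-\tfrac{1}{r}(\qhattau_{\qP})_{l=0,1}$ evaluated at $\tau=\glin_{\mfm,\mfa}$; matching this against $\text{div}\,\glin_{\mfm,\mfa}-\tfrac12\nabla\tr\glin_{\mfm,\mfa}$ is the concrete computation you anticipate.
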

The above can be verified explicitly from the equations of linearised gravity. Note in particular that the 1-parameter family of metrics $g_{M_\epsilon, a_\epsilon}$ is in a generalised wave gauge with respect to the pair $(g_M, \fgaumap)$ to first order in $\epsilon$ -- this is a consequence of how the map $\fgaumap$ was defined.

We call the first class of solutions \emph{residual pure gauge solutions} due to the fact that they arise from the potential for residual gauge freedom in the nonlinear theory. Conversely, we call the second class of (stationary) solutions \emph{linearised Kerr solutions}, the nomenclature in this instance being clear. Note this latter class of solutions actually extends to a \emph{four-parameter} family of solutions to the equations of linearised gravity as a consequence of the spherical symmetry of $\Mg$ -- see \cite{DHRlinstabschwarz} for further discussion. It is this extended family that we refer to when referencing the linearised Kerr family in the remainder of the overview.

A first version of our main theorem is then the following: \textbf{we prove that all sufficiently regular solutions to the equations of linearised gravity decay towards the future on $\Mg$ to the sum of a residual pure gauge and linearised Kerr solution.} Included in this statement is a well-posedness theory both for the equations of linearised gravity and the class of residual pure gauge solutions to the former.

Note this statement is consistent with the statement that the maximal Cauchy development under \eqref{OVeineqningenwavgaufull}-\eqref{OVwavegaugefull} of suitably small perturbations of the geometric data $(\Sigma, h_M, k_M)$ dynamically asymptotes to a member of the subextremal Kerr exterior family.

\subsection{Decoupling the equations of linearised gravity}\label{OVDecouplingtheequationsoflinearisedgravity}

One expects that establishing the above decay statement is sensitive to ``gauge''. Further complications are provided by the tensorial structure of the equations of linearised. We shall now discuss how both these issues are naturally coupled and can be simultaneously resolved by exploiting classical insights into the linearised equations.

This section of the overview corresponds to section \ref{DecouplingtheequationsoflinearisedgravityuptoresidualpuregaugeandlinearisedKerrsolutions:theRegge--WheelerandZerilliequations} in the main body of the paper.\newline

It is natural to search for linearised quantities which vanish for both the special solutions of the previous section. Indeed it is necessary that such {invariant quantities} decay if a decay statement for the equations of linearised gravity is to hold in some ``gauge''. It is moreover natural to look for scalar versions of these quantities as a means of mitigating the tensorial structure of the linearised system. Remarkably two such invariant scalars $\Philin$ and $\Psilin$ exist which actually decouple from the full system into the Regge--Wheeler and Zerilli equations respectively (with $r=2M$):
\begin{align}\label{OVeqnRW}
\Box_{g_M}\big(r^{-1}\Philin\big)=-\frac{8}{r^3}\frac{M}{r}\Philin,
\end{align}
\begin{align}\label{OVeqnZ}
\Box_{g_M}\big(r^{-1}\Psilin\big)=-\frac{8}{r^3}\frac{M}{r}\Psilin+\frac{24}{r^4}\frac{M}{r}(r-3M)\zslapinv{1}\Psilin+\frac{72}{r^4}\frac{M}{r}\frac{M}{r}(r-2M)\zslapinv{2}\Psilin.
\end{align}
Here, $\zslapinv{p}$ is the inverse of the operator $r^2\slap+2-\frac{6M}{r}$
applied $p$-times, with $r^2\slap$ the spherical Laplacian. Note that $\zslapinv{p}$ is well defined over the space of smooth functions on $\mcalm$ supported on the $l\geq 2$ spherical harmonics (see the bulk of the paper for the definition)-- that $\Psilin$ (and indeed $\Philin$) are supported on the $l\geq 2$ spherical harmonics is a consequence of the fact they were constructed so as to vanish for all linearised Kerr solutions. To see that $\Philin$ and $\Psilin$ do indeed decouple in such a manner, see Theorem \ref{thmgaugeinvariantquantintermsofRWandZ}.

The decoupling of $\Philin$ and $\Psilin$ will play a fundamental role in our work. Indeed, the key analytical point is that a decay statement for the the two equations \eqref{OVeqnRW} and \eqref{OVeqnZ} can be established using the techniques developed for establishing a decay statement for the scalar wave equation $\Box_{g_M}\psi=0$ on $\mcalm$. Moreover it turns out that these decay statements will actually provide all one needs to establish a decay statement for the equations of linearised gravity due to the following proposition proved in section \ref{DecomposingageneralsolutiontotheequationsoflinearisedgravityintothesumofaresidualpuregaugeandlinearisedKerrsolutionandasolutiondeterminedbytheRegge--WheelerandZerilliequations}.

\begin{proposition}\label{OVpropdecoupling}
Let $\glin$ be a smooth solution to the equations of linearised gravity. Then there exists a linear map $\gamma$, a residual pure gauge solution $\glin_V$ and a linearised Kerr solution $\glin_{\textnormal{Kerr}}$ such that
\begin{align*}
\glin-\glin_V-\glin_{\textnormal{Kerr}}=\gamma(\Psilin, \Philin).
\end{align*}
Moreover, $\gamma$ satisfies the bound 
\begin{align*}
|r\gamma(\Psilin, \Philin)|\lesssim |\partial^2\Psilin|+|\partial^2\Philin|.
\end{align*}
\end{proposition}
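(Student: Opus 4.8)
The plan is to make use of the full structure of the equations of linearised gravity \eqref{OVeqnlingrav1}--\eqref{OVeqnlingrav2}, decomposed into spherical harmonic sectors, together with the classical Regge--Wheeler decomposition of a symmetric $2$-tensor on $\Mg$ into its ``even'' ($l\geq 2$ Zerilli) and ``odd'' ($l\geq 2$ Regge--Wheeler) parts plus the low-frequency ($l=0,1$) part. First I would fix a basis of tensor spherical harmonics adapted to the $SO(3)$ symmetry and the $2+2$ product structure $\mcalm = Q\times\twosphere$ used throughout the paper, writing $\glin$ in terms of scalar-, vector- and tensor-valued functions on the quotient $Q$. On the $l\geq 2$ sector one expects — and Theorem \ref{thmgaugeinvariantquantintermsofRWandZ} already supplies this — that $\Philin$ and $\Psilin$ together with the associated gauge-invariant ``potentials'' are built from finitely many $Q$-derivatives of these component functions, and conversely that the component functions can be reconstructed from $\Philin$, $\Psilin$ modulo the gauge-invariant data and modulo a residual pure gauge piece $\glin_V$. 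Concretely, for the odd sector one adds $\glin_V$ with $V$ built from a single odd scalar potential on $Q$ so as to reach Regge--Wheeler gauge, in which the surviving component is directly $\Philin$ up to an explicit $r$-weight and a $Q$-derivative; for the even sector one adds $\glin_V$ with $V$ built from two even scalar potentials to reach Zerilli gauge, in which the surviving components are expressed through $\Psilin$ and its first $Q$-derivatives (this is where the operator $\zslapinv{p}$ enters, since inverting $r^2\slap+2-\tfrac{6M}{r}$ on the $l\geq 2$ sector is exactly what converts the Zerilli-gauge metric components into the single scalar $\Psilin$). The point of the gauge-map $\fgaumap$ — specifically $\frw$ — is that these gauge transformations stay within the generalised wave gauge with respect to $(g_M,\fgaumap)$, so that the $V$ one writes down does solve $\Box_{g_M}V=\fgaumap(\mcalL_{V^\sharp}g_M)$ and hence $\glin_V$ is genuinely a residual pure gauge solution in the sense of the Proposition preceding Section \ref{OVDecouplingtheequationsoflinearisedgravity}; I would verify this directly from the definition of $\frw$ on the relevant model tensors.

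Next, for the $l=0,1$ part I would argue separately: here there are no propagating degrees of freedom, and the constraint-type relations coming from \eqref{OVeqnlingrav2} together with the linearised equations force the $l=0$ and $l=1$ (both parities) parts of $\glin$ to equal, after subtracting a further residual pure gauge solution, exactly a linearised Kerr solution $\glin_{\textnormal{Kerr}}$ in the four-parameter family described after the Proposition in Section \ref{OVSpecialsolutionstothelinearisedequations} — the mass parameter $\mfm$ from the $l=0$ even part, and the three angular-momentum parameters from the $l=1$ odd part, with the $l=1$ even part being pure gauge. This is classical (Regge--Wheeler, Zerilli) but must be re-derived in the present gauge; the role of the remaining maps $\fkerr$ and $\fmrw$ in $\fgaumap$, as flagged in the overview, is presumably to make the $l=0,1$ reduction land on the \emph{canonical} form of the linearised Kerr solution written in the Proposition of Section \ref{OVSpecialsolutionstothelinearisedequations}, and I would use their defining properties to pin this down. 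Assembling the $l=0,1$ and $l\geq 2$ pieces and collecting all the added pure gauge contributions into a single $\glin_V$ (legitimate since residual pure gauge solutions form a vector space) and all the Kerr contributions into a single $\glin_{\textnormal{Kerr}}$ yields the decomposition $\glin-\glin_V-\glin_{\textnormal{Kerr}}=\gamma(\Psilin,\Philin)$, with $\gamma$ the explicit linear map recording the Regge--Wheeler/Zerilli-gauge metric components in terms of $\Psilin$, $\Philin$ and their $Q$-derivatives. The pointwise bound $|r\gamma(\Psilin,\Philin)|\lesssim|\partial^2\Psilin|+|\partial^2\Philin|$ then follows by reading off the worst term: each component of $\gamma$ involves at most two $Q$-derivatives of $\Psilin$ or $\Philin$ (after clearing the $\zslapinv{p}$ factors against the $r^2\slap$ hidden in the angular part of $\partial^2$, using that these operators are bounded on the $l\geq 2$ sector), together with bounded $r$-weights once one multiplies through by $r$; this is a routine check against the explicit formula for $\gamma$.

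The main obstacle I anticipate is not any single estimate but the bookkeeping needed to show that the gauge transformations bringing $\glin$ to Regge--Wheeler/Zerilli gauge genuinely respect the generalised wave gauge condition \eqref{OVeqnlingrav2} with the prescribed $\fgaumap$ — i.e.\ that the potentials $V$ one is forced to use solve $\Box_{g_M}V=\fgaumap(\mcalL_{V^\sharp}g_M)$ — and, relatedly, that after the reduction the leftover low-frequency part is \emph{exactly} the canonical linearised Kerr solution and not merely gauge-equivalent to it; getting the canonical normalisation is precisely what $\fkerr$ and $\fmrw$ are engineered for, so the argument must invoke their specific construction rather than treat $\fgaumap$ as a black box. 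A secondary technical point is ensuring the operators $\zslapinv{p}$ are applied only to quantities supported on $l\geq 2$ and that the resulting expressions for $\gamma$ have no hidden loss when one passes to the pointwise derivative bound; this is handled by the remark, already made in the overview, that $\Philin$ and $\Psilin$ are supported on $l\geq 2$ by construction, so $\zslapinv{p}$ is always well-defined and, together with elliptic estimates on $\twosphere$, contributes no derivative loss beyond the two derivatives displayed.
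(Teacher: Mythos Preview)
Your proposal is correct and follows essentially the same route as the paper (Theorem~\ref{thmdecoupling}): split into $l=0,1$ and $l\geq 2$ modes, handle $l=0,1$ by citing classical results, and for $l\geq 2$ write the solution as $\gamma(\Psilin,\Philin)$ plus a residual pure gauge piece whose generating $1$-form $\Vlin$ is written down explicitly and then shown to solve $\Box V=\fgaumap(\nabla\astrosun V)$ from the decomposed linearised system. One caveat: the maps $\frw,\fkerr,\fmrw$ you invoke live only in commented-out portions of the source---the paper works throughout with the single gauge-map $\fgaumap$ of Definition~\ref{defngaugemap}, and the verification that $\Vlin$ is residual pure gauge (Proposition~\ref{propextracingrespurgau}) uses the full dynamical content of the decomposed equations (Corollary~\ref{corrdecomposedeqnslingrag}), not merely the form of the gauge-map, so your anticipated ``main obstacle'' is resolved by computation from those equations rather than by inspecting $\frw$ alone.
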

It therefore follows that decay bounds on $\Psilin$ and $\Philin$ immediately yield decay bounds on the normalised solution $\glin-\glin_V-\glin_{\textnormal{Kerr}}$!\newline

We emphasize that Proposition \ref{OVpropdecoupling} only holds as a consequence of the fact that the gauge-map $\fgaumap$ appears in the definition of the equations of linearised gravity -- this is in fact the sole reason for its presence. Consequently, to explain how we identified such a gauge-map our procedure was as follows: first one studies the linearised system \eqref{OVeqnlingrav1}-\eqref{OVeqnlingrav2} defined with respect to a general map $f:\smtwocov\rightarrow\smonecov$. One then identifies a general decomposition of solutions as given above but with $\glin_V$ replaced with $\mcalL_Xg_M$ for $X\in\smonecon$ -- this is in fact easy to see from how the map $\gamma$ is defined (see section \ref{Theconnectionwiththeequationsoflinearisedgravity}). The desired gauge-map $f$ is then constructed by demanding that the linearised system associated to $f$ imposes on $\mcalL_Xg_M$ the equation
\begin{align*}
\Box_{g_M}X^\flat=f(\mcalL_Xg_M).
\end{align*}
Key to the above procedure is the fact that the remarkable decoupling of $\Philin$ and $\Psilin$ into the Regge--Wheeler and Zerilli equations holds for \emph{any} solution of the linearisation of \eqref{eineqn} around $g_M$, a fact which was originally discovered by Regge--Wheeler \cite{RWrweqn} and Zerilli \cite{Zzeqn} in the context of a full ``mode'' and spherical harmonic decomposition of the linearised Einstein equations around Schwarzschild\footnote{See \cite{COSschwarz} for the non-modal, covariant derivation.}.

\subsection{Well-posedness of the Cauchy problem}\label{OVWellposednessoftheCauchyproblem}

The insights of the previous section allows in particular for a well-posedness theory for the equations of linearised gravity to be developed.

This section of the overview corresponds to section \ref{Initialdatandwellposednessfortheequationsoflinearisedgravity} in the main body of the paper.\newline

In view of the existence of constraints in the linear theory, in particular those inherited from the Gauss--Codazzi equation, the appropriate Cauchy problem for the equations of linearised gravity is to construct unique solutions from freely prescribed \emph{seed} data on the initial Cauchy hypersurface $\Sigma_{0}$. Consequently, a suitable notion of seed data is provided by a collection of Cauchy data for the Regge--Wheeler and Zerilli equations, Cauchy data for the residual pure gauge equation and parameters of a linearised Kerr solution with Proposition \ref{OVpropdecoupling} then determining a canonical solution map. This yields:

\begin{theorem*}
Let $\seedspace^{af}$ denote the vector space of smooth, asymptotically flat seed data and let $\solnspace$ denote the vector space of smooth solutions to the equations of linearised gravity. Then there exists an isomorphism $\solnmap:\seedspace^{af}\rightarrow\solnmap(\seedspace^{af})\subset\solnspace$.
\end{theorem*}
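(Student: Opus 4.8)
The plan is to construct the map $\solnmap$ explicitly using Proposition \ref{OVpropdecoupling} and then verify it is a linear isomorphism onto its image. First I would make precise the vector space $\seedspace^{af}$ of seed data: an element should be a tuple consisting of (a) Cauchy data $(\Phi_0, \Phi_1)$ on $\Sigma_0$ for the Regge--Wheeler equation \eqref{OVeqnRW}, supported on $l \geq 2$; (b) Cauchy data $(\Psi_0, \Psi_1)$ on $\Sigma_0$ for the Zerilli equation \eqref{OVeqnZ}, again supported on $l\geq 2$; (c) Cauchy data for the residual pure gauge equation $\Box_{g_M}V = \fgaumap(\mcalL_{V^\sharp}g_M)$, i.e. $(V_0, V_1) \in \smonecovsig \times \smonecovsig$; and (d) four real parameters $(\mfm, \mfa, \dots)$ labelling a linearised Kerr solution. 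The asymptotic flatness condition picks out the subspace on which these data satisfy suitable weighted decay as $x \to \infty$ along $\Sigma_0$. The map $\solnmap$ is then defined by: solve \eqref{OVeqnRW} and \eqref{OVeqnZ} for $\Philin, \Psilin$ (standard well-posedness for wave-type equations with a $\mcalH^+$ boundary, using that $\eh$ is null so no boundary condition is needed); solve the residual pure gauge wave equation for $V$; form $\glin_V = \mcalL_V g_M$ and the linearised Kerr solution $\glin_{\textnormal{Kerr}}$ from the parameters; and set $\solnmap(\text{seed}) := \gamma(\Psilin, \Philin) + \glin_V + \glin_{\textnormal{Kerr}}$.

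Next I would check that $\solnmap$ lands in $\solnspace$, i.e. that the output solves \eqref{OVeqnlingrav1}-\eqref{OVeqnlingrav2}. This is where Proposition \ref{OVpropdecoupling} (run in reverse) and the construction of $\gamma$ in section \ref{Theconnectionwiththeequationsoflinearisedgravity} do the work: $\gamma(\Psilin, \Philin)$ is designed so that whenever $\Philin, \Psilin$ solve the Regge--Wheeler and Zerilli equations respectively, $\gamma(\Psilin, \Philin)$ solves the equations of linearised gravity; $\glin_{\textnormal{Kerr}}$ and $\glin_V$ are solutions by the Proposition of section \ref{OVSpecialsolutionstothelinearisedequations}; and the system is linear so the sum is a solution. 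Linearity of $\solnmap$ is immediate from linearity of the RW/Zerilli/pure-gauge flows, linearity of $\gamma$, and linearity in the Kerr parameters. Injectivity should follow by reversing the decomposition: if $\solnmap(\text{seed}) = 0$, then one first reads off the Kerr parameters from the $l=0,1$ part of the solution (the invariant charges), concludes $\glin_{\textnormal{Kerr}} = 0$, then shows $\Philin, \Psilin$ are recovered from the $l\geq 2$ part as the gauge-invariant scalars of the solution, forcing their Cauchy data to vanish, and finally that the leftover pure gauge solution $\glin_V$ vanishes together with its seed by uniqueness for the pure gauge wave equation — here one must be careful to identify $V$ uniquely from $\glin_V = \mcalL_V g_M$ modulo Killing fields, and to check Killing fields of $g_M$ do not produce nontrivial seed (or to quotient them out of $\seedspace^{af}$). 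Surjectivity onto $\solnmap(\seedspace^{af})$ is tautological once the image is \emph{defined} to be $\solnmap(\seedspace^{af})$; the substantive point is that $\solnmap(\seedspace^{af}) \subseteq \solnspace$, which is the containment asserted.

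The main obstacle I anticipate is the bookkeeping around asymptotic flatness and the residual gauge: one must ensure that the seed data classes in (a)--(d) are exactly matched so that the decomposition in Proposition \ref{OVpropdecoupling} is \emph{bijective} rather than merely surjective onto solutions — in particular that the gauge-invariant scalars $\Philin, \Psilin$ extracted from an arbitrary solution have Cauchy data ranging over precisely the prescribed seed space, that the map $V \mapsto \mcalL_V g_M$ is injective after passing to the appropriate quotient by Killing fields, and that all of this is compatible with the weighted norms defining $\seedspace^{af}$. A secondary technical point is the well-posedness of the characteristic-type initial value problem on $\mcalm$ up to and including the null boundary $\eh$ — but since $\eh$ is null this requires no boundary data, so this reduces to a standard energy-estimate argument for the (zeroth-order-coupled) scalar wave equations \eqref{OVeqnRW}-\eqref{OVeqnZ} and the pure-gauge wave equation, which I would invoke from the cited linear wave theory. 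I would therefore organise the proof as: (1) define $\seedspace^{af}$ precisely with its weighted topology; (2) invoke scalar well-posedness to define the flows and hence $\solnmap$; (3) use the construction of $\gamma$ plus the special-solutions Proposition to show $\solnmap(\seedspace^{af}) \subseteq \solnspace$; (4) use Proposition \ref{OVpropdecoupling} plus uniqueness for each scalar problem to prove injectivity; (5) conclude $\solnmap$ is a linear isomorphism onto its image.
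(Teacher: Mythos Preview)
Your proposal is correct and follows essentially the same route as the paper: the seed space, the definition of $\solnmap$ via the three well-posedness statements (Regge--Wheeler/Zerilli, residual pure gauge, Kerr parameters) composed with $\gamma$, and the injectivity argument via gauge-invariant extraction plus the Killing-field obstruction removed by asymptotic flatness all match the paper's Theorem~\ref{thmwellposedness}. The only organisational difference is that the paper first separates the three summands simultaneously using Corollary~\ref{corrsolnlingravRWZ} and Remark~\ref{rmkpuregaugedistinctfromkerr}, whereas you extract them sequentially --- but the content is the same.
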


See section \ref{Thewell-posednesstheorem} for details and section \ref{Pointwiseasymptoticallyflatseeddata} for the definition of asymptotically flat seed. In particular, we note that any smooth solution to the equations of linearised for which the associated quantities $\Philin$ and $\Psilin$ satisfy the necessary regularity for the decay bounds we establish for the Regge--Wheeler and Zerilli equations in section \ref{Theorem1:BoundednessanddecayforsolutionstotheRegge--WheelerandZerilliequations} to hold must lie in the space  $\solnmap(\seedspace^{af})$. Our notion of seed thus parametrises the space of ``admissible'' smooth solutions to the equations of linearised gravity -- in fact, Theorem \ref{thmwellposedness} shows that our smooth seed data actually parametrises the full solution space $\solnspace$ albeit non-uniquely.

\subsection{Gauge-normalisation of initial data}\label{OVGaugenormalisationofinitialdata}

We now identify the class of solutions to the equations of linearised gravity that will be subject to our decay statement.

This section of the overview corresponds to section \ref{Residualpuregaugesolutionstotheequationsoflinearisedgravity} in the main body of the paper.\newline

It is clear that solutions $\gidnlin$ to the equations of linearised gravity arising under our well-posedness theorem from the subset of smooth asymptotically flat seed data consisting only of Cauchy data for the Regge--Wheeler and Zerilli equations will satisfy
\begin{align*}
\gidnlin=\gamma(\Psilin, \Philin).
\end{align*}
A decay statement for this family of solutions thus follows immediately from a decay statement for solutions to the Regge--Wheeler and Zerilli equations in view of the properties of the map $\gamma$. We shall call this class of solutions \emph{initial-data normalised} since whether an ``admissible'' solution to the equations of linearised gravity is initial-data normalised is manifestly a condition on the seed data from which it arises.

Moreover Proposition \ref{OVpropdecoupling} shows that given any solution to the equations of linearised gravity lying in the space $\solnmap(\seedspace^{af})$ then one can normalise it by residual gauge and linearised Kerr solutions so as to make it initial-data normalised. In fact the solutions one has to add are now unique and can be explicitly identified from the seed from which the original solution arises -- see section \ref{Achievingtheinitialdatanormalisationforageneralsolution} for verification. Establishing a decay statement for initial-data normalised thus suffices to establish a decay statement for all ``admissible'' solutions to the equations of linearised gravity.

\subsection{Aside: The scalar wave equation on the Schwarzschild exterior spacetime}\label{OVAside:thescalarwaveequationontheSchwarzschildexteriorspacetime}

In this section of the overview we make an aside to discuss the scalar wave equation $\Box_{g_M}\psi=0$ on $\Mg$ and the methods by which one establishes a decay statement for solutions thereof. Insights gained for this simpler problem will prove fundamental in establishing a decay statement for solutions to the Regge--Wheeler and Zerilli equations on $\Mg$ and hence for the equations of linearised gravity by virtue of the initial-data normalised solutions identified in the previous section.

\subsubsection{Boundedness and decay for solutions to the scalar wave equation on $\Mg$}\label{OVBoundednessanddecayforsolutionstothescalarwaveeqnonMg}

Let $\psi$ be a smooth solution to the scalar wave equation on Schwarzschild:
\begin{align}\label{OVwaveeqn}
\Box_{g_M}\psi=0.
\end{align}
Then the boundedness and decay statement for such solutions is most naturally formulated in terms of certain $r$-weighted energy norms on hypersurfaces which penetrate both the horizon and future null infinity. Indeed, we define the function $\taus$ on $\mcalm$ according to (recalling $r=2Mx$)
\begin{align*}
\tau^\star(t^*, r, \mathbbm{p}):=\begin{cases} 
t^* & r\leq R \\
u(t^*, r, \mathbbm{p}) +R+4M\log(R-2M)& r\geq R,
\end{cases}
\end{align*}
for $\mathbbm{p}\in\twosphere$ and where $u$ is the optical function of section \ref{OVTheSchwarzschildexteriorfamily}. Consequently, denoting by $\Xi_{\taus}$ the level sets of the function $\taus$, we associate to $\psi$ the following flux norms (for $R>>10M$ and with definitions to follows):
\begin{align}
\mathbb{F}[\psi]:=&\sup_{\taus\geq\taus_0}\int_{\Xi_{\taus}\cap\{p\in\mcalm | r(p)\leq R\}}\Big(|\pt  \psi|^2+|\pr  \psi|^2+|\sn  \psi|^2\Big)\exd r\wedge\mathring{\epsilon}\nonumber\\
+&\sup_{\taus\geq\taus_0}\int_{\Xi_{\taus}\cap\{p\in\mcalm | r(p)\geq R\}}\Big(r^2|D (r\psi)|^2+|\sn  (r\psi)|^2\Big)\exd r\wedge\mathring{\epsilon},\label{OVflux}\\
\mathbb{D}[\psi]:&=\int_{\Sigma_0}r^{2}\Big(|\pt  (r\psi)|^2+|\pr  (r\psi)|^2+|\sn  (r\psi)|^2\Big)\exd r\wedge\mathring{\epsilon}\label{OVinitialflux}
\end{align}
along with the integrated decay norms (for $1>\beta_0>>0$)\footnote{We interpret $\exd\taus$ in the sense of measure.}:
\begin{align}
\mathbb{I}[\psi]:=&\int_{\taus_0}^\infty\int_{\Xi_{\taus}\cap\{p\in\mcalm | r(p)\geq R\}}\Big(r|D (r\psi)|^2+r^{\beta_0}|\sn  (r\psi)|^2\Big)\exd\taus\wedge\exd r\wedge\mathring{\epsilon},\label{OViled}\\
\mathbb{M}[\psi]:=&\int_{\taus_0}^\infty\int_{\Xi_{\taus}}r^{-3}\Big(|\pt  (r\psi)|^2+|\pr  (r\psi)|^2+|r\sn  (r\psi)|^2+|  (r\psi)|^2\Big)\exd\taus\wedge\exd r\wedge\mathring{\epsilon}\label{OVied}.
\end{align}
Here, $\sn$ is the standard ``spherical gradient'' whereas $\depsilon$ is the standard ``unit spherical volume form''\footnote{In particular, in the coordinates $(t^*, r, \theta, \varphi)$ we have for $f\in\smfun$ $|r\sn f|^2=|\partial_\theta f|^2+\tfrac{1}{\sin^2\theta}|\partial_\varphi f|^2$  and $\depsilon=\sin^2\theta\exd\theta\wedge\exd\varphi$. Note we give more geometric interpretations to these objects in the bulk of the paper.}. Moreover $D$ is the derivative operator
\begin{align*}
D:&=\frac{1+\frac{2M}{r}}{1-\frac{2M}{r}}\pt+\pr,\\
&=-(\exd u)^\sharp
\end{align*}
and we recall the definition of $\Sigma_0$ from section \ref{OVTheSchwarzschildexteriorfamily}. Thus, the flux norms \eqref{OVflux} and \eqref{OVinitialflux} denote energy norms containing all tangential and normal derivatives to the hypersurfaces $\Xi_{\taus}$ and $\Sigma_0$, the former of which foliate $\mcalm$ -- see the Penrose diagram of Figure 2.

\begin{figure}[!h]
	\centering
	\begin{tikzpicture}
	\node (I)    at ( 4,0)   {};
	\path 
	(I) +(90:4)  coordinate (Itop)
	+(-90:4) coordinate (Ibot)
	+(180:4) coordinate (Ileft)
	+(0:4)   coordinate (Iright)
	;
	\draw  (Ileft) -- (Itop)node[midway, above, sloped] {$\eh$} -- (Iright) node[midway, above right]    {$\cal{I}^+$} -- (Ibot)node[midway, below right]    {} -- (Ileft)node[midway, below, sloped] {} -- cycle;
	
	\draw[dashed]  
	(Ibot) to[out=50, in=-50, looseness=0.75] node[pos=0.475, above, sloped] {$r=R$        }($(Itop)!.5!(Itop)$) ;
	
	\draw 
	($(Itop)!.8!(Ileft)$) to[out=0, in=0, looseness=0.05] ($(Iright)!.5!(Iright)$);
	
	\draw 
	($(Itop)!.3!(Ileft)$) to[out=0, in=0, looseness=0.05] ($(Iright)!.5!(Iright)$);
	
	\draw 
	($(Itop)!.6!(Iright)$) --node[midway, below, sloped]{$\Xi _{\tau_1 ^*}$} (5.12, 0.32);
	
	\draw 
	($(Itop)!.42!(Iright)$) --node[right, below, sloped]{$\Xi _{\tau_2 ^*} $} (4.975, 1.66);
	
	\node [below] at (3,0.5) {$\Sigma_1$};
	
	\node [below] at (3,2.5) {$\Sigma_2$};
	
	\end{tikzpicture}
	\caption{A Penrose diagram of $\Mg$ depicting the hypersurfaces $\Xi_{\taus}$ which penetrate both $\eh$ and $\nullinf$. Here, the hypersurfaces $\Sigma_{t^*}$ are level sets of the time function $t^*$.} 
	\label{Figure4}
\end{figure}
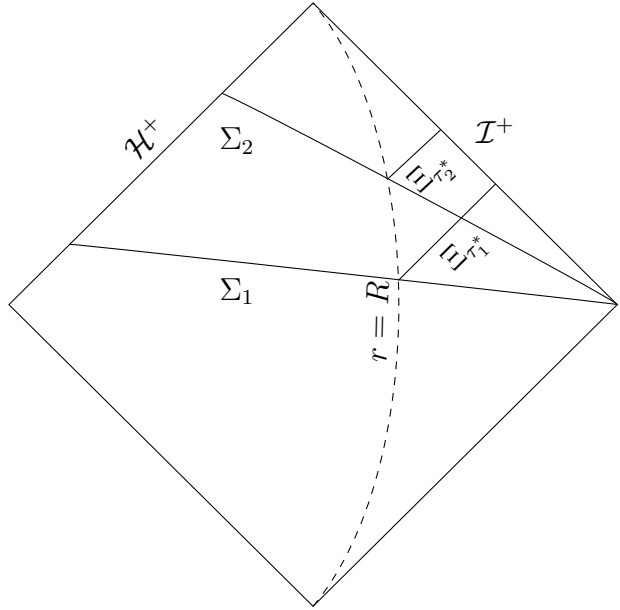

We then have the following definite statement due to Dafermos--Rodnianski.

\begin{theorem*}[Dafermos--Rodnianski -- \cite{DRlecturenotes}--\cite{DRredshift}]
	Let $\psi$ be a smooth solution to \eqref{OVwaveeqn}. Then for any $n\geq0$ the following estimates hold, provided that the fluxes on the right hand side are finite.
	\begin{enumerate}[i)]
		\item the higher order flux and weighted bulk estimates
		\begin{align}\label{OVfluxestimate}
		\mathbb{F}^n[\psi]+\mathbb{I}^n[\psi]\lesssim\mathbb{D}^n[\psi].
		\end{align}
		\item the higher order integrated decay estimate
		\begin{align}\label{OViedestimate}
		\mathbb{M}^{n}[\psi]\lesssim\mathbb{D}^{n+1}[\psi].
		\end{align}
		\item the higher order pointwise decay bounds for $i+j+k+l\leq n$
		\begin{align*}
		|\pt^i\pr^j(r\sn)^k\big((r-2M)D\big)^l\big(r\psi\big)|_{\taus,r}\,\lesssim\,\frac{1}{\sqrt{\taus}}\,\mathbb{D}^{n+4}[\psi].
		\end{align*}
	\end{enumerate}
\end{theorem*}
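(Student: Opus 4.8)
The plan is to establish parts i)--iii) by the vector field method, following the scheme of Dafermos--Rodnianski \cite{DRlecturenotes}--\cite{DRredshift}; since the statement is due to them I would ultimately cite these works for the full details, but the strategy is as follows. The three building blocks are a \emph{red-shift} estimate localising control of all derivatives near the horizon $\eh$, a degenerate integrated local energy decay (Morawetz) estimate encoding the trapping at the photon sphere $r=3M$, and the $r^p$-weighted energy hierarchy of Dafermos--Rodnianski near null infinity $\nullinf$. First I would introduce the energy--momentum tensor $\mathbb{T}_{\mu\nu}[\psi]=\partial_\mu\psi\,\partial_\nu\psi-\tfrac12 g_{M,\mu\nu}(g_M^{-1})^{\alpha\beta}\partial_\alpha\psi\,\partial_\beta\psi$, which is divergence-free when $\Box_{g_M}\psi=0$, and contract it with the Killing field $\pt$ and with a timelike vector field $N$ that agrees with $\pt$ for $r\geq R$ but is modified near $\eh$ so as to exploit the positivity of the surface gravity; applying the divergence theorem in the region bounded by $\Sigma_0$, a leaf $\Xi_{\taus}$, $\eh$ and $\nullinf$, and using the $r$-weighting to control the flux through $\nullinf$, yields the uniform boundedness $\mathbb{F}[\psi]\lesssim\mathbb{D}[\psi]$ up to and including the horizon.

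Second, for the integrated decay estimate I would use a Morawetz multiplier of the form $f(r)\pr+\tfrac14 f'(r)$ together with lower-order corrections, choosing $f$ to vanish at $r=3M$ so that the resulting spacetime density is non-negative; combined with a Hardy-type estimate and the $r^p$-hierarchy applied to $r\psi$ (with $0\leq p\leq 2$), this produces the weighted bulk norm $\mathbb{I}[\psi]$, but with an unavoidable degeneracy at the photon sphere -- which is precisely why \eqref{OVfluxestimate} must be stated with this degenerate weight. To recover the \emph{non-degenerate} norm $\mathbb{M}[\psi]$, which is what the later iteration requires, I would commute \eqref{OVwaveeqn} with $\pt$ (legitimate since $\pt$ is Killing) and with the angular momentum operators; because the trapped set is a single sphere on which the transverse frequency is fixed, a single commutation suffices to remove the degeneracy, at the price of one extra derivative on the data side -- this is the origin of the loss in \eqref{OViedestimate}.

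Third, the pointwise-in-$\taus$ decay of part iii) would follow by the standard two-step argument. From the $r^2$-weighted flux estimate (the $p=2$ member of the hierarchy), the mean-value theorem applied to the monotone quantity $\taus\mapsto\int_{\Xi_{\taus}}\dots$ yields a dyadic sequence $\taus_k$ along which the unweighted flux decays like $\taus_k^{-1}$; feeding this back through the $p=1$ estimate and the boundedness statement upgrades the decay to every $\taus$, giving $\taus^{-1}$ decay of the non-degenerate flux. Commuting with $\pt$, $\pr$, the operator $(r-2M)D$, and the angular operators, and iterating, propagates this to all derivatives up to order $n$; the stated four-derivative loss is the cost of the Sobolev embedding on the leaves $\Xi_{\taus}$ needed to convert the energy bound into a pointwise bound (one derivative to cross the trapping degeneracy, the remainder for the embedding and for the $r^p$-hierarchy input).

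The main obstacle is the trapping at the photon sphere: the Morawetz bulk density cannot be made positive-definite without degenerating at $r=3M$, so the integrated decay estimate genuinely loses a derivative there, and care is needed to ensure that commuting with the \emph{Killing} field $\pt$ (rather than a generic vector field) makes this loss exactly one derivative. The red-shift positivity, the $r^p$-hierarchy near $\nullinf$, and the final mean-value iteration are by now routine, so in the paper I would invoke \cite{DRlecturenotes}--\cite{DRredshift} for the precise multiplier constructions and estimates.
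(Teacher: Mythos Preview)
Your proposal is correct and follows essentially the same approach as the paper's own discussion: stress-energy tensor with the Killing multiplier $\pt$ for the degenerate energy estimate, the red-shift vector field $N$ near $\eh$, a Morawetz multiplier vanishing at $r=3M$ for the degenerate integrated local energy decay, commutation with the Killing field $\pt$ to remove the trapping degeneracy at the cost of one derivative, the Dafermos--Rodnianski $r^p$-hierarchy for the weighted bulk estimates and the pigeonhole iteration to extract $\taus^{-1}$ flux decay, and Sobolev on the leaves $\Xi_{\taus}$ for the pointwise bounds. The paper itself treats this theorem as a cited result and gives only an outline at the same level of detail as yours, so there is nothing to add.
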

Here, the above are natural higher order norms defined by replacing $\psi$ in \eqref{OVflux}-\eqref{OVied} with the appropriate derivatives -- see the bulk of the paper for the precise definition. Moreover, the pointwise norm is defined according to
\begin{align*}
|\phi|_{\taus,r}:=\sup_{\twosphere_{\taus,r}}|\phi|,\qquad \phi\in\smfun
\end{align*}
for $\twosphere_{\taus,r}\subset\mcalm$ the 2-sphere given by the intersection of the level sets of the functions $\taus$ and $r$.

The proof by Dafermos--Rodnianski of the $n=0$ case of the above relies on the following two key estimates for solutions to \eqref{OVwaveeqn} and for any $\taus_2\geq\taus_1$:
\begin{align}
&\int_{\Xi_{\taus_2}\cap\{p\in\mcalm | r(p)\leq R\}}\Big(|\pt  \psi|^2+|\pr  \psi|^2+|\sn  \psi|^2\Big)r^2\exd r\wedge{\depsilon}\\+
&\int_{\Xi_{\taus_2}\cap\{p\in\mcalm | r(p)\geq R\}}\Big(|D \psi|^2+|\sn  \psi|^2\Big)r^2\exd r\wedge{\depsilon},\nonumber\\
\lesssim\,&\int_{\Xi_{\taus_1}\cap\{p\in\mcalm | r(p)\leq R\}}\Big(|\pt  \psi|^2+|\pr  \psi|^2+|\sn  \psi|^2\Big)\exd r\wedge{\depsilon}\\
+&\int_{\Xi_{\taus_1}\cap\{p\in\mcalm | r(p)\geq R\}}\Big(|D \psi|^2+|\sn  \psi|^2\Big)r^2\exd r\wedge{\depsilon}\label{OVenergyestimate}
\end{align}
and
\begin{align}
\int_{\taus_1}^\infty&\int_{\Xi_{\taus}}r^{-3}\Big(|\pt  (r\psi)|^2+|\pr  (r\psi)|^2+|r\sn  (r\psi)|^2+|  (r\psi)|^2\Big)\exd\taus\wedge\exd r\wedge{\depsilon}\nonumber\\
\lesssim\,\sum_{i=0}^1\bigg(&\int_{\Xi_{\taus_1}\cap\{p\in\mcalm | r(p)\leq R\}}\Big(|\pt  \pt^i\psi|^2+|\pr  \pt^i\psi|^2+|\sn  \pt^i\psi|^2\Big)\exd r\wedge{\depsilon}\nonumber\\
+&\int_{\Xi_{\taus_1}\cap\{p\in\mcalm | r(p)\geq R\}}\Big(|D \pt^i\psi|^2+|\sn  \pt^i\psi|^2\Big)r^2\exd r\wedge{\depsilon}\bigg)\label{OViledestimate}.
\end{align}
Indeed, in \cite{DRrpmethod} Dafermos--Rodnianski developed a very robust method which takes as input the estimates \eqref{OVenergyestimate}-\eqref{OViledestimate} and returns, via a hierarchy of $r$-weighted estimates on the $r$-weighted quantity $r\psi$, the estimates $i)-iii)$ of the ($n=0$ case of the) theorem statement, where for pointwise bounds one to a Sobolev embedding. Consequently, we shall see in the following two sections that establishing the estimates \eqref{OVenergyestimate} and \eqref{OViledestimate} requires an intimate understanding of the geometry of $\Mg$, in particular how the celebrated red-shift effect and the presence of trapped null geodesics (see \cite{Waldbook}) effects the propagation of waves.

The higher order estimates will then be discussed in section \ref{OVHigherorderestimates}.

\subsubsection{The degenerate energy and red-shift estimates}\label{OVThedegenerateenergyandredshiftestimates}

To investigate how one proves such estimates it is expedient to introduce the stress-energy tensor
\begin{align*}
\mathbb{T}[\psi]:=2\exd\psi{\otimes}\exd\psi-g_M\,g_M^{-1}(\exd\psi, \exd\psi)
\end{align*}
where $\exd$ is the exterior derivative on $\mcalm$. Then one has the following positivity properties at any $p\in\mcalm$ and for vector fields $X,Y$ on $\mcalm$:
\begin{enumerate}[1)]
	\item if $g_M(X,Y)\big|_p<0$ and $X,Y$ are future-directed\footnote{Recall the time-orientation is provided by $\pt$.} then $\mathbb{T}[\psi](X,Y)\big|_p$ bounds all derivatives of $\psi$ at $p$
	\item if $g_M(X,Y)\big|_p\leq0$ $X,Y$ are future-directed then $0\leq\mathbb{T}[\psi](X,Y)\big|_p$
\end{enumerate}
Moreover, if $\psi$ in addition satisfies \eqref{OVwaveeqn} then
\begin{align*}
\nabla^b\big(\mathbb{T}[\psi]\big)_{ba}=0.
\end{align*}
Defining the 1-form $\mathbb{J}^X[\psi]:=\mathbb{T}[\psi](X,\cdot)$ where $X$ is a causal vector field on $\mcalm$ then Stokes Theorem (on a manifold with corners) therefore yields, for $\psi$ a solution to $\eqref{OVwaveeqn}$, the inequality
\begin{align}
&\int_{\Xi_{\taus_2}}\mathbb{J}^X[\psi](n_{\Xi_{\taus_2}})\,\exd \text{Vol}(\Xi_{\tau_2 ^*})+\int_{\taus_1}^{\taus_2}\int_{\Xi_{\taus}}\frac{1}{2}\mcalL_Xg_M\cdot\mathbb{T}[\psi]\exd \text{Vol}(\Xi_{\taus})\nonumber\\
\lesssim\,&\int_{\Xi_{\taus_1}}\mathbb{J}^X[\psi](n_{\Xi_{\taus_1}})\exd \text{Vol}(\Xi_{\taus_1})\label{OVStokes}
\end{align}
where $n_{\Xi_{\taus}}$ is a suitably interpreted normal to the hypersurface $\Xi_{\taus}$ and we have discarded the flux term on $\eh$ as this is positive-definite by property 2).

In particular, as the vector field $\pt$ is causal and Killing we have from \eqref{OVStokes} the degenerate energy estimate
\begin{align}
&\int_{\Xi_{\taus_2}\cap\{p\in\mcalm | r(p)\leq R\}}\Big(|\pt  \psi|^2+\big(1-\tfrac{2M}{r}\big)|\pr  \psi|^2+|\sn  \psi|^2\Big)\exd r\wedge{\depsilon}\nonumber\\
+&\int_{\Xi_{\taus_2}\cap\{p\in\mcalm | r(p)\geq R\}}\Big(|D \psi|^2+|\sn  \psi|^2\Big)r^2\exd r\wedge{\depsilon},\nonumber\\
\lesssim\,&\int_{\Xi_{\taus_1}\cap\{p\in\mcalm | r(p)\leq R\}}\Big(|\pt  \psi|^2+\big(1-\tfrac{2M}{r}\big)|\pr  \psi|^2+|\sn  \psi|^2\Big)\exd r\wedge{\depsilon}\nonumber\\
+&\int_{\Xi_{\taus_1}\cap\{p\in\mcalm | r(p)\geq R\}}\Big(|D \psi|^2+|\sn \psi|^2\Big)r^2\exd r\wedge{\depsilon}\label{OVdegenergyestimate}
\end{align}
where the degeneration in the transversal derivative $\pr$ to $\eh$ occurs due to the fact that $\pt$ is null there (cf. property 2)). The first estimate \eqref{OVenergyestimate} thus follows from the estimate \eqref{OVdegenergyestimate} if the weights at $\eh$ can be improved and this improvement was achieved by Dafermos--Rodnianski in \cite{DRlecturenotes} where they established the existence of a time-like vector field $N$ which satisfies the following so-called red-shift estimate in a neighbourhood of $\eh$:
\begin{align}
\mathbb{J}^N[\psi](n_{\Xi_{\taus}})\,\lesssim\,\mcalL_Ng_M\cdot\mathbb{T}[\psi]\label{OVredshiftestimate}.
\end{align}
Note that the existence of such a vector field $N$ is intimately related to the celebrated red-shift effect on $\Mg$ -- see \cite{DRlecturenotes}. Moreover, noting that the left hand side of \eqref{OVredshiftestimate} controls all derivatives of $\psi$ by property 2) (and the fact that $\Xi_{\taus}$ and the spacelike hypersurface $\Sigma_{\taus}$ given by the level sets of $t^*$ coincide near $\eh$), the estimate \eqref{OVredshiftestimate}, when combined with the degenerate estimate \eqref{OVdegenergyestimate} and the integral inequality \eqref{OVStokes}, ultimately suffices to establish the desired estimate \eqref{OVenergyestimate}.

\subsubsection{Integrated local energy decay and the role of trapping}\label{OVIntegratedlocalenergydecayandtheroleoftrapping}

In order to establish the estimate \eqref{OViledestimate} it is convenient to exploit once more the formalism of the previous section. Indeed, revisiting the integral inequality \eqref{OVStokes} one has the aim of choosing a vector field $X$ so as to generate a bulk term which controls all derivatives of $\psi$. 

Now it turns out (see \cite{DRlecturenotes}) that one can use estimate \eqref{OVStokes} with $X_g$ a space-like\footnote{The additional flux term this generates at $\eh$ in the application of Stokes theorem is in fact controllable by the flux term of $\eh$ associated to the $X=\pt$ estimate of section \ref{OVThedegenerateenergyandredshiftestimates} that was discarded in the estimate \eqref{OVStokes}.} vector field of the form
\begin{align}\label{OVmorawetzvectorfield}
X_g=\big(1-\tfrac{3M}{r}\big)\big(1+g(r)\big)\Big(\tfrac{2M}{r}\pt+\big(1-\tfrac{2M}{r}\big)\pr\Big),\qquad g(r)=o(r^{-1}) 
\end{align}
in conjunction with both the estimate \eqref{OVenergyestimate} and the red-shift estimate \eqref{OVredshiftestimate}, to establish the bound
\begin{align}
&\int_{\taus_1}^\infty\int_{\Xi_{\taus}}r^{-3}\bigg(\big(1-\tfrac{3M}{r}\big)^2\Big(|\pt  (r\psi)|^2+|\pr  (r\psi)|^2+|\sn  (r\psi)|^2\Big)+|  (r\psi)|^2\bigg)\exd\taus\wedge\exd r\wedge{\depsilon}\nonumber\\
\lesssim\,&\int_{\Xi_{\taus_1}\cap\{p\in\mcalm | r(p)\leq R\}}\Big(|\pt  \psi|^2+|\pr  \psi|^2+|\sn  \psi|^2\Big)\exd r\wedge{\depsilon}\nonumber\\
+&\int_{\Xi_{\taus_1}\cap\{p\in\mcalm | r(p)\geq R\}}\Big(|D \psi|^2+|\sn  \psi|^2\Big)r^2\exd r\wedge{\depsilon}\label{OVdegiledestimate}.
\end{align}
Note that the degeneration at $r=3M$ is a necessary consequence of the existence of trapped\footnote{Null geodesics which remain tangent to the hypersurface $r=3M$ for all $t^*$.} null geodesics at $r=3M$ on $\Mg$ and a general result due to Sbierski \cite{Sbierskigaussbeam}. However, to provide a bulk estimate that does not degenerate at $r=3M$ it in fact suffices to obtain the estimate
\begin{align*}
\int_{\taus_1}^\infty&\int_{\Xi_{\taus}}r^{-3}|(r\pt\psi)|^2
\exd\taus\wedge\exd r\wedge{\depsilon}\nonumber\\
\lesssim\,&\int_{\Xi_{\taus_1}\cap\{p\in\mcalm | r(p)\leq R\}}\Big(|\pt ^2\psi|^2+|\pr  \pt\psi|^2+|\sn  \pt\psi|^2\Big)\exd r\wedge{\depsilon}\nonumber\\
+&\int_{\Xi_{\taus_1}\cap\{p\in\mcalm | r(p)\geq R\}}\Big(|D \pt\psi|^2+|\sn  \pt\psi|^2\Big)r^2\exd r\wedge{\depsilon}
\end{align*}
which follows easily from \eqref{OVdegiledestimate} and the fact that $\pt$ is Killing and therefore commutes with the wave operator $\Box_{g_M}$. This consequently yields the estimate \eqref{OViledestimate} and moreover explains the derivative loss in the statement of the Theorem\footnote{Although this derivative loss can be improved \cite{MMTTstrichschwarz} some degree of loss is required due to the result of Sbierski.}.

\subsubsection{Higher order estimates}\label{OVHigherorderestimates}

With the key ingredients for the proving the $n=0$ case of the Theorem understood we turn now to the higher order cases.

Indeed, we first observe that since $\pt$ and $\Omega_i$ for $i=1,2,3$ are Killing fields of $\Mg$, where $\Omega_i$ denote a basis of $SO(3)$, then $\pt$ and each of the $\Omega_i$ commute trivially with the wave operator $\Box_{g_M}$ and thus the $n=0$ case of the Theorem holds replacing $\psi$ by\footnote{Here we use that $\sum_{i=0}^3|\Omega_i(\psi)|^2\,\lesssim\, |r\sn\psi|^2\,\lesssim\sum_{i=0}^3|\Omega_i(\psi)|^2$.} $\pt^i(r\sn)^j\psi$ for any positive integers $i,j$. In addition, by writing the wave equation for $\psi$ as an ODE in $r$ with inhomogeneities given by derivatives of $\psi$ containing at least one $t^*$ or angular derivative, then the previously derived bounds on $\pt^i(r\sn)^j\psi$ in fact allows one to replace $\psi$ by $\pt^i\Big(\big(1-\tfrac{2M}{r}\big)\pr\Big)^j(r\sn)^k\Big(\big(1-\tfrac{2M}{r}\big)D\Big)^l\psi$ in the $n=0$ case of the Theorem statement, for any positive integers $i,j,k,l$. It thus remains to remove the degeneration at $\eh$ for the derivative $\pr$ and the degeneration towards $\nullinf$ for the derivative operator $rD$ (cf. the pointwise bounds in part $iii)$ of the Theorem statement).

Consequently, to remove the degeneration at $\eh$ one proceeds by first commuting the wave equation \eqref{OVwaveeqn} with the (time-like) vector field $-\pr^m$ for any positive integer $m$, thus generating additional lower order terms
as the vector field $-\pr$ is not Killing. However, these lower order terms are such that they are either controlled by the bounds derived on the quantities $\pt^i\Big(\big(1-\tfrac{2M}{r}\big)\pr\Big)^j(r\sn)^k\Big(\big(1-\tfrac{2M}{r}\big)D\Big)^l\psi$ in the previous step for sufficiently many $i,j,k,l$ or they come with the correct sign. In particular, for any positive integer $j$, one has the higher order red-shift estimate\footnote{This presentation of the higher order red-shift estimate is borrowed from the overview of \cite{DHRlinstabschwarz}.}
\begin{align*}
\mathbb{J}^N[\pr^j\psi](n_{\Xi_{\taus}})\,\lesssim\,\mcalL_Ng_M\cdot\mathbb{T}[\pr^j\psi]-\{\text{controllable terms}\}.
\end{align*}
Proceeding as in section \ref{OVThedegenerateenergyandredshiftestimates} one thus removes the degeneration at $\eh$ for the derivative $\pr$ -- see \cite{DRlecturenotes} for further details.

Similarly, to remove the degeneration towards $\nullinf$ one proceeds by now considering the wave equation satisfied by the $r$-weighted commuted quantity $\big((r-2M)D\big)^m(r\psi)$ for any positive integer $m$. The error terms this generates are lower order in the sense that they are either controllable by the estimates derived on the quantities $\pt^i\pr^j(r\sn)^k\Big(\big(1-\tfrac{2M}{r}\big)D\Big)^l\psi$ in the previous two steps for sufficiently many $i,j,k,l$ or they come with favourable weights in the sense that the hierarchy of $r$-weighted estimates established by Dafermos and Rodnianski for the scalar wave $r\psi$ hold with equal validity for the commuted quantity $\big((r-2M)D\big)^m(r\psi)$ -- see \cite{DRlecturenotes} (and also \cite{Moschidisrpmethod}) for further details.

\subsection{Statement of main theorems and outline of proofs}\label{OVThemaintheorems}

In this final part of the overview we provide rough statements of the main theorems of this paper and then give an outline of proofs.

We begin in section \ref{OVTheorem1} with a rough version of our first theorem which concerns a boundedness and decay statement for solutions to the Regge--Wheeler and Zerilli equations on $\Mg$ -- this will have applications to the equations of linearised gravity in view of the gauge-normalised solutions of section \ref{OVGaugenormalisationofinitialdata}. Then in section \ref{OVOutlineoftheproofofTheorem1} we give an outline of the proof, noting already that all insights from section \ref{OVAside:thescalarwaveequationontheSchwarzschildexteriorspacetime} enter. Then in section \ref{OVTheorem2} we provide a rough version of our second theorem which concerns a boundedness and decay statement for the previously mentioned gauge-normalised solutions to the equations of linearised gravity. Finally in section \ref{OVOutlineoftheproofofTheorem2} an outline of the proof of Theorem 2 is given.

This section of the overview corresponds to sections \ref{Precisestatementsofthemaintheorems}-\ref{Proofoftheorem2} in the main body of the paper.

\subsubsection{Theorem 1: Boundedness and decay for solutions to the Regge--Wheeler and Zerilli equations}\label{OVTheorem1}

A rough formulation of Theorem 1 is as follows -- see the main body of the paper for the precise version. Note in what follows we remove the superscript $(1)$ from the quantities under consideration as the stated results hold independently of the equations of linearised gravity.

\begin{customthm}{1}\label{OVmainthmdecayRWandZ}
	Let $\Phi$ be a smooth solution to the Regge--Wheeler equation \eqref{OVeqnRW} on $\Mg$ supported on the $l\geq 2$ spherical harmonics.
	Then for any integer $n\geq 0$ the following estimates hold provided that the fluxes on the right hand side are finite:
	\begin{enumerate}[i)]
		\item the higher order flux and weighted bulk estimates
		\begin{align*}
		\norm{F}{r^{-1}\Phi}[n]+\norm{I}{r^{-1}\Phi}[n]&\lesssim\norm{D}{r^{-1}\Phi}[n]
		\end{align*}
		\item the higher order integrated decay estimate
		\begin{align*}
		\norm{M}{r^{-1}\Phi}[n]&\lesssim\norm{D}{r^{-1}\Phi}[n+1]
		\end{align*}	
		\item the higher order pointwise decay bounds for $i+j+k+l\leq n$
		\begin{align*}
		|\pt^i\pr^j(r\sn)^k\big((r-2M)D\big)^l\Phi|\,\lesssim\,\frac{1}{\sqrt{\taus}}\,\mathbb{D}^{n+4}[r^{-1}\Phi].
		\end{align*}
	\end{enumerate}
	
	Let now $\Psi$ be a smooth solution to the Zerilli equation \eqref{OVeqnZ} on $\Mg$ supported on the $l\geq 2$ spherical harmonics. Then for any integer $n\geq 0$ the following estimates hold provided that the fluxes on the right hand side are finite:
	\begin{enumerate}[i)]
		\item the higher order flux and weighted bulk estimates
		\begin{align*}
		\norm{F}{r^{-1}\Psi}[n]+\norm{I}{r^{-1}\Psi}[n]&\lesssim\norm{D}{r^{-1}\Psi}[n]
		\end{align*}
		\item the higher order integrated decay estimate
		\begin{align*}
		\norm{M}{r^{-1}\Psi}[n]&\lesssim\norm{D}{r^{-1}\Psi}[n+1]
		\end{align*}	
		\item the higher order pointwise decay bounds for $i+j+k+l\leq n$
		\begin{align*}
		|\pt^i\pr^j(r\sn)^k\big((r-2M)D\big)^l\Psi|\,\lesssim\,\frac{1}{\sqrt{\taus}}\,\mathbb{D}^{n+4}[r^{-1}\Psi].
		\end{align*}
	\end{enumerate}
\end{customthm}
Note that the norms in the theorem statement concern $r^{-1}\Phi$ and $r^{-1}\Psi$ as it is these quantities which satisfy the wave equation up to first and second order derivatives (cf. equations \eqref{OVeqnRW}-\eqref{OVeqnZ} and the theorem of section \ref{OVBoundednessanddecayforsolutionstothescalarwaveeqnonMg}).

We make the following remarks regarding Theorem 1.

The flux estimate associated to the norm $\mathbb{F}$ in both parts $i)$ of Theorem 1 should be considered as a boundedness statement that does not lose derivatives. Conversely, the derivative loss in parts $ii)$ is unavoidable and and is a consequence of the trapping effect on $\Mg$ (cf. section \ref{OVIntegratedlocalenergydecayandtheroleoftrapping}).

In addition, observe from both parts $iii)$ of Theorem 1 that commuting with $\sn$ and $\big(1-\tfrac{2M}{r}\big)D$ improves the pointwise bounds in $r$ -- this will prove important in the proof of Theorem 2 to be discussed in section \ref{OVOutlineoftheproofofTheorem2}. Note that the former is a consequence of how the angular operator $\sn$ is defined whereas the latter is a consequence of the improved $r$-weights on the operator $D$ in the norms of section \ref{OVBoundednessanddecayforsolutionstothescalarwaveeqnonMg}.

Finally, we note a version of Theorem 1 regarding solutions to the Regge--Wheeler equation was originally given by Holzegel in \cite{Holzegelultschwarz} (see also \cite{DHRlinstabschwarz}). Conversely, a version of Theorem 1 regarding solutions to the Zerilli equation was originally given in the independent works of the author \cite{Johnsondecayz} and Hung--Keller--Wang \cite{HKWlinstabschwarz}.

\subsubsection{Outline of the proof of Theorem 1}\label{OVOutlineoftheproofofTheorem1}

We now discuss the proof of Theorem 1. We discuss only the Zerilli equation as this is the more complicated case.

We first recall the definition of the Zerilli equation \eqref{OVeqnZ}:
\begin{align}\label{OVZwave}
\Box_{g_M}(r^{-1}\Psi)=-\frac{8}{r^2}\frac{M}{r}r^{-1}\Psi+\frac{24}{r^3}\frac{M}{r}(r-3M)\zslapinv{1}r^{-1}\Psi+\frac{72}{r^3}\frac{M}{r}\frac{M}{r}(r-2M)\zslapinv{2}r^{-1}\Psi.
\end{align}
Thus, the Zerilli equation differ from the scalar wave equation \eqref{OVwaveeqn} by an $r$-weight and the presence of the lower order terms on the right hand side of \eqref{OVZwave}. Consequently, all insights gained for the scalar wave equation in section \ref{OVAside:thescalarwaveequationontheSchwarzschildexteriorspacetime} of the overview enter and it remains to understand the complications, if any, provided by these additional lower order terms. These complications can in fact be understood at the level of the $\pt$-flux estimate of section \ref{OVThedegenerateenergyandredshiftestimates} in the overview and the integrated local energy decay estimates of section \ref{OVIntegratedlocalenergydecayandtheroleoftrapping}. Indeed, with the associated issues resolved, the proof of Theorem 1 proceeds analogously to that detailed in section \ref{OVAside:thescalarwaveequationontheSchwarzschildexteriorspacetime} and shall not be discussed further in this overview. In particular, we emphasize that the techniques developed by Dafermos--Rodnianski in \cite{DRrpmethod} to derive the hierarchy of $r$-weighted estimates mentioned in section \ref{OVBoundednessanddecayforsolutionstothescalarwaveeqnonMg} are indeed robust enough to allow for the lower order terms appearing in the Regge--Wheeler and Zerilli equations respectively.

We begin with the $\pt$-flux estimate of section \ref{OVThedegenerateenergyandredshiftestimates}. Introducing the stress-energy tensors associated to $r^{-1}\Psi$ as
\begin{align*}
\mathbb{T}[r^{-1}\Psi]=2\exd(r^{-1}\Psi)\otimes\exd(r^{-1}\Psi)-g_M\,g_M^{-1}(\exd(r^{-1}\Psi), \exd(r^{-1}\Psi))
\end{align*}
then it follows from  \eqref{OVZwave} that
\begin{align}\label{OVZdiv}
\nabla^b\big(\mathbb{T}[r^{-1}\Psi]\big)_{ab}=\bigg(-\frac{8}{r^2}\frac{M}{r}r^{-1}\Psi+\frac{24}{r^3}\frac{M}{r}(r-3M)&\zslapinv{1}r^{-1}\Psi\nonumber\\
+\frac{72}{r^3}\frac{M}{r}\frac{M}{r}(r-2M)&\zslapinv{2}r^{-1}\Psi\bigg)\,\exd_a(r^{-1}\Psi).
\end{align}
Applying Stokes theorem as in section \ref{OVThedegenerateenergyandredshiftestimates} of the overview with $X=\pt$ (noting that the positivity properties 1) and 2) hold for $\mathbb{T}[r^{-1}\Psi]$ so that the flux term along $\eh$ can be ignored) we therefore find
\begin{align}
&\int_{\Xi_{\taus_2}\cap\{p\in\mcalm | r(p)\leq R\}}\Big(|\pt  \Psi|^2+\big(1-\tfrac{2M}{r}\big)|\pr  \Psi|^2+|\sn_{\mfZ}  \Psi|^2\Big)\exd r\wedge{\depsilon}\nonumber\\
+&\int_{\Xi_{\taus_2}\cap\{p\in\mcalm | r(p)\geq R\}}\Big(|D \Psi|^2+|\sn_{\mfZ}  \Psi|^2\Big)\exd r\wedge{\depsilon},\nonumber\\
\lesssim\,&\int_{\Xi_{\taus_1}\cap\{p\in\mcalm | r(p)\leq R\}}\Big(|\pt  \Psi|^2+\big(1-\tfrac{2M}{r}\big)|\pr  \Psi|^2+|\sn \Psi|^2\Big)\exd r\wedge{\depsilon}\nonumber\\
+&\int_{\Xi_{\taus_1}\cap\{p\in\mcalm | r(p)\geq R\}}\Big(|D \Psi|^2+|\sn \Psi|^2\Big)\exd r\wedge{\depsilon}\label{OVZdegenergyestimate}.
\end{align}
Here, we have defined
\begin{align*}
|\sn_{\mfZ}\Psi|^2:=|\sn\Psi|^2-\frac{1}{r^2}|\Psi|^2-\frac{6}{r^2}\frac{M}{r}|\Psi|^2-&\frac{24}{r^3}\frac{M}{r}(r-3M)|r\sn\zslapinv{1}\Psi|^2\\
+&\frac{24}{r^4}\frac{M}{r}\Big(2(r-3M)^2+3M(r-2M)\Big)|\zslapinv{1}\Psi|^2.
\end{align*}
In addition, we have integrated by parts to write the additional bulk terms generated from \eqref{OVZdiv} as a flux term. In particular, we note the integration by parts formulae associated to the operator $\zslapinv{p}$ derived in section \ref{Commutationformulaeandusefulidentities}.
Positivity of the left hand side of the $\pt$-flux \\eqref{OVZdegenergyestimate}, which we recall was immediate for the case of the scalar wave equation (cf. estimate \eqref{OVdegenergyestimate}), thus rests upon whether one can ensure positivity of the terms $|\sn_{\mfZ}\Psi|^2$.

To see that this is indeed the case we invoke the fact that $\Psi$ is supported on the $l\geq 2$ spherical harmonics. One thus has on any 2-sphere $\twosphere_{\taus,r}\subset\mcalm$ the Poincar\`e inequality
\begin{align}\label{OVZpoin}
\frac{6}{r^2}\inttwosphere{}{r}{|\Psi|^2}\,\lesssim\, \inttwosphere{}{r}{|\sn\Psi|^2}
\end{align}
with $\sepsilon=r^2\depsilon$. 
For the positivity of \eqref{OVZdegenergyestimate} it thus suffices to note the following refined estimate from section \ref{Thefamilyofoperatorsslap} which exploits the ellipticity of the operator $\zslapinv{p}$:
\begin{align}\label{OVzerefinedestim}
\frac{2}{r^2}\inttwosphere{}{r}{\Big(2(r-3M)^2|\zslapinv{1}\Psi|^2+r(r+9M)|r\sn\zslapinv{1}\Psi|^2\Big)}\,\lesssim\, \inttwosphere{}{r}{|\Psi|^2}.
\end{align}
This estimate combined with the estimate \eqref{OVZpoin} thus ultimately yields positivity of the left hand side of \eqref{OVZdegenergyestimate}. In particular, one has the bound
\begin{align*}
\inttwosphere{}{r}{|\sn\Psi|^2}\,\lesssim\, \inttwosphere{}{r}{|\sn_{\mfZ}\Psi|^2}.
\end{align*}
The degenerate energy estimate of section \ref{OVThedegenerateenergyandredshiftestimates} for the scalar wave $\psi$ thus holds for the $r$-weighted solutions to the Zerilli $r^{-1}\Psi$ respectively.

We turn now to the integrated local energy decay estimates, namely the estimate \eqref{OVdegiledestimate} with $r\psi$ replaced by $\Psi$. We recall from section \ref{OVIntegratedlocalenergydecayandtheroleoftrapping} that one is able to derive the analogous estimate for solutions to the wave equation by utilising a multiplier of the form given in \eqref{OVmorawetzvectorfield}. It is therefore natural to work with such a multiplier again. However, we now note the interesting fact that the equations satisfied by $\Psi$ are actually more symmetric in the first order derivatives than the wave equation satisfied by $r^{-1}\Psi$. Thus, the overarching appeal of symmetry suggests that it is fact more natural to analyse the Zerilli equations directly. Indeed, although in this case one loses the Lagrangian structure associated to the wave equation exploited in deriving the estimates of section \ref{OVAside:thescalarwaveequationontheSchwarzschildexteriorspacetime}, this can easily be replicated by the more pedestrian method of integrating by parts over the spacetime region under consideration the integrand given by the Zerilli equation multiplied by the ``multiplier'' $X_g(\Psi)$ for a vector field $X_g$ as in \eqref{OVmorawetzvectorfield}.

Proceeding in this manner first for the Regge--Wheeler equation it is in fact quite simple to show using both the associated $\pt$-energy estimate and the Poincar\'e inequality derived previously that the multiplier $X_g$ as in \eqref{OVmorawetzvectorfield} with $g(r)=1+\tfrac{3M}{r}$ yields the desired integrated local energy decay estimate. Moreover, the exact same multiplier yields the desired integrated local enery decay estimate for the Zerilli equation if one exploits in addition the integration by parts formulae and the refined estimate \eqref{OVzerefinedestim} associated to the operator $\zslapinv{p}$ to control the additional lower order terms.

This concludes our overview of the proof of Theorem \ref{OVmainthmdecayRWandZ}.

\subsubsection{Theorem \ref{OVthmboundedness}: Boundedness, decay and propagation of asymptotic flatness for gauge-normalised solutions to the equations of linearised gravity}\label{OVTheorem2}

We give now a rough formulation our of second theorem which concerns a boundedness and decay statement for the initial-data normalised solutions discussed in section \ref{OVGaugenormalisationofinitialdata}. The statement in question involves the norms introduced in section \ref{OVAside:thescalarwaveequationontheSchwarzschildexteriorspacetime} but now generalised to smooth, symmetric 2-covariant tensor fields on $\mcalm$ -- see section \ref{Flux,integrateddecayandpointwisenorms} for the precise definition. 

\begin{customthm}{2}\label{OVthmboundedness}
	Let $\glin$ be a smooth solution to the equations of linearised gravity arising from the smooth, asymptotically flat seed data and let $\Philin$ and $\Psilin$ be the associated invariant quantities. 
	
	We consider the initial-data normalised solution $\gidnlin$ associated to $\glin$ as in section \ref{OVGaugenormalisationofinitialdata}.
	
	Then for any $n\geq 0$ the following estimates hold, with the fluxes on the right hand side finite by the assumption of asymptotic flatness:
	\begin{enumerate}[i)]
		\item the higher order flux and weighted bulk estimates
		\begin{align*}
		\mathbb{F}^{n}[\gidnlin]+\mathbb{I}^{n}[\gidnlin]\lesssim\mathbb{D}^{n+2}[r^{-1}\Philin] + \mathbb{D}^{n+2}[r^{-1}\Psilin].
		\end{align*}
		\item the higher order integrated decay estimate
		\begin{align*}
		\mathbb{M}^{n}[\gidnlin]\lesssim\mathbb{D}^{n+2}[r^{-1}\Philin] + \mathbb{D}^{n+2}[r^{-1}\Psilin].
		\end{align*}
		\item the pointwise decay bound
		\begin{align*}
		|\gidnlin|\lesssim\frac{1}{\sqrt{\taus}r}\Big(\mathbb{D}^{6}[r^{-1}\Philin] + \mathbb{D}^{6}[r^{-1}\Psilin]\Big).
		\end{align*}
	\end{enumerate}
In particular, the solution $\glin$ decays inverse polynomially to the sum $\gkerr+\glin_V$.
\end{customthm}
Theorem 2 thus represents the appropriate analogue for the equations of linearised gravity of the decay statement for the linear wave equation discussed in section \ref{OVAside:thescalarwaveequationontheSchwarzschildexteriorspacetime} -- the definition of the above pointwise norm is to be found in the bulk of the paper. 

We make the following additional remarks regarding Theorem 2.

We emphasize as discussed in section \ref{OVGaugenormalisationofinitialdata} that the solutions $\glin_{\textnormal{Kerr}}$ and $\glin_V$ are determined explicitly from the seed of $\glin$.

The flux estimate associated to the norm $\mathbb{F}$ in part $i)$ of Theorem 2 should be considered as a boundedness statement that does not lose derivatives. Conversely, the derivative loss in parts $ii)$ is unavoidable and relates once more to the trapping effect on $\Mg$. Lastly, one can obtain higher pointwise bounds courtesy of Theorem 2 although we shall not state them explicitly in this paper.

In addition, we note that the finitness of the initial data norms in the theorem statement can be verfified explicitly from the asymptotic flatness criterion of the seed data (see the bulk of the paper for verifcation).

Finally, we note a version of Theorem 2 regarding solutions to the system of equations that result from linearising the Einstein vacuum equations, as they are expressed in a generalised wave gauge with respect to a different gauge-map $f$, around $\Mg$ was given by the author in \cite{Johnsonlinstabschwarzold}. Here however the solutions possessed weaker asymptotics -- in particular, the pointwise decay in $r$ of part $iii)$ was absent.

\subsubsection{Outline of the proof of Theorem 2}\label{OVOutlineoftheproofofTheorem2}

We finish the overview by discussing the proof of Theorem 2.

This in fact follows immediately from Theorem 1 combined with the properties of the map $\gamma$ from section \ref{OVDecouplingtheequationsoflinearisedgravity}.\newline

So ends our overview of this paper.

\section{The Schwarzschild exterior background }\label{TheSchwarzschildexteriorbackground}

In this section we introduce the Schwarzschild exterior family as well as various objects and operations defined on these spacetimes that shall prove useful throughout the remainder of the paper.

An outline of this section is as follows. We begin in section \ref{TheSchwarzschildexteriorspacetime} by defining the Schwarzschild exterior spacetime. Then in section \ref{Ageometricfoliationby2-spheres} we define a foliation of $\mcalm$ by 2-spheres. Then in section \ref{Theprojectionofsmoncovandsmsymtwocovontoandawayfromthel=0,1sphericalharmonics} we define the projection of smooth one forms and smooth symmetric, 2-covariant tensor fields onto and away from the $l=0,1$ spherical harmonics.  Then in section \ref{Ellipticoperatorson2-spheres} we derive various elliptic estimates on spheres. We then finish this section by presenting various commutation relations and indentities that will be useful throughout the text. 

Note we advise the reader to skip sections until the relevant machinery developed therein is required.

\subsection{The Schwarzschild exterior spacetime}\label{TheSchwarzschildexteriorspacetime}

We begin in section \ref{ThedifferentialstructureandmetricoftheSchwarzschildexteriorspacetime} by defining the differential structure and metric of the Schwarzschild exterior spacetime. Then in section \ref{KillingfieldsoftheSchwarzschildmetric} we consider the Killing fields. Then in section \ref{Tensoralgebra} we introduce several tensor spaces. Then in section \ref{Tensoranalysis} we develop a calculus on these spaces. Finally in section \ref{TheCauchyhypersurfaceSigma} we introduce a Cauchy hypersurface for $\mcalm$.

\subsubsection{The differential structure and metric of the Schwarzschild exterior spacetime}\label{ThedifferentialstructureandmetricoftheSchwarzschildexteriorspacetime}

Let $M>0$ be a fixed parameter.

We define the smooth manifold with boundary 
\begin{align*}
\mcalm:=\reals_{t^*}\times [1, \infty)_x\times \twosphere
\end{align*}
which we equip with the smooth Ricci-flat Lorentzian metric
\begin{align}\label{schwarzschildmetric}
g_M=-\bigg(1-\frac{1}{x}\bigg)\exd {t^*}^2+\frac{4M}{x}\exd t^*\exd x+4M^2\bigg(1+\frac{1}{x}\bigg)\exd x^2+4Mx^2\,\pi_{\twosphere}^*\roundmetric
\end{align}
where $\roundmetric$ is the metric on the unit round sphere and $\pi_{\twosphere}:\mcalm\rightarrow\twosphere$ is the projection map.  The Lorenztian manifold with boundary $\Mg$ thus defines the Schwarzschild exterior spacetime.

We define a time orientation on $\mcalm$ via the causal vector field $\pt$ and denote by $\eh$ the boundary, which we note is a null hypersurface.

\subsubsection{Killing fields of the Schwarzschild metric}\label{KillingfieldsoftheSchwarzschildmetric}

It is manifest from the form of \eqref{schwarzschildmetric} that the (causal) vector field
\begin{align*}
T=\pt
\end{align*}
is Killing. It is moreover clear that $g_M$ possesses the same symmetries as $\roundmetric$. In particular, the following vectors fields expressed in a spherical coordinate chart $(\theta, \varphi)$ of $\twosphere$ are Killing fields of $g_M$:
\begin{align*}
\partial_\varphi, \qquad -\sin\varphi\partial_\theta-\cot\theta\cos\varphi\partial_\varphi, \qquad\cos\varphi\partial_\theta-\cot\theta\sin\varphi\partial_\varphi.
\end{align*}

\subsubsection{Tensor algebra}\label{Tensoralgebra}

We introduce the spaces for $n\in\mathbb{N}$
\begin{align*}
\Gamma(T^nT\mcalm)&=\{\text{space of smooth n-contravariant tensor fields on $\mcalm$}\},\\
\Gamma(T^nT^*\mcalm)&=\{\text{space of smooth n-covariant tensor fields on $\mcalm$}\}
\end{align*}
along with
\begin{align*}
\Gamma(S^nT\mcalm)&=\{\text{space of smooth, symmetric n-contravariant tensor fields on $\mcalm$}\},\\
\Gamma(S^nT^*\mcalm)&=\{\text{space of smooth, symmetric n-covariant tensor fields on $\mcalm$}\}.
\end{align*}
We also set
\begin{align*}
\smfun=\{\text{space of smooth functions on $\mcalm$}\}.
\end{align*}
Finally, we introduce the notation $\astrosun$ for the symmetric tensor product.

\subsubsection{Tensor analysis}\label{Tensoranalysis}

We now introduce several operations on $\mcalm$ that act naturally on tensor fields.

\begin{itemize}
	\item for $n=1$ the sharp operator $\sharp:\smonecov\rightarrow\smonecon$ defined according to
	\begin{align*}
	(\tau^{\sharp})^a=\g^{ab}\tau_b
	\end{align*}
	\item for $n=2$ the trace operator $\tr:\smtwocov\rightarrow\Gamma(\mcalm)$ defined according to
	\begin{align*}
	\tr\tau=\big(\g^{-1}\big)^{ab}\tau_{ab}
	\end{align*}
	\item for $n\geq 1$ the divergence operator  $\text{div}:\Gamma(T^{n}T^*\mcalm)\rightarrow\Gamma(T^{n-1}T^*\mcalm)$ defined according to
	\begin{align*}
	\text{div}\tau_{a_1...a_{n-1}}=(\g^{-1})^{ab}(\nabla\tau)_{aba_1...a_{n-1}}
	\end{align*}
	with $\nabla$ the Levi-Civita connection of $\g$
	\item the wave operator $\Box_{g_M}:\Gamma(T^{n}T^*\mcalm)\rightarrow\Gamma(T^{n}T^*\mcalm)$ defined according to
	\begin{align*}
	\Box_{g_M}\tau_{a_1...a_n}=(\g^{-1})^{ab}(\nabla\nabla\tau)_{aba_1...a_n}
	\end{align*}
	\item for $n=1$ the Lie derivative $\nabla{\astrosun}:\smonecov\rightarrow\smsymtwocov$ defined according to
	\begin{align*}
	\nabla{\astrosun}\tau_{ab}=2(\nabla\tau)_{(ab)}
	\end{align*}
\end{itemize}

\subsubsection{The Cauchy hypersurface $\Sigma_0$}\label{TheCauchyhypersurfaceSigma}

We define the manifold with boundary
\begin{align*}
\Sigma_{0}:=\{0\}\times[1,\infty)\times\twosphere\subset\mcalm
\end{align*}
which we note determines a Cauchy hypersurface for $\Mg$. Defining by $i^*_0:\Sigma_{0}\rightarrow\mcalm$ the inclusion map we set
\begin{align*}
\overline{g}_M:=i^*_0\g
\end{align*}
which determines a Riemannian metric on $\Sigma_{0}$. 
\subsection{A geometric foliation by 2-spheres}\label{Ageometricfoliationby2-spheres}

We begin in section \ref{Thegeometric2-spherestwosphere} by defining the 2-spheres that shall foliate $\mcalm$. Then in section \ref{QandStensoralgebra} we decompose tensor fields on $\mcalm$ relative to this projection. Then in sections \ref{QandStensoranalysis}-\ref{MixedQandStensoranalysis} we define a natural calculus on these decomposed tensor fields.

\subsubsection{The geometric 2-spheres $\twosphere_{\taus,r}$}\label{Thegeometric2-spherestwosphere}

Let $\mathbbm{p}\in\twosphere$ and $R>>10M$. 

\begin{definition}
Let $\taus:\mcalm\rightarrow\reals$ and $r:\reals_{t^*}\times[1, \infty)_x\rightarrow\reals$ be the functions
\begin{align*}
\tau^\star(t^*,x,\mathbbm{p})&=\begin{cases} 
t^* & Mx\leq R, \\
t^*-2Mx-4M\log(2Mx-2M)+R+4M\log(R-2M)& Mx\geq R,
\end{cases}\\
r(t^*, x, \mathbbm{p})&=2Mx.
\end{align*}
Then we define the 2-spheres $\twosphere_{\taus,r}\subset\mcalm$ as the intersection of the level sets of $\taus$ and $r$:
\begin{align*}
\twosphere_{\taus,r}:=\big\{\{t^*\}\times\{x\}\times\twosphere\subset\mcalm \,|\, \taus(t^*, x, \mathbbm{p})=\taus, 2Mx=r\text{ for all }\mathbbm{p}\in\twosphere\big\}.
\end{align*}
\end{definition}
\begin{remark}\label{rmkeikonal}
	For $x>\tfrac{R}{M}$ the (smooth) function $\taus$ solves the eikonal equation:
	\begin{align*}
	g_M^{-1}(\exd \taus, \exd\taus)\big|_{p}=0\qquad \text{for } p\in\mcalm \text{ such that } x(p)>\tfrac{R}{M} .
	\end{align*}
	In addition, $r$ is the area radius function of the 2-spheres $\{t^*\}\times\{x\}\times\twosphere\subset\mcalm$ given as the intersection of the level sets of $t^*$ and $x$.
\end{remark}

A Penrose diagram depicitng this foliation follows.

\begin{figure}[!h]
	\centering
	\begin{tikzpicture}
	\node (I)    at ( 4,0)   {};
	\path 
	(I) +(90:4)  coordinate (Itop)
	+(-90:4) coordinate (Ibot)
	+(180:4) coordinate (Ileft)
	+(0:4)   coordinate (Iright)
	;
	\draw  (Ileft) -- (Itop)node[midway, above, sloped] {$\eh$} -- (Iright) node[midway, above right]    {$\cal{I}^+$} -- (Ibot)node[midway, below right]    {} -- (Ileft)node[midway, below, sloped] {} -- cycle;
	
	\draw[dashed]  
	(Ibot) to[out=50, in=-50, looseness=0.75] node[pos=0.475, above, sloped] {$r=R$        }($(Itop)!.5!(Itop)$) ;
	
	\draw 
	($(Itop)!.8!(Ileft)$) to[out=0, in=0, looseness=0.05] ($(Iright)!.5!(Iright)$);
	
	\draw 
	($(Itop)!.3!(Ileft)$) to[out=0, in=0, looseness=0.05] ($(Iright)!.5!(Iright)$);
	
	\draw 
	($(Itop)!.6!(Iright)$) --node[midway, below, sloped]{$\Xi _{\tau_1 ^*}$} (5.12, 0.32);
	
	\draw 
	($(Itop)!.42!(Iright)$) --node[right, below, sloped]{$\Xi _{\tau_2 ^*} $} (4.975, 1.66);
	
	\node [below] at (3,0.5) {$\Sigma_1$};
	
	\node [below] at (3,2.5) {$\Sigma_2$};
	
	\end{tikzpicture}
	\caption{A Penrose diagram of $\Mg$ depicting the hypersurfaces $\Sigma_{t^*}$ and $\Xi_{\taus}$ given as the level sets of $t^*$ and $\taus$.} 
	\label{Figure1}
\end{figure}
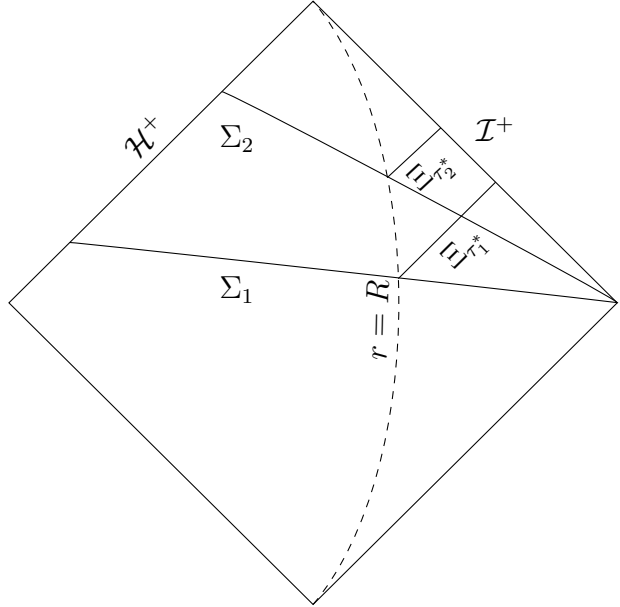

\subsubsection{$Q$ and $S$ tensor algebra}\label{QandStensoralgebra}

First we define the notion of $Q$ and $S$ vector fields on $\mcalm$. 

In what follows we let $p\in\mcalm, X\in\smonecon$ and set $\pr:=\tfrac{1}{2M}\partial_x$.
\begin{definition}\label{defnQandSvectorfields}
Let $\taus_{x_0}:\mcalm\rightarrow\reals$ be the smooth function $\taus_{x_0}(t^*, x,\mathbbm{p})=\taus(t^*, x_0, \mathbbm{p})$ for $x_0\in[1, \infty)$.

Then we say that $X$ is a smooth $Q$ vector field if $X\in\Gamma(TQ)$ where $TQ$ is the subbundle of $T\mcalm$ with fibres
\begin{align*}
T_pQ=\Big\{Y\in T_p\mcalm \,\big|\, Y\in\textnormal{span}\big\{\pt\big|_p,\px\big|_p\big\}\Big\}.
\end{align*}
Conversely, we say that $X$ is a smooth $S$ vector field if $X\in\Gamma(TS)$ where $TS$ is the subbundle of $T\mcalm$ with fibres
\begin{align*}
T_pS=\Big\{Y\in T_p\mcalm \,\big|\, Y(\taus_{x(p)})=Y(r)=0\Big\}.
\end{align*}
\end{definition}
\begin{remark}\label{rmkfibres}
The fibres of $TQ$ and $TS$ have dimension two.
\end{remark}
We denote by $T_pN_Q$ and $T_pN_S$ the normal subspaces to $T_pQ$ and $T_pS$ under $g_M$ respectively\footnote{Note that $\pt, \pr \not\in TS$.}:
\begin{align*}
T_pN_Q&=\big\{Y\in T_p\mcalm \,\big|\, g_M\big|_p(Y, Z)=0\text{ for all } Z\in T_pQ\big\}\setminus\big\{L\in T_pQ\,\big|\,g_M\big|_p(L,L)=0\big\},\\
T_pN_S&=\big\{Y\in T_p\mcalm \,\big|\, g_M\big|_p(Y, Z)=0\text{ for all } Z\in T_pS\big\}.
\end{align*}
We thus have the orthogonal decompositions\footnote{In fact, it follows from the spherical symmetry of $g_M$ that $T_pN_Q=T_pS$ and $T_pN_S=T_pQ$.}
\begin{align*}
T_p\mcalm&=T_pQ\oplus T_pN_Q,\\
T_p\mcalm&=T_pS\oplus T_pN_S
\end{align*}
along with the associated projection maps
\begin{align*}
\tilde{\pi}_p&:T_p\mcalm\rightarrow T_pQ,\\
\slashed{\pi}_p&:T_p\mcalm\rightarrow T_pS.
\end{align*} 
This leads to the following decomposition of smooth vector fields.
\begin{definition}\label{defndecompvectorfields}
Let $X\in\smonecon$.

Then we define the projection of $X$ onto $\Gamma(TQ)$ to be the smooth $Q$ vector field $\qX$ defined by
\begin{align*}
\qX\big|_p=\tilde{\pi}_p\big(X\big|_p\big).
\end{align*}
Conversely, we define the projection of $X$ onto $\Gamma(SQ)$ to be the smooth $S$ vector field $\sX$ defined by
\begin{align*}
\sX\big|_p=\slashed{\pi}_p\big(X\big|_p\big).
\end{align*}
\end{definition}

This subsequently determines a decomposition of smooth $n$-covariant tensor fields into $n$-covariant $Q$ and $S$ tensor fields respectively.
\begin{definition}\label{defnQandStensors}
Let $T\in\Gamma(T^nT^*\mcalm)$ for $n\in\mathbb{N}$.

Then we say that $T$ is a smooth $n$-covariant $Q$ tensor field if $T\in\Gamma(T^nT^*Q)$ where $T^nT^*Q$ is the subbundle of $T^nT^*\mcalm$ with fibres
\begin{align*}
T^nT^*_pQ=\Big\{U\in T^nT^*_p\mcalm \,\big|\, U(...,Y,...)=0 \text{ for any } Y\in T_pS\Big\}.
\end{align*}
Otherwise, we define the projection of $T$ onto $\Gamma(T^nT^*Q)$ to be the smooth n-covariant $Q$ tensor field $\widetilde{T}$ defined by
\begin{align*}
\widetilde{T}(X_1,...,X_n)=T(\qX_1,..., \qX_1)\text{ for } X_1,...,X_n\in\smonecon.
\end{align*}

Conversely, we say that $T$ is a smooth $n$-covariant $S$ tensor field if $T\in\Gamma(T^nT^*S)$ where $T^nT^*S$ is the subbundle of $T^nT^*\mcalm$ with fibres
\begin{align*}
T^nT^*_pS=\Big\{U\in T^nT^*_p\mcalm \,\big|\, U(...,Y,...)=0 \text{ for any } Y\in T_pQ\Big\}.
\end{align*}
Otherwise, we define the projection of $T$ onto $\Gamma(T^nT^*S)$ to be the smooth n-covariant $S$ tensor field $\slashed{T}$ defined by
\begin{align*}
\slashed{T}(X_1,...,X_n)=T(\sX_1,..., \sX_1)\text{ for } X_1,...,X_n\in\smonecon.
\end{align*}
\end{definition}
Note we will use the convention that $\Gamma(\mcalm)=\Gamma(T^0T^*Q)=\Gamma(T^0T^*S)$.

In this paper we are only interested in the case where the above projection is applied to 1-forms and symmetric 2-covariant tensors. Consequently, if $T\in\smonecov$ then the above decomposition completely determines $T$:
\begin{align*}
T=\widetilde{T}+\slashed{T}.
\end{align*}
If however $T\in\smsymtwocov$ then one must supplement the above with a further decomposition.
\begin{definition}\label{defnqmsmonesforms}
Let $T\in\Gamma(S^2T^*\mcalm)$.

Then we say that $T$ is a smooth $\qmsm$ 1-form if $T\in\Gamma(T^*Q\astrosun T^*S)$.

Otherwise, we define the projection of $T$ onto $\Gamma(T^*Q\astrosun T^*S)$ to be the smooth $\qmsm$ 1-form $\stkout{T}$ defined by
\begin{equation*}
\stkout{T}(X_1,X_2)=T(X_1, X_2)-\widetilde{T}(X_1, X_2)-\slashed{T}(X_1, X_2),\qquad X_1,X_2\in\smonecon.
\end{equation*}
\end{definition}

\subsubsection{$Q$ and $S$ tensor analysis}\label{QandStensoranalysis}

We now introduce several operations on $\mcalm$ that act naturally on symmetric $Q$ and $S$ tensor fields.\newline

Let $\qg$ be the projection of $g_M$ onto $\smsymtwocovQ$. Then we have the following natural operations on $\smncovQ$:
\begin{itemize}
	\item for $n=1$ the sharp operator $\qsharp:\smonecovQ\rightarrow\smoneconQ$ defined according to
	\begin{align*}
	(\qtau^{\qsharp})^a=\qg^{ab}\qtau_b
	\end{align*}
	\item for $n=1$ the Hodge-star operator $\qhd:\smonecovQ\rightarrow\smonecovQ$ defined according to
	\begin{align*}
	(\qhd\qtau)_a=\qepsilon_{ba}(\qtau^{\qsharp})^b
	\end{align*}
where $\qepsilon$ is the unique 2-form on $\mcalm$ such that $\big(\qg^{-1}\big)^{ac}\big(\qg^{-1}\big)^{bd}\qepsilon_{cd}\qepsilon_{ab}=2$ and $\qepsilon(\pt, \px)>0$
\item for $n=2$ the trace operator $\qtr:\smcovQ{2}\rightarrow\Gamma(\mcalm)$ defined according to
\begin{align*}
\qtr\qtau=\big(\qg^{-1}\big)^{ab}\qtau_{ab}
\end{align*}
\item for $n=2$ the traceless operator  $\,\widehat{}:\smsymcovQ{2}\rightarrow\smsymtfcovQ{2}$ defined according to
\begin{align*}
\qhattau=\qtau-\tfrac{1}{2}\qg\,\qtrtau
\end{align*}
with $\smsymtfcovQ{2}:=\smsymcovQ{2}\cap\textnormal{image}(\,\,\widehat{}\,\,)$
\item for $n=1$ the traceless symmetric product $\qastrosunhat:\smonecovQ{\times}\smonecovQ\rightarrow\smsymtfcovQ{2}$ defined according to
\begin{align*}
\big(\qtau_1\qastrosunhat\qtau_2\big)_{ab}=\widehat{\big(\qtau_1\astrosun\qtau_2\big)}_{ab}
\end{align*}
\item for $n=1$ the exterior derivative $\qexd:\smfun\rightarrow\smonecovQ$ defined according to
\begin{align*}
\qexd f(X)=\exd f(\qX),\qquad X\in\smonecon
\end{align*}
with $\exd$ the exterior derivative on $\mcalm$
\end{itemize}

Observe now that $\qg$ defines a Lorentzian metric on $TQ$. We thus denote by $\{\tilde{e}_0, \tilde{e}_1\}$ an associated orthonormal frame of $TQ$ and by $\qn$ the associated Levi-Civita connection which we extend to act on $\smncovQ$ for $n\geq 0$ in the standard fashion. Then we have the following natural differential operators acting on $\smncovQ$:
\begin{itemize}
	\item for $n\geq 1$ the divergence operator  $-\qdiv:\smncovQ\rightarrow\Gamma(T^{n-1}T^*Q)$ defined according to
	\begin{align*}
	-\qdiv \widetilde{\tau}(X_1,..., X_{n-1})=\sum_{i=0,1}(-1)^{i+1}\qn_{\tilde{e}_i}\widetilde{\tau}(\tilde{e}_i, \qX_1,..., \qX_{n-1}),\qquad X_1,..., X_{n-1}\in \Gamma(TQ)
	\end{align*}
	\item the d'Alembertian $\qbox:\smncovQ\rightarrow\smncovQ$ defined according to
	\begin{align*}
	\qbox \widetilde{\tau}=\sum_{i=0,1}(-1)^{i+1}\Big(\qn^2_{\tilde{e}_i}\widetilde{\tau}-\qn_{\qn_{\tilde{e}_i}\tilde{e}_i}\widetilde{\tau}\Big)
	\end{align*}
\item for $n=1$ the Lie derivative $\qn{\astrosun}:\smonecovQ\rightarrow\smsymcovQ{2}$ defined according to
\begin{align*}
\qn{\astrosun}\stau(X_1, X_2)=\qn_{\sX_1}\qtau(\sX_2)+\qn_{\sX_2}\qtau(\sX_1),\qquad X_1,X_2\in\smonecon
\end{align*}
\item for $n=1$ the traceless Lie derivative $\qn\astrosunhat:\smonecovQ\rightarrow\smsymtfcovQ{2}$ defined according to
\begin{align*}
\qn\astrosunhat\qtau=\widehat{\qn{\astrosun}\qtau}
\end{align*}
\end{itemize}

Let now $\sg$ be the projection of $g_M$ onto $\smsymtwocovS$. Then we have the following natural operations acting on $\smncovS$:
\begin{itemize}
	\item for $n=1$ the sharp operator $\ssharp:\smonecovS\rightarrow\smoneconS$ defined according to
	\begin{align*}
	(\stau^{\ssharp})^a=\sg^{ab}\qtau_b
	\end{align*}
	\item for $n=1$ the Hodge-star operator $\shd:\smonecovS\rightarrow\smonecovS$ defined according to
	\begin{align*}
	(\shd\stau)_a=\sepsilon_{ba}(\stau^{\ssharp})^b
	\end{align*}
	where $\sepsilon$ is the unique 2-form on $\mcalm$ such that $\big(\sg^{-1}\big)^{ac}\big(\sg^{-1}\big)^{bd}\sepsilon_{cd}\sepsilon_{ab}=2$ and $\qexd t^*\wedge\qexd x\wedge\sepsilon$ determines the same orientation class as $\epsilon:=8M^3x^2\exd t^*\wedge\exd x\wedge\pi_{\twosphere}^*\epsilon_{\twosphere}$ with $\epsilon_{\twosphere}$ the volume form on $\twosphere$ associated to $\roundmetric$
	\item for $n=2$ the trace operator $\str:\smcovS{2}\rightarrow\Gamma(\mcalm)$ defined according to
	\begin{align*}
	\str\slashed{\tau}=\big(\sg^{-1}\big)^{ab}\slashed{\tau}_{ab}
	\end{align*}
	\item for $n=2$ the traceless operator  $\,\widehat{}:\smsymcovS{2}\rightarrow\smsymtfcovS{2}$ defined according to
	\begin{align*}
	\shattau=\stau-\tfrac{1}{2}\sg\,\strtau
	\end{align*}
	with $\smsymtfcovS{2}:=\smsymcovS{2}\cap\textnormal{image}(\,\,\widehat{}\,\,)$
\item for $n=1$ the traceless symmetric product $\sastrosunhat:\smonecovS{\times}\smonecovS\rightarrow\smsymtfcovS{2}$ defined according to
\begin{align*}
\big(\stau_1\sastrosunhat\stau_2\big)_{ab}=\widehat{\big(\stau_1\astrosun\stau_2\big)}_{ab}
\end{align*}
	\item for $n=0$ the exterior derivative $\sexd:\smfun\rightarrow\smonecovS$ defined according to
	\begin{align*}
	\sexd f(X)=\exd f(\sX),\qquad X\in\smonecon
	\end{align*}
	\item the pointwise norm 
	\begin{align*}
	|\stau|_{\sg}:=(\sg^{-1})^{a_1b_1}...(\sg^{-1})^{a_nb_n}\stau_{a_1...a_n}\stau_{b_1...b_n}
	\end{align*}
\end{itemize}
Observe now that $\sg$ defines a Riemannian metric on $TS$. We thus denote by $\{\slashed{e}_2, \slashed{e}_3\}$ an associated orthonormal frame of $TS$ and by $\sn$ the associated Levi-Civita connection which we extend to act on $\smncovS$ for $n\geq 0$ in the standard fashion. Then we have the following natural differential operators acting on $\smncovS$:
\begin{itemize}
	\item for $n\geq 1$ the divergence operator  $\sdiv:\smncovS\rightarrow\Gamma(T^{n-1}T^*S)$ defined according to
	\begin{align*}
	\sdiv \stau(X_1,..., X_{n-1})=\sum_{i=2,3}\qn_{\tilde{e}_i}\stau(\tilde{e}_i, \sX_1,..., \sX_{n-1}),\qquad X_1,..., X_{n-1}\in \Gamma(TS)
	\end{align*}
	\item the Laplacian $\slap:\smncovS\rightarrow\smncovS$ defined according to
	\begin{align*}
	\slap\slashed{\tau}=\sum_{i=1,2}\Big(\sn^2_{\slashed{e}_i}\slashed{\tau}-\sn_{\sn_{\slashed{e}_i}\slashed{e}_i}\slashed{\tau}\Big)
	\end{align*}
	\item for $n=1$ the Lie derivative $\sn{\astrosun}:\smonecovS\rightarrow\smsymcovS{2}$ defined according to
	\begin{align*}
	\sn{\astrosun}\stau(X_1, X_2)=\sn_{\sX_1}\stau(\sX_2)+\sn_{\sX_2}\stau(\sX_1),\qquad X_1,X_2\in\smonecon
	\end{align*}
	\item for $n=1$ the traceless Lie derivative $\sn\astrosunhat:\smonecovS\rightarrow\smsymtfcovS{2}$ defined according to
	\begin{align*}
	\sn\astrosunhat\stau=\widehat{\sn{\astrosun}\stau}
	\end{align*}
\end{itemize}

\subsubsection{Mixed $Q$ and $S$ tensor analysis}\label{MixedQandStensoranalysis}

To define the action of $\qn$ on $S$ tensors and the action of $\sn$ on $Q$ tensors we simply define them via the action of the spacetime operator $\nabla$:
\begin{align*}
\qn_{\qX}\stau:&=\nabla_{\qX}\stau, \qquad\qX\in\smoneconQ, \stau\in\smncovS,\\
\sn_{\sX}\qtau:&=\nabla_{\sX}\qtau, \qquad\sX\in\smoneconS, \qtau\in\smncovQ.
\end{align*}
This moreover allows the generalisation of all the differential operators of the previous section. In addition for $\tau\in\smqmsm$ we define
\begin{align*}
\qn_{\qX}\mtau:&=\nabla_{\qX}\mtau,\\
\sn_{\sX}\mtau:&=\nabla_{\sX}\mtau.
\end{align*}
which moreover allows the generalisation of all the differential operators of the previous section to smooth $\qmsm$ 1-forms.

\subsection{The projection of $\smonecov$ and $\smsymtwocov$ onto and away from the $l=0,1$ spherical harmonics}\label{Theprojectionofsmoncovandsmsymtwocovontoandawayfromthel=0,1sphericalharmonics}

We define now the projection of smooth 1-forms and smooth, symmetric 2-covariant tensor fields onto and away from the $l=0,1$ spherical harmonics. We begin in section \ref{Theprojectionofsmoothfunctionson twosphereontoandawayfromthel=0,1sphericalharmonics} by recalling the classical projection of smooth functions on $\twosphere$ onto and away from the $l=0,1$ spherical harmonics. Then in section \ref{Theprojectionofsmfunontoandawayfromthel=0,1sphericalharmonics} we upgrade this projection to smooth functions on $\mcalm$. Then in section \ref{TheprojectionofsmoothQtensors,smoothsymmetrictracelessStensorsandsmoothqmsm1-formsontoandawayfromthel=0,1sphericalharmonics} we upgrade this to smooth to smooth $Q$ tensors,smooth, symmetric, traceless, $S$ tensors and smooth $\qmsm$ 1-forms. Finally in section \ref{Theprojectionofsmoncovandsmsymtwocovontoandawayfromthel=0,1sphericalharmonics} we upgrade this projection further to act on smooth 1-forms and smooth symmetric 2-covariant tensor fields on $\mcalm$.

\subsubsection{The projection of smooth functions on $\twosphere$ onto and away from the $l=0,1$ spherical harmonics}\label{Theprojectionofsmoothfunctionson twosphereontoandawayfromthel=0,1sphericalharmonics}

We recall the classical spherical harmonics $Y^l_m$ with $l\in\mathbb{N}$ and $m\in\{-l,...,0,...l\}$ defined as the set of orthogonal eigenfunctions for the Laplacian $\Delta_{\twosphere}$ associated to the metric $\twosphere$ on the unit round sphere:
\begin{align*}
\Delta_{\twosphere}Y^l_m=-l(l+1)Y^l_m
\end{align*}
and
\begin{align*}
\int_{\twosphere}Y^l_m\,Y^{l'}_{m'}\,\epsilon_{\twosphere}=\delta^{ll'}\delta_{mm'}.
\end{align*}
Here, $\delta$ is the Kronecker delta symbol. 

We explicitly note the form of the $l=0$ and $l=1$ modes
\begin{align}
Y^{l=0}_{m=0}&=\frac{1}{\sqrt{4\pi}}\label{l=0sphericalharmonics}\\
Y^{l=1}_{m=-1}=\sqrt{\frac{3}{4\pi}}\sin\theta\cos\varphi,\qquad Y^{l=1}_{m=0}&=\sqrt{\frac{3}{8\pi}}\cos\theta,\qquad Y^{l=1}_{m=1}=\sqrt{\frac{3}{4\pi}}\sin\theta\sin\varphi.\label{l=1sphericalharmonics}
\end{align}
The classical spherical harmonic decomposition of smooth functions $f$ on $\twosphere$ then says that 
\begin{align*}
f=\sum_{l,m}f_i^jY^i_j, \qquad f_i^j\in\reals.
\end{align*}
this leads to the following defintion

\begin{definition}\label{defnfunctiontwospheresupportedl=0,1}
	Let $f$ be a smooth function on $\twosphere$ and let $\epsilon_{\twosphere}$ be a volume form associated to $\roundmetric$. 
	
	Then we say that $f$ is supported only on $l=0,1$ iff for every $l\geq 2$
	\begin{align*}
	\int_{\twosphere}f\cdot Y^l_m\,\epsilon_{\twosphere}=0.
	\end{align*}
	
	Conversely, we say that $f$ has vanishing projection to $l=0,1$ iff
	\begin{align*}
	\int_{\twosphere}f\cdot Y^0_m\,\epsilon_{\twosphere}=	\int_{\twosphere}f\cdot Y^1_m\,\epsilon_{\twosphere}=0.
	\end{align*}
\end{definition}
Combining this with the above decomposition gives:
\begin{proposition}\label{propsphericalharmonicdecompfunc}
	Let $f$ be a smooth function on $\twosphere$. Then one has the unique decomposition
	\begin{align*}
	f=\mdpart{f}+f_{l\geq 2}
	\end{align*}
	where $\mdpart{f}$ is a smooth function on $\twosphere$ supported only on $l=0,1$, $\rpart{f}$ a smooth function on $\twosphere$ with vanishing projection to $l=0,1$ and
	\begin{align*}
	\int_{\twosphere}\mdpart{f}\rpart{f}\,\epsilon_{\twosphere}=0.
	\end{align*}
\end{proposition}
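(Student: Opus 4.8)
The plan is to read the decomposition off the $L^2(\twosphere)$-orthogonal projection onto the span of the low spherical harmonics, and to obtain uniqueness from their completeness. Concretely, I would \emph{define}
\[
\mdpart{f} := \Big(\int_{\twosphere} f\, Y^{0}_{0}\, \epsilon_{\twosphere}\Big) Y^{0}_{0} + \sum_{m=-1}^{1}\Big(\int_{\twosphere} f\, Y^{1}_{m}\, \epsilon_{\twosphere}\Big) Y^{1}_{m},
\]
the orthogonal projection of $f$ onto the four-dimensional space spanned by $\{Y^{0}_{0}, Y^{1}_{-1}, Y^{1}_{0}, Y^{1}_{1}\}$, and then set $\rpart{f} := f - \mdpart{f}$.

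First I would check that $\mdpart{f}$ has the asserted properties. It is a finite linear combination of the smooth functions $Y^{l}_{m}$ and hence smooth, and the orthonormality relation $\int_{\twosphere} Y^{l}_{m} Y^{l'}_{m'}\,\epsilon_{\twosphere} = \delta^{ll'}\delta_{mm'}$ gives $\int_{\twosphere}\mdpart{f}\, Y^{l}_{m}\,\epsilon_{\twosphere} = 0$ for every $l\geq 2$, so $\mdpart{f}$ is supported only on $l=0,1$ in the sense of Definition \ref{defnfunctiontwospheresupportedl=0,1}. Next, $\rpart{f} = f - \mdpart{f}$ is a difference of smooth functions, hence smooth, and for $l\in\{0,1\}$ the same orthonormality yields $\int_{\twosphere}\rpart{f}\, Y^{l}_{m}\,\epsilon_{\twosphere} = \int_{\twosphere} f\, Y^{l}_{m}\,\epsilon_{\twosphere} - \int_{\twosphere}\mdpart{f}\, Y^{l}_{m}\,\epsilon_{\twosphere} = 0$, so $\rpart{f}$ has vanishing projection to $l=0,1$. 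Finally, expanding $\mdpart{f}$ in the integrand and using that each of its summands is, by the previous line, $L^2(\twosphere)$-orthogonal to $\rpart{f}$ gives $\int_{\twosphere}\mdpart{f}\,\rpart{f}\,\epsilon_{\twosphere} = 0$.

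For uniqueness I would observe that the smooth functions supported only on $l=0,1$ and the smooth functions with vanishing projection to $l=0,1$ each form a linear subspace of the space of smooth functions on $\twosphere$, and that these two subspaces intersect trivially: a function $h$ lying in both satisfies $\int_{\twosphere} h\, Y^{l}_{m}\,\epsilon_{\twosphere} = 0$ for \emph{all} $l$ and $m$, so $h = 0$ by the completeness of the spherical harmonics $\{Y^{l}_{m}\}$ in $L^2(\twosphere)$ (equivalently, by the classical spectral decomposition of $\Delta_{\twosphere}$). Hence if $f = g_{1} + g_{2}$ is another such splitting, then $\mdpart{f} - g_{1} = g_{2} - \rpart{f}$ lies in the intersection and therefore vanishes, so $g_{1} = \mdpart{f}$ and $g_{2} = \rpart{f}$.

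There is no real obstacle here: the argument is a routine application of the Hilbert-space structure of $L^2(\twosphere)$. The only external input of any substance is the completeness of the spherical harmonics used for uniqueness, which is classical; and the only point to keep an eye on is that smoothness survives the construction, which is immediate since $\mdpart{f}$ is a finite sum of smooth functions and $\rpart{f}$ is built from $f$ and $\mdpart{f}$ by subtraction.
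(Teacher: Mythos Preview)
Your proposal is correct and follows essentially the same approach as the paper: the paper simply remarks that the proposition follows by combining Definition~\ref{defnfunctiontwospheresupportedl=0,1} with the classical spherical harmonic expansion $f=\sum_{l,m} f^j_i Y^i_j$, and your argument just spells out the details of that combination explicitly.
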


\subsubsection{The projection of $\smfun$ onto and away from the $l=0,1$ spherical harmonics}\label{Theprojectionofsmfunontoandawayfromthel=0,1sphericalharmonics}

Observe that
\begin{align}\label{Smetric}
\sg=r^2\pi^*_{\twosphere}\roundmetric
\end{align}

Therefore defining
\begin{align*}
\slashed{Y}^l_m=\pi_{\twosphere}^*Y^l_m
\end{align*}
one easily computes that
\begin{align*}
\slap\slashed{Y}^l_m =-\frac{l(l+1)}{r^2}\slashed{Y}^l_m
\end{align*}
and
\begin{align*}
\frac{1}{r^2}\int_{\twosphere_{\taus,r}}\slashed{Y}^l_m\,\slashed{Y}^{l'}_{m'}\,\sepsilon=\delta^{ll'}\delta_{mm'}.
\end{align*}
Given then $f\in\smfun$ one readily shows
\begin{align*}
f=\sum_{l,m}f_i^j\slashed{Y}^i_j,
\end{align*}
where $f_i^j$ are smooth functions of $t^*$ and $x$.

\begin{definition}\label{defnfunctionsupportedl=0,1}
	Let $f\in\Gamma(\mcalm)$. 
	
	Then we say that $f$ is supported only on $l=0,1$ iff for every $l\geq 2$ and for every 2-sphere $\twosphere_{\taus,r}$
	\begin{align*}
	\int_{\twosphere_{\taus,r}}f\cdot \slashed{Y}^l_m\,\sepsilon=0.
	\end{align*}
	
	Conversely, we say that $f$ has vanishing projection to $l=0,1$ iff for every 2-sphere $\twosphere_{\taus,r}$
	\begin{align*}
	\int_{\twosphere_{\taus,r}}f\cdot \slashed{Y}^0_m\,\sepsilon=\int_{\twosphere_{\taus,r}}f\cdot \slashed{Y}^1_m\,\sepsilon=0.
	\end{align*}
	In this latter case we say $f\in\smfunrad$.
\end{definition}

\begin{proposition}\label{propsphericalharmonicdecomp}
	Let $f\in\Gamma(\mcalm)$. Then one has the unique decomposition
	\begin{align*}
	f=\mdpart{f}+f_{l\geq 2}
	\end{align*}
	where $\mdpart{f}\in \Gamma(\mcalm)$ is supported only on $l=0,1$, $\rpart{f}\in \smfunrad$ and for every 2-sphere $\twosphere_{\taus,r}$
	\begin{align*}
	\int_{\twosphere_{\taus, r}}\mdpart{f}\rpart{f}\,\sepsilon=0.
	\end{align*}
\end{proposition}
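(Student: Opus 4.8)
The plan is to carry out the decomposition fibrewise over the base $\reals_{t^*}\times[1,\infty)_x$, invoking Proposition \ref{propsphericalharmonicdecompfunc} on each sphere, and then to verify that smoothness in $(t^*,x)$ is preserved. First I would note that for fixed $(t^*,x)$ the restriction of $f$ to $\{t^*\}\times\{x\}\times\twosphere$ is a smooth function on $\twosphere$, so Proposition \ref{propsphericalharmonicdecompfunc} applies there. Concretely, I set
\begin{align*}
c^m_l(t^*,x):=\int_{\twosphere}f(t^*,x,\cdot)\,Y^l_m\,\epsilon_{\twosphere},\qquad \mdpart{f}:=\sum_{l=0}^{1}\sum_{m=-l}^{l}c^m_l\,\slashed{Y}^l_m,\qquad \rpart{f}:=f-\mdpart{f},
\end{align*}
where $\slashed{Y}^l_m=\pi^*_{\twosphere}Y^l_m$ as above.

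Second, I would check that $\mdpart{f}\in\smfun$, and hence $\rpart{f}\in\smfun$. Since the integrand $f(t^*,x,\cdot)\,Y^l_m$ is smooth jointly in all variables and $\twosphere$ is compact, differentiation under the integral sign shows each $c^m_l$ is a smooth function of $(t^*,x)$; as $\slashed{Y}^l_m\in\smfun$, the finite sum $\mdpart{f}$ lies in $\smfun$. Next I would transport the relevant orthogonality to the spheres $\twosphere_{\taus,r}$: the projection $\pi_{\twosphere}$ restricts to a diffeomorphism $\twosphere_{\taus,r}\to\twosphere$ under which, using $\sg=r^2\pi^*_{\twosphere}\roundmetric$ from \eqref{Smetric} together with the normalisation $\tfrac{1}{r^2}\int_{\twosphere_{\taus,r}}\slashed{Y}^l_m\slashed{Y}^{l'}_{m'}\sepsilon=\delta^{ll'}\delta_{mm'}$ recorded above, the functional $\tfrac{1}{r^2}\int_{\twosphere_{\taus,r}}(\,\cdot\,)\,\sepsilon$ corresponds to $\int_{\twosphere}(\,\cdot\,)\,\epsilon_{\twosphere}$. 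Consequently, on each $\twosphere_{\taus,r}$ the function $\mdpart{f}$ lies in the span of $\{\slashed{Y}^0_0,\slashed{Y}^1_m\}$ and so pairs to zero against $\slashed{Y}^l_m$ for all $l\geq2$, i.e. $\mdpart{f}$ is supported only on $l=0,1$ in the sense of Definition \ref{defnfunctionsupportedl=0,1}; dually $\rpart{f}$ has vanishing projection to $l=0,1$, so $\rpart{f}\in\smfunrad$, and the $L^2(\twosphere_{\taus,r})$-orthogonality of distinct spherical-harmonic subspaces yields $\int_{\twosphere_{\taus,r}}\mdpart{f}\,\rpart{f}\,\sepsilon=0$.

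Finally, for uniqueness suppose $f=g_1+h_1=g_2+h_2$ with each $g_i$ supported only on $l=0,1$ and each $h_i\in\smfunrad$. Then $g_1-g_2=h_2-h_1$ is simultaneously supported only on $l=0,1$ and has vanishing projection to $l=0,1$; restricting to an arbitrary sphere $\twosphere_{\taus,r}$ and expanding in the $\slashed{Y}^l_m$ forces every coefficient to vanish, so $g_1=g_2$ and $h_1=h_2$. I do not expect a genuine obstacle here: the one point requiring (routine) care is the smoothness of the fibrewise projection, which is exactly the legitimacy of differentiating under the integral over the compact fibre $\twosphere$; the remainder is simply the orthogonality of spherical harmonics transported to the spheres $\twosphere_{\taus,r}$ via the conformal relation \eqref{Smetric}.
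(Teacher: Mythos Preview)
Your proof is correct and follows essentially the same approach as the paper. The paper does not give an explicit proof of this proposition; it merely records the expansion $f=\sum_{l,m}f^j_i\,\slashed{Y}^i_j$ with $f^j_i$ smooth in $(t^*,x)$ and then states the proposition, so your fibrewise projection argument with differentiation under the integral sign is exactly the routine verification the paper leaves to the reader.
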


\subsubsection{The projection of smooth $Q$ tensors, smooth symmetric traceless $S$ tensors and smooth $\qmsm$ 1-forms onto and away from the $l=0,1$ spherical harmonics}\label{TheprojectionofsmoothQtensors,smoothsymmetrictracelessStensorsandsmoothqmsm1-formsontoandawayfromthel=0,1sphericalharmonics}

In order to extend Proposition \ref{propsphericalharmonicdecomp} to tensor fields we first need the following Hodge decomposition.

In what follows we extend the connections $\qn$ and $\sn$ to $\smonecon$ by $\qn_X:=\qn_{\qX}$ and $\sn_X:=\sn_{\sX}$ thus yielding covariant derivative operators on $\mcalm$.

\begin{proposition}\label{prophodgedecomp}
	Let $\sxi\in\smonecovS$. Then there exist two unique functions $\even{\sxi}, \odd{\sxi}\in\smfun\setminus\ker\slap$ such that $\sxi$ has the (unique) representation
	\begin{align*}
	\sxi&=\sdso\big(\even{\sxi}, \odd{\sxi}\big)\\
	:&=\sn\even{\sxi}+\shd\sn\odd{\sxi}
	\end{align*}
	with $\ker\slap$ spanned by functions depending only on $t^*$ and $x$.
	
	Let now $\shatxi\in\smsymtfcovS{2}$. Then there exist two unique functions $\even{\shatxi}, \odd{\shatxi}\in\smfun\setminus\ker\slap\big(\slap+\tfrac{2}{r^2}\Id\big)$ such that $\shatxi$ has the (unique) representation
	\begin{align*}
	\shatxi&=\sn\astrosunhat\sdso\big(\even{\shatxi}, \odd{\shatxi}\big).
	\end{align*}
	with $\smfunrad=\smfun\setminus\ker\slap\big(\slap+\tfrac{2}{r^2}\Id\big)$.
\end{proposition}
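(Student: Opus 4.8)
The plan is to reduce both assertions to the classical Hodge decomposition on the standard round $2$-sphere, applied leafwise over the $2$-spheres $\twosphere_{\taus, r}$, and then to promote the scalar potentials so obtained to \emph{smooth} functions on $\mcalm$ by inverting $\slap$ (respectively $\slap(\slap + \tfrac{2}{r^2}\Id)$) with smooth dependence on the parameters $(t^*, x)$. The structural input making this possible is \eqref{Smetric}: on each leaf $\twosphere_{\taus, r}$ the induced metric $\sg$ is a constant ($r$-dependent) rescaling of the unit round metric $\roundmetric$, so the leafwise operators $\sn$, $\shd$, $\slap$, $\sdiv$, $\scurl$ coincide with the corresponding operators of $(\twosphere, \roundmetric)$ up to explicit powers of $r$. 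In particular $\slap$ acts on $\smfun$ leafwise, has kernel exactly the functions depending only on $t^*$ and $x$ (the $l=0$ modes, by Proposition \ref{propsphericalharmonicdecomp}), and restricts to an isomorphism of the space of functions with vanishing $l=0$ projection; similarly $\slap + \tfrac{2}{r^2}\Id$ annihilates precisely the $l=1$ modes, so $\slap(\slap + \tfrac{2}{r^2}\Id)$ is an isomorphism of $\smfunrad$.

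For the one-form statement, given $\sxi\in\smonecovS$ I would set
\[
\even{\sxi} := \slap^{-1}\,\sdiv\sxi, \qquad \odd{\sxi} := -\,\slap^{-1}\,\scurl\sxi,
\]
where $\slap^{-1}$ denotes the inverse of $\slap$ on functions with vanishing $l=0$ projection; this is well defined since $\sdiv\sxi$ and $\scurl\sxi$ have no $l=0$ component (their integral over any $\twosphere_{\taus, r}$ vanishes). Smoothness of $\even{\sxi}, \odd{\sxi}$ in $(t^*, x)$ follows most cleanly by expanding in the spherical harmonics $\slashed{Y}^l_m$ of section \ref{Theprojectionofsmfunontoandawayfromthel=0,1sphericalharmonics}, since $\slap^{-1}$ multiplies the $(l,m)$-coefficient by the $r$-smooth factor $-r^2/l(l+1)$, with uniform convergence of the resulting series (and of all its $\pt$- and $\pr$-derivatives) guaranteed by leafwise elliptic estimates. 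One then checks directly that $\sdso(\even{\sxi}, \odd{\sxi}) = \sn\even{\sxi} + \shd\sn\odd{\sxi}$ has the same leafwise divergence and curl as $\sxi$ (using $\sdiv\sn = \slap$, $\scurl\sn = 0$, $\sdiv\shd\sn = 0$, $\scurl\shd\sn = \pm\slap$), so the difference is a leafwise closed and co-closed $1$-form on $\twosphere$ and hence vanishes. Uniqueness modulo $\ker\slap$ is the same computation in reverse: $\sn f + \shd\sn h = 0$ forces $\slap f = \slap h = 0$ on taking $\sdiv$ and $\scurl$, so the representatives with vanishing $l=0$ projection are unique, and this is how $\even{\sxi}$ and $\odd{\sxi}$ are selected.

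The symmetric traceless case runs in the same spirit with the composite operator $\sn\astrosunhat\sdso$ in place of $\sdso$. By the Bochner identity on $(\twosphere_{\taus, r}, \sg)$, whose Gauss curvature is $r^{-2}$, the two natural leafwise scalar invariants $\sdiv\sdiv$ and $\scurl\sdiv$ of $\sn\astrosunhat\sdso(f, h)$ are $\slap(\slap + \tfrac{2}{r^2}\Id)f$ and $\pm\slap(\slap + \tfrac{2}{r^2}\Id)h$ respectively --- which is precisely why this fourth-order operator appears in the statement. Given $\shatxi\in\smsymtfcovS{2}$, I would therefore let $\even{\shatxi}$ and $\odd{\shatxi}$ be the solutions in $\smfunrad$ of these fourth-order equations with right-hand sides $\sdiv\sdiv\shatxi$ and $\pm\scurl\sdiv\shatxi$; this is solvable because those right-hand sides lie in $\smfunrad$ and $\slap(\slap + \tfrac{2}{r^2}\Id)$ is invertible there, with smoothness in $(t^*, x)$ again handled by the $\slashed{Y}^l_m$-expansion. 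One then verifies that $\sn\astrosunhat\sdso(\even{\shatxi}, \odd{\shatxi}) - \shatxi$ is a symmetric traceless $S$-tensor with vanishing $\sdiv\sdiv$ and $\scurl\sdiv$ invariants, and an integration by parts on each leaf --- equivalently, the absence of nontrivial transverse-traceless tensors on $\twosphere$ --- forces it to vanish. Uniqueness follows because $\sn\astrosunhat\sdso(\even{\shatxi}, \odd{\shatxi}) = 0$ implies $\slap(\slap + \tfrac{2}{r^2}\Id)\even{\shatxi} = \slap(\slap + \tfrac{2}{r^2}\Id)\odd{\shatxi} = 0$, i.e.\ the potentials are supported on $l = 0, 1$, which is excluded by membership in $\smfunrad$.

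The main obstacle I anticipate is not the Hodge-theoretic algebra on a single sphere --- that is classical --- but the parameter-dependence: one must confirm that inverting $\slap$ and $\slap(\slap + \tfrac{2}{r^2}\Id)$ leafwise returns functions genuinely smooth on all of $\mcalm$, including up to and across $\eh$ and throughout the region $r \geq R$ where the $\taus$-foliation changes character, and one must pin down the kernels of $\sdso$ and $\sn\astrosunhat\sdso$ exactly so that the uniqueness spaces $\smfun\setminus\ker\slap$ and $\smfunrad$ are neither too large nor too small. For both points the safest route is to work throughout with the $\slashed{Y}^l_m$-expansion of section \ref{Theprojectionofsmfunontoandawayfromthel=0,1sphericalharmonics}, together with elliptic estimates on $\twosphere$ uniform in $(t^*, x)$, so that the series defining the potentials converge in every $C^k$ along with all their coordinate derivatives.
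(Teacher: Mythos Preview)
Your argument is correct and in fact considerably more informative than the paper's own proof, which simply reads: ``The general decomposition is a special case of \cite{CKstabmink}. The statement about the kernel follows from \cite{DHRlinstabschwarz}.'' You have instead written out a self-contained leafwise construction --- defining the potentials explicitly as $\slap^{-1}\sdiv\sxi$, $-\slap^{-1}\scurl\sxi$ (and the analogous fourth-order inversions for the traceless symmetric case), verifying the decomposition by matching divergence and curl, and handling smoothness in $(t^*,x)$ via the $\slashed{Y}^l_m$-expansion. This buys you explicit formulas for the potentials and a transparent identification of the kernels, at the cost of length; the paper's approach buys brevity by outsourcing both the existence/uniqueness on a single sphere and the kernel computation to the cited references. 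Your concern about smooth parameter-dependence is legitimate but your proposed resolution (mode-by-mode inversion with uniform elliptic estimates) is exactly the standard one and would go through without difficulty.
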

\begin{proof}
The general decomposition is a special case of \cite{CKstabmink}. The statement about the kernel follows from \cite{DHRlinstabschwarz}.
\end{proof}

One interpretation of the above is that smooth functions provide a global basis for smooth $S$ 1-forms and smooth, symmetric, traceless 2-covariant $S$ tensors. The tensor products of $Q$ 1-forms $\qexd t^*$ and $\qexd x$ also clearly span the space of smooth $Q$ tensor fields. We can then combine these facts to determine a basis of smooth $\qmsm$ 1-forms.

\begin{proposition}\label{prophodgedecompqmsm1form}
	Let $\mtau\in\smqmsm$. Then there exist two unique $Q$ 1-forms $\even{\mtau}, \odd{\mtau}\in\smonecovQ\setminus\ker\slap$ such that $\mtau$ has the (unique) representation
	\begin{align*}
	{\mtau}&=\sdso\astrosun\big(\even{\mtau}, \odd{\mtau}\big)\\
	:&=\sn\astrosun\even{\mtau}+\shd\sn\astrosun\odd{\mtau}.
	\end{align*}
\end{proposition}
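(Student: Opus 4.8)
The plan is to reduce the claim to the Hodge-type decomposition of Proposition~\ref{prophodgedecomp} applied in the $S$-directions, exploiting the fact that the bundle $T^*Q$ is globally trivialised by the closed $1$-forms $\qexd t^*$ and $\qexd x$. First I would record the purely algebraic fact that, since $\{\qexd t^*,\qexd x\}$ frames $T^*Q$ fibrewise and $T_pS$, $T_pQ$ are $g_M$-orthogonal complements, the four $\qmsm$ $1$-forms obtained by symmetrically pairing $\qexd t^*,\qexd x$ with a frame of $T^*S$ span the rank-$4$ bundle $T^*Q\astrosun T^*S$ at every point; hence each $\mtau\in\smqmsm$ admits a \emph{unique} representation $\mtau=\qexd t^*\astrosun\sxi^{(0)}+\qexd x\astrosun\sxi^{(1)}$ with $\sxi^{(0)},\sxi^{(1)}\in\smonecovS$. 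Applying Proposition~\ref{prophodgedecomp} to each $\sxi^{(i)}$ produces unique functions $\even{\sxi^{(i)}},\odd{\sxi^{(i)}}\in\smfun\setminus\ker\slap$ with $\sxi^{(i)}=\sn\even{\sxi^{(i)}}+\shd\sn\odd{\sxi^{(i)}}$, and I would then set $\even{\mtau}:=\even{\sxi^{(0)}}\,\qexd t^*+\even{\sxi^{(1)}}\,\qexd x$ and $\odd{\mtau}:=\odd{\sxi^{(0)}}\,\qexd t^*+\odd{\sxi^{(1)}}\,\qexd x$, which are the candidate $Q$ $1$-forms.

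The one genuine computation --- and the step I expect to be the main obstacle, since it is the only point where the geometry of $g_M$ (rather than linear algebra) enters --- is the commutation identity $\sn\astrosun\big(f\,\qexd y\big)=\qexd y\astrosun\sexd f$ for $f\in\smfun$ and $y\in\{t^*,x\}$, together with the fact that $\shd$ acts purely in the $S$-slot. The identity amounts to showing that the $S$-covariant derivative of $\qexd y$ has vanishing $Q$-component, i.e.\ that $\Gamma^c_{ba}=0$ whenever $b$ is an angular index and $a,c\in\{t^*,x\}$ (note that the full covariant derivative $\sn_{\slashed Z}\qexd y$ need \emph{not} vanish --- only its $Q$-component does). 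This follows from the block structure of $g_M$ with respect to $T\mcalm=TQ\oplus TS$ and its spherical symmetry: the $QQ$-block of $g_M$ is angle-independent and the $QS$-block vanishes identically, so in $\Gamma^c_{ba}=\tfrac12 g^{c\mu}(\partial_b g_{\mu a}+\partial_a g_{\mu b}-\partial_\mu g_{ba})$ every contribution is either an angular derivative of an angle-independent component, a derivative of the vanishing $QS$-block, or is annihilated by the factor $g^{c\mu}$ (which vanishes when $\mu$ is angular). Granting this, $\sn\astrosun$ and $\shd$ commute past the frame elements, whence $\sn\astrosun\even{\mtau}=\qexd t^*\astrosun\sn\even{\sxi^{(0)}}+\qexd x\astrosun\sn\even{\sxi^{(1)}}$ and $\shd\sn\astrosun\odd{\mtau}=\qexd t^*\astrosun\shd\sn\odd{\sxi^{(0)}}+\qexd x\astrosun\shd\sn\odd{\sxi^{(1)}}$; adding these and using the representations of $\sxi^{(0)},\sxi^{(1)}$ above recovers $\mtau=\qexd t^*\astrosun\sxi^{(0)}+\qexd x\astrosun\sxi^{(1)}$, that is, $\mtau=\sdso\astrosun(\even{\mtau},\odd{\mtau})$.

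For uniqueness I would run this in reverse: given any representation $\mtau=\sdso\astrosun(\even{\mtau}',\odd{\mtau}')$, expanding $\even{\mtau}'$ and $\odd{\mtau}'$ in the frame $\{\qexd t^*,\qexd x\}$ and applying the same commutation identity forces $\mtau=\qexd t^*\astrosun\sdso(a_0,b_0)+\qexd x\astrosun\sdso(a_1,b_1)$ for the scalar components $a_i,b_i$ of $\even{\mtau}',\odd{\mtau}'$; comparing with the unique representation of the first paragraph and invoking the uniqueness clause of Proposition~\ref{prophodgedecomp} pins down the $a_i,b_i$, so $\even{\mtau}'=\even{\mtau}$ and $\odd{\mtau}'=\odd{\mtau}$. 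Here the exclusion of $\ker\slap$ in the statement is understood componentwise relative to the frame $\{\qexd t^*,\qexd x\}$ --- equivalently, as the requirement that $\sn\astrosun\even{\mtau}$ and $\sn\astrosun\odd{\mtau}$ be nonzero --- which is exactly what is inherited from the scalar condition $\even{\sxi^{(i)}},\odd{\sxi^{(i)}}\in\smfun\setminus\ker\slap$ of Proposition~\ref{prophodgedecomp}.
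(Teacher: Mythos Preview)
Your proof is correct and follows essentially the same approach as the paper: decompose $\mtau$ against the global frame $\{\qexd t^*,\qexd x\}$ for $T^*Q$, apply Proposition~\ref{prophodgedecomp} to the resulting $S$ $1$-form coefficients, and reassemble the even and odd potentials into $Q$ $1$-forms. The paper's proof is terser---it simply writes the decomposition and asserts the conclusion---whereas you have explicitly verified the commutation step $\sn\astrosun(f\,\qexd y)=\qexd y\astrosun\sn f$ via the vanishing of the relevant Christoffel symbols (a computation the paper carries out only later, in the appendix) and spelled out the uniqueness argument; both additions are correct and welcome but do not represent a different strategy.
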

with $\ker\slap$ spanned by $f \qexd t^*+g\qexd x$ with $f,g$ functions depending only on $t^*$ and $x$.
\begin{proof}
	Since $\big\{\qexd t^*, \qexd x\big\}$ is a global frame for $T^*Q$ one has the decomposition
	\begin{align*}
	{\mtau}=\ssigma\astrosun\qexd t^*+\srho\astrosun\qexd r,\qquad \ssigma, \srho\in\smonecovS.
	\end{align*}
	Proposition \ref{prophodgedecomp} then yields the further decomposition
	\begin{align*}
	{\mtau}=\sdso\big(\even{\ssigma}, \odd{\ssigma}\big)\astrosun\qexd t^*+\sdso\big(\even{\srho}, \odd{\srho}\big)\astrosun\qexd r,\qquad \even{\ssigma}, \odd{\ssigma}, \even{\srho}, \odd{\srho}\in\smfun\setminus\ker\slap.
	\end{align*}
	Defining thus $\even{\mtau}, \odd{\mtau}\in\smonecovQ\setminus\ker\slap$ according to
	\begin{align*}
	\even{\mtau}=\even{\ssigma}\qexd t^*+\even{\srho}\qexd x,\qquad\odd{\mtau}=\odd{\ssigma}\qexd t^*+\odd{\srho}\qexd x
	\end{align*}
	the proposition follows.
\end{proof}

\begin{definition}\label{defnqmtensorssupportedonlgeq2}
	Let $\qtau\in\smncovQ$ for $n\in\mathbb{N}$. Then we say that $\qtau$ is supported only on $l=0,1$ iff the components of $\qtau$ in the frame $\{\qexd t^*, \qexd x\}$ are supported only on $l=0,1$ (cf. section \ref{MixedQandStensoranalysis}). Conversely, we say that $\qtau$ has vanishing projection to $l=0,1$ iff the components of $\qtau$ in the frame $\{\qexd t^*, \qexd x\}$ have vanishing projection to $l=0,1$.

Let now $\mtau\in\smqmsm$. Then we say that $\qtau$ is supported only on $l=0,1$ iff $\even{\mtau}$ and $\odd{\mtau}$ are supported only on $l=0,1$ (cf. Proposition \ref{prophodgedecompqmsm1form}). Conversely, we say that $\mtau$ has vanishing projection to $l=0,1$ iff $\even{\mtau}$ and $\odd{\mtau}$ have vanishing projection to $l=0,1$.

Finally, let $\shattau\in\smsymtratwocovS$. Then we say that $\shattau$ is supported only on $l=0,1$ iff $\even{\shattau}$ and $\odd{\shattau}$ are supported only on $l=0,1$ (cf. Proposition \ref{prophodgedecomp}). Conversely, we say that $\shattau$ has vanishing projection to $l=0,1$ iff $\even{\shattau}$ and $\odd{\shattau}$ have vanishing projection to $l=0,1$.
\end{definition}

Note Proposition \ref{prophodgedecomp} thus yields that smooth, symmetric, traceless 2-covariant $S$ tensors automatically have vanishing projection to $l=0,1$.

\begin{proposition}\label{propsphericalharmonicdecompQStensors}
	Let $\qtau\in\smncovQ$ for $n\in\mathbb{N}$. Then one has the unique decomposition
	\begin{align*}
	\qtau=\mdpart{\qtau}+\qtau_{l\geq 2}
	\end{align*}
	where $\mdpart{\qtau}\in \smncovQ$ is supported only on $l=0,1$, $\rpart{\qtau}\in \smncovQ$ has vanishing projection to $l=0,1$ and for every 2-sphere $\twosphere_{\taus,r}$
	\begin{align*}
	\int_{\twosphere_{\taus, r}}\big(\mdpart{\qtau}\big)_{ij}\big(\rpart{\qtau}\big)_{kl}\,\sepsilon=0
	\end{align*}
	with $\big(\mdpart{\qtau}\big)_{ij}$ and $\big(\rpart{\qtau}\big)_{kl}$ the coefficients of $\mdpart{\qtau}$ and $\rpart{\qtau}$ in the frame $\{\qexd t^*, \qexd x\}$ respectively.
	
		Let now $\mtau\in\smqmsm$. Then one has the unique decomposition
		\begin{align*}
		{\mtau}=\mdpart{\mtau}+\mtau_{l\geq 2}
		\end{align*}
		where $\mdpart{\mtau}\in \smqmsm$ is supported only on $l=0,1$, $\rpart{\mtau}\in \smqmsm$ has vanishing projection to $l=0,1$ and for every 2-sphere $\twosphere_{\taus,r}$
		\begin{align*}
		\int_{\twosphere_{\taus, r}}\even{\big(\mdpart{\mtau}\big)}\even{\big(\rpart{\mtau}\big)}\,\sepsilon=\int_{\twosphere_{\taus, r}}\odd{\big(\mdpart{\mtau}\big)}\odd{\big(\rpart{\mtau}\big)}\,\sepsilon=0.
		\end{align*}
		
\end{proposition}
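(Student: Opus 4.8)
The plan is to reduce both decompositions to the scalar one of Proposition \ref{propsphericalharmonicdecomp}, working throughout in the global frame $\{\qexd t^*,\qexd x\}$ of $T^*Q$. For the $Q$-tensor statement, a tensor $\qtau\in\smncovQ$ is determined by its $2^n$ coefficient functions $\qtau_{i_1\dots i_n}\in\smfun$ in this frame; first I would apply Proposition \ref{propsphericalharmonicdecomp} to each of them to write $\qtau_{i_1\dots i_n}=(\qtau_{i_1\dots i_n})_{l=0,1}+(\qtau_{i_1\dots i_n})_{l\geq 2}$, and then declare $\mdpart\qtau$ (resp.\ $\rpart\qtau$) to be the $Q$-tensor whose coefficients in $\{\qexd t^*,\qexd x\}$ are the first (resp.\ second) summands. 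Smoothness is inherited from Proposition \ref{propsphericalharmonicdecomp}; the two tensors are, by Definition \ref{defnqmtensorssupportedonlgeq2}, precisely one supported only on $l=0,1$ and one with vanishing projection to $l=0,1$; and $\qtau=\mdpart\qtau+\rpart\qtau$ is immediate. Uniqueness follows coefficient-by-coefficient from the uniqueness in Proposition \ref{propsphericalharmonicdecomp}, and the vanishing of $\int_{\twosphere_{\taus,r}}(\mdpart\qtau)_{ij}(\rpart\qtau)_{kl}\,\sepsilon$ follows from the orthonormality of the $\slashed{Y}^l_m$ on each $\twosphere_{\taus,r}$ (it is the orthogonality assertion of Proposition \ref{propsphericalharmonicdecomp} when $(i,j)=(k,l)$, and the vanishing of cross-pairings between $l\leq 1$ and $l\geq 2$ modes otherwise).

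For the $\qmsm$-form statement I would use the Hodge decomposition of Proposition \ref{prophodgedecompqmsm1form} to write $\mtau=\sdso\astrosun(\even\mtau,\odd\mtau)=\sn\astrosun\even\mtau+\shd\sn\astrosun\odd\mtau$, with $\even\mtau,\odd\mtau\in\smonecovQ\setminus\ker\slap$ the unique Hodge potentials, which are themselves $Q$ $1$-forms. Applying the $Q$-tensor decomposition just obtained to $\even\mtau$ and $\odd\mtau$ and using linearity of $\sdso\astrosun$, I would set
\begin{align*}
\mdpart\mtau:=\sdso\astrosun\big((\even\mtau)_{l=0,1},(\odd\mtau)_{l=0,1}\big),\qquad
\rpart\mtau:=\sdso\astrosun\big((\even\mtau)_{l\geq 2},(\odd\mtau)_{l\geq 2}\big)
\end{align*}
so that $\mtau=\mdpart\mtau+\rpart\mtau$. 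By Definition \ref{defnqmtensorssupportedonlgeq2} the first summand is supported only on $l=0,1$ and the second has vanishing projection to $l=0,1$; uniqueness stacks the uniqueness of the Hodge potentials onto that of the $Q$ $1$-form decomposition; and the sphere-wise orthogonality of $\even{(\mdpart\mtau)}$ against $\even{(\rpart\mtau)}$, and likewise for the odd parts, is inherited from the $Q$ $1$-form case applied to $\even\mtau$ and $\odd\mtau$, once one knows that $\even{(\mdpart\mtau)}$ coincides with the $l=0,1$ part of $\even\mtau$ modulo the ambiguity addressed below.

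The only point requiring more than bookkeeping is the interaction of the spherical-harmonic projection with $\ker\slap$ in the $\qmsm$ case. The $l=0,1$ part of a kernel-free potential may carry an $l=0$ component, which lies in $\ker\slap$, so $(\even\mtau)_{l=0,1}$ need not itself be an admissible potential. However, $\ker\slap$ acting on $Q$ $1$-forms consists precisely of the forms $f\qexd t^*+g\qexd x$ with $f,g$ functions of $(t^*,x)$ only (Proposition \ref{prophodgedecompqmsm1form}), equivalently the $Q$ $1$-forms supported purely on $l=0$, and $\sn\astrosun$ annihilates such forms; this last fact is where one uses the geometry of the Schwarzschild spheres, namely that $\nabla_{\sX}\pt$ and $\nabla_{\sX}\px$ are tangent to the $\twosphere_{\taus,r}$ for $\sX\in\smoneconS$. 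Consequently the definitions of $\mdpart\mtau$ and $\rpart\mtau$ above do not depend on the $l=0$ ambiguity, $\even{(\mdpart\mtau)}$ equals the $l=1$-supported part of $\even\mtau$, and the claimed orthogonality survives since an $l=1$-supported $Q$ $1$-form is still $\sepsilon$-orthogonal on each $\twosphere_{\taus,r}$ to an $l\geq 2$-supported one. I expect this compatibility check, together with the verification that $\sn\astrosun$ kills $\ker\slap$, to be the only genuinely non-routine part; everything else is a direct transcription of Proposition \ref{propsphericalharmonicdecomp}.
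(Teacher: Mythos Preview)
Your proposal is correct and follows exactly the approach the paper implicitly relies on; the paper states Proposition~\ref{propsphericalharmonicdecompQStensors} without proof, treating it as an immediate consequence of Proposition~\ref{propsphericalharmonicdecomp} together with Definition~\ref{defnqmtensorssupportedonlgeq2} and Proposition~\ref{prophodgedecompqmsm1form}, which is precisely the reduction you carry out. Your careful treatment of the $\ker\slap$ ambiguity in the $\qmsm$ case is more explicit than anything in the paper but is the right point to check, and your argument there is sound.
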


\subsubsection{The projection of $\smonecov$ and $\smsymtwocov$ onto and away from the $l=0,1$ spherical harmonics}\label{Theprojectionofsmsymtwocovontoandawayfromthel=0,1sphericalharmonics}

The projection of smooth 1-forms and smooth symmetric, 2-covariant tensor fields on $\mcalm$ is then by first decomposing them as in section \ref{QandStensoralgebra} and then using the projections of the previous section.

\subsection{Elliptic operators on 2-spheres}\label{Ellipticoperatorson2-spheres}

In this section we introduce an $L^2$ norm on the 2-spheres of section \ref{Ageometricfoliationby2-spheres}. A family of elliptic operators acting on tensor fields on $\mcalm$ are then defined in section \ref{ThefamilyofoperatorssA}-\ref{Thefamilyofoperatorsslap} for which elliptic estimates will be derived with respect to these norms.

\subsubsection{Norms on spheres}\label{Normsonspheres}

First we define the norms through which the higher order angular derivatives are to be measured.

\begin{definition}\label{defnL2norm}
Let $Q$ be a symmetric $2$-covariant $\qm$-tensor, let $\momega$ be an $\qmsm$ 1-form and let $\Theta$ be an $n$-covariant $S$-tensor for $n\geq 0$ an integer. Then on any 2-sphere $\geomtwosphere{}{r}$ we define the $L^2$ norm $\normtwosphere{\cdot}{}{r}$ according to 
	\begin{align*}
	\normtwosphere{Q}{}{r}:&=\int_{\twosphere_{\taus,r}}\Big(|Q_{\pt\pt}|^2+|Q_{\pt\pr}|^2+|Q_{\pr\pr}|^2\Big)r^{-2}\sepsilon,\\
	\normtwosphere{\momega}{}{r}:&=\int_{\twosphere_{\taus,r}}\Big(|\momega_{\pt}|_{\sg}^2+|\momega_{\pr}|_{\sg}^2\Big)r^{-2}\sepsilon,\\
	\normtwosphere{\Theta}{}{r}:&=\int_{\twosphere_{\taus,r}}|\Theta|_{\sg}^2r^{-2}\sepsilon.
	\end{align*}
	Moreover, the higher order norms $\normtwosphere{(r\sn)^k\cdot}{}{r}$ for $k\geq 1$ are defined according to
	\begin{align*}
	\normtwosphere{(r\sn)^kQ}{}{r}:&=\int_{S^2_{\taus,r}}\Big(|(r\sn)^k Q_{\pt\pt}|_{\sg}^2+|(r\sn)^k Q_{\pt\pr}|_{\sg}^2+|(r\sn)^k Q_{\pr\pr}|_{\sg}^2\Big)r^{-2}\sepsilon,\\
	\normtwosphere{(r\sn)^k\momega}{}{r}:&=\int_{S^2_{\taus,r}}\Big(|(r\sn)^k\momega_{\pt}|_{\sg}^2+|(r\sn)^k\momega_{\pr}|_{\sg}^2\Big)r^{-2}\sepsilon,\\
	\normtwosphere{(r\sn)^k\Theta}{}{r}:&=\int_{S^2_{\taus,r}}|(r\sn)^k\Theta|_{\sg}^2r^{-2}\sepsilon.
	\end{align*}

\end{definition}

\subsubsection{Elliptic estimates on 2-spheres}\label{ThefamilyofoperatorssA}

We continue by introducing a family of operators on $\mcalm$ which shall ultimately serve as a shorthand notation for controlling higher order angular derivatives of tensor fields on $\mcalm$  measured in the norms of the previous section. Indeed, proceeding similarly as in \cite{DHRlinstabschwarz}, we define
\begin{itemize}
	\item the operators $\smcA{i}$ are defined inductively as
	\begin{align*}
	\smcA{2i+1}:=r\sn\smcA{2i},\qquad\smcA{2i+2}:=-r\sdiv\smcA{2i+1}
	\end{align*} 
	with $\smcA{1}=r\sn$
	\item the operators $\vmcA{i}$ are defined inductively as
	\begin{align*}
	\vmcA{2i+1}:=r\sdo\vmcA{2i},\qquad\vmcA{2i+2}:=r\sdso\vmcA{2i+1}
	\end{align*} 
	with $\vmcA{1}=r\sdo$
	\item the operators $\tmcA{i}$ are defined inductively as
	\begin{align*}
	\tmcA{2i+1}:=r\sdiv\tmcA{2i},\qquad\tmcA{2i+2}:=-r\sn\astrosunhat\tmcA{2i+1}
	\end{align*} 
	with $\tmcA{1}=r\sdiv$
\end{itemize}
Before we derive the elliptic estimates we note the following lemma (we recall by convention that a 0-covariant $\qm$-tensor is a function on $\mcalm$).
\begin{lemma}\label{lemmapoincare}
	Let $Q\in\smncovQ$ for $n\geq 0$ an integer have vanishing projection to $l=0,1$. Then for $i=0,...,5$ and any 2-sphere $\geomtwosphere{}{r}$ one has the estimate
	\begin{align*}
	(6-i)\|(r\sn)^iQ\|^2_{\twosphere_{\taus, r}}&\leq\|(r\sn)^{i+1}Q\|^2_{\twosphere_{\taus, r}}.
	\end{align*}
\end{lemma}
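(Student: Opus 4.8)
The plan is to reduce the statement to a Poincar\'e-type inequality for scalar functions on the round sphere and then to an elementary comparison of spherical-harmonic eigenvalues. First I would note that, by Definition \ref{defnL2norm}, the quantity $\|(r\sn)^iQ\|^2_{\twosphere_{\taus,r}}$ is a finite sum of terms $\|(r\sn)^i f\|^2_{\twosphere_{\taus,r}}$, where $f$ runs over the components of $Q$ in the frame $\{\qexd t^*,\qexd x\}$. By Definition \ref{defnqmtensorssupportedonlgeq2} and the hypothesis on $Q$, each such $f$ lies in $\smfunrad$, i.e. has vanishing projection to $l=0,1$. Since the asserted inequality is linear in the squared norms, it suffices to prove
\begin{align*}
(6-i)\,\|(r\sn)^i f\|^2_{\twosphere_{\taus,r}}\leq \|(r\sn)^{i+1}f\|^2_{\twosphere_{\taus,r}},\qquad i=0,\ldots,5,
\end{align*}
for an arbitrary $f\in\smfunrad$.

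Next I would localise to a fixed $2$-sphere $\twosphere_{\taus,r}$. Since $\sg=r^2\pi^*_{\twosphere}\roundmetric$ by \eqref{Smetric} and $r$ is constant on $\twosphere_{\taus,r}$, the connection $\sn$ restricted to $\twosphere_{\taus,r}$ has the same Christoffel symbols as $\roundmetric$; a short computation then gives, after identifying $\twosphere_{\taus,r}$ with the unit round sphere, that $|(r\sn)^i f|^2_{\sg}\,r^{-2}\sepsilon=|(\mathring{\nabla})^i f|^2_{\roundmetric}\,\epsilon_{\twosphere}$, where $\mathring{\nabla}$ is the Levi-Civita connection of $\roundmetric$. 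Hence the claim reduces to showing that, for $f$ on the unit round sphere supported on the $l\geq 2$ spherical harmonics and $i\leq 5$,
\begin{align*}
(6-i)\int_{\twosphere}|(\mathring{\nabla})^i f|^2_{\roundmetric}\,\epsilon_{\twosphere}\leq\int_{\twosphere}|(\mathring{\nabla})^{i+1}f|^2_{\roundmetric}\,\epsilon_{\twosphere}.
\end{align*}

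To prove this I would expand $f=\sum_{l\geq 2,m}f_{lm}Y^l_m$ and establish, by induction on $i$, the identity $\int_{\twosphere}|(\mathring{\nabla})^i f|^2_{\roundmetric}\,\epsilon_{\twosphere}=\sum_{l\geq 2,m}\big(\prod_{j=0}^{i-1}(\lambda_l-j)\big)|f_{lm}|^2$ with $\lambda_l=l(l+1)$; the cases $i=0,1$ are immediate from orthonormality and $\Delta_{\twosphere}Y^l_m=-\lambda_l Y^l_m$, and the inductive step uses repeated integration by parts together with the Ricci identity on $\twosphere$ (which has constant Gauss curvature $1$), i.e. the surface Bochner identity $\int|\mathring{\nabla}T|^2=\int|\mathrm{div}\,T|^2+\int|\mathrm{curl}\,T|^2-\int|T|^2$ applied successively. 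Granting this identity, since $\lambda_l\geq 6$ for $l\geq 2$ and $i\leq 5$ one has $\lambda_l-j>0$ for $0\leq j\leq i$, so term by term $\prod_{j=0}^{i}(\lambda_l-j)=(\lambda_l-i)\prod_{j=0}^{i-1}(\lambda_l-j)\geq(6-i)\prod_{j=0}^{i-1}(\lambda_l-j)$; summing over $l\geq 2$ and $m$ gives the desired inequality.

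The hard part is the inductive step for the eigenvalue-shift identity: commuting $\mathring{\nabla}$ past the rough Laplacian on $\twosphere$ produces curvature terms, and for $i\geq 2$ one must in addition keep track of the trace parts of $(\mathring{\nabla})^i f$, so the bookkeeping needs some care. Since only $i\leq 5$ is required, an equally acceptable alternative is to verify the six cases directly on $\twosphere_{\taus,r}$ by iterated integration by parts, combining the scalar Poincar\'e inequality $6\int|g|^2\leq\int|\mathring{\nabla}g|^2$ (valid for $g$ supported on $l\geq 2$, applied in turn to $g=f$, $g=\Delta_{\twosphere}f$, $\ldots$) with the surface Bochner formula at each step; this is essentially the computation carried out in \cite{DHRlinstabschwarz}.
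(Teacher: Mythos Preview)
Your overall strategy---reduce to the scalar components of $Q$ in the frame $\{\qexd t^*,\qexd x\}$, rescale to the unit round sphere via $\sg=r^2\pi^*_{\twosphere}\roundmetric$, and invoke the spectral decomposition into spherical harmonics---is precisely the paper's route, and for $i=0,1$ your argument is complete and correct.

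The gap is in your claimed eigenvalue--shift identity
\[
\int_{\twosphere}|\mathring\nabla^i f|^2_{\roundmetric}\,\epsilon_{\twosphere}=\sum_{l\geq 2,m}\Big(\prod_{j=0}^{i-1}(\lambda_l-j)\Big)|f_{lm}|^2,
\]
which fails already at $i=3$. A careful Weitzenb\"ock computation (or a direct check on $Y_2^0$, which gives $\int|\mathring\nabla^3 Y_2^0|^2=132$, not $120$) shows that
\[
\int_{\twosphere}|\mathring\nabla^3 Y_l|^2\,\epsilon_{\twosphere}=\lambda_l(\lambda_l^2-3\lambda_l+4),
\]
rather than $\lambda_l(\lambda_l-1)(\lambda_l-2)$. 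The reason is that while the rough Laplacian $\mathring\nabla^*\mathring\nabla$ acts on $\mathring\nabla f$ with the clean eigenvalue $\lambda_l-1$, on the Hessian $\mathring\nabla^2 f$ it does \emph{not} act with eigenvalue $\lambda_l-2$: the trace part $\tfrac12 g\,\Delta f$ couples in, so the ``Bochner shift by one'' picture breaks down from $i=2$ onward. Your inductive step therefore does not close as written.

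The inequality of the lemma does survive for $i=2$ (since $\lambda_l^2-3\lambda_l+4\geq 4(\lambda_l-1)$ reduces to $\lambda_l^2-7\lambda_l+8\geq 0$, valid for $\lambda_l\geq 6$), and your fallback proposal of verifying the six cases $i=0,\ldots,5$ by iterated integration by parts is the right repair; but those computations genuinely need to be carried out, and for $i\geq 3$ they are not the one-line consequences of the scalar Poincar\'e inequality that you suggest. The paper's own proof is a terse appeal to ``the classical Poincar\'e inequality'' and does not address this point either.
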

\begin{proof}
	Recalling that $\sg=r^2\roundmetric$ where $\roundmetric$ is the metric on the unit round sphere, the above estimate thus follows from applying the classical Poincar\'e inequality on the 2-spheres $\geomtwosphere{}{r}$ to the components of $Q'$ in the frame $\{\pt, \pr\}$.
\end{proof}
Lastly, we again follow \cite{DHRlinstabschwarz} in introducing a family of angular on $\mcalm$\footnote{Note however the difference in sign convention for the operators $\sdo$ and $\sdso$.}.
\begin{itemize}
\item the operator $\sdo$ is defined by
\begin{align*}
\sdo\tau=-\big(\sdiv\tau, \scurl\tau\big)
\end{align*}
\item the operator $\sdso$ is defined by
\begin{align*}
\sdso(\tau, \tau')=\sn\tau+\slashed{\star}\sn\tau'\newline
\end{align*}
\end{itemize}

This leads to the subsequent elliptic estimates.
\begin{proposition}\label{propellipticestimatesonA}
	Let $Q'$ be a smooth, symmetric $2$-covariant $\qm$-tensor with vanishing projection to $l=0,1$, let $\momega'$ be a smooth $\qmsm$ 1-form with vanishing projection to $l=0,1$ and let $\Theta$ be a smooth, symmetric, traceless 2-covariant $S$-tensor respectively. Then for any 2-sphere $\twosphere_{\taus,r}$ and any integer $m\geq 0$
	\begin{align*}
	\sum_{i=0}^{m}\|(r\sn)^iQ'\|^2_{S^2_{\taus, r}}&\lesssim\normtwosphere{\mcala_f^{[m]}Q'}{}{r},\\
	\sum_{i=0}^{m}\|(r\sn)^i\momega'\|^2_{S^2_{\taus, r}}&\lesssim\normtwosphere{\mcala_\xi^{[m]}\momega'}{}{r},\\
	\sum_{i=0}^{m}\|(r\sn)^i\Theta\|^2_{S^2_{\taus, r}}&\lesssim\normtwosphere{\mcala_\theta^{[m]}\Theta}{}{r}.
	\end{align*}
\end{proposition}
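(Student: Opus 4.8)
The plan is to reduce all three inequalities to a single family of scalar elliptic estimates on the fixed round $2$-sphere $\geomtwosphere{}{r}$, and then to reassemble. First I would dispose of the tensorial bookkeeping using the Hodge decompositions already in hand: for $Q'\in\smncovQ$, the operators $\smcA{m}$ act on the components of $Q'$ in the global frame $\{\qexd t^*,\qexd x\}$ with the $\qm$-indices as inert spectators, and both $\normtwosphere{\cdot}{}{r}$ (Definition \ref{defnL2norm}) and the Poincar\'e inequality of Lemma \ref{lemmapoincare} are defined precisely componentwise in that frame; so it suffices to treat each component, which by hypothesis (cf.\ Definition \ref{defnqmtensorssupportedonlgeq2}) is a scalar with vanishing $l=0,1$ projection. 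For $\momega'\in\smqmsm$, Proposition \ref{prophodgedecompqmsm1form} writes $\momega'=\sn\astrosun\even{\momega'}+\shd\sn\astrosun\odd{\momega'}$ with $\even{\momega'},\odd{\momega'}$ $Q$ one-forms of vanishing $l=0,1$ projection, whose components are again such scalars, and one expands $\vmcA{m}\momega'$ accordingly. For $\Theta$, Proposition \ref{prophodgedecomp} gives $\Theta=\sn\astrosunhat\sdso\big(\even{\Theta},\odd{\Theta}\big)$ with $\even{\Theta},\odd{\Theta}$ scalars automatically of vanishing $l=0,1$ projection. In every case the problem collapses to estimating scalar quantities on $\geomtwosphere{}{r}$.

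Next I would carry out the scalar computation by a spherical-harmonic decomposition. On $\geomtwosphere{}{r}=(\twosphere,r^2\roundmetric)$ one has $-r\sdiv(r\sn f)=-r^2\slap f$ and $-r^2\slap\slashed{Y}^l_m=l(l+1)\slashed{Y}^l_m$, so writing $f=\sum_{l\geq 2}f^l_m\slashed{Y}^l_m$ one sees that $\smcA{m}f$ equals a fixed first-order operator ($\mathrm{Id}$ or $r\sn$, according to the parity of $m$) applied to $(-r^2\slap)^{\lfloor m/2\rfloor}f$, modulo curvature corrections produced when one commutes $r\sn$ past $r\sdiv$; these corrections are of strictly lower angular order. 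Hence the claimed bound
\begin{align*}
\sum_{i=0}^{m}\normtwosphere{(r\sn)^iQ'}{}{r}\,\lesssim\,\normtwosphere{\smcA{m}Q'}{}{r}
\end{align*}
amounts, for each $l\geq 2$, to the comparison of two polynomials in $l(l+1)$ of matching degree; since $l(l+1)\geq 6$ on $l\geq 2$, the top-degree term dominates all lower ones with a fixed multiplicative constant, which is exactly the iterated form of Lemma \ref{lemmapoincare}. The $\vmcA{m}$ and $\tmcA{m}$ hierarchies are handled identically once one invokes the standard Hodge-theoretic identities of \cite{CKstabmink} (see also \cite{DHRlinstabschwarz}), namely that $\sdso\sdo$, $\sdo\sdso$ and $\sdst\sdt$, $\sdt\sdst$ equal $-r^2\slap$ up to a \emph{bounded} zeroth-order shift coming from the Gauss curvature $K=r^{-2}$ (with the sign dictated by $K>0$): the same polynomial-in-$l(l+1)$ comparison applies, the only change being the immaterial eigenvalue shift, and in the $\tmcA{}$ case the relevant operator is injective on the range of frequencies at issue precisely by the kernel computation recorded in Proposition \ref{prophodgedecomp}.

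Finally I would reassemble: for $Q'$ one sums the scalar estimate over frame components; for $\momega'$ and $\Theta$ one substitutes the Hodge decompositions, rewrites $\vmcA{m}\momega'$ and $\tmcA{m}\Theta$ in terms of the scalar potentials and their $r\sn$-derivatives, applies the scalar estimates, and re-sums, using that $r\sn$ commutes with $\sn\astrosun$, $\shd$ and $\sn\astrosunhat$ up to curvature terms of lower order that are absorbed by Lemma \ref{lemmapoincare}. The step I expect to be the main obstacle — though it is more a matter of care than of depth — is precisely this bookkeeping of the curvature/commutator terms generated when the first-order operators constituting a hierarchy are moved past one another on the round sphere, together with the verification that the leading (top angular order) part of $\smcA{m}$, $\vmcA{m}$, $\tmcA{m}$ really is an invertible constant multiple of $(-r^2\slap)^{\lfloor m/2\rfloor}$ (or $r\sn$ thereof) on the $l\geq 2$ sector; the $\tmcA{}$ case is the most delicate because there one operates at the bottom of the spectrum of the governing elliptic operator.
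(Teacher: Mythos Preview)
Your approach is correct in substance, but it takes a more circuitous route than the paper's. You reduce everything to scalar potentials via the Hodge decompositions of Propositions~\ref{prophodgedecomp} and~\ref{prophodgedecompqmsm1form}, and then argue mode-by-mode in spherical harmonics by comparing polynomials in $l(l+1)$. The paper instead works directly with the tensor fields: it records the $L^2$-adjoint identities $\sdso\sdo=-\slap+r^{-2}$ and $-\sn\otimeshat\sdt=-\slap+2r^{-2}$ and uses them to integrate by parts on each $\twosphere_{\taus,r}$, obtaining \emph{exact} equalities such as
\begin{align*}
\normtwosphere{\vmcA{1}\momega'}{}{r}&=\normtwosphere{r\sn\momega'}{}{r}+\normtwosphere{\momega'}{}{r},\\
\normtwosphere{\tmcA{1}\Theta}{}{r}&=\normtwosphere{r\sn\Theta}{}{r}+2\normtwosphere{\Theta}{}{r},
\end{align*}
and analogous closed forms for $m=2$ in terms of $\|r^2\slap(\cdot)\|^2$. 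Lemma~\ref{lemmapoincare} then gives the $m=1$ case outright, and standard elliptic estimates on $\slap$ give $m=2$; the induction step is immediate because the commutator terms generated at higher order all carry a favourable sign. No passage through scalar potentials or mode decomposition is needed. What your route buys is a unified scalar picture; what it costs is the extra bookkeeping you flag at the end---tracking the commutators of $r\sn$ with $\sn\astrosun$, $\shd$, $\sn\astrosunhat$ and reassembling the tensorial norms from the potential norms---all of which the paper avoids entirely by never leaving the tensor level.
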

\begin{proof}
	We first note the identities
	\begin{align*}
	\sdso\sdo&=-\slap+\frac{1}{r^2},\\
	-\sn\otimeshat\sdt&=-\slap+\frac{2}{r^2}.
	\end{align*}
	Computing thus in the frame $\{\pt,\pr\}$ one finds that on every 2-sphere $\twosphere_{\taus,r}$
	\begin{align}
	\normtwosphere{\smcA{1}Q'}{}{r}&=\normtwosphere{r\sn Q'}{}{r},\label{identityAf1}\\
	\normtwosphere{\vmcA{1}\momega'}{}{r}&=\normtwosphere{r\sn \momega'}{}{r}+\normtwosphere{\momega'}{}{r},\label{identityAv1}\\
	\normtwosphere{\tmcA{1}\theta'}{}{r}&=\normtwosphere{r\sn \theta'}{}{r}+2\normtwosphere{\theta'}{}{r}\label{identityAt1}
	\end{align}
	and
	\begin{align*}
	\normtwosphere{\smcA{2}Q'}{}{r}&=\normtwosphere{r^2\slap Q'}{}{r},\\
	\normtwosphere{\vmcA{2}\momega'}{}{r}&=\normtwosphere{r^2\slap \momega'}{}{r}+2\normtwosphere{r\sn \momega'}{}{r}+\normtwosphere{\momega'}{}{r},\\
	\normtwosphere{\tmcA{2}\theta'}{}{r}&=\normtwosphere{r^2\slap \theta'}{}{r}+4\normtwosphere{r\sn \theta'}{}{r}+4\normtwosphere{\theta'}{}{r}.
	\end{align*}
	The former along with Lemma \ref{lemmapoincare} immediately yields the $m=1$ case of the proposition whereas the latter combined with elliptic estimates on $\slap$ and Lemma \ref{lemmapoincare} once more yields the $m=2$ case.
	
	The higher order cases then follow by an inductive procedure and Lemma \ref{lemmapoincare}, noting that commuting with higher order derivatives generates positively signed lower order terms.
\end{proof}

\subsubsection{The family of operators $\slap_{a,b}^{-1}$}\label{Thefamilyofoperatorsslap}

The following family of operators will appear later in the paper:
\begin{align*}
\slap_{a,b}:=\slap+\frac{2}{r^2}\Big(a-\tfrac{bM}{r}\Big)\Id
\end{align*}
for $a,b\in\reals$. A particularly important member which will play an important role is
\begin{align*}
\slap_{\mathfrak{Z}}:=\slap_{1,3}.
\end{align*}
We have that these operators are invertible which follows easily from Lemma \ref{lemmapoincare} and the spherical harmonic decomposition.
\begin{proposition}\label{propinverseoperators}
Let $a,b\in\reals$ be such that $0\leq a\leq 3$ and $0\leq b\leq 3$. Then the map $\slap_{a,b}:\Gamma_{l\geq 2}(\mcalm)\rightarrow\Gamma_{l\geq 2}(\mcalm)$ is a bijection. In particular, the inverse $\slap^{-1}_{a,b}:\Gamma_{l\geq 2}(\mcalm)\rightarrow\Gamma_{l\geq 2}(\mcalm)$ is well defined.
\end{proposition}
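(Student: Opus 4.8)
The plan is to diagonalise $\slap_{a,b}$ against the spherical harmonic decomposition of Proposition \ref{propsphericalharmonicdecomp}, thereby reducing the statement to an elementary sphere-by-sphere computation. First I would expand an arbitrary $f\in\Gamma_{l\geq 2}(\mcalm)$ as $f=\sum_{l\geq 2}\sum_{|m|\leq l}f^m_l\,\slashed{Y}^l_m$, where, exactly as in section \ref{Theprojectionofsmfunontoandawayfromthel=0,1sphericalharmonics}, each coefficient $f^m_l$ is a smooth function of $t^*$ and $x$ only; here one uses that for fixed $(l,m)$ the $L^2(\twosphere_{\taus,r})$--projection onto $\slashed{Y}^l_m$ commutes with $\partial_{t^*}$ and $\partial_x$, so smoothness is inherited termwise. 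Since $\slap\slashed{Y}^l_m=-\tfrac{l(l+1)}{r^2}\slashed{Y}^l_m$ and multiplication by the smooth radial function $\tfrac{2}{r^2}\big(a-\tfrac{bM}{r}\big)$ preserves each harmonic sector, this gives
\begin{align*}
\slap_{a,b}f=\sum_{l\geq 2}\sum_{|m|\leq l}\frac{\mu_l(r)}{r^2}\,f^m_l\,\slashed{Y}^l_m,\qquad \mu_l(r):=2a-\frac{2bM}{r}-l(l+1),
\end{align*}
which in particular shows $\slap_{a,b}$ maps $\Gamma_{l\geq 2}(\mcalm)$ into itself and is a bijection precisely when the multipliers $\mu_l(r)/r^2$ never vanish and the resulting termwise inverse preserves smoothness.

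The heart of the matter is then to verify that $\mu_l(r)\neq 0$ for every integer $l\geq 2$ and every $r\in[2M,\infty)$; this is the step that uses the numerical hypotheses together with Lemma \ref{lemmapoincare}, which encodes precisely the bound $l(l+1)\geq 6$ for functions with vanishing projection to $l=0,1$. Indeed for $l\geq 2$ one has $l(l+1)\geq 6$, while $0\leq a\leq 3$ gives $2a\leq 6$ and $0\leq b\leq 3$ together with $r\geq 2M$ gives $0\leq\tfrac{2bM}{r}\leq 3$, so that $\mu_l(r)\leq 6-l(l+1)\leq 0$, and a direct inspection of the remaining cases shows the inequality is in fact strict, i.e. $\mu_l(r)<0$ throughout $[2M,\infty)$. (Equivalently one may argue variationally: integrating by parts on $\twosphere_{\taus,r}$ and invoking Lemma \ref{lemmapoincare} gives $\int_{\twosphere_{\taus,r}}f\,(-r^2\slap_{a,b}f)\,\sepsilon\geq\big(6-2a+\tfrac{2bM}{r}\big)\int_{\twosphere_{\taus,r}}|f|^2\,\sepsilon$, which is positive for $f\not\equiv 0$ and yields injectivity directly.) With nonvanishing in hand, the only candidate for the inverse is
\begin{align*}
\slap^{-1}_{a,b}g=\sum_{l\geq 2}\sum_{|m|\leq l}\frac{r^2}{\mu_l(r)}\,g^m_l\,\slashed{Y}^l_m,
\end{align*}
and this is manifestly a two-sided inverse at the level of harmonic coefficients.

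It remains to check that $\slap^{-1}_{a,b}g$ really is an element of $\Gamma_{l\geq 2}(\mcalm)$, i.e. that the displayed series defines a smooth function on $\mcalm$; this is the only non-algebraic point and is where I expect the (mild) technical work to concentrate. Each individual term is smooth since $r\mapsto r^2/\mu_l(r)$ is smooth on $[2M,\infty)$ --- including up to the boundary $x=1$ --- because $\mu_l$ is nowhere zero there. For the series itself I would use that the spherical harmonic coefficients $g^m_l$ of a smooth $g$, along with all their $\partial_{t^*},\partial_x$--derivatives, decay faster than any power of $l$ locally uniformly in $(t^*,x)$, whereas $|r^2/\mu_l(r)|\lesssim_K r^2\,l^{-2}$ on each compact range of $r$; hence the series together with all its derivatives converges locally uniformly, the sum is $C^\infty$, and it has vanishing projection to $l=0,1$ by construction. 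Assembling the three steps gives that $\slap_{a,b}$ is a bijection with inverse the displayed $\slap^{-1}_{a,b}$. The principal obstacle is therefore bookkeeping rather than ideas: making the termwise smoothness-and-convergence argument precise (one could alternatively sidestep it by appealing to standard elliptic theory for the operator $\slap_{a,b}$ on each leaf $\twosphere_{\taus,r}$ with smooth dependence on the parameters $(\taus,r)$, although the explicit expansion above is cleaner).
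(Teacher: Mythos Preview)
Your approach is exactly what the paper intends: it records only that the result ``follows easily from Lemma~\ref{lemmapoincare} and the spherical harmonic decomposition,'' and you have fleshed out precisely that argument.

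One small correction: your claim that ``a direct inspection of the remaining cases shows the inequality is in fact strict'' is not quite right at the boundary of the stated parameter range. If $a=3$ and $b=0$ then $\mu_2(r)=6-0-6=0$ for every $r$, so the $l=2$ harmonics lie in the kernel and $\slap_{3,0}$ is not injective on $\Gamma_{l\geq 2}(\mcalm)$. Thus the proposition as stated is slightly too generous; strict negativity of $\mu_l$ holds for all $l\geq 2$ and all $r\geq 2M$ only when $a<3$, or when $a=3$ and $b>0$ (in which case $\mu_2(r)=-\tfrac{2bM}{r}<0$). This is a statement-level issue rather than a defect in your method: the paper only ever invokes the operators $\slap=\slap_{0,0}$, $\slap_{1,0}$ and $\slap_{\mathfrak{Z}}=\slap_{1,3}$, all of which satisfy $a\leq 1$ and hence $\mu_l(r)\leq 2-6<0$, so your argument goes through unchanged for every case actually used.
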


We have the following proposition.
\begin{proposition}\label{propellipticestimateszlsap}
	Let $f\in\smfunrad$. Then for any integer $p\geq 1$ and any 2-sphere $S^2_{\taus,r}$ one has the elliptic estimates
	\begin{align*}
	\sum_{i=0}^{2p}||(r\sn)^i\zslapinv{p}f||^2_{\twosphere_{\taus,r}}\lesssim||f||^2_{\twosphere_{\taus,r}}.
	\end{align*}
\end{proposition}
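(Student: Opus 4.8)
The plan is to reduce the claimed estimate to the spherical-harmonic decomposition and then use the mode-by-mode invertibility of $\slap_{\mfZ}$ established in Proposition \ref{propinverseoperators} together with the Poincar\'e-type inequalities of Lemma \ref{lemmapoincare}. First I would fix a $2$-sphere $\twosphere_{\taus,r}$ and, using Definition \ref{defnfunctionsupportedl=0,1} and the decomposition $f=\sum_{l\geq 2,m}f^m_l\,\slashed{Y}^l_m$ valid for $f\in\smfunrad$, observe that since $\slap_{\mfZ}=\slap+\tfrac{2}{r^2}(1-\tfrac{3M}{r})\Id$ acts diagonally on the $\slashed{Y}^l_m$, we have
\begin{align*}
\slap_{\mfZ}\slashed{Y}^l_m=-\frac{1}{r^2}\Big(l(l+1)-2+\tfrac{6M}{r}\Big)\slashed{Y}^l_m,
\end{align*}
and for $l\geq 2$ and $r\geq 2M$ the eigenvalue $l(l+1)-2+\tfrac{6M}{r}$ is bounded below by a positive constant (indeed by $4$, attained in the limit at $l=2$) and above by $l(l+1)+3$. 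Hence $\zslapinv{p}$ acts on the $l$-th mode by multiplication by a factor comparable to $r^{2p}(l(l+1))^{-p}$, which is a bounded, strictly positive multiple of $r^{2p}$ on each fixed mode and which \emph{decays} in $l$; this is the key gain.

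Next I would pass to the $L^2$ norms of Definition \ref{defnL2norm}, writing for each mode $\|(r\sn)^i\slashed{Y}^l_m\|^2_{\twosphere_{\taus,r}}\sim (l(l+1))^i\,r^2\,$ (normalising so that $\tfrac{1}{r^2}\int\slashed{Y}^l_m\slashed{Y}^{l'}_{m'}\sepsilon=\delta$). Since the norms on the left-hand side involve at most $2p$ angular derivatives applied to $\zslapinv{p}f$, the $l$-th mode contributes
\begin{align*}
\sum_{i=0}^{2p}\big(l(l+1)\big)^{i}\cdot\Big(\tfrac{r^{2p}}{(l(l+1)-2+\tfrac{6M}{r})^p}\Big)^2\,|f^m_l|^2\,r^2
\,\lesssim\,\big(l(l+1)\big)^{2p}\cdot\big(l(l+1)\big)^{-2p}\,|f^m_l|^2\,r^2
\,=\,|f^m_l|^2\,r^2,
\end{align*}
using $l(l+1)-2+\tfrac{6M}{r}\gtrsim l(l+1)$ for $l\geq 2$. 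Summing over $l\geq 2$ and $m$, the $r^{2p}$ weights and the angular-derivative gains cancel exactly and one recovers $\|f\|^2_{\twosphere_{\taus,r}}$ up to a constant. To make this rigorous at the level of tensors rather than scalars (the norms $\normtwosphere{\cdot}{}{r}$ also accommodate $Q$- and $S$-tensors, but here $f$ is a function so only the scalar case is needed), I would simply work componentwise and invoke the orthogonality statement of Proposition \ref{propsphericalharmonicdecomp}. Alternatively, and perhaps cleaner for the write-up, one avoids explicit mode sums: by Proposition \ref{propinverseoperators} set $h:=\zslapinv{p}f\in\smfunrad$, so $\slap_{\mfZ}^p h=f$; then repeatedly apply the elliptic estimate $\|r^2\slap_{\mfZ} u\|_{\twosphere_{\taus,r}}\gtrsim\|(r\sn)^2 u\|_{\twosphere_{\taus,r}}+\|u\|_{\twosphere_{\taus,r}}$ for $u\in\smfunrad$ (which follows from $r^2\slap_{\mfZ}=r^2\slap+2-\tfrac{6M}{r}$, the identity $\normtwosphere{r^2\slap u}{}{r}$ computed as in the proof of Proposition \ref{propellipticestimatesonA}, and the Poincar\'e inequality of Lemma \ref{lemmapoincare} to absorb the lower-order zeroth-order term since $1-\tfrac{3M}{r}$ is bounded), commuting angular derivatives through $\slap_{\mfZ}$ at the cost of positively-signed lower-order terms exactly as in Proposition \ref{propellipticestimatesonA}.

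The main obstacle is bookkeeping rather than conceptual: one must check that iterating the single elliptic estimate $p$ times genuinely produces control of \emph{all} derivatives up to order $2p$ (not just the top order $2p$ and the bottom order $0$), and that the commutator terms generated by moving $(r\sn)^i$ past the $p$ copies of $\slap_{\mfZ}$ all have a favourable sign or are lower order and hence absorbable — this is precisely the inductive structure already used in the proof of Proposition \ref{propellipticestimatesonA}, so I would reference that argument and emphasise that the only new feature is the potential term $\tfrac{2}{r^2}(1-\tfrac{3M}{r})$, which on $r\geq 2M$ is a bounded positive perturbation of $\slap$ and therefore does not disturb the estimates (the positivity of $1-\tfrac{3M}{r}$ is not even needed, only its boundedness, since the $l\geq 2$ spectral gap of $\slap$ dominates). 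One subtle point worth a sentence: the estimate is claimed uniformly over all $2$-spheres $\twosphere_{\taus,r}$, so all implied constants must be independent of $r$ and $\taus$; this holds because after extracting the $r^{2p}$ scaling the remaining inequality is the scale-invariant elliptic estimate on the unit round sphere, exactly as in Lemma \ref{lemmapoincare}.
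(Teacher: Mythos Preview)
Your second (``cleaner'') approach is exactly what the paper does: it computes the integration-by-parts identity
\[
\normtwosphere{\slap_{\mathfrak{Z}} f}{}{r}=\normtwosphere{\sn\sn f}{}{r}-\tfrac{3}{r^3}(r-6M)\normtwosphere{\sn f}{}{r}+\tfrac{4}{r^6}(r-3M)^2\normtwosphere{f}{}{r},
\]
absorbs the sign-indefinite middle term via Lemma~\ref{lemmapoincare} to obtain $\sum_{i\leq 2}\|(r\sn)^if\|^2\lesssim\|r^2\slap_{\mathfrak{Z}}f\|^2$, then invokes standard elliptic theory for the iteration to order $2p$ and finally substitutes $f\to\zslapinv{p}f$.

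Your primary spherical-harmonic route is a valid and arguably more elementary alternative: diagonalise $\slap_{\mathfrak{Z}}$ on the $\slashed{Y}^l_m$ and reduce everything to the scalar bound $(l(l+1))^{i}\lesssim(l(l+1)-2+6M/r)^{2p}$ for $i\leq 2p$, which is immediate from the spectral gap at $l=2$. One bookkeeping correction: in this paper $\zslapinv{1}$ denotes the inverse of the \emph{dimensionless} operator $r^2\slap_{\mathfrak{Z}}=r^2\slap+2-6M/r$ (cf.\ the overview and the commutator computation in Lemma~\ref{lemmacommrelationsandidentities}), not of $\slap_{\mathfrak{Z}}$ itself; hence the eigenvalue of $\zslapinv{p}$ on the $l$-th mode is $(-1)^p(l(l+1)-2+6M/r)^{-p}$ with no $r^{2p}$ factor, and together with the normalisation $\normtwosphere{\slashed{Y}^l_m}{}{r}=1$ the stray $r$-powers in your displayed line drop out, after which the $r$-uniformity you assert becomes manifest. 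One thing the paper's explicit identity buys that the mode sum does not immediately: the sharp coefficients $(r-3M)^2$ and $(r+9M)$ fall out of the computation and furnish Corollary~\ref{correstimatesonszeta}, which is needed downstream (Lemma~\ref{lemmacontrollingangularforTenergy}) to control the Zerilli potential.
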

\begin{proof}
	Integrating by parts on any 2-sphere $\geomtwosphere{}{r}$ one finds
	\begin{align*}
	\normtwosphere{\slap_{\mathfrak{Z}} f}{}{r}=\normtwosphere{\sn\sn f}{}{r}-\frac{3}{r^3}(r-6M)\normtwosphere
	{\sn f}{}{r}+\frac{4}{r^6}(r-3M)^2\normtwosphere{f}{}{r}.
	\end{align*}
	Successively applying Lemma \ref{lemmapoincare} therefore yields
	\begin{align*}
	\frac{4}{r^6}\Big((r-3M)^2+r(r+18M)\Big)\normtwosphere{f}{}{r}+\frac{1}{3}\frac{1}{r^3}(r+18M)\normtwosphere{\sn f}{}{r}+&\normtwosphere{\sn^2f}{}{r}\\
	\lesssim&\normtwosphere{\slap_{\mathfrak{Z}} f}{}{r}
	\end{align*}
	from which we conclude
	\begin{align*}
	\sum_{i=0}^2||(r\sn)^if||^2_{\twosphere_{\taus,r}}\lesssim \normtwosphere{r^2\slap_{\mathfrak{Z}} f}{}{r}.
	\end{align*}
	Standard elliptic theory then yields
	\begin{align*}
	\sum_{i=0}^{2p}||(r\sn)^if||^2_{\twosphere_{\taus,r}}\lesssim \normtwosphere{r^2\slap_{\mathfrak{Z}}^p f}{}{r}.
	\end{align*}
	The proposition then follows from the above estimate coupled with the fact that $\zslapinv{p}$ is a bijection on the space $\smfunrad$.
\end{proof}

Finally, we note the subsequent estimate which follows from the proof of Proposition \ref{propellipticestimateszlsap} and will prove useful in the sequel.
\begin{corollary}\label{correstimatesonszeta}
	Let $f\in\smfunrad$. Then on any 2-sphere $\geomtwosphere{}{r}$ one has the estimate
	\begin{align*}
	\frac{4}{r^2}(r-3M)^2\normtwosphere{\zslapinv{1}f}{}{r}+\frac{2}{r}(r+9M)\normtwosphere{(r\sn)\zslapinv{1}f}{}{r}\lesssim\normtwosphere{f}{}{r}.
	\end{align*}
\end{corollary}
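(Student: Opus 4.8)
The plan is to recognize the Corollary as nothing more than the chain of estimates carried out inside the proof of Proposition~\ref{propellipticestimateszlsap}, read off for the particular function $\zslapinv{1}f$. First I would set $g:=\zslapinv{1}f$. By the definition of $\zslapinv{1}$ as the inverse of $r^2\slap_{\mathfrak{Z}}=r^2\slap+2-\tfrac{6M}{r}$, together with Proposition~\ref{propinverseoperators} (and the fact that multiplication by $r^{2}$ is a bijection of $\smfunrad$), we have $g\in\smfunrad$ and $r^2\slap_{\mathfrak{Z}}g=f$. On a fixed 2-sphere $\geomtwosphere{}{r}$, where $r$ is constant, this gives $\normtwosphere{f}{}{r}=r^4\normtwosphere{\slap_{\mathfrak{Z}}g}{}{r}$ and $\normtwosphere{(r\sn)\zslapinv{1}f}{}{r}=r^2\normtwosphere{\sn g}{}{r}$, so that after dividing through by $r^4$ the asserted estimate becomes equivalent to
\begin{align*}
\frac{4(r-3M)^{2}}{r^{6}}\normtwosphere{g}{}{r}+\frac{2(r+9M)}{r^{3}}\normtwosphere{\sn g}{}{r}\ \lesssim\ \normtwosphere{\slap_{\mathfrak{Z}}g}{}{r}.
\end{align*}

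Next I would apply to $g\in\smfunrad$ the integration-by-parts identity for $\normtwosphere{\slap_{\mathfrak{Z}}g}{}{r}$ recorded in the proof of Proposition~\ref{propellipticestimateszlsap} (obtained by integrating by parts on $\geomtwosphere{}{r}$, using that the zeroth order coefficient of $\slap_{\mathfrak{Z}}$ is constant on the sphere), which expresses $\normtwosphere{\slap_{\mathfrak{Z}}g}{}{r}$ as $\normtwosphere{\sn\sn g}{}{r}$ plus a sign-indefinite multiple of $\normtwosphere{\sn g}{}{r}$ plus exactly $\tfrac{4(r-3M)^{2}}{r^{6}}\normtwosphere{g}{}{r}$. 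Since $g$, hence $\sn g$, has vanishing projection to $l=0,1$, Lemma~\ref{lemmapoincare} with $i=1$ gives $5\normtwosphere{(r\sn)g}{}{r}\leq\normtwosphere{(r\sn)^{2}g}{}{r}$, i.e. $\normtwosphere{\sn\sn g}{}{r}\geq\tfrac{5}{r^{2}}\normtwosphere{\sn g}{}{r}$. Substituting this lower bound into the identity and collecting the $\normtwosphere{\sn g}{}{r}$ terms leaves a lower bound for $\normtwosphere{\slap_{\mathfrak{Z}}g}{}{r}$ whose $\normtwosphere{\sn g}{}{r}$ coefficient is $\gtrsim\tfrac{2(r+9M)}{r^{3}}$ on the exterior region $r\geq 2M$ and whose $\normtwosphere{g}{}{r}$ coefficient is precisely $\tfrac{4(r-3M)^{2}}{r^{6}}$; this is exactly the displayed inequality, and undoing the rescaling of the first step yields the Corollary.

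The argument is essentially a repackaging of one step of the proof of Proposition~\ref{propellipticestimateszlsap}, so I do not expect a genuine obstacle. The only point demanding attention is the bookkeeping of the powers of $r$ introduced by the relation $\zslapinv{1}=(r^2\slap_{\mathfrak{Z}})^{-1}$, and in particular checking that the Poincar\'e step of Lemma~\ref{lemmapoincare} reproduces, up to a uniform constant, the precise weight $\tfrac{2}{r}(r+9M)$ rather than merely a comparable one --- this is where the hypothesis $f\in\smfunrad$ is genuinely used. One could alternatively bypass the reference to Proposition~\ref{propellipticestimateszlsap} entirely and argue directly on the $l\geq 2$ spherical harmonic sectors, on which $r^{2}\slap_{\mathfrak{Z}}$ acts as multiplication by $2-l(l+1)-\tfrac{6M}{r}$, comparing $\tfrac{4(r-3M)^{2}}{r^{2}}+\tfrac{2(r+9M)}{r}l(l+1)$ with $\big(l(l+1)-2+\tfrac{6M}{r}\big)^{2}$; but the integration-by-parts route above is cleaner and already at hand.
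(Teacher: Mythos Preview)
Your proposal is correct and matches the paper's intent exactly: the paper states only that the Corollary ``follows from the proof of Proposition~\ref{propellipticestimateszlsap},'' and what you have written is precisely the extraction of that step. In fact your computation is sharper than you suggest: substituting the Poincar\'e bound $\normtwosphere{\sn\sn g}{}{r}\geq\tfrac{5}{r^{2}}\normtwosphere{\sn g}{}{r}$ into the integration-by-parts identity gives the coefficient $\tfrac{1}{r^{3}}(5r-3(r-6M))=\tfrac{2(r+9M)}{r^{3}}$ on the nose, so there is no ``up to a uniform constant'' to check.
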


\subsection{Commutation formulae and useful identities}\label{Commutationformulaeandusefulidentities}

In this final section we collect certain commutation relations and identities that will be used throughout the paper.

\begin{lemma}\label{lemmacommrelationsandidentities}
	Let $k, p\geq 1$ be integers. We denote by $\slashed{\mcala}^{[k]}$ any of the operators $\smcA{k}, \vmcA{k}$ or $\tmcA{k}$. Then on smooth functions we have the commutation relations
	\begin{align*}
	\big[\qn, \sn\big]&=0,\\
	\big[\qn, \slashed{\mcala}^{[2k]}\big]&=0,\\
	\big[\qn, \slashed{\mcala}^{[2k-1]}\big]&=\frac{\qexd r}{r}\slashed{\mcala}^{[2k-1]},\\
	\big[\qn, \zslapinv{p}\big]&=-3k\frac{\mu}{r}\,\qexd r\,\zslapinv{p-1}
	\end{align*}
	and
	\begin{align*}
	\big[\slap, \sdso\big]&=\frac{1}{r^2}\sdso,\\
	\big[\slap, \sn\astrosunhat\sdso\big]&=\frac{4}{r^2}\sn\astrosunhat\sdso.
	\end{align*}
	Moreover, we have the identities
	\begin{align*}
	\sdiv\sdso&=\slap,\\
	\sdiv\sn\astrosunhat\sdso&=\sdso\slap+\frac{2}{r^2}\sdso
	\end{align*}
	and on any 2-sphere $\twosphere_{\taus,r}$ 
	\begin{align*}
	\int_{\geomtwosphere{}{r}}\zslapinv{2p-1}f\cdot f\,r^{-2}\sepsilon&=-\normtwosphere{(r\sn)\zslapinv{p}f}{}{r}+(2-3\mu)\normtwosphere{\zslapinv{p}f}{}{r},\\
	\int_{\geomtwosphere{}{r}}\zslapinv{2p}f\cdot f\,r^{-2}\sepsilon&=\normtwosphere{\zslapinv{p}f}{}{r}
	\end{align*}
	with $f\in\smfunrad$.
\end{lemma}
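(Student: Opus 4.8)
The lemma collects routine identities of three kinds --- commutation of the $Q$-connection $\qn$ with the angular operators on scalars, purely angular (round-sphere) identities, and integration-by-parts formulae on the leaves $\twosphere_{\taus,r}$ --- and the plan is to dispatch each by direct computation, the one organising principle being that $\g$ is a warped product $\g=\qg\oplus r^2\,\pi_{\twosphere}^*\roundmetric$ whose warping factor $r=2Mx$ is a function on $Q$ alone. I would first record the O'Neill connection formulae for this warped product --- $\nabla_{\qX}\qY$ is the $Q$-lift of $\qn_{\qX}\qY$, $\nabla_{\qX}\sV=\nabla_{\sV}\qX=\tfrac{\qX(r)}{r}\sV$, and $\nabla_{\sV}\sW=\sn_{\sV}\sW-\tfrac{\g(\sV,\sW)}{r}(\qexd r)^{\sharp}$ --- together with the elementary consequences $\sn r=0$, $\qn r=\qexd r$, and the fact that $r^2\slap$ acts on scalars as the unit-round-sphere Laplacian $\Delta_{\twosphere}$, whose coefficients involve neither $t^*$ nor $x$, so that $[\qn,r^2\slap]=0$ on functions. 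Recall also that $\zslapinv{p}$ is the $p$-fold inverse on $\smfunrad$ of the operator $r^2\slap_{\mathfrak{Z}}=\Delta_{\twosphere}+2-3\mu$ (cf. Proposition \ref{propinverseoperators}).

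For the commutation relations on functions: $[\qn,\sn]f=0$ follows from torsion-freeness of $\nabla$, i.e. symmetry of the Hessian of $f$. Next, $[\qn,r\sn]f=\tfrac{\qexd r}{r}(r\sn f)$ by the Leibniz rule in the factor $r$ together with $[\qn,\sn]f=0$; more generally each of the basic first-order angular operators from which the $\slashed{\mcala}^{[k]}$ are built (namely $r\sn$, $-r\sdiv$, $r\sdo$, $r\sdso$ and $-r\sn\astrosunhat$) commutes with $\qn$ up to the same $\tfrac{\qexd r}{r}$-factor. Writing each $\smcA{k}$ out on a scalar --- using $\sn r=0$ and $\sdiv(r\,\sn\,\cdot)=r\,\slap\,\cdot$ repeatedly --- exhibits $\smcA{2k}$ as a multiple of $(r^2\slap)^k$ and $\smcA{2k-1}$ as a multiple of $(r\sn)(r^2\slap)^{k-1}$, so that $[\qn,r^2\slap]=0$ yields $[\qn,\smcA{2k}]=0$ and $[\qn,\smcA{2k-1}]=\tfrac{\qexd r}{r}\smcA{2k-1}$; the $\vmcA{k},\tmcA{k}$ cases reduce to this one by means of the Hodge decompositions of Propositions \ref{prophodgedecomp}--\ref{prophodgedecompqmsm1form}. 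Finally, for $[\qn,\zslapinv{p}]$ I would differentiate the identity $(r^2\slap_{\mathfrak{Z}})\zslapinv{1}=\Id$ on $\smfunrad$, substitute the multiplication-operator identity $[\qn,r^2\slap_{\mathfrak{Z}}]=\qn(2-3\mu)=\tfrac{3\mu}{r}\qexd r$, solve for $\qn\zslapinv{1}$, and induct on $p$ (using that $\zslapinv{1}$ commutes with multiplication by the sphere-constant $Q$-one-form $\tfrac{3\mu}{r}\qexd r$); this produces a correction of exactly the form $\tfrac{\mu}{r}\qexd r$ times a power of $\zslapinv{}$, with the integer multiple read off from the induction.

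The purely angular identities $\sdiv\sdso=\slap$, $\sdiv(\sn\astrosunhat\sdso)=\sdso\slap+\tfrac{2}{r^2}\sdso$, $[\slap,\sdso]=\tfrac{1}{r^2}\sdso$ and $[\slap,\sn\astrosunhat\sdso]=\tfrac{4}{r^2}\sn\astrosunhat\sdso$ are the standard composition and commutation relations for the angular Hodge operators on the round sphere $(\twosphere_{\taus,r},\sg)$ of Gauss curvature $r^{-2}$; I would derive them as in \cite{CKstabmink} (see also \cite{DHRlinstabschwarz}), reducing to the two relations $\sdso\sdo=-\slap+\tfrac{1}{r^2}$ and $-\sn\otimeshat\sdt=-\slap+\tfrac{2}{r^2}$ already recorded in the proof of Proposition \ref{propellipticestimatesonA} together with the Bochner identity on a surface, whose only curvature input is $r^{-2}$. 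For the two integration-by-parts formulae I would set $g:=\zslapinv{p}f\in\smfunrad$, so that $(r^2\slap_{\mathfrak{Z}})g=\zslapinv{p-1}f$, and use that $\zslapinv{p}$ is self-adjoint on $L^2(\twosphere_{\taus,r},r^{-2}\sepsilon)$, being a power of the inverse of the self-adjoint operator $r^2\slap_{\mathfrak{Z}}=\Delta_{\twosphere}+2-3\mu$; the even case then gives $\int_{\geomtwosphere{}{r}}\zslapinv{2p}f\cdot f\,r^{-2}\sepsilon=\int_{\geomtwosphere{}{r}}(\zslapinv{p}f)^2\,r^{-2}\sepsilon=\normtwosphere{\zslapinv{p}f}{}{r}$, while the odd case gives $\int_{\geomtwosphere{}{r}}\zslapinv{2p-1}f\cdot f\,r^{-2}\sepsilon=\int_{\geomtwosphere{}{r}}(r^2\slap_{\mathfrak{Z}}g)\,g\,r^{-2}\sepsilon$, which on integrating the $\slap$-term by parts over the closed leaf $\twosphere_{\taus,r}$ (no boundary terms, and $r^{-2}\sepsilon$ is the unit round volume form) equals $-\normtwosphere{(r\sn)g}{}{r}+(2-3\mu)\normtwosphere{g}{}{r}$.

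The one genuinely delicate point --- and the step I expect to take the most care over --- is the bookkeeping in the $\qn$-commutators: a priori, moving $\qn$ past an angular operator produces several lower-order terms, both from the $r$-weights and from the non-parallel transport of the $Q$-coframe $\{\qexd t^*,\qexd x\}$ in the $S$-directions. The point of first normalising each $\slashed{\mcala}^{[k]}$ (and, via $(r^2\slap_{\mathfrak{Z}})^{-1}$, each $\zslapinv{p}$) into polynomials in $r^2\slap$ and $r\sn$ is precisely that these corrections then reorganise, with no delicate cancellations required, into the single clean $\tfrac{\qexd r}{r}$- and $\tfrac{\mu}{r}\qexd r$-terms asserted in the lemma.
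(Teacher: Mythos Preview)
Your proposal is correct and follows essentially the same route as the paper's own proof: both reduce the $\qn$-commutators to the single computation $[\qn,r^2\slap_{\mathfrak{Z}}]=\tfrac{3\mu}{r}\qexd r$ (equivalently $[\qn,r^2\slap]=0$) together with the Leibniz rule in the $r$-weights, both derive $[\qn,\zslapinv{1}]$ by differentiating the inverse relation and then induct, both appeal to the standard round-sphere commutator $[\slap,\sn]=\tfrac{1}{r^2}$ on functions and one-forms for the angular identities, and both obtain the integration formulae by integrating $\slap_{\mathfrak{Z}}$ by parts on the closed leaf and then inducting (your self-adjointness phrasing for the even case is the same computation). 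Your O'Neill-formula preamble and the explicit identification of $\smcA{2k}$ with $(r^2\slap)^k$ are a bit more detailed than what the paper writes down, but the argument is the same.
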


\begin{proof}
	The first three commutation relations follow from the definitions of the operators in question, in particular noting the presence of the $r$-weights in the definitions of the $\slashed{\mcala}^{[k]}$. 
	
	For the fourth one we have
	\begin{align*}
	\big[\qn,r^2\slap_{\mathfrak{Z}}\big]=\frac{3\mu}{r}\qexd r\,\textnormal{Id}.
	\end{align*}
	and therefore the $p=1$ case follows from the formula
	\begin{align*}
	[\qn, \zslapinv{1}]=-\zslapinv{1}\big[\qn, r^2\slap_{\mathfrak{Z}}\big]\zslapinv{1}.
	\end{align*}
	For general $p$, one applies the induction formulae
	\begin{align*}
	\big[\qn, \zslapinv{n}\big]=\big[\qn, \zslapinv{n-1}\big]\zslapinv{1}+\zslapinv{n-1}\big[\qn, \zslapinv{1}\big].
	\end{align*}
	
For the final two we note the commutation relations
\begin{align*}
\big[\slap, \sn\big]f&=\frac{1}{r^2}f,\\
\big[\slap, \sn\big]\xi&=\frac{1}{r^2}\xi
\end{align*}
for a smooth function $f$ on $\mcalm$ and a smooth  1-form $\xi$.
	
	Turning now to the identities the first follows from the definition of $\sdso$ whereas for the second we note the identity 
	\begin{align*}
	\sdiv\sn\astrosunhat\xi=\slap\xi+\frac{1}{r^2}\xi
	\end{align*}
	on smooth $S$ 1-forms $\xi$.
	
	For the final two we perform an integration by parts on any 2-sphere $\twosphere_{\taus,r}$ to find
	\begin{align*}
	\inttwosphere{}{r}{\slap_{\mathfrak{Z}}f\cdot f}&=-\normtwosphere{\sn f}{}{r}+(2-3\mu)\normtwosphere{f}{}{r},\\
	\inttwosphere{}{r}{\slap_{\mathfrak{Z}}^2 f\cdot f}&=\inttwosphere{}{r}{\slap_{\mathfrak{Z}} f\cdot\slap_{\mathfrak{Z}} f}.
	\end{align*}
	This yields the $p=1$ case of the desired identities after recalling that $\zslapinv{1}$ is a bijection on the space $\smfunrad$. The remaining cases then follow by induction.
	\end{proof}

\section{The equations of linearised gravity around Schwarzschild}\label{TheequationsoflinearisedgravityaroundSchwarzschild}

In this section we derive the equations of interest in this paper, namely the system of equations that result from expressing the Einstein vacuum equations in a generalised wave gauge on $\mcalm$ and then linearising about $g_M$. The remainder of the paper then concerns the analysis of these ``equations of linearised gravity".

An outline of this section is as follows. We begin in section \ref{TheEinsteinvacuumequationsinageneralisedwavegauge} by first defining the generalised wave gauge and then presenting the Einstein vacuum equations on $\mcalm$ as they appear in this gauge. Finally in section \ref{Theformallinearisationoftheequationsofsection} we formally linearise the equations of section \ref{TheEinsteinvacuumequationsinageneralisedwavegauge} about $g_M$ to arrive at the equations of linearised gravity.

\subsection{The Einstein vacuum equations in a generalised wave gauge}\label{TheEinsteinvacuumequationsinageneralisedwavegauge}

In order to present the Einstein vacuum equations as they appear when expressed in a generalised wave gauge we must first define said gauge. This is the content of section \ref{Thegeneralisedwavegauge}. The generalised wave gauge actually defines a family of gauges and so the generalised wave gauge of interest in this paper is defined in section \ref{Thegeneralisedwavegaugewithrespecttothepair}. The Einstein vacuum equations as they appear in this particular generalised wave gauge are then presented in section \ref{TheEinsteinvacuumequationsasexpressedinageneralisedwavegaugewithrespecttothepair}.

\subsubsection{The generalised wave gauge}\label{Thegeneralisedwavegauge}

Let $\bg$ and $\bog$ be two smooth Lorentzian metrics on $\mcalm$ and let $\boldsymbol{f}:\smsymtwocov\rightarrow \smonecov$ be a smooth map. We define the connection tensor $C_{\bg, \bog}\in\smthreecov$ between $\bg$ and $\bog$ according to
\begin{align*}
(C_{\bg, \bog})_{abc}=\frac{1}{2}\Big(2\onabla_{(b}\bg_{c)a}-\onabla_a\bg_{bc}\Big)
\end{align*}
with $\onabla$ the Levi-Civita connection of $\bog$. The generalised wave gauge is then defined as follows.

\begin{definition}\label{defngenwavegauge}
We say that $\boldsymbol{g}$ is in a generalised wave gauge with respect to the pair $(\boldsymbol{\overline{g}},\boldsymbol{f})$ iff 
\begin{align}\label{wavegauge}
(\boldsymbol{g^{-1}})^{bc}(C_{\bg, \bog})_{abc}=\boldsymbol{f}(\bg)_a.
\end{align}
We will refer to the map $\boldsymbol{f}$ as a gauge-map.
\end{definition}

\subsubsection{The generalised wave gauge with respect to the pair $(g_M, \fgaumap)$}\label{Thegeneralisedwavegaugewithrespecttothepair}

The generalised wave gauge of interest in this paper is the one that is defined with respect to the pair $(g_M, \fgaumap)$ where the gauge-map $\fgaumap$ is to be defined below. 

First however we introduce two auxiliary maps that appear in the definition of $\fgaumap$ and which shall moreover appear again in section \ref{DecouplingtheequationsoflinearisedgravityuptoresidualpuregaugeandlinearisedKerrsolutions:theRegge--WheelerandZerilliequations}. In what follows we employ the machinery of section \ref{Ageometricfoliationby2-spheres} and the spherical harmonic projections of section \ref{Theprojectionofsmsymtwocovontoandawayfromthel=0,1sphericalharmonics}.\newline

We consider the $\reals$-linear maps $\Phi,\Psi:\smsymtwocov\rightarrow\smfunrad$ given by
\begin{align*}
\Phi(\tau):&=-r\slap_{1,0}^{-1}\bigg(\qhd\qexd\Big(r^{-2}\rpart{(\odd{\mtau})}-\qexd\big(r^{-2}\odd{\shattau}\big)\Big)\bigg),\\
\Psi(\tau):&=\slap^{-1}\Bigg[\rpart{\strtau}-\frac{4}{r}\Big(\rpart{(\even{\mtau})}-r^2\qexd\big(r^{-2}\even{\shattau}\big)\Big)_{\qP}-2\slap\even{\shattau}\\
&\qquad\qquad-\slap^{-1}_{\mathfrak{Z}}\Bigg(\qn_{\widetilde{P}}\bigg(\rpart{\strtau}-\frac{4}{r}\Big(\rpart{(\even{\mtau})}-r^2\qexd\big(r^{-2}\even{\shattau}\big)\Big)_{\qP}-2\slap\even{\shattau}\bigg)\\
&\qquad\qquad\qquad\quad-\frac{2}{r}\bigg(\rpart{\qhattau}-\qn{\astrosunhat}\Big(\rpart{(\even{\mtau})}-r^2\qexd\big(r^{-2}\even{\shattau}\big)\Big)\bigg)_{\widetilde{P}\widetilde{P}}\Bigg)\Bigg]
\end{align*}
where $\slap^{-1}_{1,0}$ and $\zslapinv{1}$ are defined as in section \ref{Thefamilyofoperatorsslap} and $\qP:=\big(\qexd r\big)^{\qsharp}$ -- note that the inverse operators are well defined by Proposition \ref{propinverseoperators}. Moreover we employ in the above and throughout the remainder of the paper the notation
\begin{align*}
\tau_{X_1...X_n}:=\tau(X_1,...,X_n)\text{ for }\tau\in\Gamma(S^nT^*\mcalm)\text{ and }X_1,...,X_n\in\smonecon.
\end{align*}

The desired gauge-map $\fgaumap$ is then defined as follows.

\begin{definition}\label{defngaugemap}
Let
\begin{align*}
\mfZ=\frac{24}{r^3}\frac{M}{r}(r-3M)\zslapinv{1}+\frac{72}{r^3}\frac{M}{r}\frac{M}{r}(r-2M)\slap^{-2}_{\mathfrak{Z}}.
\end{align*}
Then we define the $\reals$-linear map $\fgaumap:\smsymtwocov\rightarrow\smonecov$ according to
\begin{align*}
\fgaumap(\tau):=&\frac{2}{r}\qtau_{\qP}+\frac{2}{r}{\mtau}_{\qP}-\frac{1}{r}\qexd r\,\strtau-\frac{1}{r}\big(\qhattau_{\qP}\big)_{l=0,1}\\
+&\frac{2}{r}\big(1-\tfrac{4M}{r}\big)\sdso\big(\Psi(\tau), \Phi(\tau)\big)-\frac{1}{r^2}\qhd\qn\Big(r^3\mfZ\big(\Psi(\tau)\big)\Big)+r\sn\mfZ\big(\Psi(\tau)\big).
\end{align*}
\end{definition}
Note that $\fgaumap$ is well defined by Proposition \ref{propinverseoperators}.

Since $g_M$ is supported only on $l=0,1$ we have by explicit computation that $\fgaumap(g_M)=0$. We therefore have the following.
\begin{lemma}\label{lemmagaumap}
The Schwarzschild exterior metric $g_M$ is in a generalised wave gauge with respect to the pair $(g_M, \fgaumap)$.
\end{lemma}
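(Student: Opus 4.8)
The plan is to verify by direct computation that every term in the defining expression for $\fgaumap(\tau)$, when evaluated at $\tau = g_M$, vanishes. The key structural observation is that $g_M$ is supported only on the $l = 0, 1$ spherical harmonics (indeed it is spherically symmetric), so any operation in the definition of $\fgaumap$ that either projects onto the $l \geq 2$ part of a quantity or applies one of the inverse operators $\slap^{-1}_{1,0}$, $\zslapinv{p}$, $\slap^{-1}$ (which are only defined on, and return, functions supported away from $l = 0,1$, by Proposition \ref{propinverseoperators}) will annihilate the relevant piece of $g_M$.

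Concretely, I would proceed term by term through Definition \ref{defngaugemap}. First, decompose $g_M$ according to the machinery of section \ref{QandStensoralgebra}: since $g_M = -\ommu\exd {t^*}^2 + \tfrac{4M}{x}\exd t^* \exd x + 4M^2\opmu\exd x^2 + 4Mx^2\,\pi^*_{\twosphere}\roundmetric$, one reads off that $\qg = g_M$ restricted to $TQ$ is the first three terms, $\sg$ is the last term, and the mixed $\qmsm$ part $\mtau$ (here $\mtau = \stkout{(g_M)}$) vanishes identically because $g_M$ has no cross terms between the $Q$ and $S$ directions. Thus $\mtau_{\qP} = 0$ in the second term. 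For the first term $\tfrac{2}{r}\qtau_{\qP}$ and the trace term $-\tfrac{1}{r}\qexd r\,\strtau$: here $\strtau = \str\sg = 2$ and $\qtau_{\qP} = \qg(\qexd r^{\qsharp}, \cdot)$, and one must check these combine to zero — this is precisely the statement that the $Q$-part of $g_M$ satisfies the wave-gauge-type relation built into the first line of $\fgaumap$, which should follow from an explicit coordinate computation of $\qg^{bc}(C_{g_M, g_M})_{abc}$ being consistent (noting $C_{g_M,g_M} = 0$). For the term $-\tfrac{1}{r}(\qhattau_{\qP})_{l=0,1}$: one computes $\qhattau = \qhat{\qg} = \qg - \tfrac{1}{2}\qg\,\qtr\qg$, but $\qtr\qg = \qg^{ab}\qg_{ab} = 2$ so $\qhattau = 0$. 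For the remaining terms on the second line, the crucial point is that $\Phi(g_M)$ and $\Psi(g_M)$ both vanish: indeed $\Phi$ and $\Psi$ map into $\smfunrad$ (functions with vanishing $l = 0,1$ projection) and are built from the $l \geq 2$ projections $\rpart{(\odd{\mtau})}$, $\rpart{\strtau}$, $\rpart{(\even{\mtau})}$, $\rpart{\qhattau}$ of $\tau$; since $g_M$ is supported only on $l = 0,1$ all of these $l \geq 2$ projections vanish, hence $\Phi(g_M) = \Psi(g_M) = 0$, and $\mfZ(\Psi(g_M)) = 0$ as well. Therefore every term on the second line of $\fgaumap(g_M)$ vanishes, and we conclude $\fgaumap(g_M) = 0$, which by Definition \ref{defngenwavegauge} and the observation $\qg^{bc}(C_{g_M,g_M})_{abc} = 0$ — wait, more precisely $\og = g_M$ here so $C_{g_M, g_M} = 0$ and the gauge condition reads $0 = \fgaumap(g_M)$ — is exactly the assertion of the lemma.

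I do not expect any genuine obstacle here; the lemma is a consistency check that the gauge-map was designed to satisfy, and the paper itself flags it as following ``by explicit computation''. The only mildly delicate point is the first line of $\fgaumap$: one must confirm that the combination $\tfrac{2}{r}\qg(\qexd r^{\qsharp}, \cdot) + \tfrac{2}{r}\mtau_{\qP} - \tfrac{1}{r}\qexd r\,\strtau - \tfrac{1}{r}(\qhattau_{\qP})_{l=0,1}$ evaluated at $g_M$ is zero, which reduces — after using $\mtau = 0$ and $\qhattau = 0$ as above — to checking $2\qg(\qexd r^{\qsharp}, \cdot) = \qexd r\cdot\str\sg = 2\,\qexd r$, i.e. that $\qg^{ab}(\qexd r)_b = (\qexd r)^{\qsharp}$ paired back gives $\qexd r$, which is a tautology. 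So the whole verification is mechanical, and the ``main obstacle'' is simply keeping track of the $Q$/$S$ decomposition and the definitions of the projections carefully enough to see that all the $l\geq 2$ pieces drop out. Since $C_{g_M,g_M} = 0$ trivially (the connection tensor of a metric with respect to itself vanishes by its definition in section \ref{Thegeneralisedwavegauge}), the lemma then follows immediately from $\fgaumap(g_M) = 0$ via Definition \ref{defngenwavegauge}.
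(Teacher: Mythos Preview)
Your proposal is correct and follows essentially the same approach as the paper, which simply notes that since $g_M$ is supported only on $l=0,1$ one has $\fgaumap(g_M)=0$ by explicit computation, whence the gauge condition reduces to $C_{g_M,g_M}=0$, which is trivial. Your term-by-term verification is more detailed than what the paper records but is exactly the computation being alluded to.
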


\subsubsection{The Einstein vacuum equations as expressed in a generalised wave gauge with respect to the pair $(g_M, \fgaumap)$}\label{TheEinsteinvacuumequationsasexpressedinageneralisedwavegaugewithrespecttothepair}

We assume now that the Lorentzian metric $\bg$ of section \ref{Thegeneralisedwavegauge} is in a generalised wave gauge with respect to the pair $(g_M, \fgaumap)$. Assuming in addition that $\bg$ solves the Einstein vacuum equations,
\begin{align}\label{eineqn}
\ric[\bg]=0,
\end{align}
then \eqref{eineqn} reduces under the gauge condition \eqref{wavegauge} to the following system of equations on $\bg$:
\begin{align}
\big(\boldsymbol{g^{-1}}\big)^{cd}\nabla_c\nabla_d\bg_{ab}+&2(C_{\bg, g_M})^{c}_{de}\bg_{c(a}\nabla_{b)}\big(\boldsymbol{g^{-1}}\big)^{de}-4\bg_{de}(C_{\bg, g_M})_{db[a}\nabla_{c]}\big(\boldsymbol{g^{-1}}\big)^{cd}\nonumber\\
&-4(C_{\bg, g_M})^d_{b[a}(C_{\bg, g_M})^c_{c]d}+2\bg_{cd}\bg_{e(a}\tensor{\textnormal{Riem}}{_{b)}^{cd}_e}\nonumber\\
&=2\bg_{c(a}\nabla_{b)}\Big(\big(g_M^{-1}\big)^{ce}\fgaumap(\bg)_\epsilon\Big)\label{einsteinequationsinageneralisedwavegauge},\\
(\boldsymbol{g^{-1}})^{bc}(C_{\bg, g_M})_{abc}&=\fgaumap(\bg)_a\label{einsteinequationwavegauge}.
\end{align}
Here, $\textnormal{Riem}$ is the Riemann tensor of $g_M$ and $\big(C_{\bg, g_M}\big)^a_{bc}=\big(\boldsymbol{g^{-1}}\big)^{ad}\big(C_{\bg, g_M}\big)_{dbc}$.

Since $\ric[g_M]=0$ Lemma \ref{lemmagaumap} immediately yields the following.
\begin{lemma}\label{lemmaschwarzschildsolves}
	The Schwarzschild exterior metric $g_M$ is a solution to \eqref{einsteinequationsinageneralisedwavegauge}-\eqref{einsteinequationwavegauge}.
\end{lemma}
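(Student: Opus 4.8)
The plan is to read off the claim from Lemma \ref{lemmagaumap} together with the Ricci-flatness of $g_M$, treating the two equations \eqref{einsteinequationsinageneralisedwavegauge}--\eqref{einsteinequationwavegauge} in turn. Equation \eqref{einsteinequationwavegauge} is by inspection identical to the gauge condition \eqref{wavegauge} with $\bog = g_M$ and $\boldsymbol{f} = \fgaumap$; since Lemma \ref{lemmagaumap} states precisely that $g_M$ is in a generalised wave gauge with respect to the pair $(g_M, \fgaumap)$, equation \eqref{einsteinequationwavegauge} holds with $\bg = g_M$. One may also see this directly: both sides vanish, the connection tensor $C_{g_M, g_M}$ being zero because $\nabla g_M = 0$ for the Levi-Civita connection $\nabla$ of $g_M$, and $\fgaumap(g_M) = 0$ by the explicit computation recorded just before Lemma \ref{lemmagaumap}.

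For \eqref{einsteinequationsinageneralisedwavegauge} I would invoke the reduction carried out in section \ref{TheEinsteinvacuumequationsasexpressedinageneralisedwavegaugewithrespecttothepair}: there it is shown that any Lorentzian metric $\bg$ which is in a generalised wave gauge with respect to $(g_M, \fgaumap)$ and which solves $\ric[\bg]=0$ necessarily satisfies \eqref{einsteinequationsinageneralisedwavegauge}. Both hypotheses hold for $\bg = g_M$ --- the gauge condition by Lemma \ref{lemmagaumap}, and $\ric[g_M]=0$ since Schwarzschild is Ricci-flat --- so the conclusion follows at once. As a sanity check one can also verify \eqref{einsteinequationsinageneralisedwavegauge} directly for $\bg = g_M$: every term containing $C_{g_M, g_M}$ drops out, the term $(g_M^{-1})^{cd}\nabla_c\nabla_d (g_M)_{ab}$ vanishes since $\nabla g_M = 0$, the right-hand side $2(g_M)_{c(a}\nabla_{b)}\big((g_M^{-1})^{ce}\fgaumap(g_M)_e\big)$ vanishes because $\fgaumap(g_M) = 0$ identically on $\mcalm$, and the remaining curvature term is a double metric-contraction of the Riemann tensor of $g_M$, which by the symmetries of $\textnormal{Riem}$ reduces to a constant multiple of a component of $\ric[g_M]$ (or of the scalar curvature of $g_M$ times $g_M$), hence also vanishes.

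There is no genuine obstacle in this argument; the only substantive input is the identity $\fgaumap(g_M) = 0$, and this is supplied upstream (just before Lemma \ref{lemmagaumap}) by the observation that $g_M$ is supported only on the $l = 0, 1$ spherical harmonics --- so that the Hodge-type quantities $\Phi(g_M)$ and $\Psi(g_M)$, which by construction take values in $\smfunrad$, both vanish, leaving in Definition \ref{defngaugemap} only $l = 0, 1$ terms which then cancel by a direct computation. Granted that, Lemma \ref{lemmaschwarzschildsolves} is immediate.
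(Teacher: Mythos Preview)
Your proof is correct and follows the same route as the paper: the paper simply remarks that since $\ric[g_M]=0$, Lemma \ref{lemmagaumap} immediately yields the result, which is exactly your core argument. Your additional direct verification (that each term in \eqref{einsteinequationsinageneralisedwavegauge} vanishes separately when $\bg=g_M$) is a welcome sanity check but not needed; in particular your handling of the curvature term is right, as $(g_M)_{cd}\tensor{\textnormal{Riem}}{_{b}^{cd}_e}$ contracts to (a multiple of) the Ricci tensor of $g_M$.
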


This paper is thus concerned with the system of equations that result from linearising the system  \eqref{einsteinequationsinageneralisedwavegauge}-\eqref{einsteinequationwavegauge} about the solution $g_M$.

\subsection{The formal linearisation of the equations of section \ref{TheEinsteinvacuumequationsinageneralisedwavegauge}: the equations of linearised gravity}\label{Theformallinearisationoftheequationsofsection}

In order to linearise the system of equations \eqref{einsteinequationsinageneralisedwavegauge}-\eqref{einsteinequationwavegauge} about the solution $g_M$ one must first develop a formal linearisation theory. This is the content of section \ref{Thelinearisationprocedure}. The linearised equations and the corresponding solution space are then presented in section \ref{Theequationsoflinearisedgravity}.

\subsubsection{The linearisation procedure}\label{Thelinearisationprocedure}

To formally linearise the system of equations \eqref{einsteinequationsinageneralisedwavegauge}-\eqref{einsteinequationwavegauge} about $g_M$ we assume the existence of a smooth 1-parameter family of smooth Lorentzian metrics $\bg(\epsilon)$ solving \eqref{einsteinequationsinageneralisedwavegauge}-\eqref{einsteinequationwavegauge} with $\bg(0)=g_M$.
We then define the linearised metric $\glin\in\smsymtwocov$ as the first order term in a formal power series expansion of $\bg(\epsilon)$ in powers of $\epsilon$ about $g_M$:
\begin{align}\label{powerseriesmetric}
\boldsymbol{g}(\epsilon)=g_M +\epsilon\, \glin+o(\epsilon^2).
\end{align}
Thus, in keeping with the notation of \cite{DHRlinstabschwarz}, quantities with a superscript ``(1)'' denote linear perturbations of bolded quantities about their background Schwarzschild value. 

Moreover, the $\reals$-linearity of the gauge-map $\fgaumap$ allows one to power series expand the 1-form $\fgaumap(\bg(\epsilon))$ according to
\begin{align}\label{powerseriesmap}
\fgaumap(\bg(\epsilon))=\epsilon\, \fgaumap(\glin)+o(\epsilon^2).
\end{align}
We thus define $\flin\in\smonecov$ as the first order term in the above expansion:
\begin{align*}
\flin:=\fgaumap(\glin).
\end{align*}
Consequently, to derive the linearisation of the system \eqref{einsteinequationsinageneralisedwavegauge}-\eqref{einsteinequationwavegauge} about $g_M$ one simply inserts the power series expansions \eqref{powerseriesmetric}-\eqref{powerseriesmap} into \eqref{einsteinequationsinageneralisedwavegauge}-\eqref{einsteinequationwavegauge} and discards higher than linear terms in $\epsilon$.

\subsubsection{The equations of linearised gravity}\label{Theequationsoflinearisedgravity}

Proceeding in this manner one arrives at the following \textbf{equations of linearised gravity} for the linearised metric $\glin$\footnote{In particular, note that $C_{g_M,g_M}=0$.}:
\begin{align}
\big(\Box_{\g} \lin{g}\big)_{ab}-2\tensor{\text{Riem}}{^c_{ab}^d}\lin{g}_{cd}&=2\nabla_{(a}\lin{f}_{b)},\label{eqnlinearisedeinsteinequations}\\
\big(\text{div}\glin\big)_a-\frac{1}{2}\nabla_a\trglin&=\lin{f}_{a}\label{eqnlorentzgauge}.
\end{align}
Here, $\Box$ and div are defined as in section \ref{Tensoranalysis}.

The equations of linearised gravity thus describe a coupled wave-divergence equation acting on smooth, symmetric 2-covariant tensor fields on $\mcalm$. This paper is concerned with establishing a \emph{decay} statement for this system of equations.

\section{Special solutions to the equations of linearised gravity}\label{Specialsolutionstotheequationsoflinearisedgravity}

In this section we introduce two special classes of solutions to the equations of linearised gravity. The decay statement of section \ref{Precisestatementsofthemaintheorems} that we establish for said equations involves solutions that have been normalised by the addition of particular members of each of these two classes. 

An outline of this section is as follows. We begin in section \ref{LinearisedKerrsolutionstotheequationsoflinearisedgravity} by introducing the first special class of \emph{linearised Kerr solutions} to the equations of linearised gravity. Finally in section \ref{Residualpuregaugesolutionstotheequationsoflinearisedgravity} we introduce the second special class of \emph{residual pure gauge solutions} to the equations of linearised gravity.

\subsection{Linearised Kerr solutions to the equations of linearised gravity}\label{LinearisedKerrsolutionstotheequationsoflinearisedgravity}

That the following do indeed solve the equations of linearised gravity can be verified by explicit computation. See also section \ref{OVSpecialsolutionstothelinearisedequations} for a more geometric derivation.

\begin{proposition}\label{proplinkerr}
	Let $\mfm, \mfa_{-1}, \mfa_0, \mfa_{1}\in\reals$. Then for $i=-1,0,1$ the following is a smooth solutions to the equations of linearised gravity:
	\begin{align}\label{eqnlinkerr}
	\glin_{\mfm, \mfa_{i}}:=\frac{4\mfm}{ x}\exd t^*\exd x+8M\mfm\bigg(1+\frac{1}{x}\bigg)\exd x^2-2\mfa_i\bigg[\frac{1}{x}\exd t^*+2M\bigg(1+\frac{1}{x}\bigg)\exd x\bigg]\astrosun\shd\sexd\slashed{Y}^1_i+8M\mfm x^2\pi_{\twosphere}^*\roundmetric.
	\end{align}
	
\end{proposition}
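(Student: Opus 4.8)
The plan is to reduce Proposition \ref{proplinkerr} to two verifications and carry out each. \emph{First}, that every $\glin_{\mfm,\mfa_i}$ satisfies the linearised vacuum equation $\lin{\ric}[\glin_{\mfm,\mfa_i}]=0$, where $\lin{\ric}$ denotes the linearisation about $g_M$ of the Ricci operator $\ric[\,\cdot\,]$. \emph{Second}, that it satisfies the divergence relation \eqref{eqnlorentzgauge} with $\flin=\fgaumap(\glin_{\mfm,\mfa_i})$. Granting these, the wave equation \eqref{eqnlinearisedeinsteinequations} follows automatically: the system \eqref{eqnlinearisedeinsteinequations}--\eqref{eqnlorentzgauge} was obtained precisely by linearising $\ric[\bg]=0$ under the gauge condition \eqref{wavegauge} (cf.\ section \ref{Theformallinearisationoftheequationsofsection}), so on the locus where \eqref{eqnlorentzgauge} holds the equation \eqref{eqnlinearisedeinsteinequations} is equivalent to $\lin{\ric}[\glin]=0$ by construction of the system.

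For the first verification I would exploit the origin of these tensor fields in the Kerr family. The pure-mass member $\glin_{\mfm,0}$ coincides with the first-order-in-$\epsilon$ term of the Schwarzschild family $g_{M+\epsilon\mfm}$, each member of which is Ricci-flat; hence $\lin{\ric}[\glin_{\mfm,0}]=\tfrac{\exd}{\exd\epsilon}\big|_{\epsilon=0}\ric[g_{M+\epsilon\mfm}]=0$. The pure-angular-momentum member $\glin_{0,\mfa_0}$ coincides with the first-order-in-$a$ term of the Kerr family $g_{M,a}$ of \eqref{OVeqnkerrmetric} (for $i=0$, i.e.\ rotation about the axis adapted to the coordinates $(\theta,\varphi)$); since the $o(a^2)$ truncation does not affect the first $a$-derivative and exact Kerr is Ricci-flat, $\lin{\ric}[\glin_{0,\mfa_0}]=\tfrac{\exd}{\exd a}\big|_{a=0}\ric[g_{M,a}]=0$. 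The members $\glin_{0,\mfa_{\pm1}}$ are obtained from $\glin_{0,\mfa_0}$ by pushing forward under rotations of $\twosphere$, which are isometries of $g_M$; since $\lin{\ric}$ is equivariant under background isometries, these too satisfy $\lin{\ric}[\,\cdot\,]=0$, and $\reals$-linearity of $\lin{\ric}$ then disposes of the general member.

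For the second verification I would compute \eqref{eqnlorentzgauge} directly in the coordinates $(t^*,x)$ together with the spherical harmonic decomposition of section \ref{Theprojectionofsmsymtwocovontoandawayfromthel=0,1sphericalharmonics}. The decisive simplification is that $\glin_{\mfm,\mfa_i}$ is supported only on the $l=0,1$ spherical harmonics --- the mass part is spherically symmetric and the angular-momentum part is built from $\shd\sexd\slashed{Y}^1_i$ --- so the maps $\Phi,\Psi$ of section \ref{Thegeneralisedwavegaugewithrespecttothepair}, which act through the projections $\rpart{(\,\cdot\,)}$, annihilate it. Consequently $\fgaumap(\glin_{\mfm,\mfa_i})$ collapses to the first line of Definition \ref{defngaugemap},
\begin{align*}
\fgaumap(\tau)=\tfrac{2}{r}\qtau_{\qP}+\tfrac{2}{r}\mtau_{\qP}-\tfrac{1}{r}\qexd r\,\strtau-\tfrac{1}{r}\big(\qhattau_{\qP}\big)_{l=0,1},\qquad \tau=\glin_{\mfm,\mfa_i},
\end{align*}
with the $Q$/$S$ splitting understood as in section \ref{QandStensoralgebra}, and one then computes $\big(\textnormal{div}\,\glin_{\mfm,\mfa_i}\big)_a-\tfrac12\nabla_a\trglin_{\mfm,\mfa_i}$ and checks equality. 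I expect this to be the main obstacle: it is a somewhat lengthy computation in the $Q$/$S$ formalism, and it is exactly the point where the $l=0,1$ part of the gauge-map $\fgaumap$ was engineered so that the Kerr family lies in the generalised wave gauge with respect to $(g_M,\fgaumap)$ to first order in the parameters.

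A less computational alternative that handles both verifications at once is the geometric argument sketched in section \ref{OVSpecialsolutionstothelinearisedequations}: given (as is built into the construction of $\fgaumap$) that the family $g_{M_\epsilon,a_\epsilon}$ is in a generalised wave gauge with respect to $(g_M,\fgaumap)$ to first order in $\epsilon$, diffeomorphism invariance of \eqref{eineqn} shows that to first order in $\epsilon$ it solves \eqref{einsteinequationsinageneralisedwavegauge}--\eqref{einsteinequationwavegauge}, whence differentiating at $\epsilon=0$ gives that $\glin_{\mfm,\mfa_0}$ solves \eqref{eqnlinearisedeinsteinequations}--\eqref{eqnlorentzgauge}, with the cases $i=\pm1$ following once more from the spherical symmetry of $g_M$.
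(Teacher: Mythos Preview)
Your proposal is correct and follows the paper's approach: the paper offers no detailed proof, stating only that the result ``can be verified by explicit computation'' and pointing to the geometric derivation in section~\ref{OVSpecialsolutionstothelinearisedequations}, which is precisely your alternative argument. Your decomposition into (i) $\lin{\ric}=0$ via linearising the Kerr family and (ii) the gauge condition \eqref{eqnlorentzgauge} via the observation that the $l=0,1$ support kills $\Phi,\Psi$ and hence collapses $\fgaumap$ to its first line is a clean way to organise the explicit computation the paper leaves to the reader.
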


We will call such solutions linearised Kerr solutions to the equations of linearised gravity. In particular, observe that these solutions are parametrised by four real numbers and are moreover supported only on $l=0,1$.

\subsection{Residual pure gauge solutions to the equations of linearised gravity}\label{Residualpuregaugesolutionstotheequationsoflinearisedgravity}

That the following do indeed solve the equations of linearised gravity can be verified by explicit computation. See also section \ref{OVSpecialsolutionstothelinearisedequations} for a more geometric derivation.
\begin{proposition}\label{proppuregauge}
Let $V\in\smonecov$ solve
\begin{align}
\Box_{g_M}V=\fgaumap(\nabla\astrosun V).\label{eqnpuregauge}
\end{align}
Then the following is a smooth solution to the equations of linearised gravity:
\begin{align*}
\glin_V:=\nabla\astrosun V.
\end{align*} 
\end{proposition}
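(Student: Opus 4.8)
The plan is to verify directly that $\glin_V := \nabla \astrosun V$ solves the equations of linearised gravity \eqref{eqnlinearisedeinsteinequations}--\eqref{eqnlorentzgauge} whenever $V$ satisfies the transport-type wave equation \eqref{eqnpuregauge}. The conceptual reason this works is that $\glin_V$ is the linearisation of the diffeomorphism action $\phi_\epsilon^* g_M$ where $\phi_\epsilon$ is the flow of $V^\sharp$; since the Einstein vacuum equations are diffeomorphism invariant, such a variation automatically lies in the kernel of the linearised Einstein operator, \emph{provided} the gauge condition is also preserved to first order, which is precisely what \eqref{eqnpuregauge} enforces. The cleanest route, however, is a purely computational one that does not invoke the geometric picture.

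First I would recall that on a Ricci-flat background the linearised Einstein operator acting on a pure-gauge perturbation $\mathcal{L}_{V^\sharp} g_M = \nabla \astrosun V$ reduces to a pure-gauge expression. Concretely, one uses the standard identities: for any symmetric $2$-tensor of the form $h_{ab} = 2\nabla_{(a}V_{b)}$ one has, using $\ric[g_M]=0$ and the second Bianchi identity,
\begin{align*}
\big(\Box_{g_M} h\big)_{ab} - 2\tensor{\textnormal{Riem}}{^c_{ab}^d} h_{cd} = 2\nabla_{(a}\big(\Box_{g_M}V\big)_{b)},
\end{align*}
which is the linearised-Einstein-operator-commutes-with-Lie-derivative identity specialised to a vacuum background. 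Similarly, for the divergence relation one computes
\begin{align*}
\big(\textnormal{div}\,h\big)_a - \tfrac{1}{2}\nabla_a \textnormal{tr}_{g_M} h = \big(\Box_{g_M}V\big)_a,
\end{align*}
again using $\ric[g_M]=0$ to cancel the curvature terms that would otherwise appear from commuting covariant derivatives. Both of these are standard (cf. the generalised Lorenz gauge discussion in Wald \cite{Waldbook}); I would either cite them or include the short index computation, which amounts to commuting two covariant derivatives and invoking Ricci-flatness.

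With these two identities in hand the verification is immediate. Set $h = \glin_V = \nabla\astrosun V$ and $\flin = \fgaumap(\glin_V) = \fgaumap(\nabla\astrosun V)$. Equation \eqref{eqnpuregauge} says exactly $\Box_{g_M}V = \flin$. Substituting into the first identity gives $\big(\Box_{g_M}\glin_V\big)_{ab} - 2\tensor{\textnormal{Riem}}{^c_{ab}^d}(\glin_V)_{cd} = 2\nabla_{(a}\flin_{b)}$, which is \eqref{eqnlinearisedeinsteinequations}; substituting into the second gives $\big(\textnormal{div}\,\glin_V\big)_a - \tfrac{1}{2}\nabla_a\textnormal{tr}_{g_M}\glin_V = \flin_a$, which is \eqref{eqnlorentzgauge}. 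Smoothness of $\glin_V$ follows from smoothness of $V$ (a solution of \eqref{eqnpuregauge}, which is a well-posed linear wave system once $\fgaumap$ is recognised as a lower-order — in fact zeroth plus first order — operator after linearisation) together with smoothness of the connection coefficients of $g_M$.

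The only genuine subtlety — and hence the main obstacle — is establishing the two background identities with the correct sign and weight conventions, in particular tracking how the curvature terms from commuting $\nabla_a\nabla_b V_c$ recombine with the explicit $\textnormal{Riem}$ term in \eqref{eqnlinearisedeinsteinequations} and vanish on a Ricci-flat background. This is the place where a slip in index symmetrisation or in the convention for $\tensor{\textnormal{Riem}}{^c_{ab}^d}$ would produce spurious leftover terms. I would handle it by writing $h_{ab} = \nabla_a V_b + \nabla_b V_a$, expanding $\Box_{g_M}h_{ab} = \nabla^c\nabla_c(\nabla_a V_b + \nabla_b V_a)$, repeatedly applying the Ricci identity $[\nabla_c,\nabla_a]V_b = -\tensor{\textnormal{Riem}}{^d_{bca}}V_d$ (in the paper's convention), and collecting terms; the $\ric[g_M]=0$ hypothesis kills every term not of the form $\nabla_{(a}(\Box_{g_M}V)_{b)}$ or $\tensor{\textnormal{Riem}}{^c_{ab}^d}h_{cd}$. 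Since this is a routine (if delicate) tensor calculation on a fixed explicit background, I would present it compactly rather than in full detail, noting that it can also be read off from the general variational formula $\delta\,\ric[g] = -\tfrac12\Box_g h - \ldots$ evaluated on $h = \mathcal{L}_{V^\sharp}g_M$.
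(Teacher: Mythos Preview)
Your proposal is correct and follows exactly the approach the paper indicates: the paper simply states that the result ``can be verified by explicit computation'' (with a pointer to the geometric derivation in the overview), and you carry out precisely that computation via the two standard Ricci-flat identities $\Box_{g_M}(\nabla\astrosun V)_{ab} - 2\tensor{\textnormal{Riem}}{^c_{ab}^d}(\nabla\astrosun V)_{cd} = 2\nabla_{(a}(\Box_{g_M}V)_{b)}$ and $(\textnormal{div}\,\nabla\astrosun V)_a - \tfrac{1}{2}\nabla_a\textnormal{tr}_{g_M}(\nabla\astrosun V) = (\Box_{g_M}V)_a$. If anything, you supply more detail than the paper does.
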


We will call such a solution a residual pure gauge solution to the equations of linearised gravity. In particular, these solutions are parametrised by smooth 1-forms on $\mcalm$ solving the equation \eqref{eqnpuregauge} which we henceforth refer to as the \emph{residual pure gauge equation}. Consequently, we will show that such 1-forms exist in section \ref{TheCauchyinitialvalueproblemfortheresidualpuregaugeequation} and thus there exist non-trivial residual pure gauge solution to the equations of linearised gravity.

\begin{remark}\label{rmkpuregaugedistinctfromkerr}
It is clear from their derivation in section \ref{OVSpecialsolutionstothelinearisedequations} that linearised Kerr and residual pure gauge solutions comprise two distinct classes of solutions to the equations of linearised gravity. This can also be shown purely in the context of the linear theory -- see \cite{SarbachPHD}.
\end{remark}

\section{Decoupling the equations of linearised gravity up to residual pure gauge and linearised Kerr solutions: the Regge--Wheeler and Zerilli equations}\label{DecouplingtheequationsoflinearisedgravityuptoresidualpuregaugeandlinearisedKerrsolutions:theRegge--WheelerandZerilliequations}

In this section we show that any smooth solution to the equations of linearised gravity can be decomposed as the sum of a residual pure gauge solution, a linearised Kerr solution and a solution to the equations of linearised gravity that effectively decouples into the Regge--Wheeler and Zerilli equations respectively. The decay statement we establish for the Regge--Wheeler and Zerilli equations in section \ref{Theorem1:BoundednessanddecayforsolutionstotheRegge--WheelerandZerilliequations} will then ultimately yield the decay statement of section \ref{Theorem 2:Boundedness,decayandasymptoticflatnessofinitial-data-normalisedsolutionstotheequationsoflinearisedgravity} for those solutions to the equations of linearised gravity that have been normalised with respect to this decomposition.

An outline of this section is as follows. We begin in section \ref{TheRegge--WheelerandZerilliequationsandtheirconnectiontotheequationsoflinearisedgravity} by revealing the connection between the Regge--Wheeler and Zerilli equations and the equations of linearised gravity, in particular showing that said equations actually generate a class of solutions to the linearised system. Then in section \ref{Extractingaresidualpuregaugesolutionfromageneralsolutiontotheequationsoflinearisedgravity} we show that one can always extract a residual pure gauge solution from any given smooth solution to the equations of linearised gravity. Finally in section \ref{DecomposingageneralsolutiontotheequationsoflinearisedgravityintothesumofaresidualpuregaugeandlinearisedKerrsolutionandasolutiondeterminedbytheRegge--WheelerandZerilliequations} we combine the insights of the previous two sections to construct the desired decoupling of the equations of linearised gravity.

\subsection{The Regge--Wheeler and Zerilli equations and their connection to the equations of linearised gravity}\label{TheRegge--WheelerandZerilliequationsandtheirconnectiontotheequationsoflinearisedgravity}

Since the decay statement we aim to establish for the equations of linearised gravity is sensitive to the addition of both residual pure gauge and linearised Kerr solutions it is natural as a first step to establish a decay statement for those linearised quantities which vanish for all such solutions. It is moreover natural to try and isolate scalar versions of these \emph{invariant} quantities as a means of mitigating potential complications arising from the tensorial structure of the equations of linearised gravity. Consequently, we will show in section \ref{Theconnectionwiththeequationsoflinearisedgravity} that there exist two invariant scalars satisfying the property that they actually decouple from the full linearised system into the wave equations described by the Regge--Wheeler and Zerilli equations respectively. The key point is that solutions to these equations can be analysed using the techniques developed for establishing a decay statement for the scalar wave equation on $\Mg$ as we shall prove in section \ref{Proofoftheorem1}. We then further demonstrate in section \ref{SolutionstotheequationsoflinearisedgravitygeneratedbysolutionstotheRegge--WheelerandZerilliequations} that solutions to the Regge--Wheeler and Zerilli equations actually generate a class of solutions to the equations of linearised gravity. It then follows that such a class of solutions should be susceptible to a decay statement -- we shall exploit this fact later. First however let us define the Regge--Wheeler and Zerilli equations.

\subsubsection{The Regge--Wheeler and Zerilli equations}\label{TheRegge--WheelerandZerilliequations}

The Regge--Wheeler and Zerilli equations describe two scalar \emph{wave} equations on $\mcalm$. They are defined as follows.
\begin{definition}\label{defnRWandZ}
	Let $\psi\in\smfunrad$. 
	
	Then we say that $\psi$ is a smooth solution to the Regge--Wheeler equation on $\mcalm$ iff
	\begin{align}\label{eqnRWeqn}
	\qbox\psi+\slap\psi=-\frac{6}{r^2}\frac{M}{r}\psi.
	\end{align}
	Conversely, we say that $\psi$ is a smooth solution to the Zerilli equation on $\mcalm$ iff
	\begin{align}\label{eqnZereqn}
	\qbox\psi+\slap\psi=-\frac{6}{r^2}\frac{M}{r}\psi+\frac{24}{r^3}\frac{M}{r}(r-3M){\slap^{-1}_{\mathfrak{Z}}}\psi+\frac{72}{r^3}\frac{M}{r}\frac{M}{r}(r-2M)\zslapinv{2}\psi.
	\end{align}
\end{definition}
Note that one can write the Zerilli equation as 
\begin{align*}
\qbox\psi+\slap\psi=-\frac{6}{r^2}\frac{M}{r}\psi+\mfZ\psi
\end{align*}
where the operator $\mfZ$ is defined as in section \ref{Thegeneralisedwavegaugewithrespecttothepair}.

To see that the above do indeed define wave equations on $\mcalm$ it suffices to note from section \ref{Decomposingtheequationsoflinearisedgravity} that
\begin{align*}
r\Box_{g_M}(r^{-1}f)=r\qbox f+\slap f, \qquad f\in\smfun.
\end{align*}

\subsubsection{The connection with the equations of linearised gravity}\label{Theconnectionwiththeequationsoflinearisedgravity}

We now show that certain linearised scalar quantities which vanish for residual pure gauge and linearised Kerr solutions decouple from the equations of linearised gravity into the Regge--Wheeler and Zerilli equations respectively. 

We recall for this section the $\reals$-linear maps $\Phi$ and $\Psi$ defined in section \ref{Thegeneralisedwavegaugewithrespecttothepair}.\newline

An analogue of the theorem stated below for smooth solutions to the \emph{linearised Einstein equations around Schwarzschild} was originally established by Regge--Wheeler and Zerilli in \cite{RWrweqn} and \cite{Zzeqn}. Since however solutions to this system of equations differ from solutions to the equations of linearised gravity only by the addition of a \emph{pure gauge solution} to the linearised equations (see for instance the book of Wald \cite{Waldbook}) one can immediately infer from \cite{RWrweqn} and \cite{Zzeqn} the following theorem. We shall however reprove it for reasons of completeness. Note then that we follow \cite{COSschwarz} very closely in the proof.

\begin{theorem}\label{thmgaugeinvariantquantintermsofRWandZ}
	Let $\glin$ be a smooth solution to the equations of linearised gravity. 
	
	Then the scalar $\Philin\in\smfunrad$ defined according to
	\begin{align*}
	\Philin:=\Phi\big(\glin\big)
	\end{align*}
	satisfies the Regge--Wheeler equation \eqref{eqnRWeqn}. Moreover $\Philin$ vanishes if $\glin$ is either a residual pure gauge or linearised Kerr solution to the equations of linearised gravity.
	
	Conversely the scalar $\Psilin\in\smfunrad$ defined according to
	\begin{align*}
	\Psilin:=\Psi\big(\glin\big)
	\end{align*}
	satisfies the Zerilli equation \eqref{eqnZereqn}. Moreover $\Psilin$ vanishes if $\glin$ is either a residual pure gauge or linearised Kerr solution to the equations of linearised gravity.
\end{theorem}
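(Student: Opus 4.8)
\emph{Proof proposal.} The plan is to dispose of the two vanishing statements first, where almost no work is needed, and then to establish the wave equations by a direct component computation following \cite{COSschwarz}. Throughout one works modulo the $l=0,1$ spherical harmonics: the maps $\Phi$ and $\Psi$ of section \ref{Thegeneralisedwavegaugewithrespecttothepair} are built solely from the pieces $\rpart{(\odd{\mtau})}$, $\rpart{(\even{\mtau})}$, $\rpart{\strtau}$, $\rpart{\qhattau}$ and $\shattau$, each of which vanishes when $\tau$ is supported only on $l=0,1$ (recall that a smooth symmetric traceless $S$-tensor automatically has vanishing projection to $l=0,1$ by Proposition \ref{prophodgedecomp}). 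Since the linearised Kerr solutions of Proposition \ref{proplinkerr} are manifestly supported only on $l=0,1$, this already gives the second vanishing claim in each part. It is also worth noting at the outset that, by the structure of \eqref{eqnlinearisedeinsteinequations}--\eqref{eqnlorentzgauge}, a smooth $\glin$ solves the equations of linearised gravity precisely when the linearised Einstein equations $\lin{\textnormal{Ric}}[\glin]=0$ hold together with the generalised Lorenz condition $\textnormal{div}\,\glin-\tfrac12\nabla\trglin=\flin$; this reformulation is what lets the classical Regge--Wheeler and Zerilli analysis enter.

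For the vanishing on residual pure gauge solutions, recall that $\glin_V=\nabla\astrosun V=\mcalL_{V^\sharp}g_M$, so by $\reals$-linearity it suffices to prove the \emph{gauge-invariance} statement $\Phi(\nabla\astrosun W)=\Psi(\nabla\astrosun W)=0$ for \emph{every} smooth one-form $W\in\smonecov$ (the residual pure gauge equation \eqref{eqnpuregauge} plays no role here). I would decompose $W=\widetilde{W}+\slashed{W}$ as in section \ref{QandStensoralgebra}, write $\slashed{W}=\sdso(\even{\slashed{W}},\odd{\slashed{W}})$ by Proposition \ref{prophodgedecomp} and record the $Q$-part in the frame $\{\qexd t^*,\qexd x\}$, then project onto the $l\geq 2$ harmonics and compute the pieces $\qtau$, $\even{\mtau}$, $\odd{\mtau}$, $\strtau$, $\shattau$ of $\nabla\astrosun W$ in terms of these potentials and their $\qn$- and $\sn$-derivatives. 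Substituting into the formulas for $\Phi$ and $\Psi$ of section \ref{Thegeneralisedwavegaugewithrespecttothepair}, the required cancellations follow from the Hodge identities $\sdiv\sdso=\slap$, $\sdso\sdo=-\slap+\tfrac{1}{r^2}$, $\sdiv\sn\astrosunhat\sdso=\sdso\slap+\tfrac{2}{r^2}\sdso$ and the commutation relations of Lemma \ref{lemmacommrelationsandidentities} (in particular $[\qn,\sn]=0$ and the commutators of $\qn$ with the angular operators and with $\zslapinv{p}$ recorded there). This is long but entirely mechanical.

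It remains to derive the wave equations. Here I would follow \cite{COSschwarz}: decompose the equations of linearised gravity \eqref{eqnlinearisedeinsteinequations}--\eqref{eqnlorentzgauge} into their $Q$-tensor, $\qmsm$-one-form, trace-$S$ and traceless-$S$ components, work in the $l\geq 2$ sector, and split into the odd and even parity sectors. In each sector the component equations are redundant, and using the divergence relation \eqref{eqnlorentzgauge} to eliminate the non-invariant components one assembles the remaining equations into a single closed equation for a master scalar, which by direct identification is $\Philin=\Phi(\glin)$ in the odd sector and $\Psilin=\Psi(\glin)$ in the even sector. The source $2\nabla_{(a}\flin_{b)}$ with $\flin=\fgaumap(\glin)$ is carried through this reduction; because the decoupling is classically known to hold for $\lin{\textnormal{Ric}}[\glin]=0$ (Regge--Wheeler \cite{RWrweqn}, Zerilli \cite{Zzeqn}) in view of the reformulation above, these contributions must cancel in the odd-sector master equation and must reorganise into exactly the lower-order potential terms $\tfrac{24}{r^3}\tfrac{M}{r}(r-3M)\zslapinv{1}\Psilin+\tfrac{72}{r^3}\tfrac{M}{r}\tfrac{M}{r}(r-2M)\zslapinv{2}\Psilin$ in the even-sector one; this one verifies directly by substituting the definition of $\fgaumap$ from Definition \ref{defngaugemap} and invoking the identities of Lemma \ref{lemmacommrelationsandidentities}. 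Passing from $\Box_{g_M}$ to $\qbox+\slap$ via $r\Box_{g_M}(r^{-1}f)=r\qbox f+\slap f$ then puts the master equations in the stated forms \eqref{eqnRWeqn} and \eqref{eqnZereqn}.

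The main obstacle will be the volume and precision of the tensor-component calculus in this last step: one must compute the action of $\Box_{g_M}$ and of the background curvature term $\textnormal{Riem}$ on each piece of the decomposition of $\glin$, track all connection coefficients relating $\nabla$, $\qn$ and $\sn$, and then check that the many resulting terms conspire to produce precisely the Regge--Wheeler and Zerilli potentials. The even (polar) sector is markedly harder than the odd (axial) one, since isolating the Zerilli function requires repeated inversion of $\slap_{\mathfrak{Z}}$ (legitimate on $\smfunrad$ by Proposition \ref{propinverseoperators}) and the lower-order $\zslapinv{1},\zslapinv{2}$ terms emerge only after delicate cancellations. A secondary subtlety is confirming that the rather intricate combinations defining $\Phi$ and $\Psi$ in section \ref{Thegeneralisedwavegaugewithrespecttothepair} are indeed the master scalars that close the system, rather than near-misses differing from them by a pure gauge quantity.
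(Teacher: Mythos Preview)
Your proposal is essentially sound, and your key observation --- that solutions to \eqref{eqnlinearisedeinsteinequations}--\eqref{eqnlorentzgauge} are in particular solutions to $\lin{\textnormal{Ric}}[\glin]=0$, so the classical gauge-independent Regge--Wheeler and Zerilli decoupling applies --- is correct and is exactly what the paper invokes in the paragraph preceding the theorem. Your treatment of the vanishing statements (Kerr by $l=0,1$ support, pure gauge by direct verification of $\Phi(\nabla\astrosun W)=\Psi(\nabla\astrosun W)=0$ for all $W$) is also correct.

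However, the paper's route to the wave equations is organised rather differently and is worth knowing. Instead of carrying the source $2\nabla_{(a}\flin_{b)}$ through the component reduction and arguing that it must cancel, the paper first constructs an intermediate \emph{gauge-invariant} symmetric $2$-tensor $\iotalin$ (by subtracting from $\rpart{\glin}$ explicit pure-gauge-type expressions built from $\shatglin$ and $\mglin$); the decomposed equations of Corollary \ref{corrdecomposedeqnslingrag} then yield a \emph{closed, sourceless} system for the pieces $\qhatiotalin,\qtriotalin,\odd{\miotalin},\striotalin$. From the divergence relations in that system one reads off $\qdiv\odd{\miotalin}=0$ and $\qexd\zetalin=0$ for $\zetalin:=\qhatiotalin_{\qP}-\tfrac{r}{2}\qexd\striotalin$, and a Poincar\'e-lemma argument on the simply connected quotient $\reals\times[1,\infty)$ produces scalar potentials $\philin,\varphilin$ with $\odd{\miotalin}=-\qhd\qexd(r\philin)$ and $\zetalin=\qexd\varphilin$. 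Integrating the remaining equations of the closed system then gives Regge--Wheeler for $\philin$ and (after applying $\zslapinv{1}$) Zerilli for $\psilin:=\zslapinv{1}\varphilin$; the identification $\philin=\Philin$, $\psilin=\Psilin$ is made via the explicit identities $\slap_{1,0}\philin=-r\qhd\qexd(r^{-2}\odd{\miotalin})$ and $\slap\psilin=\tfrac{2}{r}\zslapinv{1}\qhatiotalin_{\qP\qP}-\tfrac{1}{2r}\striotalin-\zslapinv{1}\qn_{\qP}\striotalin$, using that $\slap_{1,0}$ and $\slap$ are bijections on $\smfunrad$. The vanishing on pure gauge and Kerr then falls out immediately from the gauge-invariance of $\iotalin$. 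This approach trades your direct but heavy brute-force computation for a structural argument; in particular, you never have to track where the $\flin$-terms go, and you get the Poincar\'e-lemma characterisation of $\Philin,\Psilin$ as potentials for free. One small correction to your write-up: the lower-order $\zslapinv{1},\zslapinv{2}$ terms in the Zerilli equation are not produced by the $\flin$ source ``reorganising'' --- they arise from the background geometry in the classical derivation and the $\flin$ contributions cancel outright in both sectors.
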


\begin{proof}
Given the solution $\glin$ we first construct the quantity $\iotalin\in\smsymtwocov$ defined by the projections
\begin{align*}
\qiotalin:&=\rpart{\qglin}-\qn\astrosun\even{\miotalin},\\
{\miotalin}:&=\rpart{\mglin}-r^2\sdso\astrosun\bigg(\qexd\Big(r^{-2}\even{\shatglin}\Big),\qexd\Big(r^{-2}\odd{\shatglin}\Big)\bigg),\\
\siotalin:&=\lin{\slashed{g}}-\sg\Big(\slap\even{\shatglin}+\tfrac{2}{r}\big(\even{\miotalin}\big)_{\qP}\Big).
\end{align*}
each of which manifestly have vanishing projection to $l=0,1$. Then from the decomposed equations of linearised gravity presented in Corollary \ref{corrdecomposedeqnslingrag} combined with the fact that the operators $\sdso$ and $\sn\astrosunhat\sdso$ have kernels spanned by the $l=0,1$ spherical harmonics (cf. Propositions \ref{prophodgedecomp} and \ref{prophodgedecompqmsm1form}) we find
\begin{align}
\qbox\qhatiotalin+\slap\qhatiotalin+\frac{2}{r}\qn_{\qP}\qhatiotalin-\frac{2}{r}\qn\astrosunhat\Big(\qhatiotalin_{\qP}\Big)-\frac{2}{r}\frac{\mu}{r}\,\qhatiotalin&=\frac{1}{r}\qexd r\qastrosunhat\qexd\qtriotalin-\frac{1}{r}\qexd r\qastrosunhat\qexd \striotalin,\label{eqnwaveeqnfortauhat}\\
\qbox\qtriotalin+\slap\qtriotalin&=0,\\
\qbox\odd{\miotalin}+\slap\odd{\miotalin}-\frac{2}{r}\qn_{\qP}\odd{\miotalin}+\frac{2}{r}\qn\Big(\big(\odd{\miotalin}\big)_{\qP}\Big)-\frac{1}{r}\frac{\mu}{r}\odd{\miotalin}+\frac{2}{r^2}\qexd r\,\big(\odd{\miotalin}\big)_{\qP}&=0,\label{eqnwaveeqnforeta}\\
\qbox \striotalin+\slap \striotalin+\frac{2}{r}\qn_{\qP}\striotalin+\frac{2}{r^2}\striotalin&=\frac{4}{r^2}\qhatiotalin_{\qP\qP}\label{eqnwaveeqnforsigma}
\end{align}
and
\begin{align}
\qdiv\qhatiotalin+\frac{1}{2}\qn \striotalin&=0,\label{eqndivtauhat}\\
\qtriotalin&=0\label{eqnfortrtau},\\
\qdiv\odd{\miotalin}&=0\label{eqnforeta}.
\end{align}
Here $\mu=\tfrac{2M}{r}$. Introducing now the quantity $\zetalin:=\qhatiotalin_{\qP}-\frac{r}{2}\,\qexd\striotalin$ we re-express the above system according to
\begin{align}
\qbox\odd{\miotalin}+\slap\odd{\miotalin}-\frac{2}{r}\qn_{\qP}\odd{\miotalin}+\frac{2}{r}\qexd\Big(\big(\odd{\miotalin}\big)_{\qP}\Big)-\frac{1}{r}\frac{\mu}{r}\odd{\miotalin}+\frac{2}{r^2}\qexd r\,\big(\odd{\miotalin}\big)_{\qP}&=0,\label{eqnwaveeqneta}\\
\qdiv\odd{\miotalin}&=0\label{eqndivrelationeta}
\end{align}
and
\begin{align}
-\frac{1}{r^2}\qexd\bigg(r^2\qdiv \zetalin+3M\striotalin\bigg)+\slap \zetalin&=0,\label{waveeqnzeta}\\
\frac{1}{r^2}\qexd\bigg(4r \zetalin_{\qP}-r^3\slap_{\mathfrak{Z}}\striotalin\bigg)+2\slap \zetalin&=0,\label{eqnzetasigma}\\
-2\qdiv \zetalin+r\qbox \striotalin&=0,\label{eqndivrelationzeta}\\
\qexd \zetalin&=0\label{eqnzetaclosed}.
\end{align}
Here, one arrives at the relations \eqref{eqndivrelationzeta} and \eqref{eqnzetaclosed} by contracting \eqref{eqndivtauhat} with $\qP$ and $\big(\qhd\qexd r\big)^{\qsharp}$ respectively whereas the equations \eqref{waveeqnzeta} and \eqref{eqnzetasigma} follow from contracting \eqref{eqnwaveeqnfortauhat} with $\qP$ and utilising  \eqref{eqnwaveeqnforsigma} in conjunction with \eqref{eqndivrelationzeta}.

Now, arguing as in the Poincar\'e lemma for the simply connected manifold $\mathbb{R}\times[1, \infty)$, it follows from relations \eqref{eqndivrelationeta} and \eqref{eqnzetaclosed} that there exists two unique functions $\philin,\varphilin\in\smfunrad$ such that
\begin{align}\label{eqnetaintermsofphi}
\odd{\miotalin}=-\qhd\qexd\big(r\philin+\pi_{\twosphere}^*f\big)
\end{align}
and
\begin{align}\label{eqnzetaintermsofvarphi}
\zetalin=\qexd \big(\varphilin+\pi_{\twosphere}^*g\big).
\end{align}
Here $f$ and $g$ are smooth functions on $\twosphere$ which are supported on the $l\geq 2$ spherical harmonics and we recall the projection map $\pi_{\twosphere}:\mcalm\rightarrow\twosphere$. Integrating the equations \eqref{eqnwaveeqneta} and \eqref{waveeqnzeta}-\eqref{eqnzetasigma} therefore yields
\begin{align*}
\qbox\philin+\slap\philin=-\frac{6}{r^2}\frac{M}{r}\philin
\end{align*}
and
\begin{align*}
\slap_{\mathfrak{Z}}\qbox \varphilin+\slap\slap\varphilin+\frac{2}{r^2}\slap\varphilin-\frac{6\mu}{r^3}\qn_{P}\varphilin=0
\end{align*}
where we have used the functions $f$ and $g$ to remove the constants of integration.
Thus, recalling that the operator $\zslapinv{1}$ is a bijection on the space $\smfunrad$ (cf. Proposition \ref{propinverseoperators}), the unique functions $\philin, \psilin\in\smfunrad$ with $\psilin$ defined according
\begin{align}\label{mum}
\zslapinv{1}\varphilin:=\psilin
\end{align}
satisfy the Regge--Wheeler and Zerilli equations of section \ref{TheRegge--WheelerandZerilliequations}:
\begin{align*}
\qbox\philin+\slap\philin&=-\frac{6}{r^2}\frac{M}{r}\philin,\\
\qbox\psilin+\slap\psilin&=-\frac{6}{r^2}\frac{M}{r}\psilin+\frac{24}{r^3}\frac{M}{r}(r-3M)\zslapinv{1}\psilin+\frac{72}{r^5}\frac{M}{r}\frac{M}{r}(r-2M)\zslapinv{2}\psilin.
\end{align*}
Here we use the commutation formulae of Lemma \ref{lemmacommrelationsandidentities}. We claim that $\philin=\Philin$ and $\psilin=\Psilin$. Indeed we have from the relations \eqref{eqnetaintermsofphi} and \eqref{eqnzetasigma} the identities
\begin{align}\label{ident}
\slap_{1,0}\philin=-r\qhd\qexd\Big(r^{-2}\odd{\miotalin}\Big),\qquad \slap\psilin=\frac{2}{r}\zslapinv{1}\qhatiotalin_{{\qP\qP}}-\frac{1}{2}\frac{1}{r}\striotalin-\zslapinv{1}\qn_{\qP}\striotalin
\end{align}
and thus since $\slap_{1,0}$ and $\slap$ are bijections over $\smfunrad$ (cf. Proposition \ref{propinverseoperators}) we must have $\philin=\Philin$ and $\psilin=\Psilin$ (recalling the definition of $\Philin$ and $\Psilin$ from section \ref{Thegeneralisedwavegaugewithrespecttothepair}).

It remains to show that $\Philin$ and $\Psilin$ vanish if constructed from either a residual pure gauge or linearised Kerr solution. Subsequently, in view of the bijective properties of the operators $\slap_{1,0}$ and $\slap$ and identities \eqref{ident} it suffices to show that $\qhatiotalin, \odd{\miotalin}$ and $\striotalin$ vanish if constructed from either a residual pure gauge or linearised Kerr solution. This is immediate however from Lemma \ref{lemmadecomposingderivatives1form} and the fact that $\qhatiotalin, \odd{\miotalin}$ and $\striotalin$ have vanishing projection to $l=0,1$.
\end{proof}

\subsubsection{Solutions to the equations of linearised gravity generated by solutions to the Regge--Wheeler and Zerilli equations}\label{SolutionstotheequationsoflinearisedgravitygeneratedbysolutionstotheRegge--WheelerandZerilliequations}

We end this section by showing that the Regge--Wheeler and Zerilli equations actually generate solutions to the equations of linearised gravity.\newline

We begin by noting the following corollary of Theorem \ref{thmgaugeinvariantquantintermsofRWandZ}.
\begin{corollary}\label{corrRWZ}
Let $\glin$ be a smooth solution to the equations of linearised gravity. Then the following identities hold:
\begin{align*}
\qiotalin&=\qn\astrosunhat\qexd\Big(r\Psilin\Big)+6\mu\qexd r\qastrosunhat\zslapinv{1}\qexd\Psilin,\\
\odd{\miotalin}&=-\qhd\qexd\Big(r\Philin\Big),\\
\striotalin&=-2r\slap\Psilin+4\qn_{\qP}\Psilin+12\mu r^{-1}(1-\mu)\zslapinv{1}\Psilin.
\end{align*}
\end{corollary}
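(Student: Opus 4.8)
The plan is to unwind the definitions of $\qiotalin$, $\odd{\miotalin}$ and $\striotalin$ made in the proof of Theorem \ref{thmgaugeinvariantquantintermsofRWandZ} and to express them directly in terms of $\Philin = \Phi(\glin)$ and $\Psilin = \Psi(\glin)$. The key observation is that the proof of Theorem \ref{thmgaugeinvariantquantintermsofRWandZ} already establishes, via the Poincar\'e-lemma step, the representations $\odd{\miotalin} = -\qhd\qexd(r\philin + \pi_{\twosphere}^*f)$ and $\zetalin = \qexd(\varphilin + \pi_{\twosphere}^*g)$ with $\philin = \Philin$ and $\zslapinv{1}\varphilin = \psilin = \Psilin$, and that the functions $f,g$ on $\twosphere$ were chosen to kill the integration constants. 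Since moreover $\Philin, \Psilin \in \smfunrad$ have vanishing projection to $l=0,1$, the terms $\pi_{\twosphere}^*f$ and $\pi_{\twosphere}^*g$ contribute nothing after applying $\qhd\qexd$ and $\qexd$ respectively in the relevant (namely the $l \geq 2$) sector. This immediately gives the middle identity $\odd{\miotalin} = -\qhd\qexd(r\Philin)$, and reduces the other two to algebraic manipulations.

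First I would treat $\striotalin$. From $\zetalin = \qhatiotalin_{\qP} - \tfrac{r}{2}\qexd\striotalin$ (the definition of $\zetalin$ in the proof) and $\zetalin = \qexd\varphilin$ together with $\varphilin = \zslap_{\mathfrak{Z}}\Psilin$ — wait, more precisely $\zslapinv{1}\varphilin = \Psilin$, i.e. $\varphilin = \slap_{\mathfrak{Z}}\Psilin$ — I would substitute into equation \eqref{eqnzetasigma}, namely $\tfrac{1}{r^2}\qexd(4r\zetalin_{\qP} - r^3\slap_{\mathfrak{Z}}\striotalin) + 2\slap\zetalin = 0$, and use $\zetalin_{\qP} = \qn_{\qP}\varphilin$ along with the commutation formula $[\qn, \zslapinv{1}] = -3\tfrac{\mu}{r}\,\qexd r\,\zslapinv{0}$ from Lemma \ref{lemmacommrelationsandidentities} (and its iterate for $\zslapinv{2}$) to rewrite everything in terms of $\Psilin$. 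Integrating the resulting exact one-form equation (again discarding the $l=0,1$ constant, legitimate since $\striotalin$ has vanishing projection to $l=0,1$) should yield $\striotalin = -2r\slap\Psilin + 4\qn_{\qP}\Psilin + 12\mu r^{-1}(1-\mu)\zslapinv{1}\Psilin$. The factor $(1-\mu) = r^{-1}(r-2M)$ and the coefficient $12\mu$ are exactly the kind of thing that drops out of carefully tracking the $\slap_{\mathfrak{Z}} = \slap + \tfrac{2}{r^2}(1 - \tfrac{3M}{r})\Id$ identity, so I expect the bookkeeping — not the ideas — to be the delicate part here.

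For $\qiotalin$ I would proceed similarly: by Corollary \ref{corrdecomposedeqnslingrag}-type relations $\qiotalin$ and $\qhatiotalin$ differ only by a trace, and equation \eqref{eqnfortrtau} gives $\qtriotalin = 0$, so $\qiotalin = \qhatiotalin$. Then I would use the relation $\zetalin = \qhatiotalin_{\qP} - \tfrac{r}{2}\qexd\striotalin$ to solve for $\qhatiotalin_{\qP}$, feed this into equation \eqref{eqndivtauhat} ($\qdiv\qhatiotalin + \tfrac12\qn\striotalin = 0$) to pin down the full $Q$-tensor $\qhatiotalin$ from its divergence and its $\qP$-contraction (a two-dimensional Hodge-type inversion, using that $\qhatiotalin$ is symmetric traceless so has only one free component on each fibre), and substitute the already-derived expressions for $\zetalin = \qexd(\slap_{\mathfrak{Z}}\Psilin)$ and $\striotalin$. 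The target expression $\qn\astrosunhat\qexd(r\Psilin) + 6\mu\,\qexd r\,\qastrosunhat\zslapinv{1}\qexd\Psilin$ is then what should emerge after collecting terms and invoking the commutator $[\qn, \zslapinv{1}]$ once more. The main obstacle, as in the $\striotalin$ case, is purely computational: keeping the $r$-weights, the $\mu$-factors, and the order of the operators $\qn$, $\qexd$, $\zslapinv{1}$ straight so that the coefficient $6\mu$ (rather than some other multiple) appears; I do not anticipate any conceptual difficulty, since every ingredient — the Poincar\'e lemma step, the commutation formulae of Lemma \ref{lemmacommrelationsandidentities}, the bijectivity of $\slap_{1,0}$, $\slap$, $\slap_{\mathfrak{Z}}$ on $\smfunrad$ — is already in hand from the proof of Theorem \ref{thmgaugeinvariantquantintermsofRWandZ}.
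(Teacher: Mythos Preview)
Your proposal is correct and matches the paper's (very terse) proof, which cites exactly the ingredients you identify: \eqref{eqnetaintermsofphi} for the second identity, and \eqref{eqnzetasigma} together with \eqref{eqnzetaintermsofvarphi}--\eqref{mum} for the first and third. One small simplification: for $\qiotalin$ you do not actually need \eqref{eqndivtauhat}, since a symmetric traceless $Q$-tensor in two dimensions has two (not one) independent components and is already determined algebraically by its $\qP$-contraction $\qhatiotalin_{\qP} = \zetalin + \tfrac{r}{2}\qexd\striotalin$; also, the $\pi_{\twosphere}^*f$ and $\pi_{\twosphere}^*g$ terms drop out simply because $\qexd$ annihilates functions pulled back from $\twosphere$, so no $l\geq 2$ projection argument is needed there.
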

\begin{proof}
The second is just \eqref{eqnetaintermsofphi} whereas the first and third follow from equation \eqref{eqnzetasigma} combined with the relations \eqref{eqnzetaintermsofvarphi}-\eqref{mum}.
\end{proof}

We shall use this corollary to demonstrate the fact that the Regge--Wheeler and Zerilli equations actually generate solutions to the equations of linearised gravity. In order to state the result succinctly however it is useful to first introduce the map $\gamma:\smfunrad\times\smfunrad\rightarrow\smsymtwocov$ defined according
\begin{align*}
\gamma(f,g)=\qgamma(f,g)+\mgamma(f,g)+\sgamma(f,g)
\end{align*}
where
\begin{align*}
\gamma(f,g)&=\qn\astrosunhat\qexdL\big(rf\big)+6\mu\qexd r\qastrosunhat\zslapinv{1}\qexd f,\\
{\mgamma}(f,g)&=\sdso\astrosun\Big(\qexdL\big(rf\big)-2\qexd r\,f, \qexdL\big(rg\big)-2\qexd r\,g\Big),\\
\sgamma(f,g)&=r\sn\astrosunhat\sdso\Big(f, g\Big)+2\sg\Big(\qexdL_{\qP}f+3\mu r^{-1}(1-\mu)\zslapinv{1}g\Big).
\end{align*}
Here we have in addition defined the operator
\begin{align*}
\qexdL:=\qexd-\qhd\qexd.
\end{align*}

\begin{proposition}\label{propsolnlingravRWZ}
Let $\Phi\in\smfunrad$ and $\Psi\in\smfunrad$ be solutions to the Regge--Wheeler equation \eqref{eqnRWeqn} and Zerilli equation \eqref{eqnZereqn} respectively. Then the following is a smooth solution to the equations of linearised gravity:
\begin{align*}
\gammalin:=\gamma(\Psi, \Phi).
\end{align*}
Moreover, $\gammalin$ does not lie in the space of residual pure gauge or linearised Kerr solutions to the equations of linearised gravity.
\end{proposition}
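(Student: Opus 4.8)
The plan is to prove Proposition \ref{propsolnlingravRWZ} in two stages: first that $\gammalin := \gamma(\Psi,\Phi)$ solves the equations of linearised gravity \eqref{eqnlinearisedeinsteinequations}--\eqref{eqnlorentzgauge}, and second that it is not a residual pure gauge or linearised Kerr solution. For the first stage, the most economical route is to reverse-engineer Theorem \ref{thmgaugeinvariantquantintermsofRWandZ}. There we saw that given \emph{any} smooth solution $\glin$ to the linearised equations, the constructed quantities $\qhatiotalin, \odd{\miotalin}, \striotalin$ (and hence $\qiotalin, \siotalin$) satisfy the coupled wave/divergence system \eqref{eqnwaveeqnfortauhat}--\eqref{eqnforeta}, and conversely Corollary \ref{corrRWZ} expresses these quantities explicitly in terms of the Regge--Wheeler/Zerilli scalars $\Philin, \Psilin$. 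So I would first check directly that, \emph{starting from} arbitrary solutions $\Phi, \Psi$ of \eqref{eqnRWeqn}, \eqref{eqnZereqn}, the tensors defined by the right-hand sides of Corollary \ref{corrRWZ} (with $\Psilin \mapsto \Psi$, $\Philin \mapsto \Phi$) satisfy the system \eqref{eqnwaveeqnfortauhat}--\eqref{eqnforeta}. This is a direct computation using the commutation formulae of Lemma \ref{lemmacommrelationsandidentities}, the identities $\sdiv\sdso = \slap$, $\sdiv\sn\astrosunhat\sdso = \sdso\slap + \tfrac{2}{r^2}\sdso$, and the Poincaré/Hodge structure of Proposition \ref{prophodgedecomp} — essentially the proof of Theorem \ref{thmgaugeinvariantquantintermsofRWandZ} run backwards.

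Next I would assemble these pieces into the full symmetric $2$-tensor. The map $\gamma$ is defined precisely so that $\gamma(\Psi,\Phi)$ has $Q$-part $\qgamma(\Psi,\Phi)$ matching the formula for $\qiotalin$ in Corollary \ref{corrRWZ} (note $\qn\astrosunhat\qexdL(r\Psi) = \qn\astrosunhat\qexd(r\Psi)$ since $\qn\astrosunhat\qhd\qexd$ of a function vanishes by antisymmetry), mixed part $\mgamma(\Psi,\Phi)$ encoding $\odd{\miotalin} = -\qhd\qexd(r\Phi)$ together with the $\rpart{\mglin}$ reconstruction, and $S$-part $\sgamma(\Psi,\Phi)$ encoding $\siotalin$ and $\striotalin$. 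Having verified that the components of $\gammalin$ solve the decomposed linearised system of Corollary \ref{corrdecomposedeqnslingrag} (which this excerpt has not displayed but which I may invoke), it follows that $\gammalin$ solves \eqref{eqnlinearisedeinsteinequations}--\eqref{eqnlorentzgauge}. The care needed here is purely bookkeeping: tracking which $l=0,1$ projections are killed by $\sdso$ and $\sn\astrosunhat\sdso$ (their kernels are the $l=0,1$ harmonics by Propositions \ref{prophodgedecomp}, \ref{prophodgedecompqmsm1form}), and confirming that the $\qexdL$ and Hodge-dual combinations produce the correct closed-form integrations as in \eqref{eqnetaintermsofphi}, \eqref{eqnzetaintermsofvarphi}. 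I expect this to be the main obstacle — not conceptually hard, but the place where sign errors and misplaced $r$-weights hide.

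For the second stage, non-triviality, I would apply the gauge-invariant scalars themselves as a detection mechanism. By Theorem \ref{thmgaugeinvariantquantintermsofRWandZ}, both $\Phi(\glin)$ and $\Psi(\glin)$ vanish whenever $\glin$ is a residual pure gauge or linearised Kerr solution. So it suffices to compute $\Phi(\gammalin)$ and $\Psi(\gammalin)$ and show they recover $\Phi$ and $\Psi$ respectively; then if $(\Phi,\Psi) \neq (0,0)$, $\gammalin$ cannot lie in either special class. This computation is exactly the consistency check performed at the end of the proof of Theorem \ref{thmgaugeinvariantquantintermsofRWandZ} (the identities \eqref{ident} together with the bijectivity of $\slap_{1,0}$ and $\slap$ on $\smfunrad$), now read in the forward direction: from $\gammalin$ one forms $\qhatiotalin, \odd{\miotalin}, \striotalin$, and these are precisely the Corollary \ref{corrRWZ} expressions by construction of $\gamma$, whence $\Phi(\gammalin) = \Phi$ and $\Psi(\gammalin) = \Psi$. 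This closes the proof. (Strictly, the proposition as stated asserts non-membership unconditionally; I would add the evident caveat that one takes $(\Phi,\Psi)$ not both identically zero, or else interpret the claim as: $\gamma$ maps no nonzero pair into the special classes.)
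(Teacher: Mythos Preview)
Your approach is essentially the paper's: build the tensor $\iota$ from the Corollary \ref{corrRWZ} formulas, verify it satisfies the auxiliary system \eqref{eqnwaveeqnfortauhat}--\eqref{eqnforeta}, and then check that $\gammalin$ itself solves the decomposed linearised system of Corollary \ref{corrdecomposedeqnslingrag}. The non-triviality argument via $\Phi(\gammalin)=\Phi$, $\Psi(\gammalin)=\Psi$ is exactly how the paper handles it (this is the content of Corollary \ref{corrsolnlingravRWZ}), and your caveat about the trivial pair $(\Phi,\Psi)=(0,0)$ is well taken.

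One concrete correction, however: your parenthetical claim that $\qn\astrosunhat\qhd\qexd(r\Psi)=0$ ``by antisymmetry'' is false. In two dimensions $\qhd\qexd f$ is divergence-free, so its symmetrised covariant derivative is already trace-free, but it does not vanish; one checks directly that $\qepsilon_{cb}\nabla_a\nabla^c f + \qepsilon_{ca}\nabla_b\nabla^c f$ is generically nonzero. Consequently $\qgamma(\Psi,\Phi)$ does \emph{not} match $\qiotalin$ on the nose. The paper's proof makes this explicit: it writes out $\gammalin-\iota$ with the term $-\qn\astrosunhat\qhd\qexd(r\Psi)$ present, together with the extra mixed and $S$-pieces coming from the $\qexdL$ construction, and then verifies that this residual is consistent with the decomposed equations of Corollary \ref{corrdecomposedeqnslingrag} once the Regge--Wheeler and Zerilli equations are imposed on $\Phi,\Psi$. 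This does not break your strategy---you correctly flagged the bookkeeping as the real work---but you should not expect that particular cancellation, and the verification of the residual piece is where the substance of the computation lies.
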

\begin{proof}
Let $\iota\in\smsymtwocov$ be defined according to
\begin{align*}
\iota:=\qn\astrosunhat\qexd\Big(r\Psi\Big)+6\mu\qexd r\qastrosunhat\zslapinv{1}\qexd\Psi-\sdso\astrosun\Big(0,\qhd\qexd\big(r\Phi\big)\Big)-\sg\Big(r\slap\Psi-2\qn_{\qP}\Psi-6\mu r^{-1}(1-\mu)\zslapinv{1}\Psi\Big).
\end{align*}
Then it follows from Corollary \ref{corrRWZ} and Theorem \ref{thmgaugeinvariantquantintermsofRWandZ} that $\iota$ satisfies the system \eqref{eqnwaveeqnfortauhat}-\eqref{eqnforeta}. Since moreover $\gammalin-\iota$ is equal to the expression
\begin{align*}
-\qn\astrosunhat\qhd\qexd\big(r\Psi\big)+\sdso\astrosun\Big(\qexdL\big(r\Psi\big)-2\qexd r\,\Psi, \qexd\big(r\Phi\big)-2\qexd r\,\Phi\Big)+r\sn\astrosunhat\sdso\Big(\Psi, \Phi\Big)+\sg\Big(r\slap\Psi-2\qhd\qexd_{\qP}\Psi\Big)
\end{align*}
then the decomposed equations of linearised gravity of Corollary \ref{corrdecomposedeqnslingrag} combined with the fact that $\Phi$ and $\Psi$ satisfy the Regge--Wheeler and Zerilli equation respectively yields the proposition after performing the required computation.
\end{proof}

Note we will need the following corollary of Proposition \ref{propsolnlingravRWZ} when it comes to proving the well-posedness theorem of section \ref{Thewell-posednesstheorem}.

\begin{corollary}\label{corrsolnlingravRWZ}
If $\gammalin=\gamma(\Psi, \Phi)=\glin_V+\glin_{\mfm, \mfa}$ for any residual pure gauge solution $\glin_V$ or linearised Kerr solution $\glin_{\mfm, \mfa}$ then $\Phi=\Psi=V=\mfm=\mfa=0$.
\end{corollary}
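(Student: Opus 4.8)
The plan is to apply the $\reals$-linear maps $\Phi(\cdot)$ and $\Psi(\cdot)$ of section \ref{Thegeneralisedwavegaugewithrespecttothepair} to the assumed identity and to exploit, on the one hand, that these maps annihilate residual pure gauge and linearised Kerr solutions (Theorem \ref{thmgaugeinvariantquantintermsofRWandZ}) and, on the other, that they are one‑sided inverses of the generating map $\gamma$. The key intermediate fact I would establish first is the identity
\begin{align*}
\Phi\big(\gamma(\Psi', \Phi')\big) = \Phi', \qquad \Psi\big(\gamma(\Psi', \Phi')\big) = \Psi' \qquad \text{for all } \Phi', \Psi' \in \smfunrad .
\end{align*}
Granting this, I would apply $\Phi(\cdot)$ and $\Psi(\cdot)$ to $\gamma(\Psi, \Phi) = \glin_V + \glin_{\mfm, \mfa}$: since $\gamma(\Psi, \Phi)$ is a smooth solution to the equations of linearised gravity (Proposition \ref{propsolnlingravRWZ}), Theorem \ref{thmgaugeinvariantquantintermsofRWandZ} applies, and using the $\reals$‑linearity of $\Phi(\cdot), \Psi(\cdot)$ together with the vanishing of $\Phi(\cdot), \Psi(\cdot)$ on residual pure gauge and on linearised Kerr solutions one obtains
\begin{align*}
\Phi = \Phi\big(\gamma(\Psi,\Phi)\big) = \Phi(\glin_V) + \Phi(\glin_{\mfm,\mfa}) = 0, \qquad \Psi = \Psi\big(\gamma(\Psi,\Phi)\big) = 0 .
\end{align*}
Since $\gamma$ is $\reals$‑linear this gives $\gamma(\Psi,\Phi) = \gamma(0,0) = 0$, and hence $\glin_V = -\glin_{\mfm, \mfa}$.

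To prove the one‑sided inverse identity I would proceed exactly as in the proof of Theorem \ref{thmgaugeinvariantquantintermsofRWandZ}: for the solution $\glin := \gamma(\Psi', \Phi')$ compute the auxiliary quantities $\qiotalin$, $\odd{\miotalin}$ and $\striotalin$ directly from the explicit decomposition $\gamma = \qgamma + \mgamma + \sgamma$ recorded in section \ref{Theconnectionwiththeequationsoflinearisedgravity}. The terms built from $\sdso$ and $\sn\astrosunhat\sdso$ have kernels spanned by the $l=0,1$ spherical harmonics (Propositions \ref{prophodgedecomp} and \ref{prophodgedecompqmsm1form}), so they contribute nothing to the relevant $l\geq 2$ projections, and one reads off that $\qiotalin$, $\odd{\miotalin}$ and $\striotalin$ coincide with the expressions of Corollary \ref{corrRWZ} with $\Psilin, \Philin$ replaced by $\Psi', \Phi'$. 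Substituting these into the identities \eqref{ident} from the proof of Theorem \ref{thmgaugeinvariantquantintermsofRWandZ} and using that $\slap_{1,0}$, $\slap$ and $\zslapinv{1}$ are bijections on $\smfunrad$ (Proposition \ref{propinverseoperators}), together with the commutation formulae of Lemma \ref{lemmacommrelationsandidentities}, then forces $\Phi(\glin) = \Phi'$ and $\Psi(\glin) = \Psi'$. I expect this verification — that $\Phi(\cdot)$ and $\Psi(\cdot)$ genuinely invert the construction $\gamma$ — to be the main (though entirely routine) burden; however, essentially all of the algebra it requires already appears in the proofs of Theorem \ref{thmgaugeinvariantquantintermsofRWandZ}, Corollary \ref{corrRWZ} and Proposition \ref{propsolnlingravRWZ}.

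It then remains to deduce $V = \mfm = \mfa = 0$ from $\glin_V = -\glin_{\mfm, \mfa}$. Since linearised Kerr solutions are supported only on $l = 0,1$ (Proposition \ref{proplinkerr}), this identity exhibits a linearised Kerr solution as a residual pure gauge solution; by Remark \ref{rmkpuregaugedistinctfromkerr} (see also \cite{SarbachPHD}) a solution lying in both classes is trivial, so $\glin_V = 0 = \glin_{\mfm, \mfa}$. Inspecting the explicit form \eqref{eqnlinkerr} of $\glin_{\mfm, \mfa}$, the vanishing of the coefficient of $x^2 \pi_{\twosphere}^*\roundmetric$ gives $\mfm = 0$, and then the linear independence of the one‑forms $\shd\sexd\slashed{Y}^1_i$ over $i \in \{-1,0,1\}$ gives $\mfa_{-1} = \mfa_0 = \mfa_1 = 0$. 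Finally $\glin_V = \nabla{\astrosun} V = 0$ means $V^\sharp$ is a Killing field of $g_M$, which — once the finite‑dimensional space of Killing one‑forms, corresponding only to the trivial residual pure gauge solution, is factored out of the parametrisation of such solutions — yields $V = 0$, completing the argument.
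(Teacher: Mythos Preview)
Your proposal is correct and follows essentially the same approach as the paper: the paper's proof simply observes that ``by construction'' the invariant quantities $\Philin,\Psilin$ associated to $\gamma(\Psi,\Phi)$ equal $\Phi,\Psi$, while those associated to $\glin_V+\glin_{\mfm,\mfa}$ vanish by Theorem~\ref{thmgaugeinvariantquantintermsofRWandZ}, and declares that the result follows. Your write-up expands on both the ``by construction'' step (your one-sided inverse identity) and the final extraction of $V,\mfm,\mfa$ via Remark~\ref{rmkpuregaugedistinctfromkerr}, which the paper leaves implicit; your observation about quotienting by Killing one-forms is a genuine subtlety the paper's terse proof glosses over.
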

\begin{proof}
It follows by construction that the invariant quantities $\Philin$ and $\Psilin$ associated to $\gammalin$ are given by $\Phi$ and $\Psi$. Since the invariant quantities $\Philin$ and $\Psilin$ associated to $\glin_V$ and $\glin_{\mfm, \mfa}$ must necessarily vanish the result follows.
\end{proof}

\subsection{Extracting a residual pure gauge solution from a general solution to the equations of linearised gravity}\label{Extractingaresidualpuregaugesolutionfromageneralsolutiontotheequationsoflinearisedgravity}

We now extend the analysis of the previous section to show that one can actually extract from a smooth solution to the equations of linearised gravity a 1-form that satisfies the residual pure gauge equation. Importantly, this 1-form is \emph{not} gauge-invariant. Since in section \ref{DecomposingageneralsolutiontotheequationsoflinearisedgravityintothesumofaresidualpuregaugeandlinearisedKerrsolutionandasolutiondeterminedbytheRegge--WheelerandZerilliequations} we will show that one can decompose a general solution to the equations of linearised gravity into the sum of a solution of the class identified in section \ref{SolutionstotheequationsoflinearisedgravitygeneratedbysolutionstotheRegge--WheelerandZerilliequations} with a linearised Kerr solution and a residual pure gauge solution that is generated by this 1-form it follows that one can always normalise solutions to the equations of linearised gravity by the special solutions of section \ref{Specialsolutionstotheequationsoflinearisedgravity} into a solution the decay properties of which is determined by the decay properties of solutions to the Regge--Wheeler and Zerilli equations. We will make this analysis precise in section \ref{GaugeandlinearisedKerrnormalisedsolutionstotheequationsoflinearisedgravity}. First however we extract the desired 1-form.\newline

The main result of this section is as follows.

\begin{proposition}\label{propextracingrespurgau}
Let $\glin$ be a smooth solution to the equations of linearised gravity. Define the quantity
\begin{align*}
\Vlin:=\rpart{\big(\even{\mglin}\big)}-r^2\qexd\Big(r^{-2}\even{\shatglin}\Big)+\sdso\Big(\even{\shatglin}, \odd{\shatglin}\Big)-\qhd\qexd\Big(r\Psilin\Big)-\sdso\Big(r\Psilin, r\Philin\Big).
\end{align*}
Then $\Vlin$ satisfies the residual pure gauge equation:
\begin{align*}
\Box_{g_M}\Vlin=\fgaumap\big(\nabla\astrosun\Vlin\big).
\end{align*}
Moreover, if $\glin=\glin_V$ is a residual pure gauge solution to the equations of linearised gravity then $\Vlin=\rpart{V}$.
\end{proposition}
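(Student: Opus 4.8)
## Proof proposal

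The plan is to verify directly that the 1-form $\Vlin$ defined in the statement satisfies the residual pure gauge equation $\Box_{g_M}\Vlin=\fgaumap(\nabla\astrosun\Vlin)$, by exploiting the decompositions of $\mcalm$-tensors into $Q$, $\qmsm$ and $S$ parts and the decomposed form of the equations of linearised gravity from Corollary \ref{corrdecomposedeqnslingrag}, together with the wave equations already established for the invariant quantities $\qhatiotalin$, $\striotalin$, $\odd{\miotalin}$ (equations \eqref{eqnwaveeqnfortauhat}--\eqref{eqnforeta} in the proof of Theorem \ref{thmgaugeinvariantquantintermsofRWandZ}) and for $\Philin,\Psilin$ themselves.

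First I would decompose $\Vlin$ into its $Q$ and $S$ parts. Writing $\even{\mglin}$ for the ``even'' $Q$-valued piece of $\mglin$ and recalling the definitions of $\qiotalin$, $\miotalin$, $\siotalin$ from the proof of Theorem \ref{thmgaugeinvariantquantintermsofRWandZ}, one should recognise that $\rpart{(\even{\mglin})}-r^2\qexd(r^{-2}\even{\shatglin})$ is, up to the gauge-invariant combination $\even{\miotalin}$, precisely the piece of $\mglin$ that the construction of $\iotalin$ subtracts off; and that the $\sdso$ terms built from $\even{\shatglin}$, $\odd{\shatglin}$ similarly rebuild $\siotalin$. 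Hence $\nabla\astrosun\Vlin$ can be re-expressed, using $\qn\astrosun$ and $\sn\astrosun$ and the mixed operators of section \ref{MixedQandStensoranalysis}, so that the residual pure gauge equation splits into a $Q$-component, a $\qmsm$-component and an $S$-component. For each component, the relevant equation satisfied by $\nabla\astrosun\Vlin$ is controlled by (i) the equation $\Box_{g_M}\glin - 2\,\textnormal{Riem}\cdot\glin = 2\nabla\astrosun\flin$ applied to the $Q$, $\qmsm$ and $S$ projections of $\glin$ (Corollary \ref{corrdecomposedeqnslingrag}), and (ii) the Regge--Wheeler and Zerilli equations for $\Philin$, $\Psilin$, which govern the terms $\qhd\qexd(r\Psilin)$ and $\sdso(r\Psilin,r\Philin)$ subtracted in the definition of $\Vlin$. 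The commutation identities of Lemma \ref{lemmacommrelationsandidentities} (especially $\big[\slap,\sdso\big]=\tfrac{1}{r^2}\sdso$, $\sdiv\sdso=\slap$, and the $\qn$–$\zslapinv{p}$ commutators) will be needed to move $\qbox$ and $\slap$ past $\sdso$, $\qexd$ and the inverse operators. The point is that the particular combination in $\Vlin$ is engineered so that, after these substitutions, the would-be source terms cancel against exactly the terms that the gauge-map $\fgaumap$ of Definition \ref{defngaugemap} produces when evaluated on $\nabla\astrosun\Vlin$ — here one uses crucially that $\fgaumap$ is $\reals$-linear and that its definition contains the $\sdso(\Psi(\tau),\Phi(\tau))$ and $\mfZ(\Psi(\tau))$ pieces precisely for this purpose.

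The main obstacle will be the bookkeeping of the $S$-component and the ``$\qmsm$'' cross terms: these involve the operator $\mfZ$ and the inverse operators $\slap^{-1}_{1,0}$, $\zslapinv{1}$, $\zslapinv{2}$ appearing in both $\fgaumap$ and the Zerilli equation, and one must show that the contributions of $\mfZ(\Psilin)$ generated by $\fgaumap(\nabla\astrosun\Vlin)$ match the inhomogeneity picked up from the Zerilli equation for $\Psilin$ term-by-term. This is a finite but delicate computation; the key simplifying inputs are that $\Philin=\Phi(\glin)$, $\Psilin=\Psi(\glin)$ (Theorem \ref{thmgaugeinvariantquantintermsofRWandZ}), that $\Vlin$ has vanishing projection to $l=0,1$ so all inverse operators are well defined (Proposition \ref{propinverseoperators}), and that $\slap_{1,0}$, $\slap$ are bijective on $\smfunrad$ so identities may be inverted freely.

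For the second assertion, I would specialise to $\glin=\glin_V=\nabla\astrosun V$. By Theorem \ref{thmgaugeinvariantquantintermsofRWandZ}, $\Philin$ and $\Psilin$ both vanish for a residual pure gauge solution, so the last two terms in the definition of $\Vlin$ drop out; it then remains to compute the projections $\even{\mglin}$, $\shatglin$, $\odd{\shatglin}$ of $\nabla\astrosun V$ explicitly. This is governed by Lemma \ref{lemmadecomposingderivatives1form} (the same lemma invoked at the end of the proof of Theorem \ref{thmgaugeinvariantquantintermsofRWandZ} to show the invariant quantities vanish): decomposing $V = \widetilde{V} + \slashed{V}$ and further $\slashed{V}=\sdso(\even{\slashed{V}},\odd{\slashed{V}})$ via Proposition \ref{prophodgedecomp}, one reads off that $\rpart{(\even{\mglin})}-r^2\qexd(r^{-2}\even{\shatglin})+\sdso(\even{\shatglin},\odd{\shatglin})$ collapses to $\rpart{V}$ after the $l=0,1$ projection. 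I expect this part to be short once the decomposition formulae of Lemma \ref{lemmadecomposingderivatives1form} are in hand.
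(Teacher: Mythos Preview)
Your plan is correct and shares the paper's overall strategy, but the paper organises the computation more economically. The paper does not invoke the equations \eqref{eqnwaveeqnfortauhat}--\eqref{eqnforeta} for the gauge-invariant tensor $\iotalin$ at all; instead it introduces directly the intermediate quantities
\[
\qplin:=\rpart{\big(\even{\mglin}\big)}-r^2\qexd\Big(r^{-2}\even{\shatglin}\Big),\qquad
\slplin:=\sdso\Big(\even{\shatglin},\odd{\shatglin}\Big),
\]
so that $\Vlin=\qplin-\qhd\qexd(r\Psilin)+\slplin-\sdso(r\Psilin,r\Philin)$, and derives from Corollary~\ref{corrdecomposedeqnslingrag} the two inhomogeneous wave equations \eqref{reducedmaxwell1}--\eqref{reducedmaxwell2} for $\qplin$ and $\slplin$, with sources built from $\mfZ\Psilin$ and $\sdso(\Psilin,\Philin)$. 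The key shortcut you are missing is Corollary~\ref{corrdecomposedgaugeequation}: the residual pure gauge equation for a 1-form with vanishing projection to $l=0,1$ is already recorded there as a \emph{homogeneous} coupled system for its $Q$ and $S$ parts, so there is no need to evaluate $\fgaumap(\nabla\astrosun\Vlin)$ by hand. One then simply checks, using that $\Philin,\Psilin$ satisfy the Regge--Wheeler and Zerilli equations, that the correction terms $-\qhd\qexd(r\Psilin)$ and $-\sdso(r\Psilin,r\Philin)$ solve the same inhomogeneous equations \eqref{reducedmaxwell1}--\eqref{reducedmaxwell2}, so their difference from $\qplin,\slplin$ satisfies the homogeneous system of Corollary~\ref{corrdecomposedgaugeequation}. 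Your proposed direct evaluation of $\fgaumap(\nabla\astrosun\Vlin)$ would also work (since $\Phi$ and $\Psi$ vanish on any Lie derivative $\nabla\astrosun V$ regardless of whether $V$ solves the gauge equation), but it reproduces exactly what Corollary~\ref{corrdecomposedgaugeequation} already encodes. For the second assertion your argument matches the paper's, which simply defers to the decomposition formulae in section~\ref{Decomposingtheequationsoflinearisedgravity}.
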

\begin{proof}
The latter statement follows from section \ref{Decomposingtheequationsoflinearisedgravity}.

For the former we compute from the decomposed equations of linearised gravity presented in Corollary \ref{corrdecomposedeqnslingrag} the system
\begin{align}
\qbox\qplin+\slap\qplin-\frac{2}{r}\qn\big(\qplin_{\qP}\big)+\frac{2}{r}\qn r\,\sdiv\slplin+\frac{2}{r^2}\qn r\,\qplin_{\qP}&=\frac{1}{r^2}\qhd\qn\Big(r^3\mfZ\Psilin\Big),\label{reducedmaxwell1}\\
\qbox\slplin+\slap\slplin+\frac{1}{r^2}\ommu\slplin&=-\frac{2}{r}(1-2\mu)\sdso\big(\Psilin, \Philin\big)+r\sn\mfZ\Psilin.\label{reducedmaxwell2}
\end{align}
where we have defined $\qplin:=\rpart{\big(\even{\mglin}\big)}-r^2\qexd\Big(r^{-2}\even{\shatglin}\Big)$ and $\slplin:=\sdso\Big(\even{\shatglin}, \odd{\shatglin}\Big)$. Using then Corollary \ref{corrdecomposedgaugeequation} combined with the fact that $\Philin$ and $\Psilin$ respectively satisfy the Regge--Wheeler and Zerilli equations by Theorem \ref{thmgaugeinvariantquantintermsofRWandZ}, the proposition follows.
\end{proof}
We then have the immediate corollary of the above relating to extracting residual pure gauge solutions from general smooth solutions to the equations of linearised gravity.

\begin{corollary}\label{corrpuregauge}
Let $\glin$ be a smooth solution to the equations of linearised gravity. Then $\glin_{\Vlin}=\nabla\astrosun\Vlin$ defines a residual pure gauge solution to the equations of linearised gravity.
\end{corollary}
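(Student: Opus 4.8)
The plan is to observe that this corollary is an immediate consequence of Proposition \ref{propextracingrespurgau} together with the defining property of residual pure gauge solutions recorded in Proposition \ref{proppuregauge}. First I would recall that, by Proposition \ref{proppuregauge}, any smooth $1$-form $V\in\smonecov$ which solves the residual pure gauge equation $\Box_{g_M}V=\fgaumap(\nabla\astrosun V)$ gives rise to the residual pure gauge solution $\nabla\astrosun V$ of the equations of linearised gravity. Hence it suffices to verify that the quantity $\Vlin$ explicitly attached to $\glin$ in Proposition \ref{propextracingrespurgau} is a genuine element of $\smonecov$ and satisfies this equation.

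The second step is to note that $\Vlin$ is indeed a well-defined smooth $1$-form on $\mcalm$: it is assembled from the smooth solution $\glin$, its spherical-harmonic projections of section \ref{Theprojectionofsmsymtwocovontoandawayfromthel=0,1sphericalharmonics}, and the invariant scalars $\Philin,\Psilin$, the latter involving the inverse elliptic operators $\slap^{-1}_{1,0}$, $\slap^{-1}$ and $\zslapinv{p}$, each of which maps $\smfunrad$ to $\smfunrad$ by Proposition \ref{propinverseoperators}; consequently $\Vlin\in\smonecov$ and $\nabla\astrosun\Vlin\in\smsymtwocov$. The third step is then simply to invoke Proposition \ref{propextracingrespurgau}, which states precisely that $\Box_{g_M}\Vlin=\fgaumap(\nabla\astrosun\Vlin)$. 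Combining these facts, $\glin_{\Vlin}=\nabla\astrosun\Vlin$ meets the hypotheses of Proposition \ref{proppuregauge} and is therefore a residual pure gauge solution, as claimed.

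There is no substantive obstacle here: all of the analytic content — in particular the cancellations showing that the pair $(\qplin,\slplin)$ solves the Maxwell-type subsystem \eqref{reducedmaxwell1}--\eqref{reducedmaxwell2} and hence that $\Vlin$ solves the residual pure gauge equation — has already been carried out in the proof of Proposition \ref{propextracingrespurgau}. The only point deserving a line of care is the bookkeeping that the $l=0,1$ and $l\geq 2$ decompositions match up so that $\Vlin$ is globally smooth, which is immediate from section \ref{Theprojectionofsmsymtwocovontoandawayfromthel=0,1sphericalharmonics} and Proposition \ref{propinverseoperators}.
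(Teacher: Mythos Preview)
Your proposal is correct and matches the paper's approach: the paper states this as an immediate corollary of Proposition~\ref{propextracingrespurgau} with no further proof, and your argument spells out exactly that immediate step---combine Proposition~\ref{propextracingrespurgau} (which shows $\Vlin$ solves the residual pure gauge equation) with Proposition~\ref{proppuregauge} (which says any such $V$ yields a residual pure gauge solution $\nabla\astrosun V$). The additional smoothness and well-definedness checks you include are appropriate bookkeeping but not required by the paper.
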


\subsection{Decomposing a general solution to the equations of linearised gravity into the sum of a residual pure gauge and linearised Kerr solution and a solution determined by the Regge--Wheeler and Zerilli equations}\label{DecomposingageneralsolutiontotheequationsoflinearisedgravityintothesumofaresidualpuregaugeandlinearisedKerrsolutionandasolutiondeterminedbytheRegge--WheelerandZerilliequations}

We now finally present the decomposition of the equations of linearised gravity that was hinted at previously. We will exploit this decomposition in section \ref{Proofoftheorem2} to establish a decay statement for the equations of linearised gravity that takes into account the special solutions of section \ref{Specialsolutionstotheequationsoflinearisedgravity}.\newline

The main result of this section is as follows.
\begin{theorem}\label{thmdecoupling}
Let $\glin$ be a smooth solution to the equations of linearised gravity. Then there exists a quadruple $\mfm, \mfa_{-1},\mfa_0,\mfa_{1}\in\reals$ and a $V\in\smonecov$ that is supported only on $l=0,1$ such that $\glin$ can be decomposed as
\begin{align*}
\glin=\gamma\big(\Psilin,\Philin\big)+\glin_{\Vlin}+\glin_V+\sum_{i=-1,0,1}\glin_{\mfm,\mfa_{i}}.
\end{align*}
\end{theorem}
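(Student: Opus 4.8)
The plan is to build the decomposition by peeling off each special-solution contribution in turn, using the results of the preceding subsections. First I would apply Proposition \ref{propextracingrespurgau} to the given smooth solution $\glin$: this produces the $1$-form $\Vlin$, which is supported only on $l\geq 2$ and satisfies the residual pure gauge equation, so that $\glin_{\Vlin}=\nabla\astrosun\Vlin$ is a genuine residual pure gauge solution by Corollary \ref{corrpuregauge}. I would then form the difference $\glin-\glin_{\Vlin}$, which is again a smooth solution to the equations of linearised gravity by linearity. The key point to record here is that subtracting $\glin_{\Vlin}$ does not change the invariant scalars: since $\glin_{\Vlin}$ is a residual pure gauge solution, Theorem \ref{thmgaugeinvariantquantintermsofRWandZ} gives that its associated $\Philin,\Psilin$ vanish, so the invariants of $\glin-\glin_{\Vlin}$ still equal $\Philin=\Phi(\glin)$ and $\Psilin=\Psi(\glin)$, which satisfy the Regge--Wheeler and Zerilli equations respectively.

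Next I would subtract off the Regge--Wheeler/Zerilli-generated solution $\gamma(\Psilin,\Philin)$, which is a smooth solution to the equations of linearised gravity by Proposition \ref{propsolnlingravRWZ}. Consider
\begin{align*}
\glin':=\glin-\glin_{\Vlin}-\gamma(\Psilin,\Philin).
\end{align*}
By construction $\gamma(\Psilin,\Philin)$ has invariant scalars exactly $\Philin$ and $\Psilin$ (this is verified inside the proof of Corollary \ref{corrsolnlingravRWZ}), and since the invariants are $\reals$-linear in the solution, $\glin'$ has vanishing invariant scalars. The crux of the argument is then to show that a smooth solution of the equations of linearised gravity with vanishing invariants $\Philin=\Psilin=0$ must be the sum of a residual pure gauge solution $\glin_V$ with $V$ supported only on $l=0,1$ and a linearised Kerr solution $\sum_{i}\glin_{\mfm,\mfa_i}$. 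To see this, I would re-run the extraction argument of Theorem \ref{thmgaugeinvariantquantintermsofRWandZ} applied to $\glin'$: the quantities $\qiotalin,\odd{\miotalin},\striotalin$ built from $\glin'$ are governed by the identities of Corollary \ref{corrRWZ} with $\Philin=\Psilin=0$, hence they all vanish. Feeding this back through the definitions of $\qiotalin,\miotalin,\siotalin$ and the Hodge decompositions of Section \ref{Theprojectionofsmsymtwocovontoandawayfromthel=0,1sphericalharmonics} forces the $l\geq 2$ part of $\glin'$ to be exhausted by derivatives of the extracted potential, i.e. $\rpart{\glin'}$ is pure gauge; this is where Lemma \ref{lemmadecomposingderivatives1form} (invoked already in the proof of Theorem \ref{thmgaugeinvariantquantintermsofRWandZ}) does the work. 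The residual $l=0,1$ part is then handled by the classical mode analysis of Regge--Wheeler/Zerilli in low angular frequency: on $l=0,1$ the only solutions of the linearised equations modulo pure gauge are the linearised Kerr family, which gives the parameters $\mfm,\mfa_{-1},\mfa_0,\mfa_1$, with the remaining gauge freedom realised by a $V$ supported only on $l=0,1$ solving the residual pure gauge equation.

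I expect the main obstacle to be precisely this last step — showing that vanishing of $\Philin$ and $\Psilin$ really does pin the solution down to residual pure gauge plus linearised Kerr, including careful bookkeeping of the $l=0,1$ modes (where the operators $\sdso$, $\sn\astrosunhat\sdso$ and $\slap_{1,0}$, $\slap_{\mathfrak{Z}}$ all have nontrivial kernels, cf. Propositions \ref{prophodgedecomp}, \ref{prophodgedecompqmsm1form}, \ref{propinverseoperators}) and where one must separately confirm, by explicit computation using the decomposed equations of Corollary \ref{corrdecomposedeqnslingrag}, that the $l=0,1$ sector of the linearised system admits no solutions beyond the four-parameter linearised Kerr family once pure gauge is quotiented out. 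Everything else is a matter of assembling the already-established building blocks — Proposition \ref{propextracingrespurgau}, Corollary \ref{corrpuregauge}, Proposition \ref{propsolnlingravRWZ}, Corollary \ref{corrRWZ}, and Theorem \ref{thmgaugeinvariantquantintermsofRWandZ} — and tracking the invariants under each subtraction, which is routine given the $\reals$-linearity of the maps $\Phi$ and $\Psi$.
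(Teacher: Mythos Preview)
Your plan is sound in spirit but organised differently from the paper, and there is one point to tighten. The paper does not proceed by subtracting pieces and tracking invariants; instead it first splits $\glin$ into its $l\geq 2$ and $l=0,1$ projections via the machinery of Section~\ref{Theprojectionofsmsymtwocovontoandawayfromthel=0,1sphericalharmonics}. For the $l=0,1$ part it simply cites the classical mode analysis (referencing \cite{SarbachPHD}) to write $\glin_{l=0,1}=\glin_V+\sum_i\glin_{\mfm,\mfa_i}$, exactly as you anticipate. For the $l\geq 2$ part the paper verifies the \emph{algebraic identity}
\[
\rpart{\glin}=\iotalin+\nabla\astrosun\Big(\Vlin+\qhd\qexd(r\Psilin)+\sdso(r\Psilin,r\Philin)\Big)
\]
directly from the definitions of $\iotalin$ (introduced in the proof of Theorem~\ref{thmgaugeinvariantquantintermsofRWandZ}) and of $\Vlin$ (in Proposition~\ref{propextracingrespurgau}), and then recognises the right-hand side as $\gamma(\Psilin,\Philin)+\nabla\astrosun\Vlin$ from the definition of $\gamma$. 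No invariant-tracking is needed: the identity is tautological once the definitions are unpacked.

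The point to tighten in your version is that you conclude $\rpart{\glin'}$ is pure gauge, but the theorem requires the remaining $V$ to be supported only on $l=0,1$, so you actually need $\rpart{\glin'}=0$, not merely $\rpart{\glin'}=\nabla\astrosun W$ for some $l\geq 2$--supported $W$. You can close this by checking that the extraction map $\glin\mapsto\Vlin$ of Proposition~\ref{propextracingrespurgau} annihilates $\gamma(\Psilin,\Philin)$ (so that by linearity and the last clause of that proposition the $\Vlin$ of $\glin'$ vanishes, whence $W=0$); but that computation is exactly what the paper's algebraic identity encodes in one line, and it is shorter to verify the identity once than to subtract and re-extract. Your identification of the $l=0,1$ sector as the genuine obstacle is correct, and both arguments dispose of it the same way---by appeal to the classical literature rather than by a self-contained computation.
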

\begin{proof}
By proposition \ref{propsphericalharmonicdecomptensors} we have the decomposition
\begin{align*}
\glin=\rpart{\glin}+\glin_{l=0,1}
\end{align*}
where $\rpart{\glin}$ has vanishing projection to $l=0,1$ and $\glin_{l=0,1}$ is supported only on $l=0,1$. Subsequently, that we can decompose $\glin_{l=0,1}$ as 
\begin{align*}
\glin_{l=0,1}=\glin_V+\sum_{i=-1,0,1}\glin_{\mfm,\mfa_{i}}
\end{align*}
for $\mfm, \mfa_{-1},\mfa_0,\mfa_{1}\in\reals$ and $V\in\smonecov$ follows from classical work on the linearised Einstein equations around Schwarzschild which we shall not verify directly here -- see for instance \cite{SarbachPHD}.

It thus remains to show that 
\begin{align*}
\rpart{\glin}=\gamma\big(\Psilin,\Philin\big)+\nabla\astrosun\Vlin.
\end{align*}
Indeed, we observe that since $\Philin, \Psilin$ and $\Vlin$ have vanishing projection to $l=0,1$ we thus have
\begin{align*}
\rpart{\glin}&=\iotalin+\nabla\astrosun\bigg(\Vlin+\qhd\qexd\Big(r\Psilin\Big)+\sdso\Big(r\Psilin, r\Philin\Big)\bigg),\\
&=\gamma\big(\Psilin,\Philin\big)+\nabla\astrosun\Vlin
\end{align*}
where the last line follows by definition of the map $\gamma$. This yields the theorem.
\end{proof}

\section{Initial data and well-posedness for the equations of linearised gravity}\label{Initialdatandwellposednessfortheequationsoflinearisedgravity}

In this section we establish a well-posedness theorem for the equations of linearised gravity. This in particular shows that the space of solutions to the equations of linearised gravity is non-empty.

An outline of this section is as follows. We begin in section \ref{Seeddatafortheequationsoflinearisedgravity} by defining a notion of seed data for the equations of linearised gravity consisting of freely prescribed quantities on the initial hypersurface $\Sigma_0$. Then in section \ref{Thewell-posednesstheorem} we state and prove the well-posedness theorem which establishes a surjection between smooth solutions to the equations of linearised gravity and smooth seed data which is moreover injective over a subclass of suitably regular seed data. Note this latter notion of regularity shall in fact provide the correct notion of regularity required for our decay statement of section \ref{Theorem 2:Boundedness,decayandasymptoticflatnessofinitial-data-normalisedsolutionstotheequationsoflinearisedgravity} to hold.

\subsection{Seed data for the equations of linearised gravity}\label{Seeddatafortheequationsoflinearisedgravity}

It is a long but tedious computation (see for instance \cite{CBbook}) to show that Cauchy data for solutions to the equations of linearised gravity cannot be prescribed freely but must satisfy certain constraints. It is therefore more appropriate to interpret the Cauchy problem for the equations of linearised gravity on the initial Cauchy hypersurface $\Sigma$ as the problem of constructing solutions to the linearised system from \emph{freely prescribed seed data} on $\Sigma$. In section \ref{Seeddatafortheequationsoflinearisedgravity2} we provide a notion of such seed data. In particular, we will show in section \ref{Thewell-posednesstheorem} that all smooth solutions to the equations of linearised gravity arise from this smooth seed data. We in addition provide in section \ref{Pointwiseasymptoticallyflatseeddata} a stronger notion of regularity on this seed for which the solution map will be an isomorphism onto its image.

\subsubsection{Seed data for the equations of linearised gravity}\label{Seeddatafortheequationsoflinearisedgravity2}

Smooth seed data for the equations of linearised gravity is defined as follows.
\begin{definition}\label{defnseeddata}
	A smooth seed data set for the equations of linearised gravity consists of prescribing:
	\begin{itemize}
		\item four functions $\Phi_0, \Phi_1, \Psi_0, \Psi_1\in\smfunsigrad$
		\item two functions $\uV_0, \uV_1\in\smfunsig$
		\item two 1-forms $\oV_0, \oV_1\in\smonesig$
		\item four constants $\mfm, \mfa_{-1},\mfa_{0},\mfa_{1}\in\reals$
	\end{itemize}
\end{definition}
Here $\smonesig$ denotes the space of smooth 1-forms on $\Sigma_{0}$, $\smfunsig$ the space of smooth functions on $\Sigma_0$ and $\smfunsigrad$ the space of smooth functions on $\Sigma_{0}$ supported on $l\geq 2$ with this notion defined analogously as in section \ref{Theprojectionofsmfunontoandawayfromthel=0,1sphericalharmonics}.

\subsubsection{Pointwise asymptotically flat seed data}\label{Pointwiseasymptoticallyflatseeddata}

Now we provide a stronger notion of regularity on the seed data which we shall need for the decay statement of Theorem 2 in section \ref{Theorem 2:Boundedness,decayandasymptoticflatnessofinitial-data-normalisedsolutionstotheequationsoflinearisedgravity} in addition to the well-posedness theorem of section \ref{Thewell-posednesstheorem}.\newline

To define this notion of regularity we first introduce the following pointwise norm acting on $n$-covariant tensor fields on $\Sigma$:
\begin{align*}
\big|h\big|_{\overline{g}_M}:=\big|(\overline{g}^{-1}_M)^{i_1j_1}...(\overline{g}^{-1}_M)^{i_nj_n}h_{i_1...i_n}h_{j_1...j_n}\big|.
\end{align*}
\begin{definition}
Let $n\in\mathbb{N}_0$ and let $\delta\in\reals_{>0}$. Then we say that a smooth seed data set for the equations of linearised gravity is asymptotically flat with weight $\delta$ to order $n$ iff there exists a positive constant $C_n$ such that the following pointwise bounds hold on $\Sigma_0$ for every $i\in\{0,...,n-1\}$:
\begin{align*}
\big|(r\overline{\nabla})^{i+1}(r^{\frac{1}{2}+\delta}\Phi_0)\big|_{\overline{g}_M}+\big|(r\overline{\nabla})^{i}(r^{\frac{3}{2}+\delta}\Phi_1)\big|_{\overline{g}_M}+\big|(r\overline{\nabla})^{i+1}(r^{\frac{1}{2}+\delta}\Psi_0)\big|_{\overline{g}_M}+\big|(r\overline{\nabla})^{i}(r^{\frac{3}{2}+\delta}\Psi_1)\big|_{\overline{g}_M}&\leq C_n,\\
\big|(r\overline{\nabla})^{i+1}(r^{\frac{3}{2}+\delta}\uV_0)\big|_{\overline{g}_M}+\big|(r\overline{\nabla})^{i}(r^{\frac{5}{2}+\delta}\uV_1)\big|_{\overline{g}_M}+\big|(r\overline{\nabla})^{i+1}(r^{\frac{3}{2}+\delta}\oV_0)\big|_{\overline{g}_M}+\big|(r\overline{\nabla})^{i}(r^{\frac{5}{2}+\delta}\oV_1)\big|_{\overline{g}_M}&\leq C_n.
\end{align*}
\end{definition}

\subsection{The well-posedness theorem}\label{Thewell-posednesstheorem}

We shall now show that the space of smooth solutions to the equations of linearised gravity can be completely parametrised by smooth seed data. This is the appropriate statement of well-posedness in view of the existence of constraints.  In particular, in the remainder of the paper we are now free to view solutions to the equations of linearised simply in terms of the seed data from which they arise. We shall prove this result by using the decomposition of Theorem \ref{thmdecoupling} to reduce the well-posedness statement to the corresponding well-posedness of the Cauchy problem for the Regge--Wheeler and Zerilli equations and the residual pure gauge equation. The well-posedness of these Cauchy problems is thus the content of sections \ref{TheCauchyinitialvalueproblemfortheRegge--WheelerandZerilliequations} and \ref{TheCauchyinitialvalueproblemfortheresidualpuregaugeequation} with the full well-posedness statement for the equations of linearised gravity given in section \ref{Thewell-posednesstheoremfortheequationsoflinearisedgravity}.

\subsubsection{The Cauchy initial value problem for the Regge--Wheeler and Zerilli equations}\label{TheCauchyinitialvalueproblemfortheRegge--WheelerandZerilliequations}

The following proposition can be proved by using standard theory\footnote{In particular, recall that the boundary $\eh$ is a null hypersurface.} combined with, for instance, a spherical harmonic decomposition and so we omit the proof.

\begin{proposition}\label{propwellposednesszerilli}
	Let $\psi_0, \psi_1\in \Gamma_{l\geq 2}(\Sigma_0)$. 
	
	Then there exists a unique solution $\psi\in\smfunrad$ to the Regge--Wheeler equation \eqref{eqnRWeqn} on $\mcalm$ such that
	\begin{align*}
	\big(i_0^*\psi, i_0^*n(\psi)\big)=\big(\psi_0, \psi_1\big).
	\end{align*}
	and we denote by $\solnmap_\Phi(\psi_0, \psi_1)$ the corresponding solution map.
	
	In addition, there exists a unique solution $\psi\in\smfunrad$ to the Zerilli equation \eqref{eqnZereqn} on $\mcalm$ such that
	\begin{align*}
	\big(i_0^*\psi, i_0^*n(\psi)\big)=\big(\psi_0, \psi_1\big)
	\end{align*}
	and we denote by $\solnmap_\Psi(\psi_0, \psi_1)$ the corresponding solution map.
\end{proposition}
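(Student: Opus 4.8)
The plan is to establish existence and uniqueness of solutions to the Cauchy problem for both the Regge--Wheeler equation \eqref{eqnRWeqn} and the Zerilli equation \eqref{eqnZereqn} on $\mcalm$, with data prescribed on $\Sigma_0$. Since both equations are, after rescaling $r^{-1}\psi$ as noted in section \ref{TheRegge--WheelerandZerilliequations}, linear wave equations with respect to $g_M$ (with $\Box_{g_M}$-principal part and lower-order potential terms), one is tempted to invoke standard hyperbolic theory directly. The wrinkle is twofold: (i) the operators $\zslapinv{p}$ appearing in the Zerilli equation \eqref{eqnZereqn} are \emph{nonlocal} (inverses of the elliptic operator $\slap_{\mfZ}$), so the Zerilli equation is not literally a PDE in the classical sense until one diagonalises in spherical harmonics; and (ii) the boundary $\eh$ of $\mcalm$ is a null hypersurface rather than a spacelike one, so one cannot prescribe data there, but rather must argue that the characteristic nature of $\eh$ means no boundary condition is needed. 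Both of these are precisely why a spherical harmonic decomposition is the natural tool.

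First I would fix the spherical harmonic $(l,m)$ with $l \geq 2$ and expand $\psi = \sum_{l,m} \psi^l_m(t^*, x)\, \slashed Y^l_m$, with the data $\psi_0, \psi_1$ expanded analogously; this is legitimate by the machinery of section \ref{Theprojectionofsmfunontoandawayfromthel=0,1sphericalharmonics}, and since $\psi_0, \psi_1 \in \Gamma_{l\geq 2}(\Sigma_0)$ only $l \geq 2$ modes appear. On each mode, $\slap$ acts as multiplication by $-l(l+1)/r^2$ and $\slap_{\mfZ}$ acts as multiplication by $-l(l+1)/r^2 + (2/r^2)(1 - 3M/r)$, which by Proposition \ref{propinverseoperators} is invertible (nonvanishing) on $l\geq 2$; hence $\zslapinv{p}$ becomes multiplication by an explicit smooth, nonvanishing function of $r$ on the range $x \in [1,\infty)$. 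Thus both \eqref{eqnRWeqn} and \eqref{eqnZereqn} reduce, mode by mode, to a genuine $1{+}1$-dimensional linear wave equation for $\psi^l_m(t^*, x)$ of the schematic form $\qbox \psi^l_m = W_l(r)\, \psi^l_m$ with $W_l$ a smooth potential on $[1,\infty)$. Here $\qbox$ is the wave operator of the Lorentzian metric $\qg$ on the $(t^*,x)$-quotient, for which $\{x=1\}$ is a null boundary and $\{t^*=0\}$ is spacelike.

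Next I would solve this $1{+}1$ problem by standard energy estimates / Duhamel iteration for linear wave equations on a manifold with null boundary: the characteristic nature of $\eh = \{x=1\}$ means that the domain of dependence of $\Sigma_0$ is all of $\mcalm$, the energy flux through $\eh$ has a favourable sign (as in the discussion around \eqref{OVStokes}), and so well-posedness holds with \emph{no} condition imposed at $\eh$; one obtains a unique smooth $\psi^l_m$ with $(i_0^*\psi^l_m, i_0^* n(\psi^l_m)) = (\psi^l_{0,m}, \psi^l_{1,m})$. I would then reassemble $\psi := \sum_{l,m} \psi^l_m \slashed Y^l_m$ and check smoothness and convergence of the sum using that the data are smooth (so their mode coefficients decay faster than any polynomial in $l$, uniformly on compact $x$-sets, together with all derivatives), yielding $\psi \in \smfunrad$; uniqueness of the full solution follows since any solution must have mode coefficients solving the mode equations, which have unique solutions. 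The main obstacle — and the reason the proposition is stated as provable by "standard theory combined with a spherical harmonic decomposition" — is handling the nonlocal $\zslapinv{p}$ terms: once one commits to the modal decomposition these become harmless multiplication operators, but one must be careful that the resulting potentials $W_l(r)$ are uniformly controlled as $l \to \infty$ (they are, since $\zslapinv{1}$ scales like $l^{-2}$) so that the reassembled series converges in $C^\infty$; everything else is routine linear hyperbolic theory on a spacetime with a characteristic boundary.
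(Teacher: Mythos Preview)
Your proposal is correct and follows exactly the approach the paper indicates: the paper omits the proof entirely, stating only that it ``can be proved by using standard theory combined with, for instance, a spherical harmonic decomposition,'' which is precisely the strategy you outline in detail. Your treatment of the two genuine subtleties---the nonlocality of $\zslapinv{p}$ (resolved by diagonalising in modes) and the characteristic nature of $\eh$ (so no boundary data is needed)---is accurate and fills in what the paper leaves implicit.
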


Here $n$ is the future-pointing unit normal to $\Sigma_{0}$.

\subsubsection{The Cauchy initial value problem for the residual pure gauge equation}\label{TheCauchyinitialvalueproblemfortheresidualpuregaugeequation}

The following proposition can again be proved using standard theory combined with, for instance, a spherical harmonic decomposition and so we omit the proof.
\begin{proposition}\label{propwellposegaugesol}
	Let $\uV_0, \uV_1\in \smfunsig$ and let $\oV_0, \oV_1\in \smonecovsig$. Then there exists a unique $V\in\smonecov$ solving
	\begin{align*}
	\Box_{g_M}V&=\fgaumap(\nabla\astrosun V)
	\end{align*}
	such that
	$$\left.\begin{array}{rr}
	\Big(i_0^*\big(V(n)\big), i_0^*V, i_0^*\big(\mcalL_nV(n)\big),i_0^*\mcalL_nV \Big)=\Big(\uV_0,\oV_0, \uV_1,\oV_1\Big).
\end{array}\right.$$
We denote by $\solnmap_V(\uV_0,\oV_0, \uV_1,\oV_1)$ the corresponding solution map.
\end{proposition}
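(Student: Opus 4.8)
\textbf{Proposal for the proof of Proposition \ref{propwellposegaugesol}.}

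The plan is to reduce the problem to a standard Cauchy problem for a linear hyperbolic system. First I would unwind the definition of the residual pure gauge equation $\Box_{g_M}V=\fgaumap(\nabla\astrosun V)$: since $\fgaumap$ is an $\reals$-linear map built (see Definition \ref{defngaugemap}) out of the projection operators, the inverse spherical operators $\slap_{1,0}^{-1}$, $\zslapinv{1}$, and the angular operators $\sdso$, $\qhd$, $\qn$, $r\sn$, etc., the equation is a linear equation on the components of $V$. The key structural point is that, modulo the nonlocal-looking inverse angular operators, the principal part is the wave operator $\Box_{g_M}$ acting diagonally on components, so the system is a genuine linear wave system. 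To handle the inverse operators $\slap^{-1}_{1,0}$ and $\zslapinv{1}$ cleanly I would perform a spherical harmonic decomposition (as anticipated in the statement): projecting onto each fixed $(l,m)$ with $l\geq 2$, the operators $\slap^{-1}_{1,0}$ and $\zslapinv{p}$ act as multiplication by smooth functions of $r$ (by Proposition \ref{propinverseoperators} these are well-defined on $\Gamma_{l\geq 2}$), so on each mode the system becomes a finite linear system of $1{+}1$-dimensional wave equations with smooth, $r$-dependent zeroth-order coefficients. For the $l=0,1$ modes the operators annihilating those harmonics drop out and the equation simplifies further.

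Second, I would set up the Cauchy data. The natural data are the four quantities $\big(i_0^*(V(n)),\, i_0^*V,\, i_0^*(\mcalL_nV(n)),\, i_0^*\mcalL_n V\big)$, i.e. the normal and tangential parts of $V$ and of $\mcalL_nV$ on $\Sigma_0$, which together amount to prescribing $V$ and its first normal derivative on $\Sigma_0$ — exactly Cauchy data for a second-order hyperbolic system. Decomposing into spherical harmonics, on each mode one has Cauchy data for a linear symmetric-hyperbolic (or, after the standard reduction, strictly hyperbolic) system on $\reals_{t^*}\times[1,\infty)_x$. Since the boundary $\eh=\{x=1\}$ is a null hypersurface (an \emph{incoming} characteristic, as $\pt$ is null there), no boundary condition is needed there: domain-of-dependence considerations show the characteristic speeds point outward, so the Cauchy problem with data on $\Sigma_0$ alone is well-posed up to and including $\eh$. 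I would then invoke the standard existence-uniqueness theory for linear hyperbolic systems (e.g. the classical energy-estimate method, cf. \cite{CBbook} or the treatment in \cite{CBlocwellpos}) to obtain on each mode a unique smooth solution, and then reassemble: the sum over $(l,m)$ converges to a smooth $V\in\smonecov$ because the data are smooth (so the mode coefficients and all their derivatives decay faster than any polynomial in $l$, and the coefficient functions of the system are $l$-independent in their $r$-regularity), and higher-order energy estimates propagate this. Uniqueness of the assembled solution follows from uniqueness on each mode together with completeness of the spherical harmonics. The solution map $\solnmap_V(\uV_0,\oV_0,\uV_1,\oV_1)$ is then linear by linearity of the whole construction.

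The step I expect to be the main obstacle is verifying that the nonlocal operators $\slap^{-1}_{1,0}$ and $\zslapinv{1}$ in $\fgaumap$ genuinely do not spoil hyperbolicity or the smoothness/convergence of the reassembled solution — in other words, confirming that after spherical-harmonic projection these operators really are just smooth multipliers in $r$ with uniform-in-$(l,m)$ control of their $r$-derivatives (which is what Proposition \ref{propinverseoperators} and the elliptic estimates of section \ref{Ellipticoperatorson2-spheres} are meant to guarantee), and that the finitely many couplings between modes and between the even/odd sectors introduced by $\fgaumap$ do not destroy the hyperbolic structure. Once this is checked the rest is standard, which is exactly why the paper says the proof can be omitted; accordingly I would only sketch it, pointing to Proposition \ref{propinverseoperators}, the structure of Definition \ref{defngaugemap}, and the standard hyperbolic theory, and leave the routine energy estimates and mode-by-mode bookkeeping to the reader.
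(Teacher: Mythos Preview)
Your proposal is correct and matches the paper's own treatment: the paper omits the proof entirely, stating only that it ``can again be proved using standard theory combined with, for instance, a spherical harmonic decomposition,'' which is precisely the strategy you outline. One small remark: since all operators entering $\fgaumap$ are spherically symmetric, there is in fact no coupling between distinct $(l,m)$ modes (only the splitting into $l=0,1$ and $l\geq 2$ sectors), so your residual concern about mode coupling does not arise.
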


\subsubsection{The well-posedness theorem for the equations of linearised gravity}\label{Thewell-posednesstheoremfortheequationsoflinearisedgravity}

We now finally state and prove the well-posedness theorem for the equations of linearised gravity.\newline

To state the theorem correctly it will now and in the sequel be more appropriate to view solutions to the equations of linearsed gravity as members of the following solutions space:
\begin{align*}
\solnspace:=\Big\{\glin\in\smsymtwocov\,\big|\, \glin\text{ solves }\eqref{eqnlinearisedeinsteinequations}-\eqref{eqnlorentzgauge}\Big\}.
\end{align*}
This has a natural vector space structure over $\reals$. It will in addition be more appropriate to view smooth seed data sets as members of the space
\begin{align*}
\seedspace:=\Big\{\big(\Phi_0, \Phi_1, \Psi_0, \Psi_1, \uV_0, \uV_1, \oV_0, \oV_1, \mfm, \mfa_{-1},\mfa_{0},\mfa_{1}\big)\in\smfunsigrad^4{\times}\smfunsig^2{\times}\smonecovsig^2{\times}\reals^4\Big\}
\end{align*}
equipped with the canonical vector space structure over $\reals$. The subspace $\seedspace^{n,\delta}\subset\seedspace$ will then denote the vector space of smooth seed data sets that are asymptotically flat with weight $\delta>0$ to order $n$, noting the embedding $\seedspace^{k,\delta}\subset\seedspace^{n,\delta}$ for every $k\geq n$.

The well-posedness theorem for the equations of linearised gravity it then given as follows.
\begin{theorem}\label{thmwellposedness}
Let $\solnmap:\seedspace\rightarrow\solnspace$ be the map defined by
\begin{align*}
\solnmap\Big(\big(\Phi_0, \Phi_1, \Psi_0, \Psi_1, \uV_0, \uV_1, \oV_0, \oV_1, \mfm, \mfa_{-1},\mfa_{0},\mfa_{1}\big)\Big)=\,\,\,\,\,&\gamma\big(\solnmap_\Phi(\Phi_0, \Phi_1), \solnmap_\Psi(\Psi_0, \Psi_1)\big)\\
+&\nabla\astrosun\big(\solnmap_V(\uV_0, \uV_1, \oV_0, \oV_1)\big)\\
+&\sum_{i=-1,0,1}\glin_{\mfm,\mfa_{i}}.
\end{align*}
Then
\begin{enumerate}[i)]
	\item $\solnmap$ is a linear surjection
	\item the restriction $\solnmap:\seedspace^{0,\delta}\rightarrow\solnmap\big(\seedspace^{0,\delta}\big)$ is an isomorphism for every $\delta>0$.
\end{enumerate}
\end{theorem}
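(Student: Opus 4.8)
The plan is to treat the two parts separately, with part i) reducing to the well-posedness statements for the constituent equations already obtained in Propositions \ref{propwellposednesszerilli} and \ref{propwellposegaugesol}, and part ii) then being the genuinely new content. For part i), linearity of $\solnmap$ is immediate from the $\reals$-linearity of the maps $\gamma$, $\solnmap_\Phi$, $\solnmap_\Psi$, $\solnmap_V$ and of $V\mapsto \nabla\astrosun V$, together with the fact that $\glin_{\mfm,\mfa_i}$ depends linearly on the parameters. That the image lands in $\solnspace$ follows from Proposition \ref{propsolnlingravRWZ} (the $\gamma$ term solves the linearised equations), Proposition \ref{proppuregauge} (the $\nabla\astrosun V$ term does, given that $V$ solves the residual pure gauge equation, which is exactly what Proposition \ref{propwellposegaugesol} guarantees) and Proposition \ref{proplinkerr} (the linearised Kerr pieces). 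Surjectivity is where Theorem \ref{thmdecoupling} enters: given any $\glin\in\solnspace$, Theorem \ref{thmdecoupling} produces constants $\mfm,\mfa_{-1},\mfa_0,\mfa_1$ and a 1-form $V$ supported on $l=0,1$ with
\begin{align*}
\glin=\gamma\big(\Psilin,\Philin\big)+\glin_{\Vlin}+\glin_V+\sum_{i=-1,0,1}\glin_{\mfm,\mfa_i}.
\end{align*}
One then reads off the seed: take $(\Phi_0,\Phi_1)=\big(i_0^*\Philin, i_0^*n(\Philin)\big)$ and similarly for $\Psilin$, so that $\solnmap_\Phi(\Phi_0,\Phi_1)=\Philin$ and $\solnmap_\Psi(\Psi_0,\Psi_1)=\Psilin$ by the uniqueness clause of Proposition \ref{propwellposednesszerilli}; take the Cauchy data of $V+\Vlin$ (which also solves the residual pure gauge equation, being a sum of two such solutions — $\Vlin$ by Proposition \ref{propextracingrespurgau}, $V$ because $\glin_V$ is a residual pure gauge solution) for the $\uV_i,\oV_i$ entries, so that $\solnmap_V$ of these data returns $V+\Vlin$; and keep the same $\mfm,\mfa_i$. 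Then $\solnmap$ of this seed equals $\glin$, using $\nabla\astrosun(V+\Vlin)=\glin_V+\glin_{\Vlin}$.

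For part ii), I would proceed by showing that the restriction to $\seedspace^{0,\delta}$ is both injective and surjective onto its image $\solnmap\big(\seedspace^{0,\delta}\big)$ — surjectivity onto the image being trivially true for any map, so the real content is injectivity, i.e. that $\solnmap$ restricted to $\seedspace^{0,\delta}$ has trivial kernel. (One should note the superscripts: $\solnmap\big(\seedspace^{af}\big)$ and $\solnmap(\seedspace^{0,\delta})$ coincide since $\seedspace^{af}$ is by definition the union over $\delta>0$ of the $\seedspace^{0,\delta}$, or whichever precise convention the bulk adopts.) Suppose a seed $s\in\seedspace^{0,\delta}$ has $\solnmap(s)=0$. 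Writing out $\solnmap(s)$ and taking the associated invariant quantities $\Philin,\Psilin$ of the resulting (zero) solution, one has by Theorem \ref{thmgaugeinvariantquantintermsofRWandZ} that $\Philin=\Psilin=0$ for the residual pure gauge and linearised Kerr summands, so the invariants of $\solnmap(s)$ equal precisely $\solnmap_\Phi(\Phi_0,\Phi_1)$ and $\solnmap_\Psi(\Psi_0,\Psi_1)$; these vanish, whence by Proposition \ref{propwellposednesszerilli} the data $(\Phi_0,\Phi_1)$ and $(\Psi_0,\Psi_1)$ vanish. This kills the $\gamma$ term. What remains is $\nabla\astrosun\big(\solnmap_V(\uV_0,\uV_1,\oV_0,\oV_1)\big)+\sum_i\glin_{\mfm,\mfa_i}=0$, a relation between a residual pure gauge solution and a linearised Kerr solution; by Remark \ref{rmkpuregaugedistinctfromkerr} (equivalently \cite{SarbachPHD}) these classes intersect only trivially, so $\mfm=\mfa_{-1}=\mfa_0=\mfa_1=0$ and $\solnmap_V(\uV_0,\uV_1,\oV_0,\oV_1)$ lies in the kernel of $\nabla\astrosun$, i.e. is a Killing field of $g_M$. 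Here the asymptotic-flatness weights $\delta>0$ are essential: a Killing 1-form of Schwarzschild is either stationary or rotational and in particular does not decay, so its Cauchy data cannot satisfy the pointwise weighted bounds of $\seedspace^{0,\delta}$ unless it is zero — hence $\solnmap_V(\uV_0,\uV_1,\oV_0,\oV_1)=0$ and then by the uniqueness in Proposition \ref{propwellposegaugesol} the data $\uV_i,\oV_i$ vanish. Thus $s=0$.

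The main obstacle I anticipate is the last step of part ii): ruling out non-trivial Killing 1-forms via the weighted norms. One must confirm that the space of solutions to $\Box_{g_M}V=\fgaumap(\nabla\astrosun V)$ which are asymptotically flat in the sense of $\seedspace^{0,\delta}$ contains no $V$ with $\nabla\astrosun V=0$ other than $V=0$ — this requires knowing both that $\ker(\nabla\astrosun)$ on $\smonecov$ is exactly the (four-dimensional, spanned by $T$ and the three rotations) space of Killing fields of $\Mg$, and that none of the nonzero elements of this space decays. The first is standard but should be cited; the second follows by inspecting the explicit form of these Killing fields in the $(t^*,x,\theta,\varphi)$ chart against the $r^{3/2+\delta}$-weighted decay demanded of $\oV_0$. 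A secondary subtlety is making sure the "Cauchy data of $V+\Vlin$" in the surjectivity argument genuinely lies in $\seedspace$ (resp. in $\seedspace^{0,\delta}$ for the claim that $\solnmap(\seedspace^{0,\delta})$ is well-described) — one should check that the formula for $\Vlin$ in Proposition \ref{propextracingrespurgau} produces data of the required regularity, which is where asymptotic flatness of the original seed feeds in. Beyond these points the argument is bookkeeping with the already-established propositions, so I would not grind through the component computations.
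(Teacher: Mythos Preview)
Your proposal is correct and follows essentially the same route as the paper: linearity and the fact that $\solnmap$ maps into $\solnspace$ by Propositions \ref{proplinkerr}, \ref{proppuregauge}, \ref{propsolnlingravRWZ}; surjectivity via Theorem \ref{thmdecoupling}; and injectivity on $\seedspace^{0,\delta}$ by separating the three summands, killing the Kerr parameters, and then ruling out nontrivial Killing fields using the decay weights. The only cosmetic difference is that the paper packages the step ``the three summands vanish separately'' by citing Corollary \ref{corrsolnlingravRWZ} together with Remark \ref{rmkpuregaugedistinctfromkerr}, whereas you unpack this by computing the invariants $\Philin,\Psilin$ directly --- but Corollary \ref{corrsolnlingravRWZ} is proved exactly that way, so the arguments coincide.
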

\begin{proof}
The linearity of the map is clear. Moreover, that $\solnmap$ indeed maps into the solution space follows from Propositions \ref{proplinkerr} and \ref{proppuregauge} combined with Proposition \ref{propsolnlingravRWZ} (in addition to of course Propositions \ref{propwellposednesszerilli} and \ref{propwellposegaugesol}). The first part of the theorem then follows as a simple consequence of Theorem \ref{thmdecoupling}. 

To conclude the second part we first note from remark \ref{rmkpuregaugedistinctfromkerr} and Corollary \ref{corrsolnlingravRWZ} that if $\big(\Phi_0, \Phi_1, \Psi_0, \Psi_1, \uV_0, \uV_1, \oV_0, \oV_1, \mfm, \mfa_{-1},\mfa_{0},\mfa_{1}\big)\in\ker\solnmap\cap\seedspace$ then
\begin{align}\label{pop}
\gamma\big(\solnmap_\Phi(\Phi_0, \Phi_1), \solnmap_\Psi(\Psi_0, \Psi_1)\big)=\nabla\astrosun\big(\solnmap_V(\uV_0, \uV_1, \oV_0, \oV_1)\big)=\sum_{i=-1,0,1}\glin_{\mfm,\mfa_{i}}=0.
\end{align}
It therefore follows immediately from Proposition \ref{proplinkerr} that $\mfm=\mfa_{-1}=\mfa_{0}=\mfa_{1}=0$. Moreover, since $\widehat{\slashed{g}}_M=0$ then if $(f,g)\in\ker\gamma$ it must be that $(f,g)\in\ker\sn\astrosunhat\sdso$ and thus $f$ and $g$ are supported only on $l=0,1$. Hence \eqref{pop} implies that $\solnmap_\Phi(\Phi_0, \Phi_1)=\solnmap_\Psi(\Psi_0, \Psi_1)=0$ and thus by the uniqueness criterion of Proposition \ref{propwellposednesszerilli} it follows that $\Phi_0= \Phi_1=\Psi_0=\Psi_1=0$. Finally, \eqref{pop} in addition implies that $\solnmap_V(\uV_0, \uV_1, \oV_0, \oV_1)$ solves Killings equations and thus either vanishes or 
\begin{align*}
\big(\solnmap_V(\uV_0, \uV_1, \oV_0, \oV_1)\big)^\sharp\in\text{span}\{\pt, \Omega_1, \Omega_2, \Omega_3\}
\end{align*}
where we recall from section \ref{KillingfieldsoftheSchwarzschildmetric} that the latter span the Lie algebra of Killing fields. However, the regularity assumptions on the seed exclude this latter scenario and thus $\solnmap_V(\uV_0, \uV_1, \oV_0, \oV_1)\big)=0$ which yields $\uV_0= \uV_1=\oV_0=\oV_1=0$ by the uniqueness criterion of Proposition \ref{propwellposegaugesol}.

\end{proof}

The uniqueness criterion of part $ii)$ in the above thus motivates the following definition.
\begin{definition}\label{defnafsolutions}
Let $\glin$ be a smooth solution to the equations of linearised gravity. Then we say that $\glin$ arises from smooth seed data that is asymptotically flat with weight $\delta$ to order $n$ iff $\glin\in\solnmap\big(\seedspace^{n,\delta}\big)$.
\end{definition}
The remainder of the paper is then concerned with the above class of solutions to the equations of linearised gravity. Note this class is both manifestly non-empty and manifestly parametrised by elements of $\seedspace^{n,\delta}$ in a one-to-one fashion.

\section{Residual pure gauge and linearised Kerr normalised solutions to the equations of linearised gravity}\label{GaugeandlinearisedKerrnormalisedsolutionstotheequationsoflinearisedgravity}

In this section we consider solutions to the equations of linearised gravity which have been normalised via the addition of a particular member of each of the special classes of solutions introduced in section \ref{Specialsolutionstotheequationsoflinearisedgravity}. It is these and only these \emph{residual pure gauge and linearised Kerr normalised solutions} to the equations of linearised gravity that our decay statement of Theorem 2 in section \ref{Theorem 2:Boundedness,decayandasymptoticflatnessofinitial-data-normalisedsolutionstotheequationsoflinearisedgravity} shall hold.

An outline of this section is as follows. We begin in section \ref{Initial-datanormalisedsolutionstotheequationsoflinearisedgravity} by defining a class of solutions to the equations of linearised gravity by demanding that they arise from a particular class of seed data. Then in section \ref{Achievingtheinitialdatanormalisationforageneralsolution} we show that these \emph{initial-data normalised} solutions can in fact be realised by adding a particular residual pure gauge and linearised Kerr solution to a general solution of the equations of linearised gravity. Finally in section \ref{Globalpropertiesofinitial-data-normalisedsolutions} we state and prove certain global properties of such initial-data normalised solutions to the equations of linearised gravity.

\subsection{Initial-data normalised solutions to the equations of linearised gravity}\label{Initial-datanormalisedsolutionstotheequationsoflinearisedgravity}

With the correct class of solutions to the equations of gravity that one should analyse understood as a consequence of Theorem \ref{thmwellposedness} we now in this section identify a subclass for which we shall establish a decay statement in section \ref{Proofoftheorem2}.\newline

To define this subclass of solutions we must first introduce the subspace $\radseedspace\subset\seedspace$ given by 
\begin{align*}
\radseedspace:=\Big\{\big(\Phi_0, \Phi_1, \Psi_0, \Psi_1, 0, 0, 0, 0, 0, 0,0,0\big)\in\seedspace\Big\}.
\end{align*}
We then in turn define the subspace $\radseedspace^{n,\delta}\subset\radseedspace$ as $\radseedspace^{n,\delta}=\radseedspace\cap\seedspace^{n,\delta}$.

\begin{definition}\label{defnRWgauge}
We say that a smooth solution $\glin$ to the equations of linearised gravity is initial-data normalised iff $\glin\in\solnmap(\radseedspace)$. We will denote such initial-data normalised solutions to the equations of linearised gravity by $\gidnlin$.

In addition, we say that an initial-data normalised solution to the equations of linearised gravity arises from smooth seed data that is asymptotically flat with weight $\delta>0$ to order $n$ iff $\gidnlin\in\solnmap\big(\radseedspace^{n,\delta}\big)$.
\end{definition}

Note whether a solution to the equations of linearised gravity is initial-data normalised is manifestly a condition on the seed data from which it arises as in Theorem \ref{thmwellposedness}. It is moreover clear that space of initial-data normalised solutions to the equations of linearised gravity is non-empty.

\subsection{Achieving the initial-data normalisation for a general solution}\label{Achievingtheinitialdatanormalisationforageneralsolution}

We now show that any smooth solution arising under part $ii)$ of Theorem \ref{thmwellposedness} can be made initial-data normalised via the addition of a unique residual pure gauge solution and unique linearised Kerr solutions. It thus follows that establishing a decay statement for initial-data normalsied solutions to the equations of linearised gravity yields a decay statement for solutions to the equations of linearised gravity that holds up to the addition of the special solutions of section \ref{Specialsolutionstotheequationsoflinearisedgravity}.\newline

The result is as follows.
\begin{theorem}\label{thminitialdatagauge}
	Let $\glin$ be a smooth solution to the equations of linearised gravity arising from smooth seed data that is asymptotically flat with weight $\delta>0$ to order $n$. Then there exists a unique $V\in\smonecov$ and unique parameters $\mfm, \mfa_{-1}, \mfa_{0},\mfa_{1}\in\reals$ such that
	\begin{align*}
	\gidnlin:=\glin-\glin_V-\sum_{i=-1,0,1}\glin_{\mfm, \mfa_{i}}
	\end{align*}
	is initial-data normalised. Moreover, $\gidnlin$ arises from smooth seed data that is asymptotically flat with weight $\delta>0$ to order $n$.
\end{theorem}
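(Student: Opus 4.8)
The plan is to build the normalising solutions explicitly from the data furnished by Theorem \ref{thmdecoupling} and the extraction procedure of Proposition \ref{propextracingrespurgau}. First I would apply Theorem \ref{thmdecoupling} to $\glin$ to obtain the decomposition $\glin = \gamma(\Psilin,\Philin) + \glin_{\Vlin} + \glin_V + \sum_{i}\glin_{\mfm,\mfa_i}$, where $\Vlin$ is the explicit $1$-form of Proposition \ref{propextracingrespurgau}, $V$ is supported on $l=0,1$, and $\mfm,\mfa_i$ are the Kerr parameters. The candidate normalising solutions are then $\glin_{V+\Vlin}$ (a residual pure gauge solution, legitimate since $\glin_{\Vlin}=\nabla\astrosun\Vlin$ by Corollary \ref{corrpuregauge} and $\nabla\astrosun$ is $\reals$-linear, so $\nabla\astrosun(V+\Vlin)$ solves the residual pure gauge equation once one checks $V+\Vlin$ does — this uses that $V$ is supported on $l=0,1$, hence $\glin_V$ is pure gauge with $\Box_{g_M}V=\fgaumap(\nabla\astrosun V)$ being the defining equation it already satisfies by Proposition \ref{proppuregauge}, combined with linearity of the residual pure gauge equation) and $\sum_i \glin_{\mfm,\mfa_i}$. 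Subtracting these from $\glin$ leaves precisely $\gamma(\Psilin,\Philin)$, which by Theorem \ref{thmwellposedness} (via the solution map $\solnmap$ restricted to $\radseedspace$) is initial-data normalised: its seed consists only of Cauchy data $(\Phi_0,\Phi_1,\Psi_0,\Psi_1)$ for the Regge--Wheeler and Zerilli equations, with all gauge and Kerr components zero.

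Next I would establish uniqueness. Suppose $\glin - \glin_{V'} - \sum_i \glin_{\mfm',\mfa'_i}$ is also initial-data normalised. Subtracting the two representations, the difference $\glin_{V-V'} + \sum_i \glin_{\mfm-\mfm',\mfa_i-\mfa'_i}$ lies in $\solnmap(\radseedspace)$, i.e. equals $\gamma(\Psi^\flat,\Phi^\flat)$ for some Regge--Wheeler/Zerilli pair. By Corollary \ref{corrsolnlingravRWZ} this forces $\Phi^\flat=\Psi^\flat=0$ and the residual pure gauge and Kerr pieces to vanish separately; Proposition \ref{proplinkerr} then gives $\mfm=\mfm'$, $\mfa_i=\mfa'_i$, and the vanishing of the residual pure gauge solution forces $\nabla\astrosun(V-V')=0$, so $V-V'$ solves Killing's equation. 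Here I would invoke the same argument as in the proof of Theorem \ref{thmwellposedness} part $ii)$: the asymptotic-flatness hypothesis on the seed of $\glin$ (propagated through the explicit formulas for $\Vlin$, and the fact that $V$ from Theorem \ref{thmdecoupling} is supported on $l=0,1$ with controlled behaviour) rules out a nonzero element of $\mathrm{span}\{\pt,\Omega_1,\Omega_2,\Omega_3\}$, so $V=V'$. One subtlety: the $V$ of Theorem \ref{thmdecoupling} is only asserted to be supported on $l=0,1$, not to solve a well-posed Cauchy problem, so uniqueness of $V$ must be phrased as uniqueness of the \emph{solution} $\glin_V = \nabla\astrosun V$, or equivalently one fixes $V$ by demanding it not contain a Killing contribution; I would state the uniqueness at the level of $\gidnlin$ and of the parameters, with $V$ normalised accordingly.

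Finally I would verify the asymptotic-flatness claim: $\gidnlin$ arises from seed in $\radseedspace^{n,\delta}$. Since $\gidnlin = \gamma(\solnmap_\Phi(\Phi_0,\Phi_1), \solnmap_\Psi(\Psi_0,\Psi_1))$ with $(\Phi_0,\Phi_1,\Psi_0,\Psi_1)$ the Regge--Wheeler and Zerilli Cauchy data read off from the original seed of $\glin$, these four functions are exactly the first four entries of the asymptotically-flat-to-order-$n$ seed tuple for $\glin$, hence satisfy the required weighted bounds on $\Sigma_0$ by hypothesis. The remaining entries of the seed of $\gidnlin$ are all zero, which trivially satisfies the bounds. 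Thus $\gidnlin \in \solnmap(\radseedspace^{n,\delta})$.

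I expect the main obstacle to be the bookkeeping around the $1$-form $V$: confirming that $V + \Vlin$ genuinely solves the residual pure gauge equation (this needs linearity of that equation together with the fact that $\glin_V$ and $\glin_{\Vlin}$ are independently residual pure gauge solutions, so their sum is too, hence $\nabla\astrosun(V+\Vlin)$ is one and $V+\Vlin$ satisfies the defining equation), and making the uniqueness statement precise given that Theorem \ref{thmdecoupling} does not pin down $V$ uniquely as a $1$-form but only the associated solution modulo Killing fields. The cleanest route is to phrase the theorem's uniqueness as uniqueness of $\gidnlin$ and of the parameters $\mfm,\mfa_i$, together with uniqueness of the residual pure gauge \emph{solution} $\glin_V$, and to use the asymptotic-flatness assumption precisely at the point where Killing ambiguities would otherwise appear, mirroring the final paragraph of the proof of Theorem \ref{thmwellposedness}.
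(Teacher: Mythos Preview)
Your argument is correct, but it takes a longer route than the paper's. The paper's proof bypasses Theorem~\ref{thmdecoupling} and Proposition~\ref{propextracingrespurgau} entirely: since by hypothesis $\glin\in\solnmap(\seedspace^{n,\delta})$, part~$ii)$ of Theorem~\ref{thmwellposedness} already furnishes a \emph{unique} seed tuple $(\Phi_0,\Phi_1,\Psi_0,\Psi_1,\uV_0,\uV_1,\oV_0,\oV_1,\mfm,\mfa_{-1},\mfa_0,\mfa_1)$ for it. One then simply sets $V:=\solnmap_V(\uV_0,\uV_1,\oV_0,\oV_1)$ and reads off the Kerr parameters directly from the seed; linearity of $\solnmap$ immediately gives $\glin-\glin_V-\sum_i\glin_{\mfm,\mfa_i}=\solnmap\big((\Phi_0,\Phi_1,\Psi_0,\Psi_1,0,\dots,0)\big)\in\solnmap(\radseedspace^{n,\delta})$, which is the whole proof.

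Your approach reconstructs this decomposition by hand via the explicit $1$-form $\Vlin$ and the $l=0,1$ piece $V$ of Theorem~\ref{thmdecoupling}, then checks that $V+\Vlin$ solves the residual pure gauge equation and that the remainder is $\gamma(\Psilin,\Philin)$. This is valid but duplicates work already packaged inside the well-posedness theorem. One advantage of your route is that the uniqueness discussion (handling the Killing-field ambiguity in $V$ exactly as in the last paragraph of the proof of Theorem~\ref{thmwellposedness}) is made explicit, whereas the paper's short proof leaves uniqueness to follow implicitly from the injectivity of $\solnmap$ on $\seedspace^{n,\delta}$; you should note, however, that your concern about $V$ being determined only modulo Killing fields is precisely what the asymptotic-flatness hypothesis resolves, and this is already built into part~$ii)$ of Theorem~\ref{thmwellposedness}.
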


\begin{proof}
By part $ii)$ of Theorem \ref{thmwellposedness} there exists a unique smooth seed data set \\ $\big(\Phi_0, \Phi_1, \Psi_0, \Psi_1, \uV_0, \uV_1, \oV_0, \oV_1, \mfm, \mfa_{-1},\mfa_{0},\mfa_{1}\big)\in\seedspace^{n,\delta}$ such that
\begin{align*}
\glin=\solnmap\Big(\big(\Phi_0, \Phi_1, \Psi_0, \Psi_1, \uV_0, \uV_1, \oV_0, \oV_1, \mfm, \mfa_{-1},\mfa_{0},\mfa_{1}\big)\Big).
\end{align*}
Defining therefore $V\in\smonecov$ by 
\begin{align*}
V=\solnmap_V\big(\uV_0, \uV_1, \oV_0, \oV_1\big)
\end{align*}
then the linearity of the solution map yields
\begin{align*}
\glin-\glin_V-\sum_{i=-1,0,1}\glin_{\mfm, \mfa_{i}}=\solnmap\Big(\big(\Phi_0, \Phi_1, \Psi_0, \Psi_1, 0, 0, 0, 0, 0, 0,0,0\big)\Big).
\end{align*}
\end{proof}

\subsection{Global properties of initial-data normalised solutions}\label{Globalpropertiesofinitial-data-normalisedsolutions}

In this final section we prove certain global properties of initial-data-normalised solutions which will be fundamental in establishing the boundedness and decay statements of Theorem 2 in section \ref{Precisestatementsofthemaintheorems}.\newline

Indeed, the following proposition establishes that initial-data normalised solutions to the equations of linearised gravity fall into the class of solutions identified in Proposition \ref{propsolnlingravRWZ}.

\begin{proposition}\label{propglobalproperties}
	Let $\gidnlin$ be an initial-data normalised solution to the equations of linearised gravity. Then
	\begin{align*}
	\gidnlin=\gamma\big(\Psilin, \Philin\big)
	\end{align*}
	where $\Philin$ and $\Psilin$ are the invariant quantities associated to $\gidnlin$ of Theorem \ref{thmgaugeinvariantquantintermsofRWandZ}.
\end{proposition}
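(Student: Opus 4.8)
The plan is to unwind the definitions and then appeal to the machinery of Section \ref{DecouplingtheequationsoflinearisedgravityuptoresidualpuregaugeandlinearisedKerrsolutions:theRegge--WheelerandZerilliequations}. By Definition \ref{defnRWgauge}, an initial-data normalised solution satisfies $\gidnlin=\solnmap(s)$ for some seed $s=(\Phi_0,\Phi_1,\Psi_0,\Psi_1,0,0,0,0,0,0,0,0)\in\radseedspace$. First I would evaluate the solution map of Theorem \ref{thmwellposedness} on such an $s$. All linearised Kerr summands $\glin_{\mfm,\mfa_i}$ have $\mfm=\mfa_i=0$ and hence vanish identically by the explicit formula \eqref{eqnlinkerr}; and $\solnmap_V(0,0,0,0)$, being by Proposition \ref{propwellposegaugesol} the \emph{unique} solution of the residual pure gauge equation \eqref{eqnpuregauge} with vanishing Cauchy data, must be the zero $1$-form (the zero $1$-form solves \eqref{eqnpuregauge} since $\fgaumap$ is $\reals$-linear with $\fgaumap(0)=0$), so $\nabla\astrosun\solnmap_V(0,0,0,0)=0$. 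Hence $\gidnlin=\gamma(\Psi,\Phi)$, where by Proposition \ref{propwellposednesszerilli} the functions $\Psi:=\solnmap_\Psi(\Psi_0,\Psi_1)$ and $\Phi:=\solnmap_\Phi(\Phi_0,\Phi_1)$ solve the Zerilli and Regge--Wheeler equations with the prescribed data.

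It then remains to show that the invariant quantities $\Philin:=\Phi(\gidnlin)$ and $\Psilin:=\Psi(\gidnlin)$ extracted from $\gidnlin$ by the $\reals$-linear maps of Section \ref{Thegeneralisedwavegaugewithrespecttothepair} coincide with $\Phi$ and $\Psi$ respectively; granting this, $\gidnlin=\gamma(\Psi,\Phi)=\gamma(\Psilin,\Philin)$, which is the assertion. This identification is exactly the fact used (without detailed verification) in the proof of Corollary \ref{corrsolnlingravRWZ}, namely that the invariant quantities associated to $\gamma(\Psi,\Phi)$ are $\Psi$ and $\Phi$. To establish it cleanly I would proceed as in the proof of Proposition \ref{propsolnlingravRWZ}: form the auxiliary tensor $\iotalin$ attached to $\gidnlin=\gamma(\Psi,\Phi)$ by the projections introduced in the proof of Theorem \ref{thmgaugeinvariantquantintermsofRWandZ}, check by direct computation that $\qiotalin$, $\odd{\miotalin}$ and $\striotalin$ then agree with the expressions of Corollary \ref{corrRWZ} after replacing $(\Psilin,\Philin)$ by $(\Psi,\Phi)$, and substitute these into the identities \eqref{ident}. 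Since $\slap_{1,0}$ and $\slap$ are bijections on $\smfunrad$ by Proposition \ref{propinverseoperators}, this forces $\Philin=\Phi$ and $\Psilin=\Psi$.

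I expect the proof to be short: the proposition carries essentially no content beyond Section \ref{DecouplingtheequationsoflinearisedgravityuptoresidualpuregaugeandlinearisedKerrsolutions:theRegge--WheelerandZerilliequations}, and the only step with any substance is the verification that the invariant-extraction maps act as a left inverse of $\gamma$ on its image, which is already implicit in the construction of $\gamma$ in Section \ref{SolutionstotheequationsoflinearisedgravitygeneratedbysolutionstotheRegge--WheelerandZerilliequations}. The one genuine hazard is bookkeeping: one must track consistently throughout which of the two scalar arguments of $\gamma$ is the Zerilli quantity and which is the Regge--Wheeler quantity (the convention of Proposition \ref{propsolnlingravRWZ}, namely $\gamma(\Psi,\Phi)$ with the Zerilli function first) when matching against the formula defining $\solnmap$. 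So the "hard part" amounts to nothing more than assembling the earlier results in the correct order with the correct conventions.
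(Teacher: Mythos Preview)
Your proposal is correct and follows essentially the same approach as the paper: unwind the definition of $\solnmap$ on seed in $\radseedspace$, observe the Kerr and pure gauge pieces vanish, and then invoke the computation of Theorem~\ref{thmgaugeinvariantquantintermsofRWandZ} to identify the invariant quantities of $\gamma(\Psi,\Phi)$ with $\Psi$ and $\Phi$. You are in fact more careful than the paper's own proof, which simply writes ``Computing as in Theorem~\ref{thmgaugeinvariantquantintermsofRWandZ} then yields\ldots'' without elaboration; your flagging of the argument-order convention in $\gamma$ is also well-placed, since the paper is not entirely consistent on this point between the definition of $\solnmap$ in Theorem~\ref{thmwellposedness} and its use elsewhere.
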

\begin{proof}
Since $\gidnlin$ is initial-data normalised it follows that
\begin{align*}
\gidnlin&=\solnmap\Big(\big(\Phi_0, \Phi_1, \Psi_0, \Psi_1, 0, 0, 0, 0, 0, 0,0,0\big)\Big),\\
&=\gamma\big(\solnmap_\Psi(\Psi_0, \Psi_1), \solnmap_\Phi(\Phi_0, \Phi_1)\big)
\end{align*}
for some $\big(\Phi_0, \Phi_1, \Psi_0, \Psi_1, 0, 0, 0, 0, 0, 0,0,0\big)\in\radseedspace$. Computing as in Theorem \ref{thmgaugeinvariantquantintermsofRWandZ} then yields $\Philin=\solnmap_\Phi(\Phi_0, \Phi_1)$ and $\Psilin=\solnmap_\Psi(\Phi_0, \Phi_1)$.
\end{proof}

We end this section by noting the following interesting corollary which follows immediately from Theorem \ref{thminitialdatagauge}, Proposition \ref{propglobalproperties} and the fact that $\gamma$ has trivial kernel over $\smfunrad\times\smfunrad$.

\begin{corollary}\label{corrglobalproperties}
Let $\glin$ be a smooth solution to the equations of linearised gravity arising from smooth seed data that is asymptotically flat with weight $\delta>0$ to order $n$ such that $\Philin=\Psilin=0$. Then $\glin$ is the sum of residual pure gauge and linearised Kerr solutions.
\end{corollary}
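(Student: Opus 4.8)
The plan is to show that the statement follows almost immediately from the machinery already assembled, namely Theorem \ref{thminitialdatagauge}, Proposition \ref{propglobalproperties} and the injectivity of the map $\gamma$ on $\smfunrad\times\smfunrad$ (the latter being recorded in the proof of Corollary \ref{corrsolnlingravRWZ} and used already in Theorem \ref{thmwellposedness}). The key observation is that the hypothesis $\Philin=\Psilin=0$ forces the initial-data normalised solution extracted from $\glin$ to vanish identically, whence $\glin$ itself is the sum of the residual pure gauge and linearised Kerr solutions that were subtracted off to normalise it.

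Concretely, I would proceed as follows. First, since $\glin$ arises from smooth seed data that is asymptotically flat with weight $\delta>0$ to order $n$, Theorem \ref{thminitialdatagauge} applies and produces a unique $V\in\smonecov$ and unique parameters $\mfm, \mfa_{-1}, \mfa_0, \mfa_1\in\reals$ such that
\begin{align*}
\gidnlin:=\glin-\glin_V-\sum_{i=-1,0,1}\glin_{\mfm,\mfa_i}
\end{align*}
is initial-data normalised, and moreover $\gidnlin$ itself arises from smooth seed data asymptotically flat with weight $\delta>0$ to order $n$. Second, I would invoke Proposition \ref{propglobalproperties}: since $\gidnlin$ is initial-data normalised we have $\gidnlin=\gamma\big(\Psi^{\mathrm{idn}}, \Phi^{\mathrm{idn}}\big)$ where $\Phi^{\mathrm{idn}}, \Psi^{\mathrm{idn}}$ are the invariant quantities of Theorem \ref{thmgaugeinvariantquantintermsofRWandZ} associated to $\gidnlin$. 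Third, I would note that the invariant quantities $\Philin$ and $\Psilin$ vanish for both residual pure gauge and linearised Kerr solutions (Theorem \ref{thmgaugeinvariantquantintermsofRWandZ}) and depend linearly on the solution (they are defined via the $\reals$-linear maps $\Phi$ and $\Psi$ of section \ref{Thegeneralisedwavegaugewithrespecttothepair}); hence the invariant quantities of $\gidnlin$ coincide with those of $\glin$, which are zero by hypothesis. Therefore $\gamma\big(\Psi^{\mathrm{idn}}, \Phi^{\mathrm{idn}}\big)=\gamma(0,0)=0$, so $\gidnlin=0$, and rearranging the defining equation for $\gidnlin$ gives
\begin{align*}
\glin=\glin_V+\sum_{i=-1,0,1}\glin_{\mfm,\mfa_i},
\end{align*}
which is precisely the assertion that $\glin$ is the sum of a residual pure gauge solution and linearised Kerr solutions.

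There is essentially no main obstacle here — the corollary is a packaging of results already proved. The only point requiring a moment's care is the compatibility of the invariant quantities: one must check that the invariant pair associated to $\gidnlin$ really equals the pair $(\Psilin,\Philin)$ associated to $\glin$. This is immediate from the $\reals$-linearity of the maps $\Phi,\Psi$ together with the vanishing of these maps on $\glin_V$ and each $\glin_{\mfm,\mfa_i}$, exactly as established inside the proof of Proposition \ref{propglobalproperties} and Theorem \ref{thmgaugeinvariantquantintermsofRWandZ}. With that compatibility in hand, the trivial kernel of $\gamma$ over $\smfunrad\times\smfunrad$ delivers $\gidnlin=0$ and the conclusion follows. (One could equivalently phrase the argument purely through the isomorphism $\solnmap:\seedspace^{0,\delta}\to\solnmap(\seedspace^{0,\delta})$ of part $ii)$ of Theorem \ref{thmwellposedness}: the seed of $\gidnlin$ lies in $\radseedspace$, and its Regge--Wheeler and Zerilli Cauchy data are recovered from $\Phi^{\mathrm{idn}},\Psi^{\mathrm{idn}}$, which vanish, so the seed is trivial and hence $\gidnlin=0$.)
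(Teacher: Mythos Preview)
Your proposal is correct and follows essentially the same route as the paper: the paper records that the corollary follows immediately from Theorem \ref{thminitialdatagauge}, Proposition \ref{propglobalproperties}, and the trivial kernel of $\gamma$ over $\smfunrad\times\smfunrad$, which is precisely the chain of reasoning you have written out. Your additional remark that the invariant quantities of $\gidnlin$ coincide with those of $\glin$ by linearity and vanishing on residual pure gauge and linearised Kerr solutions is the only implicit step in the paper's one-line justification, and you have supplied it correctly.
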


\section{Precise statements of the main theorems}\label{Precisestatementsofthemaintheorems}

In this section we finally give precise statements of the main theorems of this paper. These statements take the form of \emph{boundedness and decay} bounds for solutions to the Regge--Wheeler and Zerilli equations in addition to \emph{boundedness and decay} bounds for the initial-data normalised solutions to the equations of linearised gravity of section \ref{GaugeandlinearisedKerrnormalisedsolutionstotheequationsoflinearisedgravity}, with the norms by which such bounds are measured to be defined in this section. The relation between the former and latter is clear from Proposition \ref{propglobalproperties}

An outline of this section is as follows. We begin in section \ref{Flux,integrateddecayandpointwisenorms} by defining the norms required to correctly state the two theorems. Then in section \ref{Theorem1:BoundednessanddecayforsolutionstotheRegge--WheelerandZerilliequations} we state Theorem 1 concerning boundedness and decay bounds for solutions to the Regge--Wheeler and Zerilli equations in the norms of section \ref{Flux,integrateddecayandpointwisenorms}. Finally in section \ref{Theorem 2:Boundedness,decayandasymptoticflatnessofinitial-data-normalisedsolutionstotheequationsoflinearisedgravity} we state Theorem 12 concerning boundedness and decay bounds for initial-data normalised solutions to the equations of linearised gravity in the norms of section \ref{Flux,integrateddecayandpointwisenorms}.

\subsection{Flux, integrated decay and pointwise norms}\label{Flux,integrateddecayandpointwisenorms}

The norms in question concern flux, integrated decay and pointwise norms acting on smooth functions and smooth, symmetric 2-covariant tensors. Defining the action of these norms on the former is thus the content of section \ref{Flux,integrateddecayandpointwisenormsonsmfun} which we then upgrade to smooth, symmetric $n$-covariant $Q$ tensors, smooth $\qmsm$ 1-forms and smooth, symmetric, traceless 2-covariant $S$ tensors in section \ref{Flux,integrateddecayandpointwisenormsonsmncovQ,smqmsmandsmsymtratwocovS}. This then allows defining in section \ref{Flux,integrateddecayandpointwisenormsonsmsymtwocov} the action of said norms on smooth, symmetric 2-covariant tensors by exploiting the decomposition of section \ref{Ageometricfoliationby2-spheres}.

\subsubsection{Flux, integrated decay and pointwise norms on $\smfun$}\label{Flux,integrateddecayandpointwisenormsonsmfun}

First we define these norms for smooth functions $\psi$ on $\mcalm$. 

In what follows, we remind the reader of the function $\taus$ defined in section \ref{Ageometricfoliationby2-spheres} in which the constant $R$ was also fixed. Moreover, we recall the $L^2$ norms on spheres defined in section \ref{Normsonspheres}.\newline

We associate to $\psi$ the energy norm
\begin{align*}
\mathbb{E}[\psi](\taus):=&\int_{2M}^R\Big(\normtwosphere{\pt\psi}{}{r}+\normtwosphere{\pr\psi}{}{r}+\normtwosphere{\sn\psi}{}{r}\Big)\exd r\\
+&\int_{R}^\infty\Big(\normtwosphere{D(r\psi)}{}{r}+\normtwosphere{\sn(r\psi)}{}{r}\Big)\exd r
\end{align*}
and the $r^p$-weighted norms
\begin{align*}
\mathbb{F}_p[\psi](\taus)&:=\int_{R}^\infty\Big(r^p\normtwosphere{D(r\psi)}{}{r}+\normtwosphere{\sn(r\psi)}{}{r}\Big)\exd r.
\end{align*}
Here $D:=\tfrac{1+\mu}{1-\mu}\pt+\pr$.

This leads to the weighted flux norm
\begin{align*}
\mathbb{F}[\psi]:=&\sup_{\taus\in[0, \infty)}\mathbb{E}[\psi](\taus)
+\sup_{\taus\in(-\infty, \infty)}\int_{R}^\infty \Big(r^2\normtwosphere{D(r\psi)}{}{r}+\normtwosphere{\sn(r\psi)}{}{r}\Big)\exd r.
\end{align*}
We morever define the initial flux norms along the initial Cauchy hypersurface $\Sigma$ of section \ref{TheinitialCauchyhypersurfaceSigma}:
\begin{align*}
\mathbb{D}[\psi]:=\int_{\Sigma}\Big(||n(\psi)||^2_{\overline{g}_M}+||\overline{\nabla}\psi||^2_{\overline{g}_M}\Big)\epsilon_{\overline{g}_M}
\end{align*}
with $\epsilon_{\overline{g}_M}$ the volume form associated to $\overline{g}_M$.

We further associate to $\psi$ the integrated local energy decay norm
\begin{align*}
\mathbb{I}_{\text{loc}}[\psi](\taus_1):=\int_{\taus_1}^\infty\int_{2M}^{R}\Big(\normtwosphere{\pt(r\psi)}{}{r}+\normtwosphere{\pr(r\psi)}{}{r}+\normtwosphere{\sn(r\psi)}{}{r}+\normtwosphere{r\psi}{}{r}\Big)\exd \taus\exd r
\end{align*}
and the $r^p$-weighted bulk norms
\begin{align*}
\mathbb{B}_p[\psi](\taus_1):=\int_{\tau^\star_1}^\infty\int_{R}^\infty r^p\Big( \normtwosphere{D(r\psi)}{}{r}+\normtwosphere{\sn(r\psi)}{}{r}\Big)\exd \taus\exd r.
\end{align*}
This leads to the integrated decay norm
\begin{align*}
\mathbb{M}[\psi]:=\int_{0}^\infty\int_{2M}^{\infty}\frac{1}{r^3}\Big(\normtwosphere{\pt(r\psi)}{}{r}+\normtwosphere{\pr(r\psi)}{}{r}+\normtwosphere{\sn(r\psi)}{}{r}+\normtwosphere{r\psi}{}{r}\Big)\exd \taus\exd r
\end{align*}
and the weighted bulk norm
\begin{align*}
\mathbb{I}[\psi]:=\int_{0}^\infty\int_{R}^\infty\Big(r \normtwosphere{D(r\psi)}{}{r}+r^{\beta_0} \normtwosphere{\sn(r\psi)}{}{r}\Big)\exd \taus\exd r.
\end{align*}
Here, $\beta_0>0$ is a fixed constant such that $1-\beta_0<<1$.\newline

Higher order flux norms are then defined according to, for $n\geq 1$ an integer, 
\begin{align*}
\mathbb{F}^n[\psi]:=&\sum_{i+j+k=0}^n\sup_{\taus\in[0, \infty)}\mathbb{E}[\pt^i\pr^j\sn^k\psi](\taus)\\
+&\sum_{i+j+k=0}^n\sup_{\taus\in(-\infty, \infty)}\int_{R}^\infty\Big( r^2\normtwosphere{D\underline{D}^i(rD)^j(r\sn)^k(r\psi)}{}{r}+\normtwosphere{\sn\underline{D}^i(rD)^j(r\sn)^k(r\psi)}{}{r}\Big)\exd r 
\end{align*}
with analogous definitions for the higher order energy and $r^p$-weighted norms. Here $\underline{D}:=(1-\mu)(\pt-\pr)$. Note to compute the higher order $L^2$ norms on spheres one simply uses the commutation formulae of section \ref{Commutationformulaeandusefulidentities} combined with Definition \ref{defnL2norm}.

Conversely, higher order initial flux norms are defined according to
\begin{align*}
\mathbb{D}^n[{\psi}]:=\sum_{i+j=0}^n\mathbb{D}[(r\overline{\nabla})^i\psi].
\end{align*}

Lastly, higher order integrated decay norms are defined according to
\begin{align*}
\mathbb{M}^n[\psi]:&=\sum_{i+j+k=0}^{n}\mathbb{M}[\pt^i\pr^j\sn^k\psi],\\
\mathbb{I}^n[\psi]:&=\sum_{i+j+k=0}^{n}\mathbb{I}[\underline{D}^i(rD)^j(r\sn)^k\psi]
\end{align*}
with analogous definitions for the higher order integrated local energy decay norms and the $r^p$-weighted bulk norms.

Finally then we have the following pointwise norm on the 2-spheres $\twosphere_{\taus, r}$:
\begin{align*}
|\psi|_{\taus, r}:=\sup_{\twosphere_{\taus, r}}|\psi|.
\end{align*}
Higher order pointwise norms are then defined in the now canonical way.

\subsubsection{Flux, integrated decay and pointwise norms on $\smsymncovQ$, $\smqmsm$ and $\smsymtratwocovS$}\label{Flux,integrateddecayandpointwisenormsonsmncovQ,smqmsmandsmsymtratwocovS}

Now we upgrade the norms of the previous section from $\smfun$ to $\smncovQ$, $\smqmsm$ and $\smsymtratwocovS$. Note however that generalising the norms $\mathbb{D}$ will be unnecessary.\newline

In fact, replacing the derivative operators $\pt, \pr, D$ and $\underline{D}$ by $\qn_{\pt}, \qn_{\pr},\qn_ D$ and $\qn_{\underline{D}}$ in the appropriate norms of section \ref{Flux,integrateddecayandpointwisenormsonsmfun} then those norms are equally well-defined on objects in $\smncovQ$, $\smqmsm$ and $\smsymtratwocovS$ by Definition \ref{defnL2norm}, the commutation formulae provided by the Riemann tensors in section \ref{Decomposingtheequationsoflinearisedgravity} and the calculus developed in section \ref{MixedQandStensoranalysis}\footnote{In particular, note that $\pt, \pr, D$ and $\underline{D}$ are $Q$ vector fields.}.

Finally, for the pointwise norms we define
\begin{align*}
|\qtau|_{{\taus, r}}:&=\sup_{\twosphere_{\taus, r}}\sum_{i+j=n}|\qtau_{ij}|, \qquad \qtau\in\smsymncovQ,\\
|{\mtau}|_{\taus, r}:&=\sup_{\twosphere_{\taus, r}}\Big(|\sn\even{\mtau}(\pt)|_{\sg}+|\sn\even{\mtau}(\px)|_{\sg}+|\shd\sn\odd{\mtau}(\pt)|_{\sg}+|\shd\sn\odd{\mtau}(\px)|_{\sg}\Big), \quad \mtau\in\smqmsm,\\
|\stau|_{{\taus, r}}:&=\sup_{\twosphere_{\taus, r}}|\stau|_{\sg}, \qquad \stau\in\smsymncovS.
\end{align*}
Here $\qtau_{ij}$ are the components of $\qtau$ in the natural frame (cf. section \ref{TheprojectionofsmoothQtensors,smoothsymmetrictracelessStensorsandsmoothqmsm1-formsontoandawayfromthel=0,1sphericalharmonics}) in particular utilising the calculus developed in section \ref{QandStensoranalysis} and section \ref{MixedQandStensoranalysis}.

The higher order pointwise norms are then defined in the now canonical way.

\subsubsection{Flux, integrated decay and pointwise norms on $\smsymtwocov$}\label{Flux,integrateddecayandpointwisenormsonsmsymtwocov}

We now finally upgrade the norms of the previous section to $\smsymtwocov$. To do this we will utilise the decomposition of section \ref{Ageometricfoliationby2-spheres}.\newline

Indeed for $\tau\in\smsymtwocov$ and $n\in\mathbb{N}_0$ we define
\begin{align*}
\mathbb{F}^n[\tau]:&=\mathbb{F}^n[\qhattau]+\mathbb{F}^n[\qtrtau]+\mathbb{F}^n[\mtau]+\mathbb{F}^n[\shattau]+\mathbb{F}^n[\strtau].
\end{align*}
The remaining norms are then defined analogously.

\subsection{Theorem \ref{mainthmdecayRWandZ}: Boundedness and decay for solutions to the Regge--Wheeler and Zerilli equations}\label{Theorem1:BoundednessanddecayforsolutionstotheRegge--WheelerandZerilliequations}

In this section we state Theorem 1 which concerns both a boundedness and decay statement for solutions to the Regge--Wheeler and Zerilli equations in the norms of section \ref{Flux,integrateddecayandpointwisenormsonsmfun}.

The proof of Theorem 1 is the content of section \ref{Proofoftheorem1}.\newline

The theorem statement is as below -- we note that in the statement we drop the superscript $(1)$ from all quantities under consideration as the theorem holds independently of the relation between the Regge--Wheeler and Zerilli equations and the equations of linearised gravity.

\begin{customthm}{1}\label{mainthmdecayRWandZ}
	Let $\Phi\in\smfunrad$ be a solution to the  Regge--Wheeler equation on $\Mg$:
	\begin{align*}
	\qbox\Phi+\slap\Phi=-\frac{6}{r^2}\frac{M}{r}\Phi.
	\end{align*}
	
	Then for any integer $n\geq 2$ the following estimates hold provided that the fluxes on the right hand side are finite:
	\begin{enumerate}[i)]
		\item the higher order flux and weighted bulk estimates
		\begin{align*}
		\norm{F}{r^{-1}\Phi}[n]+\norm{I}{r^{-1}\Phi}[n]&\lesssim\norm{D}{r^{-1}\Phi}[n]
		\end{align*}
		\item the higher order integrated decay estimate
		\begin{align*}
		\norm{M}{r^{-1}\Phi}[n]&\lesssim\norm{D}{r^{-1}\Phi}[n+1]
		\end{align*}
		\item finally, on any 2-sphere $\twosphere_{\taus,r}\subset\mcalm$ and any positive integers $i+j+k+l+m\geq n-2$, the pointwise decay bounds  
		\begin{align*}
		|\pt^i\pr^j\big((r-2M)D\big)^k(r\sn)^l\underline{D}^m\Phi|_{{\taus,r}}\lesssim\frac{1}{\sqrt{\taus}}\cdot\norm{D}{r^{-1}\Phi}[n].
		\end{align*}
	\end{enumerate}
	
	Let now $\Psi\in\smfunrad$ be a solution to the  Zerilli equation on $\Mg$:
	\begin{align*}
	\qbox\Psi+\slap\Psi=-\frac{6}{r^2}\frac{M}{r}\Psi+\frac{24}{r^3}\frac{M}{r}(r-3M)\zslapinv{1}\Psi+\frac{72}{r^3}\frac{M}{r}\frac{M}{r}(r-2M)\zslapinv{2}\Psi.
	\end{align*}

	Then for any integer $n\geq 2$ the following estimates hold provided that the fluxes on the right hand side are finite:
	\begin{enumerate}[i)]
		\item the higher order flux and weighted bulk estimates
		\begin{align*}
		\norm{F}{r^{-1}\Psi}[n]+\norm{I}{r^{-1}\Psi}[n]&\lesssim\norm{D}{r^{-1}\Psi}[n]
		\end{align*}
		\item the higher order ntegrated decay estimate
		\begin{align*}
		\norm{M}{r^{-1}\Psi}[n]&\lesssim\norm{D}{r^{-1}\Psi}[n+1]
		\end{align*}
		\item finally, on any 2-sphere $\twosphere{}{r}$ with $\taus\geq\taus_0$ and any positive integers $i+j+k+l+m\geq n-2$, the pointwise decay bounds  
		\begin{align*}
		|\pt^i\pr^j\big((r-2M)D\big)^k(r\sn)^l\underline{D}^m\Psi|_{_{\taus,r}}\lesssim\frac{1}{\sqrt{\taus}}\cdot\norm{D}{r^{-1}\Psi}[n].\newline
		\end{align*}
	\end{enumerate}
\end{customthm}

We make the following remarks regarding Theorem 1.
\begin{remark}
	The $r-2M$ weight in the pointwise bounds of $iii)$ is to regularise the operator $D$ at $\eh$. In particular, for $r\geq R$ one can replace $(r-2M)D$ with $rD$.
\end{remark}
\begin{remark}
	The contents of Theorem 1 regarding the Regge--Wheeler equation were originally proven by Holzegel in \cite{Holzegelultschwarz} (see also \cite{DHRlinstabschwarz}). Conversely, the contents of Theorem 1 regarding the Zerilli equation were originally proven in the independent works of the author \cite{Johnsondecayz} and Hung--Keller--Wang \cite{HKWlinstabschwarz}.
\end{remark}
\begin{remark}
	One can show from parts $i)$ of Theorem 1 that the quantities $|\Philin|$ and $|\Psilin|$ have finite limits on $\nullinf$ -- see \cite{Moschidisrpmethod} for details.
\end{remark}

\subsection{Theorem \ref{mainthmdecaysolutionslingrav}: Boundedness, decay and asymptotic flatness of initial-data-normalised solutions to the equations of linearised gravity }\label{Theorem 2:Boundedness,decayandasymptoticflatnessofinitial-data-normalisedsolutionstotheequationsoflinearisedgravity}

In this section we state Theorem 2 which concerns both a boundedness and decay statement for initial-data-normalised solutions $\gidnlin$ to the equations of linearised gravity in the flux and integrated decay norms of section \ref{Flux,integrateddecayandpointwisenormsonsmsymtwocov}. In addition, we provide a statement of asymptotic flatness for the solution $\gidnlin$. 

The proof of Theorem 2 is the content of section \ref{Proofoftheorem2}.\newline

The theorem statement is as given below.
\begin{customthm}{2}\label{mainthmdecaysolutionslingrav}
	Let $\gidnlin$ be a smooth initial-data normalised solution to the equations of linearised gravity arising from a smooth seed data set that is asymptotically flat with weight $\delta>0$ to order $n\geq4$ and let $\Philin$ and $\Psilin$ denote the invariant quantities associated to $\gidnlin$ in accordance with Theorem \ref{thmgaugeinvariantquantintermsofRWandZ}. 
	
	Then the initial flux norms
	\begin{align}\label{initialflux}
	\norm{D}{r^{-1}\Philin}[4]+\norm{D}{r^{-1}\Psilin}[4]
	\end{align}
	are finite and $\Philin$ and $\Psilin$ satisfy the conclusions of Theorem 1 with $n=4$.
	
	Moreover the following estimates hold on the solution $\gidnlin$:
	\begin{enumerate}[i)]
		\item the flux and weighted bulk estimates 
		\begin{align*}
		\norm{F}{\gidnlin}[2]+\norm{I}{\gidnlin}[2]&\lesssim\norm{D}{r^{-1}\Philin}[4]+\norm{D}{r^{-1}\Psilin}[4]
		\end{align*}
		\item the integrated decay estimates
		\begin{align*}
		\norm{M}{\gidnlin}[1]&\lesssim\norm{D}{r^{-1}\Philin}[4]+\norm{D}{r^{-1}\Psilin}[4]
		\end{align*}
		\item finally on any 2-sphere $\twosphere_{\taus, r}\subset\mcalm$ the pointwise decay bounds  
		\begin{align*}
		|r\gidnlin|_{{\taus,r}}\lesssim\frac{1}{\sqrt{\taus}}\cdot\Big(\norm{D}{r^{-1}\Philin}[4]+\norm{D}{r^{-1}\Psilin}[4]\Big).
		\end{align*}
	\end{enumerate}
\end{customthm}
We then have the immediate corollary:
\begin{corollary}\label{corrmaintheorem}
	Let $\glin$ be a smooth solution to the equations of linearised gravity arising from a smooth seed data set that is asymptotically flat with weight $\delta>0$ to order $n\geq4$. Then the initial-data normalised solution constructed from $\glin$ in accordance with Theorem \ref{thminitialdatagauge} satisfies the assumptions and hence the conclusions of Theorem 2. In particular, $\glin$ decays inverse polynomially to the sum of a residual pure gauge and linearised Kerr solutions.
\end{corollary}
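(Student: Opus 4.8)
The plan is to derive Corollary \ref{corrmaintheorem} as an essentially immediate consequence of Theorem \ref{thminitialdatagauge} together with Theorem \ref{mainthmdecaysolutionslingrav} (Theorem 2), with Theorem \ref{mainthmdecayRWandZ} (Theorem 1) entering only through the latter. First I would invoke Theorem \ref{thminitialdatagauge}: given $\glin$ arising from smooth seed data that is asymptotically flat with weight $\delta>0$ to order $n\geq 4$, that theorem produces a \emph{unique} $V\in\smonecov$ and unique parameters $\mfm,\mfa_{-1},\mfa_0,\mfa_1\in\reals$ such that the normalised tensor field
\begin{align*}
\gidnlin:=\glin-\glin_V-\sum_{i=-1,0,1}\glin_{\mfm,\mfa_i}
\end{align*}
is initial-data normalised, and moreover $\gidnlin$ itself arises from smooth seed data that is asymptotically flat with weight $\delta>0$ to order $n$. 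Since $n\geq 4$, the hypotheses of Theorem \ref{mainthmdecaysolutionslingrav} are met by $\gidnlin$, so the flux quantities \eqref{initialflux} are finite, the associated invariant quantities $\Philin$ and $\Psilin$ satisfy the conclusions of Theorem \ref{mainthmdecayRWandZ} with $n=4$, and the boundedness, integrated-decay and pointwise bounds $i)$--$iii)$ of Theorem \ref{mainthmdecaysolutionslingrav} hold for $\gidnlin$.

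It then remains only to translate the decay of $\gidnlin$ into the final assertion that $\glin$ decays inverse-polynomially to the sum of a residual pure gauge and linearised Kerr solution. Here I would simply rearrange the defining identity to read
\begin{align*}
\glin=\gidnlin+\glin_V+\sum_{i=-1,0,1}\glin_{\mfm,\mfa_i},
\end{align*}
observe that $\glin_V$ is a residual pure gauge solution (Proposition \ref{proppuregauge}) and $\sum_i\glin_{\mfm,\mfa_i}$ is a linearised Kerr solution (Proposition \ref{proplinkerr}), and note that the pointwise bound $iii)$ of Theorem \ref{mainthmdecaysolutionslingrav}, namely $|r\gidnlin|_{\taus,r}\lesssim \taus^{-1/2}\big(\norm{D}{r^{-1}\Philin}[4]+\norm{D}{r^{-1}\Psilin}[4]\big)$, exhibits precisely the claimed inverse-polynomial rate of convergence of $\glin$ to this fixed stationary solution. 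One should also remark that $\glin_V$ and $\sum_i\glin_{\mfm,\mfa_i}$ are determined explicitly from the seed of $\glin$, as already recorded in the discussion of section \ref{OVGaugenormalisationofinitialdata}, so the limiting solution is canonical.

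I do not expect any genuine obstacle: the corollary is purely a matter of chaining together results already established. The only point requiring a modicum of care is bookkeeping of the asymptotic-flatness order — one must check that the order-$n$ asymptotic flatness of the seed of $\glin$ is preserved, with the same weight $\delta$ and order $n$, under the subtraction of the explicit pure gauge and linearised Kerr pieces, but this is exactly the content of the last sentence of Theorem \ref{thminitialdatagauge}, so no independent argument is needed. Finiteness of the norms appearing on the right-hand sides of $i)$--$iii)$ is likewise guaranteed by the first assertion of Theorem \ref{mainthmdecaysolutionslingrav}, which in turn follows from the asymptotic-flatness hypothesis via the computation referenced in section \ref{Theorem 2:Boundedness,decayandasymptoticflatnessofinitial-data-normalisedsolutionstotheequationsoflinearisedgravity}. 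Thus the proof is a two-line deduction once the relevant theorems are cited in the correct order.
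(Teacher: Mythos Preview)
Your proposal is correct and matches the paper's approach: the corollary is stated immediately after Theorem~2 with the words ``We then have the immediate corollary'' and no proof is given, so the deduction you outline---invoking Theorem~\ref{thminitialdatagauge} to produce $\gidnlin$ with the same asymptotic-flatness order, then applying Theorem~\ref{mainthmdecaysolutionslingrav}---is exactly the intended two-line argument.
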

We make the following remarks regarding Theorem 1.
\begin{remark}\label{reminitialdatanormalised}
Observe that it suffices to assume only that the initial flux bounds of \eqref{initialflux} are finite in order for the conclusions of the theorem to hold.
\end{remark}
\begin{remark}
	As in parts $iii)$ of Theorem 1 one can obtain higher order analogues of the pointwise decay bounds in part $iii)$ of the theorem statement although we decline to state these explicitly.
\end{remark}
\begin{remark}\label{rmkAF}
From part $i)$ of Theorem 2 one can show that the tensor $r\gidnlin$ has a finite limit on $\nullinf$. Note one could not show this for the solutions to the linearised system considered in our recent \cite{Johnsonlinstabschwarzold}.
\end{remark}

\section{Proof of Theorem \ref{mainthmdecayRWandZ}}\label{Proofoftheorem1}

In this section we prove Theorem 1. In fact, we prove only the following: first, we shall prove Theorem 1 only for the Zerilli equation as this is the simpler case. Second, we shall prove only the $\pt$-flux and Morawetz estimates discussed in section \ref{OVOutlineoftheproofofTheorem1} as the remaining estimates along with their higher order counterparts follow in an analogus fashion as discussed for the scalar wave equation in section \ref{OVAside:thescalarwaveequationontheSchwarzschildexteriorspacetime}. Finally, we shall prove these estimates only in the region to the future of $\Sigma_{0}\cap\{r\leq R\}$ as it will then be clear how to prove the estimates in the semi-global region.

An outline of this section is as follows. We begin in section \ref{IntegralidentitiesandintegralestimatesforsolutionstotheRegge--WheelerandZerilliequations} by deriving various integral identities and integral estimates that smooth solutions to the Zerilli equation must satisfy. Finally in section \ref{ThedegenerateenergyandMorawetzestimate} we prove the $\pt$ flux and Morawetz estimates for solutions to the Zerilli equation by exploiting the identities and estimates of section \ref{IntegralidentitiesandintegralestimatesforsolutionstotheRegge--WheelerandZerilliequations}. 

\subsection{Integral identities and integral estimates for solutions to the Regge--Wheeler and Zerilli equations}\label{IntegralidentitiesandintegralestimatesforsolutionstotheRegge--WheelerandZerilliequations}

We begin in this section by deriving various identities that solutions to the Zerilli equation must satisfy, recalling our earlier comment that we forgo explicitly analysing the solutions to the Regge--Wheeler equation in this paper. These identities shall then be utilised throughout the remainder of the section. \newline

Let us first recall for ease of reference the definition of the Zerilli equation from Definition \ref{defnRWandZ} for a functions $\Psi\in\smfunrad$ :
\begin{align}
\qbox\Psi+\slap\Psi&=-\frac{6}{r^2}\frac{M}{r}\Psi+\frac{24}{r^3}\frac{M}{r}(r-3M)\zslapinv{1}\Psi+\frac{72}{r^3}\frac{M}{r}\frac{M}{r}(r-2M)\zslapinv{2}\Psi\label{Zereqn}.
\end{align}
Here, we further recall the operator $\zslapinv{p}$ defined as in section \ref{Thefamilyofoperatorsslap}.

We then have that solutions to the above must satisfy the following set of identities.

In what follows, given two smooth functions $f$ and $g$ on $\mcalm$ we denote by $\langle f, g\rangle_{\twosphere_{\taus, r}}$ their $L^2$ product on any 2-sphere $\twosphere{}{r}$:
\begin{align*}
\langle f, g\rangle_{\twosphere_{\taus, r}}:=\frac{1}{r^2}\inttwosphere{}{r}{f\, g}
\end{align*}
In addition, for a $f$ a smooth function of $r$ on $\mcalm$ we define
\begin{align*}
f':=\pr f.
\end{align*}
Finally, we recall the mass aspect function $\mu=\frac{2M}{r}$.
\begin{lemma}\label{lemmaidentitesRWandZ}
	Let $\alpha, \beta$ and $w$ be smooth functions of $r$ on $\mcalm$.
	
	Let now $\Psi$ be a smooth solution to the Zerilli equation \eqref{Zereqn}. Then on any 2-sphere $\geomtwosphere{}{r}$ the following identities hold:
	\begin{align*}
	\pt&\Big[\opmu\alpha\,\normtwosphere{\pt\Psi}{}{r}+\ommu\alpha\,\normtwosphere{\pr\Psi}{}{r}+\alpha\,\normtwosphere{\sn_{\mfZ}\Psi}{}{r}\Big]\\
	-\pr&\Big[2\mu\alpha\,\normtwosphere{\pt\Psi}{}{r}+2\ommu\alpha\,\langle\pt\Psi, \pr\Psi\rangle_{\twosphere_{\taus, r}}\Big]\\
	&+2\mu\alpha'\,\normtwosphere{\pt\Psi}{}{r}\\
	=&-2\ommu\alpha'\,\langle\pt\Psi, \pr\Psi\rangle_{\twosphere_{\taus, r}},
	\end{align*}
	\begin{align*}
	\pt&\Big[2\opmu\beta\,\langle\pt\Psi, \pr\Psi\rangle_{\twosphere_{\taus, r}}-2\mu\beta\,\normtwosphere{\pr\Psi}{}{r}\Big]\\
	-\pr&\Big[\opmu\beta\,\normtwosphere{\pt\Psi}{}{r}-\ommu\beta\,\normtwosphere{\pr\Psi}{}{r}-\beta\,\normtwosphere{\sn_{\mfZ}\Psi}{}{r}\Big]\\
	&+\big(\opmu\beta\big)'\normtwosphere{\pt\Psi}{}{r}+\bigg(\ommu\beta'-\frac{\mu}{r}\beta\bigg)\normtwosphere{\pr\Psi}{}{r}\\
	&-\beta'\,\normtwosphere{\sn_{\mfZ}\Psi}{}{r}-\beta\,\normtwosphere{[\pr,\mfZ]\Psi}{}{r}\\
	=&-2\frac{\mu}{r}\,\beta\,\langle\pt\Psi, \pr\Psi\rangle_{\twosphere_{\taus, r}}
	\end{align*}
	and
	\begin{align*}
	-\pt&\bigg[\opmu w\,\langle\pt\Psi, \Psi\rangle_{\twosphere_{\taus, r}}-2\mu w\,\langle\pr\Psi, \Psi\rangle_{\twosphere_{\taus, r}}+\frac{\mu}{r}\frac{w}{2}\,\normtwosphere{\Psi}{}{r}\bigg]\\
	+\pr&\bigg[\ommu w\,\langle\pr\Psi, \Psi\rangle_{\twosphere_{\taus, r}}-\frac{1}{2}\bigg(\big(\ommu w\big)'+\frac{\mu}{r}\frac{w}{2}\bigg)\normtwosphere{\Psi}{}{r}\bigg]\\
	&+\opmu w\,\normtwosphere{\pt\Psi}{}{r}-\ommu w\,\normtwosphere{\pr\Psi}{}{r}-w\,\normtwosphere{\sn_{\mfZ}\Psi}{}{r}\\
	&+\frac{1}{2}\big(\ommu w'\big)'\normtwosphere{\Psi}{}{r}\\
	=&2\mu w\,\langle\pt\Psi, \pr\Psi\rangle_{\twosphere_{\taus, r}}.
	\end{align*}
	Here, for $f\in\smfunrad$ we define
	\begin{align*}
	\normtwosphere{\sn_{\mfZ}f}{}{r}:=\normtwosphere{\sn f}{}{r}+\frac{6}{r}\frac{\mu}{r}\Big((2-3\mu)^2+3\mu\ommu\Big)&\normtwosphere{\zslapinv{1}f}{}{r}\\
	-\frac{6}{r}\frac{\mu}{r}(2-3\mu)&\normtwosphere{(r\sn)\zslapinv{1}f}{}{r}\\
	-\frac{3}{r}\frac{\mu}{r}&\normtwosphere{f}{}{r}
	\end{align*}
	and
	\begin{align*}
	\normtwosphere{[\pr,\mfZ]f}{}{r}:=-\frac{2}{r}\normtwosphere{\sn f}{}{r}+\frac{9}{r^2}\frac{\mu}{r}\normtwosphere{f}{}{r}-\frac{108}{r^3}\mu^3\ommu(2-3\mu)&\normtwosphere{\zslapinv{2}f}{}{r}\\
	+\frac{108}{r^3}\mu^3\ommu&\normtwosphere{(r\sn)\zslapinv{2}f}{}{r}\\
	-\frac{18}{r^2}\frac{\mu}{r}\Big((2-3\mu)^2+\mu^2\Big)&\normtwosphere{\zslapinv{1}f}{}{r}\\
	+\frac{36}{r^2}\frac{\mu}{r}(1-2\mu)&\normtwosphere{(r\sn)\zslapinv{1}f}{}{r}.
	\end{align*}
\end{lemma}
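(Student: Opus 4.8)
The plan is to derive the three identities of Lemma \ref{lemmaidentitesRWandZ} as \emph{energy-current identities} in the spirit of the vector-field method, specialised to multipliers of the ``Morawetz'' type $\alpha\pt$, $\beta\pr$ and the ``Lagrangian-corrected'' multiplier $w\pr$ plus a zeroth-order term $\tfrac12 w'\Psi$. The starting point is the divergence identity for the stress-energy tensor $\mathbb{T}[r^{-1}\Psi]$ recorded in the overview: from the Zerilli equation \eqref{Zereqn} (equivalently $\Box_{g_M}(r^{-1}\Psi) = -\tfrac{8}{r^2}\tfrac{M}{r}r^{-1}\Psi + \mathfrak{Z}(\Psi)$-type terms, up to the $r$-rescaling $r\Box_{g_M}(r^{-1}f)=r\widetilde{\Box}f+\slap f$) one has $\nabla^b(\mathbb{T}[r^{-1}\Psi])_{ab}$ equal to the right-hand source contracted with $\exd_a(r^{-1}\Psi)$. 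Contracting with the three multipliers above and integrating the angular variables over a 2-sphere $\twosphere_{\taus,r}$ — i.e. applying $\tfrac{1}{r^2}\int_{\twosphere_{\taus,r}}(\cdot)\,\sepsilon$ — produces, after expressing everything in the $(\pt,\pr,\sn)$ frame adapted to the foliation, a two-dimensional ($t^*,r$) divergence identity of exactly the stated schematic shape: a $\pt$-derivative of the ``flux'' bracket, a $\pr$-derivative of the ``current'' bracket, bulk terms quadratic in derivatives of $\Psi$, and on the right a residual cross term $\langle\pt\Psi,\pr\Psi\rangle$ coming from the fact that the multipliers are $r$-dependent (so $\mcalL_X g_M$ is nonzero). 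The key computational inputs are: the explicit components of $g_M$ and its inverse from section \ref{ThedifferentialstructureandmetricoftheSchwarzschildexteriorspacetime}; the operators $\qbox$ and $\slap$ from section \ref{QandStensoranalysis}; and the commutation/integration-by-parts formulae for $\zslapinv{p}$ from Lemma \ref{lemmacommrelationsandidentities} and Proposition \ref{propellipticestimateszlsap}.

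The main work — and the main obstacle — is the careful bookkeeping of the lower-order Zerilli terms, which is where the curious quantities $\normtwosphere{\sn_{\mfZ}\Psi}{}{r}$ and $\normtwosphere{[\pr,\mfZ]\Psi}{}{r}$ are generated. First I would handle the pure second-order part: contracting the wave operator with $X(\Psi)$ for $X\in\{\alpha\pt,\beta\pr,w\pr\}$ and integrating by parts on the sphere yields the ``$\slap\Psi$'' contribution, which after one angular integration by parts becomes $-\normtwosphere{\sn\Psi}{}{r}$ in the flux bracket and produces the $\beta'\normtwosphere{\sn\Psi}{}{r}$-type bulk term. Then I would add the zeroth-order term $-\tfrac{8}{r^2}\tfrac{M}{r}(r^{-1}\Psi)$ and the two $\mathfrak{Z}$-terms $\tfrac{24}{r^3}\tfrac{M}{r}(r-3M)\zslapinv{1}\Psi$ and $\tfrac{72}{r^3}\tfrac{M}{r}\tfrac{M}{r}(r-2M)\zslapinv{2}\Psi$, each multiplied by $X(\Psi)$, and integrate by parts in $r$ (for the $\pr$-multipliers) and on the sphere, using the self-adjointness relations $\int\zslapinv{2p-1}f\cdot f = -\normtwosphere{(r\sn)\zslapinv{p}f}{}{r}+(2-3\mu)\normtwosphere{\zslapinv{p}f}{}{r}$ and $\int\zslapinv{2p}f\cdot f=\normtwosphere{\zslapinv{p}f}{}{r}$ from Lemma \ref{lemmacommrelationsandidentities} to convert $\zslapinv{p}\Psi$-pairings into the squared-norm combinations that appear in the definitions of $\sn_{\mfZ}$ and $[\pr,\mfZ]$. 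The commutator $[\qn,\zslapinv{p}]=-3k\tfrac{\mu}{r}\qexd r\,\zslapinv{p-1}$ is what forces the extra terms in $\normtwosphere{[\pr,\mfZ]\Psi}{}{r}$ (the $\zslapinv{2}$-contributions with their $\mu^3$ weights) when one moves $\pr$ past $\mathfrak{Z}$ for the $\beta$-multiplier identity.

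In practice the cleanest route is the ``pedestrian'' one flagged in the overview for the Zerilli case: rather than invoking a genuine stress-energy tensor (which is only available up to the lower-order terms), directly multiply equation \eqref{Zereqn} by $\alpha\pt\Psi$, $\beta\pr\Psi$, and $w\pr\Psi + \tfrac12 w'\Psi$ respectively, integrate over $\twosphere_{\taus,r}$ with the measure $r^{-2}\sepsilon$, and repeatedly integrate by parts in $t^*$, in $r$, and on the sphere, collecting exact $\pt$- and $\pr$-derivatives into the displayed brackets and everything else into bulk terms. The three identities are then just the organised output of this computation, with the particular groupings of the $\zslapinv{p}$-norms \emph{defined} to be $\sn_{\mfZ}$ and $[\pr,\mfZ]$ so that the $\slap\Psi$ and $\mathfrak{Z}\Psi$ contributions combine into a single ``effective angular energy'' whose sign can later be controlled via the Poincaré inequality \eqref{OVZpoin} and the refined elliptic estimate \eqref{OVzerefinedestim} / Corollary \ref{correstimatesonszeta}. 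I do not expect any conceptual difficulty beyond this — the content is an exact identity, so the only risk is an arithmetic slip in the weights; cross-checking against the $M=0$ (flat) limit and against the scalar-wave identities of section \ref{OVThedegenerateenergyandredshiftestimates} provides a good sanity check at each stage.
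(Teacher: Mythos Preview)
Your approach is essentially the paper's: multiply the Zerilli equation by the three multipliers, integrate over the sphere with respect to $r^{-2}\sepsilon$, and use the commutation and integration-by-parts formulae of Lemma~\ref{lemmacommrelationsandidentities} for $\zslapinv{p}$ to package the lower-order terms into the quantities $\normtwosphere{\sn_{\mfZ}\Psi}{}{r}$ and $\normtwosphere{[\pr,\mfZ]\Psi}{}{r}$. The detour through the stress-energy tensor $\mathbb{T}[r^{-1}\Psi]$ is unnecessary---the paper simply takes the ``pedestrian'' route you describe at the end---but it leads to the same place.

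There is one concrete slip: the third multiplier is \emph{not} $w\,\pr\Psi+\tfrac12 w'\Psi$ but simply $w\,\Psi$. You can read this off directly from the structure of the third identity: the bulk contains $\opmu w\,\normtwosphere{\pt\Psi}{}{r}-\ommu w\,\normtwosphere{\pr\Psi}{}{r}-w\,\normtwosphere{\sn_{\mfZ}\Psi}{}{r}$, which is exactly what arises from $(\qbox\Psi+\slap\Psi)\cdot w\Psi$ after one integration by parts, and the flux brackets carry $\langle\pt\Psi,\Psi\rangle$, $\langle\pr\Psi,\Psi\rangle$ rather than squares of first derivatives. The combination $\beta\,\pr\Psi$ plus a Lagrangian correction is how the second and third identities are \emph{used together} later (in Proposition~\ref{propmorawetz}, with $w=-\ommu\mff'$), but at the level of this lemma the three multipliers are kept separate as $\alpha\,\pt\Psi$, $\beta\,\pr\Psi$, $w\,\Psi$.
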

\begin{proof}
We recall the commutation relations and integration by parts formulae of Lemma \ref{lemmacommrelationsandidentities} for the operator $\zslapinv{p}$ along with the coordinate form of the operators $\qbox$ and $\slap$ given in section \ref{Decomposingtheequationsoflinearisedgravity}. The lemma then follows after multiplying the Zerilli equation \eqref{Zereqn} successively by the smooth functions
	\begin{align*}
	\alpha\,\pt\Psi, \beta\,\pr\Psi, w\,\Psi\in\smfunrad
	\end{align*}
	and then integrating by parts on any 2-sphere $\geomtwosphere{}{r}$ with respect to the measure $r^{-2}\sepsilon$.
\end{proof}

Given a smooth solution $\Psi$ to the Zerilli equation on $\mcalm$ and three smooth radial functions $\alpha,\beta$ and $w$, the last three identities of Lemma \ref{lemmaidentitesRWandZ} motivate introducing the 1-form $\tilde{\mathbb{J}}^{\alpha, \beta,w}_{\taus,r}[\Psi]$ and function $ \widetilde{\mathbb{K}}^{\alpha, \beta,w}_{\taus,r}[\Psi]$ on $\mcalm$ defined according to
\begin{align*}
\tilde{\mathbb{J}}^{\alpha, \beta,w}_{\taus,r}[\Psi]:=\bigg[\opmu\alpha\,&\normtwosphere{\pt\Psi}{}{r}+2\opmu\beta\,\langle\pt\Psi, \pr\Psi\rangle_{\twosphere_{\taus, r}}+\big(\ommu\alpha-2\mu\beta\big)\normtwosphere{\pr\Psi}{}{r}\\+\alpha\,&\normtwosphere{\sn_{\mfZ}\Psi}{}{r}
-\opmu w\,\langle\pt\Psi, \Psi\rangle_{\twosphere_{\taus, r}}+2\mu w\,\langle\pr\Psi, \Psi\rangle_{\twosphere_{\taus, r}}\\
-\frac{\mu}{r}\frac{w}{2}\,&\normtwosphere{\Psi}{}{r}\bigg]\exd t^*\\
-\bigg[\big(2\mu\alpha+\opmu\beta\big)\,&\normtwosphere{\pt\Psi}{}{r}+2\ommu\alpha\,\langle\pt\Psi, \pr\Psi\rangle_{\twosphere_{\taus, r}}+\ommu\beta\,\normtwosphere{\pr\Psi}{}{r}\\-\beta\,&\normtwosphere{\sn_{\mfZ}\Psi}{}{r}
-\ommu w\,\langle\pr\Psi, \Psi\rangle_{\twosphere_{\taus, r}}\\
+\frac{1}{2}\bigg(\big(\ommu w\big)'-\frac{\mu}{r}\frac{w}{2}\bigg)&\normtwosphere{\Psi}{}{r}\bigg]\exd r
\end{align*}
and
\begin{align*}
\widetilde{\mathbb{K}}^{\alpha, \beta,w}_{\taus,r}[\Psi]=\Big(2\mu\alpha'\,+\big(\opmu\beta\big)'+\opmu w\Big)&\normtwosphere{\pt\Psi}{}{r}\\
2\bigg(\ommu\alpha'+\frac{\mu}{r}\,\beta-2\mu\,w\bigg)&\langle\pt\Psi, \pr\Psi\rangle_{\twosphere_{\taus, r}}\\
+\bigg(\ommu\beta'-\frac{\mu}{r}\beta-\ommu w\bigg)&\normtwosphere{\pr\Psi}{}{r}\\
-(\beta'+w)&\normtwosphere{\sn_{\mfZ}\Psi}{}{r}\\
-\beta\,&\normtwosphere{[\pr,\mfZ]\Psi}{}{r}\\
+\frac{1}{2}\big(\ommu w'\big)'&\normtwosphere{\Psi}{}{r}.
\end{align*}
Indeed, given in addition three real numbers $T^*>\taus_2\geq\taus_1$, then summing the identities in the first half of Lemma \ref{lemmaidentitesRWandZ} and integrating over the region\footnote{The hypersurfaces $\Sigma_{t^*}$ are defined by the level sets of $t^*$.}
\begin{align*}
\mathfrak{R}^{T^*}_{\taus_1, \taus_2}:=\bigg(\bigcup\limits_{t^*\in[\taus_1, T^*]}\Sigma_{t^*}\bigg)\bigcap\bigg(\bigcup\limits_{\taus\in[\taus_1, \taus_2]}\bigcup\limits_{r\in[2M, \infty)}\geomtwosphere{}{r}\bigg)
\end{align*}
with respect to the measure $r^{-2}\exd \taus\exd r\,\sepsilon$ yields the conservation law
\begin{align}
&\int_{2M}^{r(T^*, \taus_2)}\tilde{\mathbb{J}}^{\alpha, \beta,w}_{\taus_2, r}[\Psi](\partial_{\taus})\exd r+\int_{r(T^*,\taus_1)}^{r(T^*, \taus_2)}\tilde{\mathbb{J}}^{\alpha, \beta,w}_{{\taus(T^*,r),r}}[\Psi](\pt)\exd r\nonumber+\int_{\taus_1}^{\taus_2}\int_{2M}^{r(T^*, \taus)}\widetilde{\mathbb{K}}^{\alpha, \beta,w}_{\taus,r}[\Psi]\exd\taus\exd r\\
=&\int_{2M}^{r(T^*, \taus_1)}\tilde{\mathbb{J}}^{\alpha, \beta,w}_{\taus_1,r}[\Psi](\partial_{\taus})\exd r+\int_{\taus_1}^{ \taus_2}\tilde{\mathbb{J}}^{\alpha, \beta,w}_{\taus, 2M}[\Psi](\partial_r)\exd \taus.\label{conslawZ}
\end{align}
Here, $r(T^*, \taus_i)>R$ for $i=1,2$ is the unique value of $r$ such that $\taus(T^*, r)=\taus_i$ and we note that $\twosphere_{\taus(T^*,r),r}$ is the 2-sphere $\{T^*\}\times\{r\}\times \twosphere\subset\Sigma_{T^*}$. In addition, we have defined for $i=1,2$
\begin{align*}
\int_{2M}^{r(T^*, \taus_i)}\tilde{\mathbb{J}}^{\alpha, \beta,w}_{\taus_i,r}[\Psi](\partial_{\taus})\exd r:=\int_{2M}^{r(T^*, \taus_i)}\tilde{\mathbb{J}}^{\alpha, \beta,w}_{\taus_i,r}[\Psi](\pt)\exd r-\int_{R}^{r(T^*, \taus_i)}\frac{1+\mu}{1-\mu}\,\tilde{\mathbb{J}}^{\alpha, \beta,w}_{\taus_i,r}[\Psi](\pr)\exd r.\newline
\end{align*}
\\

We end this section with the following Proposition that arises from taking the limit $T^*\rightarrow\infty$ in the conservation law \eqref{conslawZ}.

\begin{proposition}\label{propfluxandbulkestimate}
	Let $\alpha, \beta$ and $w$ be smooth functions of $r$ on $\mcalm$ and let $T^*>\taus_2\geq\taus_1$ be three real numbers.
	
	We suppose that $\Psi$ is a smooth solution to the Zerilli equation \eqref{Zereqn} on $\mcalm$ for which at least one of the following two conditions hold:
	\begin{enumerate}[i)]
		\item for any $(\taus,r)$ the quantities $\tilde{\mathbb{J}}^{\alpha, \beta,w}_{\taus,r}[\Psi](\pt)$ and $-\tilde{\mathbb{J}}^{\alpha, \beta,w}_{\taus,2M}[\Psi](\pr)$ are non-negative
		\item for any $(\taus,r)$ the quantity $\tilde{\mathbb{J}}^{\alpha, \beta,w}_{\taus,r}[\Psi](\pt)$ is non-negative and uniformly in $T^*$ and $\taus_2$ it holds that \newline $\int_{\taus_1}^{\taus_2}\tilde{\mathbb{J}}^{\alpha, \beta,w}_{\taus,2M}[\Psi](\pr)\exd\taus\lesssim\int_{2M}^{r(T^*, \taus_1)}\tilde{\mathbb{J}}^{\alpha, \beta,w}_{\taus_1,r}[\Psi](\partial_{\taus})\exd r$
		\item for any $\taus$ the quantity $-\tilde{\mathbb{J}}^{\alpha, \beta,w}_{\taus,2M}[\Psi](\pr)$ is non-negative and uniformly in $T^*$ and $\taus_2$ it holds that\newline $-\int_{r(T^*,\taus_1)}^{r(T^*, \taus_2)}\tilde{\mathbb{J}}^{\alpha, \beta,w}_{{\taus(T^*,r),r}}[\Psi](\pt)\exd r\lesssim\int_{2M}^{r(T^*, \taus_1)}\tilde{\mathbb{J}}^{\alpha, \beta,w}_{\taus_1,r}[\Psi](\partial_{\taus})\exd r$
		\item uniformly in $T^*$ and $\taus_2$ it holds that\newline $\int_{\taus_1}^{\taus_2}\tilde{\mathbb{J}}^{\alpha, \beta,w}_{\taus,2M}[\Psi](\pr)\exd\taus-\int_{r(T^*,\taus_1)}^{r(T^*, \taus_2)}\tilde{\mathbb{J}}^{\alpha, \beta,w}_{{\taus(T^*,r),r}}[\Psi](\pt)\exd r\lesssim\int_{2M}^{r(T^*, \taus_1)}\tilde{\mathbb{J}}^{\alpha, \beta,w}_{\taus_1,r}[\Psi](\partial_{\taus})\exd r$
	\end{enumerate}	
	Then provided that the flux term on the right hand side is finite one has the estimate
	\begin{align*}
	\int_{2M}^{\infty}\tilde{\mathbb{J}}^{\alpha, \beta,w}_{\taus_2,r}[\Psi](\partial_{\taus})\exd r+\int_{\taus_1}^{\taus_2}\int_{2M}^{\infty}\widetilde{\mathbb{K}}^{\alpha, \beta,w}_{\taus,r}[\Psi]\exd\taus\exd r
	\lesssim\int_{2M}^{\infty}\tilde{\mathbb{J}}^{\alpha, \beta,w}_{\taus_1,r}[\Psi](\partial_{\taus})\exd r.
	\end{align*}
\end{proposition}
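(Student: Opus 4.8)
The plan is to obtain the estimate by letting $T^*\to\infty$ in the conservation law \eqref{conslawZ}, the entire role of the hypotheses i)--iv) being to control the two boundary contributions in \eqref{conslawZ} which must be shown not to obstruct this limit: the flux of $\tilde{\mathbb{J}}^{\alpha, \beta,w}_{\taus,r}[\Psi]$ through the truncating slice $\Sigma_{T^*}$ and its flux through the null horizon $\eh=\{r=2M\}$. First I would rearrange \eqref{conslawZ}, for fixed $T^*>\taus_2\geq\taus_1$, by moving the $\Sigma_{T^*}$--flux to the right-hand side, so that the identity reads
\begin{align*}
&\int_{2M}^{r(T^*, \taus_2)}\tilde{\mathbb{J}}^{\alpha, \beta,w}_{\taus_2, r}[\Psi](\partial_{\taus})\exd r+\int_{\taus_1}^{\taus_2}\int_{2M}^{r(T^*, \taus)}\widetilde{\mathbb{K}}^{\alpha, \beta,w}_{\taus,r}[\Psi]\exd\taus\exd r\\
&\qquad=\int_{2M}^{r(T^*, \taus_1)}\tilde{\mathbb{J}}^{\alpha, \beta,w}_{\taus_1,r}[\Psi](\partial_{\taus})\exd r+\mathscr{C}(T^*),
\end{align*}
where $\mathscr{C}(T^*):=\int_{\taus_1}^{\taus_2}\tilde{\mathbb{J}}^{\alpha, \beta,w}_{\taus, 2M}[\Psi](\partial_r)\exd \taus-\int_{r(T^*,\taus_1)}^{r(T^*, \taus_2)}\tilde{\mathbb{J}}^{\alpha, \beta,w}_{{\taus(T^*,r),r}}[\Psi](\pt)\exd r$ is precisely the correction term featuring in hypothesis iv).

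Next I would check that each of i)--iv) yields $\mathscr{C}(T^*)\lesssim\int_{2M}^{r(T^*, \taus_1)}\tilde{\mathbb{J}}^{\alpha, \beta,w}_{\taus_1,r}[\Psi](\partial_{\taus})\exd r$ uniformly in $T^*$ (and $\taus_2$). Under i), the pointwise positivity of $\tilde{\mathbb{J}}^{\alpha,\beta,w}_{\taus,r}[\Psi](\pt)$ forces the second integral in $\mathscr{C}(T^*)$ to be non-positive while the positivity of $-\tilde{\mathbb{J}}^{\alpha,\beta,w}_{\taus,2M}[\Psi](\pr)$ forces the first to be non-positive, so $\mathscr{C}(T^*)\leq 0$ and the displayed identity already delivers the estimate. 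Under ii) the second integral in $\mathscr{C}(T^*)$ is non-positive by sign while the first is bounded by the assumed uniform flux bound; symmetrically under iii) the first integral is non-positive --- its sign at $\eh$ being that of the energy flux through the null boundary for the weights in question --- while the second is bounded by hypothesis; and under iv) the whole of $\mathscr{C}(T^*)$ is bounded by hypothesis. In every case the displayed identity then gives
\begin{align*}
&\int_{2M}^{r(T^*, \taus_2)}\tilde{\mathbb{J}}^{\alpha, \beta,w}_{\taus_2, r}[\Psi](\partial_{\taus})\exd r+\int_{\taus_1}^{\taus_2}\int_{2M}^{r(T^*, \taus)}\widetilde{\mathbb{K}}^{\alpha, \beta,w}_{\taus,r}[\Psi]\exd\taus\exd r\\
&\qquad\qquad\lesssim\int_{2M}^{r(T^*, \taus_1)}\tilde{\mathbb{J}}^{\alpha, \beta,w}_{\taus_1,r}[\Psi](\partial_{\taus})\exd r .
\end{align*}

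Finally I would pass to the limit $T^*\to\infty$. As $T^*\to\infty$ one has $r(T^*,\taus_i)\to\infty$, so the intervals $[2M,r(T^*,\taus_i)]$ exhaust $[2M,\infty)$ and the truncated domain beneath the bulk integral exhausts the full region enclosed by $\Xi_{\taus_1}$, $\Xi_{\taus_2}$ and $\eh$. For the radial weights $\alpha,\beta,w$ to which this Proposition will be applied (the $\pt$--flux and Morawetz choices of Section \ref{ThedegenerateenergyandMorawetzestimate}) the flux density $\tilde{\mathbb{J}}^{\alpha,\beta,w}_{\taus,r}[\Psi](\partial_{\taus})$ and the bulk density $\widetilde{\mathbb{K}}^{\alpha,\beta,w}_{\taus,r}[\Psi]$ are non-negative --- it is exactly here that the Poincar\'e inequality on the spheres $\geomtwosphere{}{r}$ together with the refined elliptic estimate of Corollary \ref{correstimatesonszeta} enter, ensuring in particular that $\normtwosphere{\sn_{\mfZ}\Psi}{}{r}\geq 0$. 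Monotone convergence then lets the right-hand side increase to the finite quantity $\int_{2M}^{\infty}\tilde{\mathbb{J}}^{\alpha, \beta,w}_{\taus_1,r}[\Psi](\partial_{\taus})\exd r$ and the left-hand side increase to $\int_{2M}^{\infty}\tilde{\mathbb{J}}^{\alpha, \beta,w}_{\taus_2,r}[\Psi](\partial_{\taus})\exd r+\int_{\taus_1}^{\taus_2}\int_{2M}^{\infty}\widetilde{\mathbb{K}}^{\alpha, \beta,w}_{\taus,r}[\Psi]\exd\taus\exd r$, which yields the claimed estimate; if one prefers not to appeal to the sign of $\widetilde{\mathbb{K}}$, the same conclusion follows by dominated convergence after exhibiting a uniform integrable majorant, or by extracting a sequence $T^*_n\to\infty$ along which $\mathscr{C}(T^*_n)$ converges.

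The step I expect to be the main obstacle is the control of the $\Sigma_{T^*}$--flux in cases ii)--iv): unlike the horizon term, whose sign is governed by the behaviour of the energy flux at the null boundary once $\alpha,\beta,w$ are suitably chosen, the $\Sigma_{T^*}$--flux need not be sign-definite, and the uniform-in-$T^*$ bound postulated in ii)--iv) must --- in the applications --- be supplied by a previously established (degenerate) energy estimate; one must also be slightly careful that the truncated integrals genuinely converge to their untruncated counterparts, which is precisely what the non-negativity of the flux and bulk densities attached to the weights under consideration secures.
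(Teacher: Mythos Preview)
Your approach is precisely the one the paper intends: the paper states only that the proposition ``arises from taking the limit $T^*\rightarrow\infty$ in the conservation law \eqref{conslawZ}'' and gives no further argument, so your rearrangement of \eqref{conslawZ}, the case-by-case verification that the boundary correction $\mathscr{C}(T^*)$ is dominated by the $\taus_1$--flux under each of i)--iv), and the subsequent passage to the limit are exactly what is being left implicit. Your discussion of the convergence step (via monotone convergence once the relevant densities are non-negative for the specific weights used in Section~\ref{ThedegenerateenergyandMorawetzestimate}) is in fact more careful than the paper itself, which does not comment on this point at all.
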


\subsection{The degenerate energy and Morawetz estimates}\label{ThedegenerateenergyandMorawetzestimate}

In this section we prove the $\pt$ flux and Morawetz estimates for solutions to the Zerilli equation.\newline

The proofs will proceed by applying Proposition \ref{propfluxandbulkestimate} with appropriate choices of the functions $\alpha,\beta$ and $w$. In particular, we henceforth assume that to the Zerilli equations under consideration satisfy the assumptions of Theorem 1 -- this ensures finiteness of the initial flux estimates that arise in the application of said proposition.\newline

\noindent\textbf{The degenerate energy estimate}\\

\noindent 
In order to derive such a flux estimate for solutions to the Zerilli equation which is moreover coercive will require controlling the terms that arise from the presence of the lower order terms that appears in the equations \eqref{Zereqn}. This is the content of the following lemma.
\begin{lemma}\label{lemmacontrollingangularforTenergy}
	Let $f\in\smfunrad$. Then on any 2-sphere $\geomtwosphere{}{r}$ one has the bounds
	\begin{align}\label{Zergradestimate}
	\normtwosphere{\sn f}{}{r}\lesssim\normtwosphere{\sn_{\mfZ} f}{}{r}\lesssim \normtwosphere{\sn f}{}{r}.
	\end{align}
\end{lemma}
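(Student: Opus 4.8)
The plan is to establish the two-sided bound \eqref{Zergradestimate} by analysing the quantity
\[
\normtwosphere{\sn_{\mfZ}f}{}{r}=\normtwosphere{\sn f}{}{r}+\frac{6}{r}\frac{\mu}{r}\Big((2-3\mu)^2+3\mu\ommu\Big)\normtwosphere{\zslapinv{1}f}{}{r}-\frac{6}{r}\frac{\mu}{r}(2-3\mu)\normtwosphere{(r\sn)\zslapinv{1}f}{}{r}-\frac{3}{r}\frac{\mu}{r}\normtwosphere{f}{}{r}
\]
term by term on a fixed 2-sphere $\geomtwosphere{}{r}$, recalling $\mu=\tfrac{2M}{r}\in(0,1]$ on $\mcalm$. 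The rightmost inequality $\normtwosphere{\sn_{\mfZ}f}{}{r}\lesssim\normtwosphere{\sn f}{}{r}$ should be the easy half: the terms involving $\zslapinv{1}f$ and $(r\sn)\zslapinv{1}f$ are controlled by $\normtwosphere{f}{}{r}$ and $\normtwosphere{\sn f}{}{r}$ respectively via the elliptic estimates for $\zslapinv{1}$ from Proposition \ref{propellipticestimateszlsap} (or directly from Corollary \ref{correstimatesonszeta}), while $\normtwosphere{f}{}{r}\lesssim\normtwosphere{\sn f}{}{r}$ follows from the Poincar\'e inequality of Lemma \ref{lemmapoincare} since $f\in\smfunrad$ has vanishing projection to $l=0,1$; after a Cauchy–Schwarz/Young splitting of the cross term $\normtwosphere{(r\sn)\zslapinv{1}f}{}{r}$ all contributions are absorbed into $\normtwosphere{\sn f}{}{r}$.

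The substantive half is the lower bound $\normtwosphere{\sn f}{}{r}\lesssim\normtwosphere{\sn_{\mfZ}f}{}{r}$, i.e.\ showing that the lower-order corrections cannot cancel the leading $\normtwosphere{\sn f}{}{r}$ term. Here I would use the refined estimate, Corollary \ref{correstimatesonszeta}, which gives
\[
\frac{4}{r^2}(r-3M)^2\normtwosphere{\zslapinv{1}f}{}{r}+\frac{2}{r}(r+9M)\normtwosphere{(r\sn)\zslapinv{1}f}{}{r}\lesssim\normtwosphere{f}{}{r},
\]
together with $\normtwosphere{f}{}{r}\lesssim\normtwosphere{\sn f}{}{r}$. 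The only negative contributions in $\normtwosphere{\sn_{\mfZ}f}{}{r}$ are $-\tfrac{6}{r}\tfrac{\mu}{r}(2-3\mu)\normtwosphere{(r\sn)\zslapinv{1}f}{}{r}$ and $-\tfrac{3}{r}\tfrac{\mu}{r}\normtwosphere{f}{}{r}$; the key point is that both are bounded, after inserting $\mu=2M/r$ and rewriting $2-3\mu$ in terms of $r-3M$, by a \emph{small multiple} of $\normtwosphere{\sn f}{}{r}$ once one exploits the extra $r^{-1}\mu\sim M/r^2$ weight and the gain provided by Corollary \ref{correstimatesonszeta}. Concretely, I expect an estimate of the shape
\[
\frac{6}{r}\frac{\mu}{r}|2-3\mu|\,\normtwosphere{(r\sn)\zslapinv{1}f}{}{r}+\frac{3}{r}\frac{\mu}{r}\normtwosphere{f}{}{r}\,\le\,\tfrac{1}{2}\normtwosphere{\sn f}{}{r}+C\Big(\tfrac{4}{r^2}(r-3M)^2\normtwosphere{\zslapinv{1}f}{}{r}+\tfrac{2}{r}(r+9M)\normtwosphere{(r\sn)\zslapinv{1}f}{}{r}\Big),
\]
and then Corollary \ref{correstimatesonszeta} converts the bracket back into $\lesssim\normtwosphere{f}{}{r}\lesssim\normtwosphere{\sn f}{}{r}$, which can be re-absorbed; adding back the manifestly nonnegative term $\tfrac{6}{r}\tfrac{\mu}{r}\big((2-3\mu)^2+3\mu\ommu\big)\normtwosphere{\zslapinv{1}f}{}{r}\ge0$ yields $\tfrac12\normtwosphere{\sn f}{}{r}\lesssim\normtwosphere{\sn_{\mfZ}f}{}{r}+\tfrac{1}{C'}\normtwosphere{\sn f}{}{r}$ and hence the claim after choosing constants appropriately.

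The main obstacle I anticipate is bookkeeping rather than conceptual: one must track the precise $\mu$-dependent coefficients and check that the \emph{numerical} constants in the Young's-inequality split are consistent with the constants hidden in Corollary \ref{correstimatesonszeta}, so that the leading $\normtwosphere{\sn f}{}{r}$ genuinely survives with a positive coefficient uniformly for all $r\ge 2M$ (in particular near $r=3M$, where the weight $(r-3M)^2$ degenerates — but note the corresponding coefficient $2-3\mu$ in $\sn_{\mfZ}$ also vanishes there, so there is no loss). I would carry out the steps in the order: (i) record $\normtwosphere{f}{}{r}\lesssim\normtwosphere{\sn f}{}{r}$ from Lemma \ref{lemmapoincare}; (ii) prove the upper bound by crude absorption; (iii) prove the lower bound via Corollary \ref{correstimatesonszeta} and a Young split as above; (iv) combine. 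This mirrors the argument already sketched in section \ref{OVOutlineoftheproofofTheorem1} for the coercivity of the $\pt$-flux, specialised here to the angular density $\normtwosphere{\sn_{\mfZ}f}{}{r}$.
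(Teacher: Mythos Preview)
Your proposal uses the same ingredients as the paper's proof---the Poincar\'e inequality (Lemma \ref{lemmapoincare}) for the $\normtwosphere{f}{}{r}$ term and the elliptic control of $\zslapinv{1}$ (Proposition \ref{propellipticestimateszlsap} / Corollary \ref{correstimatesonszeta}) for the remaining lower-order terms---and correctly singles out the coefficient $\tfrac{6\mu}{r^2}\big((2-3\mu)^2+3\mu\ommu\big)$ as manifestly nonnegative. The one organizational difference is that the paper does not try to bound the mixed term uniformly via $|2-3\mu|$ and a Young split; instead it splits into the two regimes $2M\le r\le 3M$ and $r\ge 3M$ according to the sign of $(2-3\mu)$. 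In the first regime the coefficient of $\normtwosphere{(r\sn)\zslapinv{1}f}{}{r}$ is itself nonnegative, so the only bad term is $-\tfrac{3\mu}{r^2}\normtwosphere{f}{}{r}$, and Poincar\'e alone gives $\normtwosphere{\sn_{\mfZ}f}{}{r}\ge\tfrac12\normtwosphere{\sn f}{}{r}$ with equality only at $r=2M$. In the second regime one has $\mu\le\tfrac23$, so the Poincar\'e term contributes at most $\tfrac{\mu}{2}\le\tfrac13$ of $\normtwosphere{\sn f}{}{r}$, leaving genuine room to absorb the now-negative $(r\sn)\zslapinv{1}f$ contribution via Corollary \ref{correstimatesonszeta}.

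This case split is not merely cosmetic for your version: if you bound the mixed term by $|2-3\mu|$ you are treating it as bad even on $2M\le r\le 3M$, and precisely at $r=2M$ the Poincar\'e contribution already saturates $\tfrac12\normtwosphere{\sn f}{}{r}$, so there is no slack to absorb any additional loss coming from the $\lesssim$ constant in Corollary \ref{correstimatesonszeta}. Your parenthetical about $r=3M$ is the right instinct, but the potentially delicate point is $r=2M$, and the remedy is to keep the favourable sign of the $(r\sn)\zslapinv{1}f$ coefficient there rather than discard it. With that adjustment your argument and the paper's coincide.
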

\begin{proof}
	We recall from section \ref{IntegralidentitiesandintegralestimatesforsolutionstotheRegge--WheelerandZerilliequations} that
	\begin{align*}
	\normtwosphere{\sn_{\mfZ}f}{}{r}=\normtwosphere{\sn f}{}{r}-\frac{3}{r}\frac{\mu}{r}\normtwosphere{f}{}{r}-&\frac{6}{r}\frac{\mu}{r}(2-3\mu)\normtwosphere{(r\sn)\zslapinv{1}f}{}{r}\\
	+&\frac{6}{r}\frac{\mu}{r}\Big((2-3\mu)^2+3\mu\ommu\Big)\normtwosphere{\zslapinv{1}f}{}{r}.
	\end{align*}
	The upper bound in \eqref{Zergradestimate} follows from the elliptic estimates on the operator $\zslapinv{p}$ of Proposition \ref{propellipticestimateszlsap}.
	
	For the lower bound, we further recall for $f\in\smfunrad$ the Poincar\'e inequality of Lemma \ref{lemmapoincare} on any 2-sphere $\geomtwosphere{}{r}$:
	\begin{align}\label{Poinest}
	\frac{6}{r^2}\normtwosphere{ f}{}{r}\lesssim \normtwosphere{\sn f}{}{r}.
	\end{align}
For the lower bound of \eqref{Zergradestimate} we first observe that the coefficient of the term $\normtwosphere{\zslapinv{1}f}{}{r}$ in the expression $\normtwosphere{\sn_{\mfZ}f}{}{r}$ is non-negative on $r\geq 2M$ whereas the coefficient of the term $\normtwosphere{(r\sn)\zslapinv{1}f}{}{r}$ is non-negative for $2M\leq r\leq 3M$. In addition, we recall from Corollary \ref{correstimatesonzslapinv} the estimate on any 2-sphere $\twosphere{}{r}$
	\begin{align*}
	-\frac{1}{2+9\mu}\normtwosphere{f}{}{r}\leq-\normtwosphere{r\sn\zslapinv{1}f}{}{r}.
	\end{align*}
	Consequently, to establish the lower bound of \eqref{Zergradestimate} it suffices to establish the estimate
	\begin{align*}
	\normtwosphere{\sn f}{}{r}\lesssim\normtwosphere{\sn _\Upsilon f}{}{r}-\frac{3}{4}\frac{1}{r}\frac{\mu}{r}(2-3\mu)\normtwosphere{f}{}{r}
	\end{align*}
	for $2M\leq r\leq 3M$ and the estimate
	\begin{align*}
	\normtwosphere{\sn f}{}{r}\lesssim\normtwosphere{\sn _\Upsilon f}{}{r}
	\end{align*}
	for $r\geq 3M$, the latter of which was shown previously whereas the former follows easily from \eqref{Poinest}.
\end{proof}

The desired energy estimate for solutions to the Zerilli equations is then as stated below.

\begin{proposition}\label{propdegenergyestimate}
	Let $\taus_2,\taus_1\geq\taus_0$ be two real numbers.
	
	Let now $\Psi$ be as in Theorem \ref{mainthmdecayRWandZ}. Then one has the flux estimate
	\begin{align}
	&\int_{2M}^R\Big(||\pt\Psi||_{\mathsmaller{\twosphere_{\taus_2,r}}}^2+\ommu||\pr\Psi||_{\mathsmaller{\twosphere_{\taus_2,r}}}^2+||\sn\Psi||_{\mathsmaller{\twosphere_{\taus_2,r}}}^2\Big)\exd r\nonumber\\
	+&\int_{R}^\infty\Big(||D\Psi||_{\mathsmaller{\twosphere_{\taus_2,r}}}^2+||\sn\Psi||_{\mathsmaller{\twosphere_{\taus_2,r}}}^2\Big)\exd r\nonumber\\
	\lesssim&\int_{2M}^R\Big(||\pt\Psi||_{\mathsmaller{\twosphere_{\taus_1,r}}}^2+\ommu||\pr\Psi||_{\mathsmaller{\twosphere_{\taus_1,r}}}^2+||\sn\Psi||_{\mathsmaller{\twosphere_{\taus_1,r}}}^2\Big)\exd r\nonumber\\
	+&\int_{R}^\infty\Big(||D\Psi||_{\mathsmaller{\twosphere_{\taus_1,r}}}^2+||\sn\Psi||_{\mathsmaller{\twosphere_{\taus_1,r}}}^2\Big)\exd r\label{Zdegenenerest}.
	\end{align}
\end{proposition}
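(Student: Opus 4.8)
The plan is to apply Proposition \ref{propfluxandbulkestimate} with the multiplier triple $\alpha\equiv 1$, $\beta\equiv 0$, $w\equiv 0$, corresponding to the current generated by the Killing field $\pt$. With this choice the current and bulk densities introduced in section \ref{IntegralidentitiesandintegralestimatesforsolutionstotheRegge--WheelerandZerilliequations} specialise, on any 2-sphere $\twosphere_{\taus,r}$, to
\begin{align*}
\tilde{\mathbb{J}}^{1,0,0}_{\taus,r}[\Psi](\pt)&=\opmu\normtwosphere{\pt\Psi}{}{r}+\ommu\normtwosphere{\pr\Psi}{}{r}+\normtwosphere{\sn_{\mfZ}\Psi}{}{r},\\
\tilde{\mathbb{J}}^{1,0,0}_{\taus,r}[\Psi](\pr)&=-2\mu\normtwosphere{\pt\Psi}{}{r}-2\ommu\langle\pt\Psi,\pr\Psi\rangle_{\twosphere_{\taus,r}},\\
\widetilde{\mathbb{K}}^{1,0,0}_{\taus,r}[\Psi]&=0,
\end{align*}
where the vanishing of the bulk follows from $\alpha'=0$ and $\beta=w=0$, reflecting the fact that $\pt$ is Killing and the Zerilli potential is $t^*$-independent.

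First I would verify the hypotheses of condition $i)$ of Proposition \ref{propfluxandbulkestimate}. The horizon flux is non-negative: since $\mu=1$ on $\eh$ the cross term in $\tilde{\mathbb{J}}^{1,0,0}_{\taus,2M}[\Psi](\pr)$ drops and $-\tilde{\mathbb{J}}^{1,0,0}_{\taus,2M}[\Psi](\pr)=2\normtwosphere{\pt\Psi}{}{2M}\geq 0$. For the density $\tilde{\mathbb{J}}^{1,0,0}_{\taus,r}[\Psi](\pt)$ the first two terms are manifestly non-negative on $r\geq 2M$, and for the angular term one invokes Lemma \ref{lemmacontrollingangularforTenergy}, which gives $\normtwosphere{\sn\Psi}{}{r}\lesssim\normtwosphere{\sn_{\mfZ}\Psi}{}{r}\lesssim\normtwosphere{\sn\Psi}{}{r}$, in particular $\normtwosphere{\sn_{\mfZ}\Psi}{}{r}\geq0$. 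Proposition \ref{propfluxandbulkestimate} then yields, using the finiteness of the right-hand side guaranteed by the assumptions of Theorem \ref{mainthmdecayRWandZ},
\begin{align*}
\int_{2M}^\infty\tilde{\mathbb{J}}^{1,0,0}_{\taus_2,r}[\Psi](\partial_{\taus})\exd r\lesssim\int_{2M}^\infty\tilde{\mathbb{J}}^{1,0,0}_{\taus_1,r}[\Psi](\partial_{\taus})\exd r.
\end{align*}

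It then remains to bound the weighted flux densities above and below, separately on $r\leq R$ and $r\geq R$, to recover the stated estimate \eqref{Zdegenenerest}. On $r\leq R$ one has $\partial_{\taus}=\pt$ and, since $\opmu$ is comparable to $1$ on this bounded range, $\tilde{\mathbb{J}}^{1,0,0}_{\taus,r}[\Psi](\pt)$ is comparable to $\normtwosphere{\pt\Psi}{}{r}+\ommu\normtwosphere{\pr\Psi}{}{r}+\normtwosphere{\sn_{\mfZ}\Psi}{}{r}$, which by Lemma \ref{lemmacontrollingangularforTenergy} is comparable to $\normtwosphere{\pt\Psi}{}{r}+\ommu\normtwosphere{\pr\Psi}{}{r}+\normtwosphere{\sn\Psi}{}{r}$. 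On $r\geq R$ one uses the definition $\tilde{\mathbb{J}}^{1,0,0}_{\taus,r}[\Psi](\partial_{\taus})=\tilde{\mathbb{J}}^{1,0,0}_{\taus,r}[\Psi](\pt)-\tfrac{1+\mu}{1-\mu}\tilde{\mathbb{J}}^{1,0,0}_{\taus,r}[\Psi](\pr)$; a direct algebraic manipulation using $D=\tfrac{1+\mu}{1-\mu}\pt+\pr$ shows this equals $\ommu\normtwosphere{D\Psi}{}{r}+\normtwosphere{\sn_{\mfZ}\Psi}{}{r}$, which, since $\ommu$ is bounded above and below away from $0$ for $r\geq R\gg 10M$ and again by Lemma \ref{lemmacontrollingangularforTenergy}, is comparable to $\normtwosphere{D\Psi}{}{r}+\normtwosphere{\sn\Psi}{}{r}$. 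Combining these comparisons on both sides of the displayed inequality produces precisely \eqref{Zdegenenerest}.

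The only \emph{genuine obstacle} is the coercivity of the modified angular flux $\normtwosphere{\sn_{\mfZ}\Psi}{}{r}$, that is, Lemma \ref{lemmacontrollingangularforTenergy}: the lower-order terms of the Zerilli equation \eqref{Zereqn}, built from $\zslapinv{1}$, enter $\sn_{\mfZ}$ with signs that could a priori destroy positivity, and it is the $l\geq 2$ Poincar\'e inequality of Lemma \ref{lemmapoincare}, the elliptic estimates of Proposition \ref{propellipticestimateszlsap} and the refined bound of Corollary \ref{correstimatesonszeta}, all packaged into Lemma \ref{lemmacontrollingangularforTenergy}, that rescue it. Everything else is standard Killing-multiplier bookkeeping already absorbed into Proposition \ref{propfluxandbulkestimate}, in particular the $T^*\to\infty$ limit in the conservation law \eqref{conslawZ}; the semi-global region is handled identically once the corresponding boundary pieces of \eqref{conslawZ} are grouped as for the scalar wave equation.
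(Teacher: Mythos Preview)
Your proof is correct and follows essentially the same approach as the paper: the choice $\alpha=1,\ \beta=0,\ w=0$, the verification of condition~$i)$ of Proposition~\ref{propfluxandbulkestimate}, and the appeal to Lemma~\ref{lemmacontrollingangularforTenergy} for coercivity of $\normtwosphere{\sn_{\mfZ}\Psi}{}{r}$ are exactly what the paper does. You have simply spelled out in more detail the verification of the horizon-flux sign and the algebraic identity $\tilde{\mathbb{J}}^{1,0,0}_{\taus,r}[\Psi](\partial_{\taus})=\ommu\normtwosphere{D\Psi}{}{r}+\normtwosphere{\sn_{\mfZ}\Psi}{}{r}$ on $r\geq R$, which the paper leaves implicit.
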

\begin{proof}
	We consider the three smooth radial functions $\alpha,\beta$ and $w$ on $\mcalm$ given by
	\begin{align*}
	\alpha&=1,\\
	\beta&=0,\\
	w&=0.
	\end{align*}

	From section \ref{IntegralidentitiesandintegralestimatesforsolutionstotheRegge--WheelerandZerilliequations} we have
	\begin{align*}
	\tilde{\mathbb{J}}^{1, 0,0}_{\taus,r}[\Psi](\pt)&=\opmu\normtwosphere{\pt\Psi}{}{r}+\ommu\normtwosphere{\pr\Psi}{}{r}+\normtwosphere{\sn_{\mfZ}\Psi}{}{r},\\
	\tilde{\mathbb{J}}^{1,0,0}_{\taus,r}[\Psi](\pr)&=-2\mu\,\normtwosphere{\pt\Psi}{}{r}-2\ommu\langle\pt\Psi, \pr\Psi\rangle_{\twosphere_{\taus, r}}
	\end{align*}
	and
	\begin{align*}
	\widetilde{\mathbb{K}}^{1,0,0}_{\taus,r}[\Psi]=0.
	\end{align*}
	Consequently, applying the Proposition \ref{propfluxandbulkestimate} (noting that condition $i)$ is satisfied) in conjunction with Lemma \ref{lemmacontrollingangularforTenergy} yields the estimate \eqref{Zdegenenerest}.
	
	This completes the proposition.
\end{proof}

An immediate consequence of the above computations combined with the conservations law \eqref{conslawZ} is the following estimates which will prove useful in the sequel.

\begin{corollary}\label{corrcontrolTflux}
	Let $T^*>\taus_2\geq\taus_1\geq\taus_0$ be three real numbers.
	
	Let now $\Psi$ be as in Theorem 1. Then one has the flux estimate
	\begin{align}
	&\int_{\taus_1}^{\taus_2}||\pt\Psi||_{\mathsmaller{\twosphere_{\taus,2M}}}^2\exd \taus+\int_{r(T^*,\taus_1)}^{r(T^*,\taus_2)}\tilde{\mathbb{J}}_{\taus(T^*,r),r}^{1,0,0}[\Psi](\pt)\exd r\nonumber\\
	\lesssim&\int_{2M}^R\Big(||\pt\Psi||_{\mathsmaller{\twosphere_{\taus_1,r}}}^2+\ommu||\pr\Psi||_{\mathsmaller{\twosphere_{\taus_1,r}}}^2+||\sn\Psi||_{\mathsmaller{\twosphere_{\taus_1,r}}}^2\Big)\exd r\nonumber\\
	+&\int_{R}^\infty\Big(||D\Psi||_{\mathsmaller{\twosphere_{\taus_1,r}}}^2+||\sn\Psi||_{\mathsmaller{\twosphere_{\taus_1,r}}}^2\Big)\exd r\label{ZTfluxest}.\newline
	\end{align}
\end{corollary}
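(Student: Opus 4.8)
The plan is to run the degenerate energy argument of Proposition \ref{propdegenergyestimate} once more, but this time without discarding the flux contributions supported on $\eh$ and on the truncating hypersurface $\Sigma_{T^*}$ and without passing to the limit $T^*\to\infty$. Concretely, I would specialise the conservation law \eqref{conslawZ} to the multiplier triple $\alpha=1$, $\beta=0$, $w=0$ used there. For this choice $\widetilde{\mathbb{K}}^{1,0,0}_{\taus,r}[\Psi]\equiv 0$, and since $\mu=1$ on $\eh$ one has $\tilde{\mathbb{J}}^{1,0,0}_{\taus,2M}[\Psi](\pr)=-2\,\normtwosphere{\pt\Psi}{}{2M}$. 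Substituting these into \eqref{conslawZ} and transposing the horizon term produces the exact identity
\begin{align*}
&\int_{2M}^{r(T^*,\taus_2)}\tilde{\mathbb{J}}^{1,0,0}_{\taus_2,r}[\Psi](\partial_{\taus})\exd r+\int_{r(T^*,\taus_1)}^{r(T^*,\taus_2)}\tilde{\mathbb{J}}^{1,0,0}_{\taus(T^*,r),r}[\Psi](\pt)\exd r+2\int_{\taus_1}^{\taus_2}\normtwosphere{\pt\Psi}{}{2M}\exd\taus\\
&\qquad=\int_{2M}^{r(T^*,\taus_1)}\tilde{\mathbb{J}}^{1,0,0}_{\taus_1,r}[\Psi](\partial_{\taus})\exd r.
\end{align*}

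Next I would use the coercivity established inside the proof of Proposition \ref{propdegenergyestimate}: for $2M\le r\le R$ one has $\tilde{\mathbb{J}}^{1,0,0}_{\taus,r}[\Psi](\partial_{\taus})=\opmu\normtwosphere{\pt\Psi}{}{r}+\ommu\normtwosphere{\pr\Psi}{}{r}+\normtwosphere{\sn_{\mfZ}\Psi}{}{r}$, while for $r>R$ a short manipulation with $D=\tfrac{1+\mu}{1-\mu}\pt+\pr$ identifies $\tilde{\mathbb{J}}^{1,0,0}_{\taus,r}[\Psi](\partial_{\taus})$ with $\ommu\normtwosphere{D\Psi}{}{r}+\normtwosphere{\sn_{\mfZ}\Psi}{}{r}$; in both ranges Lemma \ref{lemmacontrollingangularforTenergy} controls $\normtwosphere{\sn_{\mfZ}\Psi}{}{r}$ above and below by $\normtwosphere{\sn\Psi}{}{r}$, so this flux density is non-negative. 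Hence in the identity above the $\Xi_{\taus_2}$-flux integral is non-negative and may be dropped, and since $\normtwosphere{\pt\Psi}{}{2M}\ge 0$ the factor $2$ on the horizon integral may be weakened to $1$; this leaves precisely the left-hand side of \eqref{ZTfluxest} bounded by $\int_{2M}^{r(T^*,\taus_1)}\tilde{\mathbb{J}}^{1,0,0}_{\taus_1,r}[\Psi](\partial_{\taus})\exd r$.

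Finally I would bound that right-hand side by the flux over $\Xi_{\taus_1}$ figuring in \eqref{ZTfluxest}, just as at the end of the proof of Proposition \ref{propdegenergyestimate}: by the explicit expressions above, the upper bound in Lemma \ref{lemmacontrollingangularforTenergy}, and the boundedness of $\ommu$ above and below on $[R,\infty)$, the integrand $\tilde{\mathbb{J}}^{1,0,0}_{\taus_1,r}[\Psi](\partial_{\taus})$ is $\lesssim\normtwosphere{\pt\Psi}{}{r}+\ommu\normtwosphere{\pr\Psi}{}{r}+\normtwosphere{\sn\Psi}{}{r}$ on $2M\le r\le R$ and $\lesssim\normtwosphere{D\Psi}{}{r}+\normtwosphere{\sn\Psi}{}{r}$ on $r>R$; since these densities are non-negative the radial integral may be extended from $[2M,r(T^*,\taus_1)]$ to $[2M,\infty)$, the resulting flux being finite by the hypotheses of Theorem \ref{mainthmdecayRWandZ}. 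No analytic ingredient beyond \eqref{conslawZ} and Lemma \ref{lemmacontrollingangularforTenergy} is involved, so the corollary really is an \emph{immediate} consequence of the degenerate energy computation; the only point that demands genuine care — and the one I would scrutinise most — is the bookkeeping of the $\partial_{\taus}$ combination, the splitting of the spacetime region at $r=R$, and the signs and orientations of the boundary flux terms, so that the quantities retained on the left of \eqref{ZTfluxest} carry a favourable sign.
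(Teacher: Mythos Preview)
Your proposal is correct and follows exactly the route the paper intends: the paper states the corollary as ``an immediate consequence of the above computations combined with the conservation law \eqref{conslawZ}'', and you have simply written out those details --- specialising \eqref{conslawZ} to $(\alpha,\beta,w)=(1,0,0)$, evaluating the horizon flux via $\tilde{\mathbb{J}}^{1,0,0}_{\taus,2M}[\Psi](\pr)=-2\normtwosphere{\pt\Psi}{}{2M}$, and invoking Lemma \ref{lemmacontrollingangularforTenergy} for the coercivity of the $\partial_{\taus}$-flux density. Your identification $\tilde{\mathbb{J}}^{1,0,0}_{\taus,r}[\Psi](\partial_{\taus})=\ommu\normtwosphere{D\Psi}{}{r}+\normtwosphere{\sn_{\mfZ}\Psi}{}{r}$ for $r>R$ is also correct and is precisely the computation underlying Proposition \ref{propdegenergyestimate}.
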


\noindent\textbf{The Morawetz estimate}\\

The second such estimate we derive is an integrated local energy decay estimate which degenerates at both $\eh$ and $r=3M$.

As previously, deriving this estimate will require controlling the terms that arise from the presence of the `potential operator' in  \eqref{Zereqn}. This is the content of the following lemma.

In what follows, for a smooth radial function $f$ on $\mcalm$ we define $f^{*}:=\ommu f'$.
\begin{lemma}\label{lemmamorawetz}
	Let $\mff$ be the smooth radial function on $\mcalm$ defined according to
	\begin{align*}
	\mff:=4\bigg(1-\frac{3M}{r}\bigg)\bigg(1+\frac{3M}{r}\bigg).
	\end{align*}
	Let now $f\in\smfunrad$. Then on any 2-sphere $\geomtwosphere{}{r}$ one has the estimate
	\begin{align}\label{Zgradestimateformorawetz}
	\frac{1}{r}(2-3\mu)^2\normtwosphere{\sn f}{}{r}+\frac{1}{r^3}\normtwosphere{ f}{}{r}\lesssim&-\mff\bigg(\frac{\mu}{r}\,\normtwosphere{\sn_{\mfZ}f}{}{r}
	+\ommu\normtwosphere{[\pr,\mfZ]f}{}{r}\bigg)\nonumber\\
	&-\frac{1}{2}\frac{1}{1-\mu}\mff^{***}\normtwosphere{f}{}{r}.
	\end{align}
\end{lemma}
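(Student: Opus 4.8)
\textbf{Proof proposal for Lemma \ref{lemmamorawetz}.}

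The plan is to treat this as a purely elliptic, spherically-local estimate: everything takes place on a single 2-sphere $\geomtwosphere{}{r}$ and nothing uses the Zerilli equation itself, only the definitions of $\normtwosphere{\sn_{\mfZ}f}{}{r}$ and $\normtwosphere{[\pr,\mfZ]f}{}{r}$ given in Lemma \ref{lemmaidentitesRWandZ}, together with the elliptic estimates for $\zslapinv{p}$ from Proposition \ref{propellipticestimateszlsap}, Corollary \ref{correstimatesonszeta}/\ref{correstimatesonzslapinv}, and the Poincar\'e inequality of Lemma \ref{lemmapoincare}. First I would record explicitly the right-hand side. Since $\mff=4\ommu(1+3M/r)\cdot$\,(wait: actually $\mff=4(1-3M/r)(1+3M/r)=4(1-9M^2/r^2)$), one has $-\mff=-4(1-3M/r)(1+3M/r)$, which is negative for $r>3M$ and positive for $2M\leq r<3M$; note $\mff$ vanishes at $r=3M$ to first order, which is exactly why the left side of \eqref{Zgradestimateformorawetz} carries the degenerating factor $(2-3\mu)^2=(2-6M/r)^2$ in front of $\normtwosphere{\sn f}{}{r}$. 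I would substitute the formulae
\begin{align*}
\normtwosphere{\sn_{\mfZ}f}{}{r}&=\normtwosphere{\sn f}{}{r}-\tfrac{3}{r}\tfrac{\mu}{r}\normtwosphere{f}{}{r}-\tfrac{6}{r}\tfrac{\mu}{r}(2-3\mu)\normtwosphere{(r\sn)\zslapinv{1}f}{}{r}+\tfrac{6}{r}\tfrac{\mu}{r}\big((2-3\mu)^2+3\mu\ommu\big)\normtwosphere{\zslapinv{1}f}{}{r},\\
\normtwosphere{[\pr,\mfZ]f}{}{r}&=-\tfrac{2}{r}\normtwosphere{\sn f}{}{r}+\tfrac{9}{r^2}\tfrac{\mu}{r}\normtwosphere{f}{}{r}+(\text{terms in }\zslapinv{1}f,\zslapinv{2}f)
\end{align*}
into $-\mff\big(\tfrac{\mu}{r}\normtwosphere{\sn_{\mfZ}f}{}{r}+\ommu\normtwosphere{[\pr,\mfZ]f}{}{r}\big)$ and collect the coefficient of $\normtwosphere{\sn f}{}{r}$.

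The second step is to check the sign and size of the coefficient of $\normtwosphere{\sn f}{}{r}$ in the combination $-\mff\big(\tfrac{\mu}{r}\normtwosphere{\sn_{\mfZ}f}{}{r}+\ommu\normtwosphere{[\pr,\mfZ]f}{}{r}\big)$. Ignoring for the moment the lower-order $\zslapinv{p}$ contributions, that coefficient is $-\mff\big(\tfrac{\mu}{r}\cdot 1+\ommu\cdot(-\tfrac{2}{r})\big)=-\tfrac{\mff}{r}(\mu-2\ommu)=-\tfrac{\mff}{r}(3\mu-2)=\tfrac{\mff}{r}(2-3\mu)$. Since $\mff=4(1-3M/r)(1+3M/r)$ and $2-3\mu=2-6M/r=2(1-3M/r)$, this equals $\tfrac{8}{r}(1-3M/r)^2(1+3M/r)\gtrsim\tfrac{1}{r}(2-3\mu)^2$ on $r\geq 2M$ — precisely the degenerate Morawetz weight on the left of \eqref{Zgradestimateformorawetz}. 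So the principal gradient term works out automatically; the factor $(1+3M/r)$ in the definition of $\mff$ was chosen exactly to make this product a perfect square times something uniformly positive.

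The third step is to handle all the lower-order terms — the coefficients of $\normtwosphere{f}{}{r}$, $\normtwosphere{(r\sn)\zslapinv{1}f}{}{r}$, $\normtwosphere{\zslapinv{1}f}{}{r}$, $\normtwosphere{(r\sn)\zslapinv{2}f}{}{r}$, $\normtwosphere{\zslapinv{2}f}{}{r}$ appearing on the right, plus the explicit term $-\tfrac12\tfrac{1}{1-\mu}\mff^{***}\normtwosphere{f}{}{r}$ — and show that, after absorbing them, one still controls $\tfrac{1}{r^3}\normtwosphere{f}{}{r}$ from below and loses nothing in the gradient term. Here I would (i) compute $\mff'=\tfrac{72M^2}{r^3}$ hence $\mff^*=\ommu\mff'$, then $\mff^{**}$, then $\mff^{***}$ explicitly as rational functions of $\mu$, observing that $\mff^{***}=O(r^{-3})$ with a definite sign for $r$ large and bounded behaviour near $2M$; (ii) use Proposition \ref{propellipticestimateszlsap} to bound $\normtwosphere{(r\sn)^i\zslapinv{p}f}{}{r}\lesssim\normtwosphere{f}{}{r}$, and more importantly use the \emph{sharp} constants of Corollary \ref{correstimatesonszeta} (namely $\tfrac{4}{r^2}(r-3M)^2\normtwosphere{\zslapinv{1}f}{}{r}+\tfrac{2}{r}(r+9M)\normtwosphere{(r\sn)\zslapinv{1}f}{}{r}\lesssim\normtwosphere{f}{}{r}$) to show that the $\zslapinv{1}f$ contributions that are multiplied by $\tfrac{\mu}{r}\tfrac{1}{r}(2-3\mu)$-type weights are genuinely lower order and cannot spoil positivity; and (iii) combine with the Poincar\'e inequality $\tfrac{6}{r^2}\normtwosphere{f}{}{r}\lesssim\normtwosphere{\sn f}{}{r}$ to generate the needed $\tfrac{1}{r^3}\normtwosphere{f}{}{r}$ on the left (trading a small multiple of the gradient term for it) and to absorb any leftover negative $\normtwosphere{f}{}{r}$ contributions. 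The structure of the argument is the same two-regime split used in the proof of Lemma \ref{lemmacontrollingangularforTenergy}: near-horizon $2M\leq r\leq 3M$, where $\mff\geq 0$ and several $\zslapinv{1}$-coefficients have favourable signs, versus the exterior $r\geq 3M$, where the perfect-square structure above is manifest.

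The main obstacle I expect is bookkeeping of the signs and magnitudes of the numerous $\zslapinv{1}f$ and $\zslapinv{2}f$ coefficients — in particular verifying that in the transition region around $r=3M$ the term $-\mff\cdot\tfrac{\mu}{r}\cdot\big(-\tfrac{6}{r}\tfrac{\mu}{r}(2-3\mu)\big)\normtwosphere{(r\sn)\zslapinv{1}f}{}{r}$, whose sign flips at $r=3M$, is dominated by the positive contributions after invoking Corollary \ref{correstimatesonszeta} with its precise $(r-3M)^2$ and $(r+9M)$ weights (which themselves degenerate/grow at exactly the right rate at $r=3M$). Everything else is a routine, if lengthy, collection of rational-function inequalities in $\mu\in[0,1)$ together with the already-established elliptic and Poincar\'e estimates, so no new ideas beyond those already deployed for the Regge--Wheeler case and for Lemma \ref{lemmacontrollingangularforTenergy} should be required.
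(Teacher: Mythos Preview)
Your computation of the principal $\normtwosphere{\sn f}{}{r}$ coefficient is exactly what the paper does: after substituting the definitions, the right-hand side yields $\tfrac{1}{r}(2+3\mu)(2-3\mu)^2\normtwosphere{\sn f}{}{r}$ plus lower-order terms, and this is where the degenerate factor on the left comes from.

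Where you diverge from the paper is in step~3, and this is a genuine gap rather than a stylistic difference. The lower-order remainder is a sum of the form
\[
\frac{1}{r^3}\Big(p_0(\mu)\normtwosphere{f}{}{r}+p_1(\mu)\normtwosphere{(r\sn)\zslapinv{1}f}{}{r}+p_2(\mu)\normtwosphere{\zslapinv{1}f}{}{r}+p_3(\mu)\normtwosphere{(r\sn)\zslapinv{2}f}{}{r}+p_4(\mu)\normtwosphere{\zslapinv{2}f}{}{r}\Big)
\]
with explicit polynomials $p_i$ that do \emph{not} have a fixed sign on $[0,1]$. Your proposed tools---Proposition~\ref{propellipticestimateszlsap}, Corollary~\ref{correstimatesonszeta}, and Poincar\'e---only give one-sided bounds $\normtwosphere{(r\sn)^i\zslapinv{p}f}{}{r}\lesssim\normtwosphere{f}{}{r}$. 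When the coefficients have mixed signs, applying such bounds to the negative terms is too lossy: the implicit constants in $\lesssim$ are not sharp enough to guarantee that the positive contributions dominate uniformly in $\mu$. The two-regime split you propose does not resolve this, because the sign changes of the $p_i$ are not aligned with the single transition at $r=3M$.

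The paper closes the argument by a different route: it decomposes $f$ into spherical harmonics, uses that on each mode $\normtwosphere{(r\sn)^i\zslapinv{j}f_l}{}{r}=\tfrac{(\lambda-2)^i}{(\lambda+3\mu)^{2j}}\normtwosphere{f_l}{}{r}$ with $\lambda=(l-1)(l+2)$, and thereby reduces the lower-order estimate to verifying that an explicit rational function of $(\lambda,\mu)$ is positive on $\{\lambda\ge 2\}\times[0,1]$. That verification is then carried out by elementary but careful polynomial manipulations (borrowing between the coefficients of $\lambda^4,\dots,\lambda^0$). This mode-by-mode exactness is precisely what replaces the soft elliptic bounds you invoke, and it is the missing idea in your outline.
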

\begin{proof}
	We first recall from section \ref{IntegralidentitiesandintegralestimatesforsolutionstotheRegge--WheelerandZerilliequations} that
	\begin{align*}
	\normtwosphere{[\pr,\mfZ]f}{}{r}:=-\frac{2}{r}\normtwosphere{\sn f}{}{r}+\frac{9}{r^2}\frac{\mu}{r}\normtwosphere{f}{}{r}+&\frac{36}{r^2}\frac{\mu}{r}(1-2\mu)\normtwosphere{(r\sn)\zslapinv{1}f}{}{r}\\
	-&\frac{18}{r^2}\frac{\mu}{r}\Big((2-3\mu)^2+\mu^2\Big)\normtwosphere{\zslapinv{1}f}{}{r}\\
	+&\frac{108}{r^3}\mu^3\ommu\normtwosphere{(r\sn)\zslapinv{2}f}{}{r}\\
	-&\frac{108}{r^3}\mu^3\ommu(2-3\mu)\normtwosphere{\zslapinv{2}f}{}{r}.
	\end{align*}
	Subsequently, we compute that
	\begin{align*}
	-\mff\bigg(\frac{\mu}{r}\,\normtwosphere{\sn_{\mfZ}f}{}{r}
	+\ommu\normtwosphere{[\pr,\mfZ]f}{}{r}\bigg)
	-\frac{1}{2}\frac{1}{1-\mu}\mff^{***}&\normtwosphere{f}{}{r}\\
	=\frac{1}{r}(2+3\mu)(2-3\mu)^2&\normtwosphere{\sn f}{}{r}\\
	+\frac{1}{r^3}p_0(\mu)\,&\normtwosphere{f}{}{r}\\
	+\frac{1}{r^3}p_1(\mu)\,&\normtwosphere{(r\sn)\zslapinv{1}f}{}{r}\\
	+\frac{1}{r^3}p_2(\mu)\,&\normtwosphere{\zslapinv{1}f}{}{r}\\
	+\frac{1}{r^3}p_3(\mu)\,&\normtwosphere{(r\sn)\zslapinv{2}f}{}{r}\\
	+\frac{1}{r^3}p_4(\mu)\,&\normtwosphere{\zslapinv{2}f}{}{r}.
	\end{align*}
	Here, we have defined the polynomials
	\begin{align*}
	p_0(x):&=-12x\Big(3+5x-33x^2+27x^3\Big),\\
	p_1(x):&=6x\Big(24-64x-90x^2+144x^3+81x^4\Big)\\
	p_2(x):&=6x\Big(48-240x+264x^2+348x^3-837x^4+432x^5\Big),\\
	p_3(x):&=108x^3\Big(4-8x-5x^2+18x^3-9x^4\Big),\\
	p_4(x):&=108x^3\Big(8-28x+14x^2+51x^3-72x^4+25x^5\Big)
	\end{align*}
	which we observe are uniformly bounded on the domain $[0,1]$.
	Consequently, to establish the estimate \eqref{Zgradestimateformorawetz} it thus follows from both the Poincar\'e inequality of Lemma \ref{lemmapoincare} and the elliptic estimates of Proposition \ref{propellipticestimateszlsap} that it is in fact sufficient to demonstrate on any 2-sphere $\geomtwosphere{}{r}$ the bounds
	\begin{align}\label{Zbound}
	\normtwosphere{f}{}{r}\lesssim q(\mu)\normtwosphere{f}{}{r}+p_1(\mu)\,&\normtwosphere{(r\sn)\zslapinv{1}f}{}{r}\nonumber\\
	+p_2(\mu)\,&\normtwosphere{\zslapinv{1}f}{}{r}\nonumber\\
	+p_3(\mu)\,&\normtwosphere{(r\sn)\zslapinv{2}f}{}{r}\nonumber\\
	+p_4(\mu)\,&\normtwosphere{\zslapinv{2}f}{}{r}.
	\end{align}
	Here, $q(x)$ is the polynomial
	\begin{align*}
	q(x):=48-108x-168x^2+558x^3-324x^4.
	\end{align*}	
	Consequently, in order to prove estimate \eqref{Zbound} we first decompose $f$ into spherical harmonics as in section \ref{Theprojectionofsmfunontoandawayfromthel=0,1sphericalharmonics}:
	\begin{align*}
	f=\sum_{l=2}^{\infty}f^m_l\,Y^l_m.
	\end{align*} 
	Here, the convergence is pointwise. In particular, recalling that $\zslapinv{p}$ is the inverse operator $r^2\slap+2-3\mu$ applied $p$-times, it follows that for any integers $i,j\geq 0$ each mode $f^m_l\,Y^l_m$ satisfies the identity
	\begin{align*}
	\normtwosphere{(r\sn)^i\zslapinv{j}f^m_l\,Y^l_m}{}{r}=\frac{(\lambda-2)^i}{(\lambda+3\mu)^j}\normtwosphere{f^m_l\,Y^l_m}{}{r}
	\end{align*}
	where $\lambda:=(l-1)(l+2)$. Thus, to establish the bound \eqref{Zbound} it suffices to demonstrate positivity of the following expression over the domain $\lambda\geq 2$ and $x\in[0,1]$:
	\begin{align}\label{lambaestimate}
	\frac{1}{(\lambda+3x)^3}\Big(q_4(x)\lambda^4+q_3(x)\lambda^3+q_2(x)\lambda^2+q_1(x)\lambda+q_0(x)\Big).
	\end{align}
	where $q_4,q_3, q_2, q_1$ and $q_0$ are the polynomials
	\begin{align*}
	q_4(x):&=8-12x-18x^2+27x^3,\\
	q_3(x):&=16+12x-204x^2+288x^3-81x^4,\\
	q_2(x):&=x^2\Big(156-1224x+2484x^2-5508x^3\Big),\\
	q_1(x):&=x^3\Big(396-3456x+7614x^2-4662x^3\Big),\\
	q_0(x):&=3x^3\Big(108-1116x+2592x^2-1620x^3\Big).
	\end{align*}
	Indeed, we first claim that on this domain it holds that
	\begin{align*}
	(2-3x)^2(2+3x)\lesssim q_4(x)\lambda^4+q_3(x)\lambda^3+q_2(x)\lambda^2+q_1(x)\lambda+q_0(x).
	\end{align*}
	To verify this, we borrow successively from each polynomial to derive the estimate
	\begin{align*}
	q_4(x)\lambda^4+q_3(x)\lambda^3+q_2(x)\lambda^2+q_1(x)\lambda+q_0(x)\geq \frac{1}{2}q_4(x)&\lambda^4\\
	+\Big(q_3(x)-x^3\Big)&\lambda^3\\
	+\Big(q_2(x)+3x^3+2q_4(x)-\tfrac{77}{4}x^4\Big)&\lambda^2\\
	+\Big(q_1(x)+\tfrac{77}{2}x^4-13x^5\Big)&\lambda\\
	+q_0(x)+26x^5.
	\end{align*}
	Noting that $q_4(r)=(2-3x)^2(2+3x)$, the claim thus follows if positivity holds on the domain $[0,1]$ for each of the polynomials 
	\begin{align*}
	q_3(x)-x^3,\\
	q_2(x)+3x^3+2q_4(x)-\tfrac{77}{4}x^4\\
	q_1(x)+\tfrac{77}{2}x^4-13x^5\\
	q_0(x)+26x^5
	\end{align*}
	which is simple to verify.
	
	Finally, to establish positivity of the expression \eqref{lambaestimate} it remains to verify that the expression $q_4(x)\lambda^4+q_3(x)\lambda^3+q_2(x)\lambda^2+q_1(x)\lambda+q_0(x)$ is positive for $x=\frac{2}{3}$ as then continuity implies positivity on an open neighbourhood of $x=\frac{2}{3}$. However, this follows easily from an explicit computation and thus the lemma follows.
\end{proof}

The desired integrated local energy estimate for solutions to the Zerilli equation is then as follows.
\begin{proposition}\label{propmorawetz}
	Let $\Psi$ be as in Theorem 1. Then one has the bulk estimate
	\begin{align}
	\int_{\taus_0}^{\infty}&\int_{2M}^\infty\frac{1}{r^3}\bigg((2-3\mu)^2\Big(\normtwosphere{\pt\Psi}{}{r}+\ommu\normtwosphere{\pr\Psi}{}{r}+r^2\normtwosphere{\sn\Psi}{}{r}\Big)+\normtwosphere{\Psi}{}{r}\bigg)\exd\taus\exd r\nonumber\\
	\lesssim&\int_{2M}^R\Big(||\pt\Psi||_{\mathsmaller{\twosphere_{\taus_0,r}}}^2+\ommu||\pr\Psi||_{\mathsmaller{\twosphere_{\taus_0,r}}}^2+||\sn\Psi||_{\mathsmaller{\twosphere_{\taus_0,r}}}^2\Big)\exd r\nonumber\\
	+&\int_{R}^\infty\Big(||D\Psi||_{\mathsmaller{\twosphere_{\taus_0,r}}}^2+||\sn\Psi||_{\mathsmaller{\twosphere_{\taus_0,r}}}^2\Big)\exd r.\label{ZMorawetzestimate}
	\end{align}
\end{proposition}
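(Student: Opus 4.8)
The strategy is to deduce \eqref{ZMorawetzestimate} from Proposition \ref{propfluxandbulkestimate}, applied to the Zerilli equation \eqref{Zereqn} with a Morawetz-type choice of the radial weight functions $\alpha,\beta,w$ — morally the multiplier $X_g$ of section \ref{OVIntegratedlocalenergydecayandtheroleoftrapping}, corrected by a zeroth order term and by a large multiple of the degenerate $\pt$--current. Using the identities of Lemma \ref{lemmaidentitesRWandZ} one writes $\widetilde{\mathbb{K}}^{\alpha,\beta,w}_{\taus,r}[\Psi]$ out explicitly and observes that it splits into a ``principal'' block, consisting of the $\normtwosphere{\pt\Psi}{}{r}$, $\normtwosphere{\pr\Psi}{}{r}$ and $\langle\pt\Psi,\pr\Psi\rangle_{\twosphere_{\taus,r}}$ terms, and a ``potential'' block $-(\beta'+w)\normtwosphere{\sn_{\mfZ}\Psi}{}{r}-\beta\normtwosphere{[\pr,\mfZ]\Psi}{}{r}+\tfrac12(\ommu w')'\normtwosphere{\Psi}{}{r}$ carrying the Zerilli potential operator $\mfZ$ and the zeroth order contribution. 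The crucial algebraic observation is that $\beta$ and $w$ are forced — up to an overall normalisation one must take $\beta=\ommu\mff$ and $w=-\mff^{*}$, with $\mff=4\big(1-\tfrac{3M}{r}\big)\big(1+\tfrac{3M}{r}\big)$ the function of Lemma \ref{lemmamorawetz} — in order that the potential block coincide \emph{identically} with the right-hand side of \eqref{Zgradestimateformorawetz}; note in particular that the triple-starred weight $\mff^{***}$ appearing there is exactly what the choice $w=-\mff^{*}$ produces.

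Granting the matching, Lemma \ref{lemmamorawetz} — which is where the genuine work resides, and which already absorbs the non-local $\zslapinv{1},\zslapinv{2}$ pieces through Proposition \ref{propellipticestimateszlsap} and Corollary \ref{correstimatesonszeta} and settles the attendant polynomial and eigenvalue positivity — shows that the potential block is bounded below by $\tfrac1r(2-3\mu)^2\normtwosphere{\sn\Psi}{}{r}+\tfrac1{r^3}\normtwosphere{\Psi}{}{r}$, which furnishes the angular and zeroth order terms of \eqref{ZMorawetzestimate}. For the principal block one then fixes $\alpha$, up to an additive constant, so that the $\normtwosphere{\pt\Psi}{}{r}$ and $\normtwosphere{\pr\Psi}{}{r}$ coefficients are $\gtrsim\tfrac1{r^3}(2-3\mu)^2$ and $\gtrsim\tfrac1{r^3}(2-3\mu)^2\ommu$ respectively, and absorbs the remaining cross term by completing the square against these two diagonal terms, exactly as for the scalar wave equation in section \ref{OVIntegratedlocalenergydecayandtheroleoftrapping}. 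The additive constant in $\alpha$ contributes nothing to the bulk (the $\pt$--current has vanishing $\widetilde{\mathbb{K}}$, as computed in the proof of Proposition \ref{propdegenergyestimate}) but is taken large in order to render the boundary currents coercive.

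It then remains to verify the hypotheses of Proposition \ref{propfluxandbulkestimate} for this $\alpha,\beta,w$ — concretely its condition $iv)$. The only boundary contribution without a favourable sign is the horizon flux $\tilde{\mathbb{J}}^{\alpha,\beta,w}_{\taus,2M}[\Psi](\pr)$; since $X_g$ restricted to $\eh$ is a bounded past-directed multiple of $\pt$, this is controlled by a constant times $\int\normtwosphere{\pt\Psi}{}{2M}\exd\taus$, and that quantity — together with the flux on the ingoing portion of the initial slice — is bounded by the initial degenerate energy by Corollary \ref{corrcontrolTflux}. Coercivity of $\tilde{\mathbb{J}}^{\alpha,\beta,w}_{\taus,r}[\Psi](\partial_{\taus})$ and its comparability with the energy density on the right of \eqref{ZMorawetzestimate} follow from Lemma \ref{lemmacontrollingangularforTenergy} once the constant in $\alpha$ is large; finiteness of the initial flux is the standing assumption that $\Psi$ is as in Theorem \ref{mainthmdecayRWandZ}. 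Assembling these facts, and rewriting the sphere-and-$r$ integral over $\Xi_{\taus_0}$ as the stated initial flux with the $r=R$ split, yields \eqref{ZMorawetzestimate} in the region to the future of $\Sigma_{0}\cap\{r\le R\}$; the semi-global statement follows by running the identical argument on the truncated region, as announced at the start of the section.

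The step I expect to be the real obstacle is precisely the algebraic matching of the potential block with \eqref{Zgradestimateformorawetz}: the Zerilli potential $\mfZ$ enters $\widetilde{\mathbb{K}}$ at the same order as the trapped angular term one is trying to extract and is non-local through $\zslapinv{1},\zslapinv{2}$, so it cannot be treated as an error, and it is only for the above rigid choice of $(\beta,w)$ that the coercivity in Lemma \ref{lemmamorawetz} stands any chance of holding. A secondary nuisance will be arranging that the horizon and $r=3M$ degenerations of the principal block are exactly those that the potential block and the $\pt$--flux estimate (Corollary \ref{corrcontrolTflux}) can feed, so that no uncontrolled term survives at $\eh$ or at the photon sphere.
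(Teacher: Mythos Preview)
Your identification of the forced choice $\beta=\ommu\mff$, $w=-\mff^{*}=-\ommu\mff'$ so that the potential block matches Lemma \ref{lemmamorawetz} is exactly what the paper does in its first step, and your handling of the boundary terms via Corollary \ref{corrcontrolTflux} is essentially correct. The gap is in the principal block. With $\beta,w$ fixed as above, the $\normtwosphere{\pr\Psi}{}{r}$ coefficient in $\widetilde{\mathbb{K}}^{\alpha,\beta,w}$ is $\ommu\beta'-\tfrac{\mu}{r}\beta-\ommu w=2\ommu^{2}\mff'$, which is \emph{independent of $\alpha$} and degenerates like $\ommu^{2}$ at the horizon, not like the $(2-3\mu)^{2}\ommu$ required by \eqref{ZMorawetzestimate}. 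More decisively, if one writes $\alpha=\mu\mff+h$ then a direct computation gives that the discriminant of the principal quadratic form in $(\pt\Psi,\pr\Psi)$ equals $-\ommu^{2}(h')^{2}\le 0$, with equality iff $h'=0$; thus for \emph{any} $\alpha$ the principal block is at best the perfect square $2\mff'\normtwosphere{\mu\pt\Psi+\ommu\pr\Psi}{}{r}$, and your proposed ``complete the square against the two diagonal terms'' cannot yield separate control of $\pt\Psi$ and $\pr\Psi$.

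The paper takes $\alpha=\mu\mff$ (so $h=0$) and from this single multiplier obtains only the preliminary estimate \eqref{ZfirstMorawetzestimate}, which controls $\normtwosphere{\mu\pt\Psi+\ommu\pr\Psi}{}{r}$ together with the angular and zeroth order terms coming from Lemma \ref{lemmamorawetz}. A \emph{second} multiplier is then needed: the paper applies Proposition \ref{propfluxandbulkestimate} again with $\alpha=\mu\mfg$, $\beta=\ommu\mfg$, $w=0$, where $\mfg=-\tfrac{2}{r^{2}}(1-\tfrac{3M}{r})^{3}(1-\tfrac{2M}{r})$. With $w=0$ the potential block no longer matches Lemma \ref{lemmamorawetz}, but because $\mfg$ and $\mfg'$ vanish to second order at $r=3M$ and decay sufficiently in $r$, the resulting potential errors are bounded (via Proposition \ref{propellipticestimateszlsap}) by the right-hand side of \eqref{ZfirstMorawetzestimate} and hence absorbed. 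The gain is the term $\mfg'\normtwosphere{\pt\Psi}{}{r}$ in the bulk, with $\mfg'\ge 0$ and $\mfg'\sim r^{-3}(2-3\mu)^{2}$, which supplies the missing separate $\pt\Psi$ control and thereby \eqref{ZMorawetzestimate}. This second step is what your one-multiplier scheme lacks.
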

\begin{proof}
	We consider the three smooth radial functions $\alpha,\beta$ and $w$ on $\mcalm$ given by
	\begin{align*}
	\alpha&=\mu\,\mff,\\
	\beta&=\ommu\mff,\\
	w&=-\ommu\mff'
	\end{align*}
	where $\mff$ is as in Lemma \ref{lemmamorawetz}. 
	
	Then from section \ref{IntegralidentitiesandintegralestimatesforsolutionstotheRegge--WheelerandZerilliequations} we have
	\begin{align*}
	\tilde{\mathbb{J}}^{\alpha, \beta,w}_{\taus,r}[\Psi](\pt)=\mff\Big(\mu\opmu &\normtwosphere{\pt\Psi}{}{r}+2\ommu\opmu\langle\pt\Psi, \pr\Psi\rangle_{\twosphere_{\taus, r}}\\
	-&\mu\ommu\normtwosphere{\pr\Psi}{}{r}
	+\mu\,\normtwosphere{\sn_{\mfZ}\Psi}{}{r}\Big)\\
	+\ommu\mff'\bigg(\opmu&\langle\pt\Psi, \Psi\rangle_{\twosphere_{\taus, r}}-2\mu\,\langle\pr\Psi, \Psi\rangle_{\twosphere_{\taus, r}}+\frac{1}{2}\frac{\mu}{r}\,\normtwosphere{\Psi}{}{r}\bigg),\\
	\tilde{\mathbb{J}}^{\alpha, \beta,w}_{\taus,r}[\Psi](\pr)=-\mff\Big(\big(1+\mu^2\big)\,&\normtwosphere{\pt\Psi}{}{r}+2\mu\ommu\,\langle\pt\Psi, \pr\Psi\rangle_{\twosphere_{\taus, r}}+\ommu^2\,\normtwosphere{\pr\Psi}{}{r}\\
	-\ommu\,&\normtwosphere{\sn_{\mfZ}\Psi}{}{r}\Big)\\
	+\ommu^2\mff\,&\langle\pr\Psi, \Psi\rangle_{\twosphere_{\taus, r}}-\frac{1}{2}\bigg(\big(\ommu^2 \mff'\big)'-\frac{\mu}{r}\frac{1-\mu}{2}\mff'\bigg)\normtwosphere{\Psi}{}{r}
	\end{align*}
	and
	\begin{align*}
	\widetilde{\mathbb{K}}^{\alpha, \beta,w}_{\taus,r}[\Psi]=2&\mff'\,\normtwosphere{\mu\pt\Psi+\ommu\pr\Psi}{}{r}\\
	-&\mff\bigg(\frac{\mu}{r}\,\normtwosphere{\sn_{\mfZ}\Psi}{}{r}
	+\ommu\normtwosphere{[\pr,\mfZ]\Psi}{}{r}\bigg)
	-\frac{1}{2}\frac{1}{1-\mu}\mff^{***}\normtwosphere{\Psi}{}{r}.
	\end{align*}
	Now, as $\mff$ and its derivatives are uniformly bounded on $\mcalm$, we have from Cauchy--Schwarz combined with the Poincar\'e inequality of Lemma \ref{lemmapoincare} the estimates
	\begin{align}
	-\tilde{\mathbb{J}}^{\alpha, \beta,w}_{\taus,r}[\Psi](\pt)&\lesssim\tilde{\mathbb{J}}^{1, 0,0}_{\taus,r}[\Psi](\pt),\label{aliZ}\\
	-\tilde{\mathbb{J}}^{\alpha, \beta,w}_{\taus,2M}[\Psi](\pr)&\lesssim\tilde{\mathbb{J}}^{1, 0,0}_{\taus,2M}[\Psi](\pr)\label{bobZ}
	\end{align}
	and
	\begin{align}
	\int_{2M}^\infty\tilde{\mathbb{J}}^{\alpha, \beta,w}_{\taus,r}[\Psi](\partial_{\taus})\exd r\lesssim\int_{2M}^\infty\tilde{\mathbb{J}}^{1, 0,0}_{\taus,r}[\Psi](\partial_{\taus})\exd r\label{cliveZ}.
	\end{align}
	Here, the 1-form $\tilde{\mathbb{J}}^{1, 0,0}_{\taus,r}[\Psi]$ is as in the proof of Proposition \ref{propdegenergyestimate}. Consequently, applying Proposition \ref{propfluxandbulkestimate} (noting that condition $iv)$ is satisfied by estimates \eqref{aliZ} and \eqref{bobZ} combined with  Corollary \ref{corrcontrolTflux}) in conjunction with Lemma \ref{lemmamorawetz} yields the estimate 
	\begin{align}
	\int_{\taus_0}^{\infty}&\int_{2M}^\infty\frac{1}{r^3}\Big(\normtwosphere{\mu\pt\Psi+\ommu\pr\Psi}{}{r}+(2-3\mu)^2r^2\normtwosphere{\sn\Psi}{}{r}+\normtwosphere{\Psi}{}{r}\Big)\exd\taus\exd r\nonumber\\
	\lesssim&\int_{2M}^R\Big(||\pt\Psi||_{\mathsmaller{\twosphere_{\taus_0,r}}}^2+\ommu||\pr\Psi||_{\mathsmaller{\twosphere_{\taus_0,r}}}^2+||\sn\Psi||_{\mathsmaller{\twosphere_{\taus_0,r}}}^2\Big)\exd r\nonumber\\
	+&\int_{R}^\infty\Big(||D\Psi||_{\mathsmaller{\twosphere_{\taus_0,r}}}^2+||\sn\Psi||_{\mathsmaller{\twosphere_{\taus_0,r}}}^2\Big)\exd r.\label{ZfirstMorawetzestimate}
	\end{align}
	Here, we have combined estimate \eqref{cliveZ} with the second half Proposition \ref{propdegenergyestimate} to control the flux terms arising in the first half of Proposition \ref{propfluxandbulkestimate}. 
	
	We consider now the three smooth radial functions $\alpha,\beta$ and $w$ on $\mcalm$ given by
	\begin{align*}
	\alpha&=\mu\,\mfg,\\
	\beta&=\ommu\mfg,\\
	w&=0
	\end{align*}
	where $\mfg$ is the smooth radial function
	\begin{align*}
	\mfg:=-\frac{2}{r^2}\bigg(1-\frac{3M}{r}\bigg)^3\bigg(1-\frac{2M}{r}\bigg).
	\end{align*}
	
	Then from section \ref{IntegralidentitiesandintegralestimatesforsolutionstotheRegge--WheelerandZerilliequations} we have
	\begin{align*}
	\tilde{\mathbb{J}}^{\alpha, \beta,w}_{\taus,r}[\Psi](\pt)=\mfg\Big(\mu\opmu &\normtwosphere{\pt\Psi}{}{r}+2\ommu\opmu\langle\pt\Psi, \pr\Psi\rangle_{\twosphere_{\taus, r}}\nonumber\\
	-\mu\ommu&\normtwosphere{\pr\Psi}{}{r}
	+\mu\,\normtwosphere{\sn_{\mfZ}\Psi}{}{r}\Big)\\
	\tilde{\mathbb{J}}^{\alpha, \beta,w}_{\taus,r}[\Psi](\pr)=-\mfg\Big(\big(1+\mu^2\big)\,&\normtwosphere{\pt\Psi}{}{r}+2\mu\ommu\,\langle\pt\Psi, \pr\Psi\rangle_{\twosphere_{\taus, r}}+\ommu^2\,\normtwosphere{\pr\Psi}{}{r}\\
	-\ommu\,&\normtwosphere{\sn_{\mfZ}\Psi}{}{r}\Big)
	\end{align*}
	and
	\begin{align*}
	\widetilde{\mathbb{K}}^{\alpha, \beta,w}_{\taus,r}[\Psi]=&\mfg'\,\normtwosphere{\pt\Psi}{}{r}\\
	+&\mfg'\,\normtwosphere{\mu\pt\Psi+\ommu\pr\Psi}{}{r}\\
	-&\mfg\bigg(\frac{\mu}{r}\,\normtwosphere{\sn_{\mfZ}\Psi}{}{r}
	+\ommu\normtwosphere{[\pr,\mfZ]\Psi}{}{r}\bigg)\\
	-\ommu&\mfg'\,\normtwosphere{\sn_{\mfZ}\Psi}{}{r}.
	\end{align*}
	Now, as $\mfg$ and its derivatives are uniformly bounded on $\mcalm$, we have from Cauchy--Schwarz combined with the Poincar\'e inequality of Lemma \ref{lemmapoincare} the estimates
	\begin{align}
	-\tilde{\mathbb{J}}^{\alpha, \beta,w}_{\taus,r}[\Psi](\pt)&\lesssim\tilde{\mathbb{J}}^{1, 0,0}_{\taus,r}[\Psi](\pt),\label{aligZ}\\
	-\tilde{\mathbb{J}}^{\alpha, \beta,w}_{\taus,2M}[\Psi](\pr)&\lesssim\tilde{\mathbb{J}}^{1, 0,0}_{\taus,2M}[\Psi](\pr)\label{bobgZ}
	\end{align}
	and
	\begin{align}
	\int_{2M}^\infty\tilde{\mathbb{J}}^{\alpha, \beta,w}_{\taus,r}[\Psi](\partial_{\taus})\exd r\lesssim\int_{2M}^\infty\tilde{\mathbb{J}}^{1, 0,0}_{\taus,r}[\Psi](\partial_{\taus})\exd r\label{clivegZ}.
	\end{align}
	In addition, as both $\mfg'$ and $\mfg$ vanish to second order at $r=3M$ with both $\mfg'$ and $\frac{1}{r}\mfg$ vanishing to third order as $r\rightarrow\infty$, we have from the elliptic estimates of Proposition \ref{propellipticestimateszlsap} on any 2-sphere $\geomtwosphere{}{r}$ the estimate
	\begin{align}
	\mfg'\,&\normtwosphere{\mu\pt\Psi+\ommu\pr\Psi}{}{r}\nonumber\\
	-\mfg&\bigg(\frac{\mu}{r}\,\normtwosphere{\sn_{\mfZ}\Psi}{}{r}
	+\ommu\normtwosphere{[\pr,\mfZ]\Psi}{}{r}\bigg)\nonumber\\
	-\ommu\mfg'&\,\normtwosphere{\sn_{\mfZ}\Psi}{}{r}\nonumber\\
	\lesssim\frac{1}{r^3}\,&\normtwosphere{\mu\pt\Psi+\ommu\pr\Psi}{}{r}+
	\frac{1}{r}(2-3\mu)^2\,\normtwosphere{\sn\Psi}{}{r}
	+\frac{1}{r^3}\,\normtwosphere{\Psi}{}{r}\label{prattZ}.
	\end{align}
	Consequently, Proposition \ref{propfluxandbulkestimate} combined with Proposition \ref{propdegenergyestimate}, Corollary \ref{corrcontrolTflux}, estimate \eqref{prattZ} and the fact that the function $\mfg'$ is non-negative on $\mcalm$ yields the improved estimate 
	\begin{align*}
	\int_{\taus_0}^{\infty}\int_{2M}^\infty\frac{1}{r^3}\Big(&\normtwosphere{\mu\pt\Psi+\ommu\pr\Psi}{}{r}\\
	+&(2-3\mu)^2\Big(\normtwosphere{\pt\Psi}{}{r}+r^2\normtwosphere{\sn\Psi}{}{r}+\normtwosphere{\Psi}{}{r}\Big)\exd\taus\exd r\nonumber\\
	\lesssim&\int_{2M}^R\Big(||\pt\Psi||_{\mathsmaller{\twosphere_{\taus_0,r}}}^2+\ommu||\pr\Psi||_{\mathsmaller{\twosphere_{\taus_0,r}}}^2+||\sn\Psi||_{\mathsmaller{\twosphere_{\taus_0,r}}}^2\Big)\exd r\\
	+&\int_{R}^\infty\Big(||D\Psi||_{\mathsmaller{\twosphere_{\taus_0,r}}}^2+||\sn\Psi||_{\mathsmaller{\twosphere_{\taus_0,r}}}^2\Big)\exd r
	\end{align*}
	from which estimate \eqref{ZMorawetzestimate} follows.
	
	This completes the proposition.
\end{proof}

We make the following remarks regarding Proposition \ref{propmorawetz}.
\begin{remark}
	The use of the function $\mff$ to derive the estimate \eqref{Zgradestimateformorawetz} of Lemma \ref{lemmamorawetz} first appeared in the work \cite{Me} of the author where it was motivated by earlier works of Holzegel \cite{Holz} on the Regge--Wheeler equation. See also \cite{H--K--W}.
\end{remark}

\section{Proof of Theorem 2}\label{Proofoftheorem2}

In this section we prove Theorem 2.

An outline of this section is as follows. We begin in section \ref{BoundednessanddecayforthepuregaugeandlinearisedKerrinvariantquantities} by showing that the invariant quantities of section \ref{DecouplingtheequationsoflinearisedgravityuptoresidualpuregaugeandlinearisedKerrsolutions:theRegge--WheelerandZerilliequations} associated to the initial-data normalised solution of the theorem statement satisfy the assumptions and hence the conclusions of Theorem 1. Finally in section \ref{CompletingtheproofofTheorem2} we combine the boundedness and decay bounds of Theorem 1 with Corollary \ref{corrglobalproperties} to complete the proof of Theorem 2.

\subsection{Boundedness and decay for the pure gauge and linearised Kerr invariant quantities}\label{BoundednessanddecayforthepuregaugeandlinearisedKerrinvariantquantities}

We begin in this section by first applying Theorem 1 to the invariant quantities $\Philin$ and $\Psilin$ associated to the solution $\gidnlin$ of Theorem 2. Indeed, this is immediately applicable courtesy of Theorem \ref{thmgaugeinvariantquantintermsofRWandZ} and the asymptotic flatness of the seed. We thus have:
\begin{proposition}\label{propascendingthehierarchy}
	Let $\gidnlin$ be as in the statement of Theorem 1. 
	Then the quantities $\Philin$ and $\Psilin$ assoicated to $\gidnlin$ satisfy the assumptions and hence the conclusions of Theorem 1.
\end{proposition}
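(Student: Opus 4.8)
The plan is to reduce the assertion to two facts and then verify the second by a weight count. Theorem \ref{mainthmdecayRWandZ} has, for each of its two parts, two hypotheses: that $\Philin$ (resp.\ $\Psilin$) is an element of $\smfunrad$ solving the Regge--Wheeler (resp.\ Zerilli) equation, and that the initial flux norms $\mathbb{D}^{n}[r^{-1}\Philin]$ (resp.\ $\mathbb{D}^{n}[r^{-1}\Psilin]$) occurring on the right-hand sides are finite. The first hypothesis is immediate: it is precisely Theorem \ref{thmgaugeinvariantquantintermsofRWandZ} applied to the solution $\gidnlin$. So the whole proof comes down to establishing the finiteness of $\mathbb{D}^{n}[r^{-1}\Philin]$ and $\mathbb{D}^{n}[r^{-1}\Psilin]$.

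To this end I would first use Proposition \ref{propglobalproperties}: since $\gidnlin$ is initial-data normalised and, by hypothesis, arises from a seed lying in $\radseedspace^{n,\delta}$, one has $\Philin=\solnmap_\Phi(\Phi_0,\Phi_1)$ and $\Psilin=\solnmap_\Psi(\Psi_0,\Psi_1)$. By the defining property of these solution maps in Proposition \ref{propwellposednesszerilli} this gives, on $\Sigma_0$,
\begin{align*}
i_0^*\big(r^{-1}\Philin\big)=r^{-1}\Phi_0,\qquad
i_0^*\big(n(r^{-1}\Philin)\big)=r^{-1}\Phi_1-r^{-2}\big(i_0^*n(r)\big)\Phi_0,
\end{align*}
and likewise for $\Psilin$; hence $\mathbb{D}^{n}[r^{-1}\Philin]$ and $\mathbb{D}^{n}[r^{-1}\Psilin]$ are quantities determined entirely by the Cauchy data $(\Phi_0,\Phi_1)$ and $(\Psi_0,\Psi_1)$ prescribed on $\Sigma_0$.

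It then remains to carry out the bookkeeping. I would record that on all of $\Sigma_0$ the measure $\epsilon_{\overline{g}_M}$ is comparable to $r^{2}\,\exd r\wedge\epsilon_{\twosphere}$ and that $n$ and $\overline{\nabla}$ are controlled by $r^{-1}(r\overline{\nabla})$ with coefficients bounded uniformly as $r\to\infty$. Combining the asymptotic-flatness bounds of section \ref{Pointwiseasymptoticallyflatseeddata} with smoothness of the seed on $\{r\le R\}$ and with an outward integration from $r=R$ to control the undifferentiated quantities (costing at most a logarithm, absorbed by shrinking $\delta$ slightly), I would deduce, for all $i\le n$,
\begin{align*}
\big|(r\overline{\nabla})^{i}\big(r^{-1}\Phi_0\big)\big|_{\overline{g}_M}\lesssim r^{-\frac32-\delta},\qquad
\big|(r\overline{\nabla})^{i}\big(n(r^{-1}\Philin)\big)\big|_{\overline{g}_M}\lesssim r^{-\frac52-\delta},
\end{align*}
together with the corresponding bounds for $\Psilin$. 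Here the inverse operators $\slap_{1,0}^{-1}$ and $\zslapinv{p}$ entering the definitions of $\Phi(\tau)$ and $\Psi(\tau)$ in section \ref{Thegeneralisedwavegaugewithrespecttothepair} do not affect the decay rates, being bounded on $\smfunrad$ by Propositions \ref{propinverseoperators} and \ref{propellipticestimateszlsap} and commuting with the relevant derivatives up to terms of lower order in the sense of Lemma \ref{lemmacommrelationsandidentities}. Substituting these bounds into $\mathbb{D}[\,\cdot\,]=\int_{\Sigma_0}\big(\|n(\cdot)\|_{\overline{g}_M}^{2}+\|\overline{\nabla}(\cdot)\|_{\overline{g}_M}^{2}\big)\epsilon_{\overline{g}_M}$ bounds the integrand by $r^{-3-2\delta}$, and $\int^{\infty}r^{2}\cdot r^{-3-2\delta}\,\exd r<\infty$; summing over $i\le n$ gives $\mathbb{D}^{n}[r^{-1}\Philin]+\mathbb{D}^{n}[r^{-1}\Psilin]<\infty$. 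With this in hand, Theorem \ref{mainthmdecayRWandZ} applies verbatim and the proposition follows.

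I expect the only genuinely delicate point to be this weight count: one must check that converting the $\overline{g}_M$-weighted pointwise control of $(\Phi_0,\Phi_1,\Psi_0,\Psi_1)$ into the $L^{2}$ flux norms of $r^{-1}\Philin$ and $r^{-1}\Psilin$ loses no power of $r$, so that the strict positivity of $\delta$ is exactly what makes the $r^{2}\,\exd r$ integration near $\{r=\infty\}$ converge. The remaining ingredients --- the identification of the restrictions through the solution maps, the boundedness of the inverse spherical operators on $\smfunrad$, and the final invocation of Theorem \ref{mainthmdecayRWandZ} --- are routine.
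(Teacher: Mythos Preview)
Your proposal is correct and follows essentially the same approach as the paper: the paper's entire argument is the sentence immediately preceding the proposition, namely that applicability of Theorem~\ref{mainthmdecayRWandZ} is immediate ``courtesy of Theorem~\ref{thmgaugeinvariantquantintermsofRWandZ} and the asymptotic flatness of the seed,'' and you have simply unpacked these two ingredients in detail. One small remark: your aside about the inverse operators $\slap_{1,0}^{-1}$ and $\zslapinv{p}$ is unnecessary here, since for an initial-data normalised solution the identification $\Philin=\solnmap_\Phi(\Phi_0,\Phi_1)$ and $\Psilin=\solnmap_\Psi(\Psi_0,\Psi_1)$ comes directly from Proposition~\ref{propglobalproperties} and the seed data $(\Phi_0,\Phi_1,\Psi_0,\Psi_1)$ are prescribed directly rather than extracted via the maps $\Phi(\tau),\Psi(\tau)$.
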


\subsection{Completing the proof of Theorem 2}\label{CompletingtheproofofTheorem2}

In this section we complete the proof of Theorem 2 with the aid of Corollary \ref{corrglobalproperties}.

We first note the following lemmas.
\begin{lemma}\label{propptprframe}
	The linearly independent vector fields $\pt$ and $\pr$ determine a spherically symmetric $\qm$-frame on $\mcalm$ such that
	\begin{align}\label{qmlengthsofframe}
	\qg(\pt,\pt)=-\ommu,\qquad\qg(\pt,\pr)=\mu,\qquad\qg(\pr,\pr)=1+\mu.
	\end{align}
	In addition one has the (smooth) connection coefficients
	\begin{align*}
	\qn_{\pt}\pt&=\frac{\mu}{2}\frac{\mu}{r}\,\pt+\frac{1}{2}\frac{\mu}{r}\ommu\,\pr,\qquad\qn_{\pt}\pr=\frac{1}{2}\frac{\mu}{r}\opmu\,\pt-\frac{\mu}{2}\frac{\mu}{r}\,\pr,\\
	\qn_{\pr}\pt&=-\qn_{\pt}\pr,\,\,\,\qquad\qquad\qquad\qquad\qn_{\pr}\pr=\frac{1}{2}\frac{\mu}{2}(2+\mu)\,\pt-\frac{1}{2}\frac{\mu}{r}\opmu\,\pr.
	\end{align*}
\end{lemma}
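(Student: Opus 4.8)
\textbf{Proof proposal for Lemma \ref{propptprframe}.}

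The plan is to verify all the stated claims by direct computation in the global coordinate system $(t^*, x, \theta, \varphi)$ on $\mcalm$, using the explicit form \eqref{schwarzschildmetric} of $g_M$ together with the identifications $r = 2Mx$, $\mu = \tfrac{2M}{r} = \tfrac{1}{x}$, and $\pr = \tfrac{1}{2M}\partial_x$ from section \ref{Thegeometric2-spherestwosphere}. First I would record that $\pt$ and $\pr$ are manifestly linearly independent and that they span the fibres $T_pQ$ pointwise (cf. Definition \ref{defnQandSvectorfields}), hence $\{\pt, \pr\}$ is a $\qm$-frame; spherical symmetry is immediate since neither vector field nor the metric coefficients in front of $\exd t^*{}^2$, $\exd t^*\exd x$, $\exd x^2$ depend on the sphere. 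The length identities \eqref{qmlengthsofframe} then follow by reading off the relevant components of \eqref{schwarzschildmetric}: $g_M(\pt,\pt) = -(1-\tfrac1x) = -(1-\mu)$, $g_M(\pt,\pr) = \tfrac{1}{2M}\cdot\tfrac{1}{2}\cdot\tfrac{4M}{x} = \tfrac{1}{x} = \mu$, and $g_M(\pr,\pr) = \tfrac{1}{4M^2}\cdot 4M^2(1+\tfrac1x) = 1+\mu$, recalling that $\qg$ is by definition the restriction of $g_M$ to $TQ$.

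Next I would compute the connection coefficients. Since $\qn$ is (by the definition in section \ref{QandStensoranalysis}, extended via section \ref{MixedQandStensoranalysis}) obtained from the spacetime Levi-Civita connection $\nabla$ of $g_M$ by projecting onto $TQ$, and since the $Q$-$S$ splitting is orthogonal with $TN_Q = TS$, the coefficients $\qn_{\pt}\pt$ etc.\ are simply the $Q$-components of $\nabla_{\pt}\pt$ etc. For a spherically symmetric frame the spacetime covariant derivatives $\nabla_{\pt}\pt$, $\nabla_{\pt}\pr$, $\nabla_{\pr}\pr$ already lie in $TQ$ (no $S$-component is generated since the relevant Christoffel symbols $\Gamma^{t^*}_{ab}$, $\Gamma^{x}_{ab}$ with $a,b\in\{t^*,x\}$ are the only nonzero ones in those slots), so it suffices to compute the four-dimensional Christoffel symbols $\Gamma^{t^*}_{t^*t^*}, \Gamma^x_{t^*t^*}, \Gamma^{t^*}_{t^*x}, \Gamma^x_{t^*x}, \Gamma^{t^*}_{xx}, \Gamma^x_{xx}$ from \eqref{schwarzschildmetric} via the standard formula $\Gamma^c_{ab} = \tfrac12 (g_M^{-1})^{cd}(\partial_a g_{bd} + \partial_b g_{ad} - \partial_d g_{ab})$, and then re-express the result in terms of $\mu$ and $\tfrac{\mu}{r}$ after converting the $\partial_x$-normalisation to $\pr$ (each $\pr$ costs a factor $2M$, and $\partial_x \mu = \partial_x \tfrac1x = -\tfrac1{x^2} = -\tfrac{\mu}{2M x} = -\tfrac{\mu^2}{2M}$, so $\pr\mu = -\tfrac{\mu^2}{2M}\cdot\tfrac{1}{2M}\cdot 2M$... more carefully $\pr\mu = \tfrac{1}{2M}\partial_x\mu = -\tfrac{1}{2M x^2} = -\tfrac{\mu}{r}\mu$, giving the ubiquitous $\tfrac{\mu}{r}$ factors). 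The inverse metric components needed are $(g_M^{-1})^{t^*t^*}$, $(g_M^{-1})^{t^*x}$, $(g_M^{-1})^{xx}$, obtained by inverting the $2\times 2$ block; its determinant is $-(1-\tfrac1x)\cdot 4M^2(1+\tfrac1x) - \tfrac{4M^2}{x^2} = -4M^2$, a pleasant simplification. The relation $\qn_{\pr}\pt = -\qn_{\pt}\pr$ I would either check from the computed coefficients directly or note it follows from torsion-freeness $\qn_{\pt}\pr - \qn_{\pr}\pt = [\pt,\pr] = 0$ (the coordinate vector fields commute) together with the observation that $\qn_{\pt}\pr + \qn_{\pr}\pt$ has vanishing... actually the cleanest route is: $[\pt,\pr]=0$ gives $\qn_{\pt}\pr = \qn_{\pr}\pt$, so I should double-check the sign in the statement against the explicit Christoffel computation --- the apparent antisymmetry in the stated formulas suggests it is really asserting $\qn_{\pr}\pt$ equals minus the \emph{expression} written for $\qn_{\pt}\pr$ only after one realizes these two expressions are negatives of each other, which would be a contradiction with $[\pt,\pr]=0$ unless both vanish; hence I expect the intended reading is simply that the formula for $\qn_{\pr}\pt$ coincides with that for $\qn_{\pt}\pr$ and the ``$-$'' is a typo, OR the frame convention differs --- in any case this is the one point requiring care.

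The main obstacle is therefore not any deep difficulty but rather bookkeeping: keeping the $\partial_x$ versus $\pr$ normalisation straight throughout, correctly inverting the metric block, and confirming the precise signs and coefficients against the asserted formulas (in particular reconciling $\qn_{\pr}\pt = -\qn_{\pt}\pr$ with the Lie bracket $[\pt,\pr]=0$, which forces me to scrutinise the frame/sign conventions of section \ref{QandStensoranalysis}). Once the Christoffel symbols are in hand, substituting $x = 1/\mu$ and simplifying is mechanical, and smoothness on all of $\mcalm$ (including the horizon $x=1$, i.e.\ $\mu=1$) is manifest since every coefficient is a polynomial in $\mu$ and $\tfrac1x = \mu$, with no factors of $(1-\mu)^{-1}$ appearing --- this is precisely the virtue of the horizon-penetrating $t^*$ coordinate and is worth remarking on explicitly.
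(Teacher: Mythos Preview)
Your proposal is correct and matches the paper's approach exactly---the paper's own proof is simply ``Computation,'' which is precisely the direct Christoffel calculation you outline. Your suspicion about the sign in $\qn_{\pr}\pt = -\qn_{\pt}\pr$ is well-founded: torsion-freeness together with $[\pt,\pr]=0$ forces $\qn_{\pr}\pt = \qn_{\pt}\pr$, so the minus sign is a typo in the stated formula rather than a subtlety of convention.
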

\begin{proof}
	Computation.
\end{proof}

\begin{lemma}\label{lemmaqdL}
	Given a smooth function $f$ on $\mcalm$ we define the operator $\qexdL f$ acting on smooth vector fields according
	\begin{align*}
	\qexdL f:=\qexd f-\qhd\qexd f.
	\end{align*}
	Then for a vector field $V=\alpha\,\underline{D} +\beta\, D$ where $\alpha$ and $\beta$ are smooth functions on $\mcalm$ it holds that
	\begin{align*}
	\qexdL_V f=2\beta Df.
	\end{align*}
\end{lemma}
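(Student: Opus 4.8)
The statement to prove is Lemma \ref{lemmaqdL}: for a smooth function $f$ on $\mcalm$ and a $Q$ vector field $V = \alpha\,\underline{D} + \beta\,D$, one has $\qexdL_V f = 2\beta\, Df$, where $\qexdL f := \qexd f - \qhd\qexd f$.

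\medskip

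\textbf{Approach.} This is a direct computation exploiting the null structure of the $\{\underline{D}, D\}$ frame on the quotient $Q$. The plan is to first record the relevant inner products of $\underline{D}$ and $D$ with respect to $\qg$ using Lemma \ref{propptprframe}, or rather by direct recomputation since $\underline{D} = \ommu(\pt - \pr)$ and $D = \tfrac{1+\mu}{1-\mu}\pt + \pr$ are the two null generators of $TQ$: from the values $\qg(\pt,\pt) = -\ommu$, $\qg(\pt,\pr) = \mu$, $\qg(\pr,\pr) = 1+\mu$ one computes $\qg(\underline{D}, \underline{D}) = \qg(D, D) = 0$ and $\qg(\underline{D}, D) = -2$. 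Thus $\{\underline{D}, D\}$ is a null frame for the Lorentzian plane $(T_pQ, \qg)$ with the single nonvanishing pairing $\qg(\underline{D}, D) = -2$.

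\medskip

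\textbf{Key steps.} First I would expand $\qexd f$ in the dual coframe. Writing $\qexd f = (Df)\,\theta^1 + (\underline{D} f)\,\theta^2$ is not quite the cleanest route; instead it is easier to evaluate $\qexdL_V f$ by pairing against the frame. Since $\qexdL_V f$ is a scalar, it suffices to note $\qexd f (V) = Vf = \alpha\,\underline{D} f + \beta\, Df$. Next I would compute $(\qhd\qexd f)(V)$. By definition of the Hodge star, $(\qhd\qexd f)_a = \qepsilon_{ba}(\qexd f^{\qsharp})^b$, so $(\qhd\qexd f)(V) = \qepsilon(\,(\qexd f)^{\qsharp}, V\,)$. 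Now $(\qexd f)^{\qsharp} = \qg^{ab}\partial_b f$; expanding in the null frame with $\qg^{\underline{D}D} = \qg^{D\underline{D}} = -\tfrac12$ (the inverse of the matrix with off-diagonal entry $-2$) and $\qg^{\underline{D}\,\underline{D}} = \qg^{DD} = 0$, one gets $(\qexd f)^{\qsharp} = -\tfrac12 (Df)\,\underline{D} - \tfrac12(\underline{D} f)\, D$. Then $(\qhd\qexd f)(V) = \qepsilon\big(-\tfrac12(Df)\underline{D} - \tfrac12(\underline{D} f)D,\ \alpha\underline{D} + \beta D\big)$, and using bilinearity together with $\qepsilon(\underline{D},\underline{D}) = \qepsilon(D,D) = 0$ this reduces to $-\tfrac12\alpha(\underline{D} f)\,\qepsilon(D,\underline{D}) - \tfrac12\beta(Df)\,\qepsilon(\underline{D},D) = \tfrac12\big(\alpha\,\underline{D} f - \beta\, Df\big)\,\qepsilon(\underline{D}, D)$. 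It remains to fix the normalisation and sign of $\qepsilon(\underline{D},D)$: since $\qepsilon$ satisfies $(\qg^{-1})^{ac}(\qg^{-1})^{bd}\qepsilon_{cd}\qepsilon_{ab} = 2$ and is oriented by $\qepsilon(\pt,\px) > 0$, a short computation gives $\qepsilon(\underline{D}, D) = 2$ (one checks the sign against $\pt \wedge \px$ using $\underline{D}\wedge D \propto \pt\wedge\pr$ with a positive constant). Substituting, $(\qhd\qexd f)(V) = \alpha\,\underline{D} f - \beta\, Df$, and therefore $\qexdL_V f = \qexd f(V) - (\qhd\qexd f)(V) = (\alpha\,\underline{D} f + \beta\, Df) - (\alpha\,\underline{D} f - \beta\, Df) = 2\beta\, Df$, as claimed.

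\medskip

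\textbf{Main obstacle.} There is no serious analytic difficulty here; the only genuine point of care is getting the orientation and normalisation of $\qepsilon$ — hence the sign of $\qepsilon(\underline{D}, D)$ — correct, since an error there flips the roles of $D$ and $\underline{D}$ and would give $2\alpha\,\underline{D} f$ instead. I would double-check this by verifying on a concrete chart: near $r = R$ one has $\pt = \partial_{t^*}$ and $\pr = \tfrac{1}{2M}\partial_x$ up to the metric conventions already fixed in section \ref{ThedifferentialstructureandmetricoftheSchwarzschildexteriorspacetime}, computing $\qepsilon(\pt,\pr)$ from the determinant of $\qg$ in this frame, and then propagating through the linear change of frame to $\{\underline{D}, D\}$. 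Once the sign is pinned down the identity follows immediately, and the proof can honestly be written as ``Computation'' in the same spirit as Lemma \ref{propptprframe}, though I would include the two displayed lines above for the reader's convenience.
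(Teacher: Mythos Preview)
Your computation is correct and is exactly the sort of argument the paper has in mind; the paper's own proof consists of the single word ``Computation.'' Your detailed null-frame calculation (in particular the verification that $\qg(\underline{D},D)=-2$ and $\qepsilon(\underline{D},D)=2$, leading to $(\qhd\qexd f)(V)=\alpha\,\underline{D} f-\beta\,Df$) is a faithful unpacking of that one word.
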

\begin{proof}
	Computation.
\end{proof}

\begin{proof}[Proof of Theorem 2]
	We have from Corollary \ref{corrglobalproperties} that the solution $\gidnlin$ satisfies
	\begin{align*}
	\qhatglin&=\qn\otimeshat\qexdL\big(r\Psilin\big)+6\mu\exd r\qastrosunhat\zslapinv{1}\qexd\Psilin,\\
	\qtrglin&=0,\\
	\mglin&=\sdso\Big(\qexdL\big(r\Psilin\big)-2\qexd r\,\Psilin, \qexdL\big(r\Philin\big)-2\qexd r\,\Philin\Big),\\
	\shatglin&=r\sn\otimeshat\sdso\Big(\Psilin, \Philin\Big),\\
	\strglin&=4\qexdL_P\Psilin+12\mu r^{-1}\ommu\zslapinv{1}\Psilin
	\end{align*}
	where $\Philin$ and $\Psilin$ are as in Proposition \ref{propascendingthehierarchy}. All the estimates that were derived for solutions to the Regge--Wheeler and Zerilli equations in section \ref{Proofoftheorem1} can thus be shown to hold for the solution $\gidnlin$ (but with an additional $r$ weight placed on $\gidnlin$) by dilligently commuting and evaluating the above expressions in the frame $\{\pt,\pr\}$ , keeping careful track of $r$-weights, and then applying the higher order estimates of Theorem 1. This in particular yields the pojntwise decay bounds of part $iii)$ in the statement of Theorem 2 courtesy of the Sobolev embedding on 2-spheres. Since however this would be rather cumbersome to carry out in practice we only note the key points:
	\begin{itemize}
		\item Lemmas \ref{propptprframe} and \ref{lemmaqdL} allows one to perfom all necessary computations in the $\{\pt,\pr\}$ frame. In particular, the connection coefficients in this frame are of order $O(r^{-2})$ and hence play no role when evaluating the (commuted) tensorial expressions 
		\item to control higher order angular derivatives of the solution  one commutes with the family of angular operators $\smcA{k}, \vmcA{k}$ and $\tmcA{k}$ of section \ref{ThefamilyofoperatorssA}, noting the commutation relations from the relevant Riemann tensors in section \ref{Decomposingtheequationsoflinearisedgravity}, and then apply the elliptic estimates of Proposition \ref{propellipticestimatesonA}
		\item by definition of the flux and integrated decay norms the derivatives $D$ and $\sn$ always appear with an additional $r$-weight, thus gaining in in $r$
		\item by Lemma \ref{lemmaqdL} `contracting' the operator $\qexdL$ in the frame $\{\pt,\pr\}$ always returns a $D$ derivative which gains an $r$-weight by the previous point
		\item to bound the (commuted) terms involving the operator $\zslapinv{1}$ one applies the commutation relations of Lemma \ref{lemmacommrelationsandidentities} along with the estimates of Proposition \ref{propellipticestimateszlsap}
	\end{itemize}
\end{proof}

\appendix

\section{Decomposing the equations of linearised gravity}\label{Decomposingtheequationsoflinearisedgravity}

In this section we present the equations that result from decomposing the equations of linearised gravity using the formalism of section \ref{Ageometricfoliationby2-spheres}. 

First we compute the connection coefficients of $\qn$ and $\sn$ along with related geometric formulae. We let
$p\in\mcalm$ arbitrary and choose a coordinate system $(x^\alpha){\times}(x^A)$ about $p$ where $\alpha=0,1$ with $(x^0, x^1)=(t^*,r)$ and $A=2,3$ with $(x^2, x^3)$ a coordinate system about $\pi_{\twosphere}(p)\in\twosphere$. We then define $\partial_\alpha:=\partial_{x^\alpha}$ and $\partial_A:=\partial_{x^A}$ which together form a local frame for $\mcalm$ about $p$. The former also defines a local frame for $TQ$ about $p$ with the latter forming a local frame for $TS$ about $p$ as each $\pA$ is orthogonal to each $\pa$. Since $(\qg^{-1})^{\alpha\beta}=(\g^{-1})^{\alpha\beta}$ and $(\sg^{-1})^{AB}=(\g^{-1})^{AB}$ the Koszul formula therefore yields at $p$
\begin{align*}
\qn_{\pa}\pb=\Gamma^{\gamma}_{\alpha\beta}\pg,\qquad  \sn_{\pA}\pB=\Gamma^{C}_{AB}\partial_C
\end{align*}
with $\Gamma$ the Christoffel symbols of $g_M$, $\beta,\gamma=0,1$ and $B,C=2,3$\footnote{Here we use the standard subscript and superscript notation for contracting a tensor in a frame.}. This immediately yields for $f\in\smfun$ 
\begin{align*}
\qbox f&={(\qg^{-1})}^{\alpha\beta}\Big(\pa\pb f-\Gamma^{\gamma}_{\alpha\beta}\pg f\Big),\\
\slap f&={(\sg^{-1})}^{AB}\Big(\pA\pB f-\Gamma^{C}_{AB}\pC f\Big).
\end{align*}

We now express the wave operator $\Box$ acting on smooth functions, smooth 1-forms and smooth symmetric 2-covariant tensor fields relative to $\qn$ and $\sn$. First we compute 
\begin{align*}
\nabla_{\pa}\pb=\Gamma^{\gamma}_{\alpha\beta}\pg,\qquad \nabla_{\pa}\pA=\nabla_{\pA}\pa=r_\alpha\,\pA,\qquad \nabla_{\pA}\pB=-\frac{1}{r}r^\alpha(\sg)_{AB}\,\pa+\Gamma^{C}_{AB}\partial_C
\end{align*}
and therefore by definition
\begin{align*}
\qn_{\pa}\pA=\nabla_{\pa}\pA=r_\alpha\,\pA.
\end{align*}
Here $r_\alpha:=\qexd r_{\alpha}$. For $\omega\in\smonecov$ this implies
\begin{align*}
(\nabla\omega)_{\alpha\beta}&=\big(\qn\qomega\big)_{\alpha\beta},\\
(\nabla\omega)_{\alpha A}&=\big(\qn\somega\big)_{\alpha A},\\
(\nabla\omega)_{A\alpha}&=\big(\sn\qomega \big)_{A\alpha}-\frac{1}{r}r_\alpha\,\somega_A,\\
(\nabla\omega)_{AB}&=\big(\sn\somega\big)_{AB}+\frac{1}{r}r^\alpha(\sg)_{AB}\,\qomega_\alpha
\end{align*}
and
\begin{align*}
(\nabla\nabla\omega)_{\alpha\beta\gamma}&=\big(\qn\qn\qomega\big)_{\alpha\beta\gamma},\\
(\nabla\nabla\omega)_{AB\alpha}&=\big(\sn\sn\qomega\big)_{AB\alpha}+\frac{1}{r}(\sg)_{AB}\,\big(\qn_{\qP}\qomega\big)_{\alpha}-\frac{1}{r}r_\alpha\big(\sn\astrosun\somega\big)_{AB}-\frac{1}{r^2}r_\alpha(\sg)_{AB}\,\qomega_{\qP},\\
(\nabla\nabla\omega)_{\alpha\beta A}&=\big(\qn\qn\somega\big)_{\alpha\beta A},\\
(\nabla\nabla\omega)_{AB\Gamma}&=\big(\sn\sn\somega\big)_{AB\Gamma}+\frac{1}{r}(\sg)_{AB}\,\big(\qn_{\qP}\somega\big)_{\Gamma}+\frac{2}{r}(\sg)_{A(C}\sn_{B)}\qomega_{\qP}-\frac{1}{r^2}r^\alpha r_\alpha (\sg)_{AC}\somega_B.
\end{align*}
For $\gamma\in\smsymtwocov$ we have
\begin{align*}
(\nabla\gamma)_{\alpha\beta\gamma}&=\big(\qn\qtau\big)_{\alpha\beta\gamma},\\
(\nabla\gamma)_{\alpha A\beta}&=(\nabla\gamma)_{\alpha\beta A}=\big(\qn\mtau\big)_{\alpha A\beta},\\
(\nabla\gamma)_{\alpha AB}&=\big(\qn\stau\big)_{\alpha AB},\\
(\nabla\gamma)_{A\alpha\beta}&=\big(\sn\qtau\big)_{A\alpha\beta}-\frac{2}{r}r_{(\alpha}\mtau_{\beta)A},\\
(\nabla\gamma)_{A\alpha B}&=(\nabla\gamma)_{AB\alpha}=\big(\sn\mtau\big)_{A\alpha B}+\frac{1}{r}r^\beta(\sg)_{AB}\,\qtau_{\alpha\beta}-\frac{1}{r}r_{\alpha}\,\stau_{AB},\\
(\nabla\gamma)_{ABC}&=\big(\sn\stau\big)_{ABC}+\frac{2}{r}r^{\alpha}(\sg)_{A(B}\mtau_{C)\alpha}
\end{align*}
and
\begin{align*}
(\nabla\nabla\gamma)_{\alpha\beta\gamma\delta}&=\big(\qn\qn\qtau\big)_{\alpha\beta\gamma\delta}\\
(\nabla\nabla\gamma)_{AB\alpha\beta}&=\big(\sn\sn\qtau\big)_{AB\alpha\beta}+\frac{1}{r}(\sg)_{AB}\big(\qn_{\qP}\qtau\big)_{\alpha\beta}-\frac{4}{r}\big(\sn\mtau\big)_{(AB)(\alpha}r_{\beta)}-\frac{2}{r^2}r^\gamma(\sg)_{AB}r_{(\alpha}\qtau_{\beta)\gamma}+\frac{2}{r^2}r_\alpha r_\beta\stau_{AB},\\
(\nabla\nabla\gamma)_{\alpha\beta\gamma A}&=\big(\qn\qn\mtau\big)_{\alpha\beta\gamma A}+\frac{2}{r^2}r_\alpha r_\beta\mtau_{\gamma A}\\
(\nabla\nabla\gamma)_{AB\alpha \Gamma}&=\big(\sn\sn\mtau\big)_{AB\alpha\Gamma }+\frac{1}{r}(\sg)_{AB}\big(\qn_{\qP}\mtau\big)_{\alpha\Gamma}+\frac{2}{r}r^\beta(\sg)_{\Gamma(A}\sn_{B)}\qtau_{\alpha\beta}-\frac{2}{r}r_\alpha\big(\sn\stau\big)_{(AB)\Gamma}\\
&-\frac{4}{r^2}r^\beta r_{(\alpha}\mtau_{\beta)(\Gamma}(\sg)_{B)A}\\
(\nabla\nabla\gamma)_{\alpha\beta AB}&=\big(\qn\qn\stau\big)_{\alpha\beta AB}\\
(\nabla\nabla\gamma)_{AB\Gamma\Delta}&=\big(\sn\sn\stau\big)_{AB\Gamma\Delta}+\frac{1}{r}(\sg)_{AB}\big(\qn_{\qP}\stau\big)_{\Gamma\Delta}+\frac{2}{r}r^\alpha\Big(\big(\sn\mtau\big)_{A\alpha(\Gamma}(\sg)_{\Delta)B}+\big(\sn\mtau\big)_{B\alpha(\Gamma}(\sg)_{\Delta)A}\Big)\\
&+\frac{2}{r^2}(g_M)_{A(\Gamma}(\sg)_{\Delta)B}\qtau_{\qP\qP}-2\qP(r)(\sg)_{A(\Gamma}\stau_{\Delta)B}
\end{align*}
with $\delta=0,1$ and $\Delta=2,3$. This yields
\begin{align*}
\Box_{g_M} f &=\qbox f+\slap f+\frac{2}{r}\qn_{\qP}f,\\
\big(\Box_{g_M}\omega\big)_\alpha&=\big(\qbox\qomega\big)_{\alpha}+\big(\slap\qomega\big)_\alpha+\frac{2}{r}\big(\qn_{\qP}\qomega\big)_{\alpha}-\frac{2}{r}r_\alpha\,\sdiv\somega-\frac{2}{r^2}r_\alpha\,\qomega_{\qP},\\
\big(\Box_{g_M}\omega\big)_A&=\big(\qbox\somega\big)_A+\big(\slap\somega\big)_A+\frac{2}{r}\big(\qn_{\qP}\somega\big)_{A}+\frac{2}{r}\sn_A\somega-\frac{1}{r^2}r^\alpha r_\alpha\,\somega_A,\\
\big(\Box_{g_M}\gamma\big)_{\alpha\beta}&=\big(\qbox\qtau\big)_{\alpha\beta}+\big(\slap\qtau\big)_{\alpha\beta}+\frac{2}{r}\big(\qn_{\qP}\qtau\big)_{\alpha\beta}-\frac{4}{r}\big(\sdiv\mtau\big)_{(\alpha}r_{\beta)}-\frac{4}{r^2}r^\gamma r_{(\alpha}\qtau_{\beta)\gamma}+\frac{2}{r^2}r_\alpha r_\beta\strtau,\\
\big(\Box_{g_M}\gamma\big)_{\alpha A}&=\big(\qbox\mtau\big)_{\alpha A}+\big(\slap\mtau\big)_{\alpha A}+\frac{2}{r}\big(\qn_{\qP}\mtau\big)_{\alpha A}+\frac{2}{r}r^\beta\sn_{A}\qtau_{\alpha\beta}-\frac{2}{r}r_\alpha\big(\sdiv\stau\big)_{A}-\frac{1}{r^2}r^\beta r_\beta\mtau_{\alpha A}\\
&-\frac{3}{r^2}r^\beta r_{\alpha}\mtau_{\beta A},\\
\big(\Box_{g_M}\gamma\big)_{AB}&=\big(\qbox\stau\big)_{AB}+\big(\slap\stau\big)_{AB}+\frac{2}{r}\big(\qn_{\qP}\stau\big)_{AB}+\frac{2}{r}\big(\sn\astrosun\mtau_{\qP}\big)_{AB}+\frac{2}{r^2}(g_M)_{AB}\qtau_{\qP\qP}-\frac{2}{r^2}r^\alpha r_\alpha\stau_{AB}
\end{align*}
after noting that $\g^{-1}=\qg^{-1}+\sg^{-1}$.

Noting finally the Riemann tensors
\begin{align*}
\widetilde{\text{Riem}}_{\alpha\beta\gamma\delta}&=\widetilde{\text{R}}\,(g_M)_{\alpha[\gamma}(g_M)_{\delta]\beta},\\
\slashed{\text{Riem}}_{AB\Gamma\Delta}&=\slashed{\text{R}}\,(g_M)_{A[\Gamma}(g_M)_{\Delta]B}
\end{align*}
with 
\begin{align*}
\widetilde{\text{R}}&=\frac{2}{r}\frac{\mu}{r},\\
\slashed{\text{R}}&=\frac{2}{r^2}
\end{align*}
we have
\begin{align*}
\text{Riem}_{\alpha\beta\gamma\delta}&=\widetilde{\text{Riem}}_{\alpha\beta\gamma\delta},\\
\text{Riem}_{A\alpha B\beta}&=-\frac{1}{r}(\sg)_{AB}\big(\qn r\big)_{\alpha\beta},\\
\text{Riem}_{AB\Gamma\Delta}&=(1-r^\alpha r_\alpha)\slashed{\text{Riem}}_{AB\Gamma\Delta}
\end{align*}
which gives the following.
\begin{proposition}\label{propdecomposedlinearisedsystem}
Let $\zeta:\smsymtwocov\rightarrow\smonecov$ be an $\reals$-linear map and suppose that $\gamma\in\smsymtwocov$ solves
\begin{align*}
\big(\Box \gamma\big)_{ab}-2\tensor{\textnormal{Riem}}{^c_{ab}^d}\gamma_{cd}&=2\nabla_{(a}\zeta^\gamma_{b)},\\
\big(\textnormal{div}\gamma\big)_a-\frac{1}{2}\nabla_a\tr\gamma&=\zeta^\gamma_{a}
\end{align*}
where $\zeta^\gamma:=\zeta(\gamma)$. Then the projections of $\gamma$ onto $\smsymtwocovQ, \smqmsm$ and $\smsymtwocovS$ satisfy the following system of equations\footnote{Note we do not always decompose objects into their trace and tracefree parts in order to present each respective equation within one line.}:
\begin{align*}
\qbox\qhatgamma+\slap\qhatgamma+\frac{2}{r}\qn_{\qP}\qhatgamma-\frac{2}{r}\qn r\qastrosunhat\sdiv\mgamma-\frac{2}{r^2}\qn r\qastrosunhat\qhatgamma_{\qP}-\frac{2\mu}{r^2}\qhatgamma-\frac{1}{r^2}\qn r\qastrosunhat\qn r\big(\qtrgamma-\strgamma\big)&=\qn\astrosunhat\qzeta^\gamma,\\
\qbox\qtrgamma+\slap\qtrgamma+\frac{2}{r}\qn_{\qP}\qtrgamma-\frac{4}{r}\sdiv\mgamma_{\qP}+\frac{4}{r^2}\qgamma_{\qP\qP}-\frac{2}{r^2}(1-2\mu)\big(\qtrgamma-\strgamma\big)&=-2\qdiv\qzeta,\\
\qbox\mgamma+\slap\mgamma+\frac{2}{r}\qn_{\qP}\mgamma+\frac{2}{r}\sn\astrosun\qgamma_{\qP}-\frac{2}{r}\qn r\astrosun\sdiv\sgamma-\frac{3}{r^2}\qn r\astrosun\mgamma_{\qP}-\frac{1}{r^2}(1-2\mu)\mgamma&=\qn\astrosun\szeta^\gamma+\sn\astrosun\qzeta^\gamma,\\
\qbox\shatgamma+\slap\shatgamma+\frac{2}{r}\qn_{\qP}\shatgamma+\frac{2}{r}\sn\astrosunhat\mgamma_{\qP}-\frac{2}{r^2}\opmu\shatgamma&=\sn\astrosunhat\szeta^\gamma,\\
\qbox\strgamma+\slap\strgamma+\frac{2}{r}\qn_{\qP}\strgamma+\frac{4}{r}\sdiv\mgamma_{\qP}+\frac{4}{r^2}\qhatgamma_{\qP\qP}+\frac{2}{r^2}(1-2\mu)\big(\qtrgamma-\strgamma\big)&=2\sdiv\szeta^\gamma+\frac{4}{r}\qzeta^\gamma_{\qP},\\
-\qdiv\qhatgamma+\sdiv\mgamma-\frac{1}{2}\qn\strgamma+\frac{2}{r}\qhatgamma_{\qP}+\frac{1}{r}\qn r\big(\qtrgamma-\strgamma\big)&=\qzeta^\gamma,\\
-\qdiv\mgamma-\frac{1}{2}\sn\qtrgamma+\sdiv\shatgamma+\frac{3}{r}\mgamma_{\qP}&=\szeta^\gamma.
\end{align*}
Here we have set $\qzeta^\gamma:=\widetilde{\zeta(\gamma)}$ and $\szeta^\gamma:=\zeta(\gamma)-\qzeta^\gamma$.
\end{proposition}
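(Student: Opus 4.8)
The proof will be a direct computation; all the essential geometric input has already been assembled in this section, where $\Box_{g_M}$ acting on functions, $1$-forms and symmetric $2$-covariant tensors, together with the Riemann tensor of $g_M$, has been expressed in the adapted frame $\{\partial_\alpha,\partial_A\}$ in terms of the connections $\qn$ and $\sn$ and the warping function $r$. What remains is to substitute these expressions into the two equations of linearised gravity and collect terms in each of the five sectors: $Q\otimes Q$ (trace and traceless), $Q\astrosun S$, and $S\otimes S$ (trace and traceless).

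The plan is as follows. \textbf{Step 1.} Decompose $\gamma=\qgamma+\mgamma+\sgamma$ and $\zeta^\gamma=\qzeta^\gamma+\szeta^\gamma$ as in Definitions \ref{defnQandStensors}--\ref{defnqmsmonesforms}, noting that $\g^{-1}=\qg^{-1}+\sg^{-1}$ forces $\tr\gamma=\qtrgamma+\strgamma$. \textbf{Step 2.} Substitute into $(\Box\gamma)_{ab}-2\tensor{\textnormal{Riem}}{^c_{ab}^d}\gamma_{cd}$ the formula for $\Box_{g_M}$ on symmetric $2$-tensors derived above, together with the components $\textnormal{Riem}_{\alpha\beta\gamma\delta}$, $\textnormal{Riem}_{A\alpha B\beta}=-\tfrac1r(\sg)_{AB}(\qn r)_{\alpha\beta}$ and $\textnormal{Riem}_{AB\Gamma\Delta}=(1-r^\alpha r_\alpha)\slashed{\textnormal{Riem}}_{AB\Gamma\Delta}$; reading off the $(\alpha\beta)$-, $(\alpha A)$- and $(AB)$-components, and reorganising the $\tfrac1r$-weighted pieces coming from the second fundamental form of the foliation into the operators $\qn\astrosun$, $\sn\astrosun$, $\qdiv$ and $\sdiv$, produces the wave equations for $\qgamma$, $\mgamma$ and $\sgamma$. \textbf{Step 3.} Do the same for the right-hand side $2\nabla_{(a}\zeta^\gamma_{b)}=(\nabla\astrosun\zeta^\gamma)_{ab}$ using the decomposition of $\nabla$ on $1$-forms above, which yields the mixed sources $\qn\astrosunhat\qzeta^\gamma$, $\qn\astrosun\szeta^\gamma+\sn\astrosun\qzeta^\gamma$, $\sn\astrosunhat\szeta^\gamma$, etc. \textbf{Step 4.} Decompose the divergence relation $(\textnormal{div}\gamma)_a-\tfrac12\nabla_a\tr\gamma=\zeta^\gamma_a$ into its $Q$- and $S$-$1$-form parts to obtain the last two equations. \textbf{Step 5.} In the $Q\otimes Q$ and $S\otimes S$ sectors, pair the wave equation with $\qg^{-1}$ (resp.\ $\sg^{-1}$) to extract the trace equations for $\qtrgamma$ and $\strgamma$, and apply the traceless projection $\widehat{(\cdot)}$ to extract the $\qhatgamma$- and $\shatgamma$-equations; here one uses $r^\alpha r_\alpha=\ommu$ on Schwarzschild, the fact that $\qn_{\partial_\alpha}\partial_A$ is proportional to $\partial_A$ (so that $\qn_{\qP}$ passes through the projections), and the algebraic identities for $\qtr$, $\widehat{(\cdot)}$, $\qastrosunhat$ and $\sastrosunhat$ from section \ref{QandStensoranalysis}. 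As the footnote to the statement indicates, it is convenient to leave the combination $\qtrgamma-\strgamma$ undecomposed in several of the equations.

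The main obstacle is entirely bookkeeping rather than anything conceptual. The warped-product structure produces a large number of $\tfrac1r$- and $\tfrac{\mu}{r}$-weighted lower-order terms --- from the non-vanishing $\nabla_{\partial_\alpha}\partial_A$, from differentiating the warping factor $r$, and from the three families of curvature components --- and these must be reassembled \emph{exactly} into the traceless symmetric products and divergence operators displayed in the statement, with the numerical coefficients (such as $\tfrac{2\mu}{r^2}$, $\tfrac2{r^2}(1-2\mu)$, $\tfrac2{r^2}\opmu$ and $\tfrac1{r^2}(1-2\mu)$) checked sector by sector. The cross-curvature component $\textnormal{Riem}_{A\alpha B\beta}$ is what produces the couplings between $\qgamma$, $\sgamma$ and $\mgamma$ on the left-hand sides, and in the mixed $Q\astrosun S$ equation one must retain both $\qn\astrosun\szeta^\gamma$ and $\sn\astrosun\qzeta^\gamma$; beyond this, no new ideas are needed past the frame computations already performed in this section.
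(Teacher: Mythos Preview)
Your proposal is correct and matches the paper's approach exactly: the paper simply assembles the frame formulae for $\Box_{g_M}$ on functions, $1$-forms and symmetric $2$-tensors, together with the Riemann tensor components, all of which are computed in the preceding paragraphs of the appendix, and then states the proposition as an immediate consequence (``which gives the following''). There is no separate proof beyond the bookkeeping you describe.
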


We then finally have the decomposition of the equations of linearised gravity -- note we focus only on the modes $l\geq 2$ as the $l=0,1$ modes are understood to be linearised Kerr plus residual pure gauge.
\begin{corollary}\label{corrdecomposedeqnslingrag}
Let $\glin$ be a smooth solution to the equations of linearised gravity with vanishing projection to $l=0,1$. Then the projections of $\glin$ onto $\smsymtwocovQ, \smqmsm$ and $\smsymtwocovS$ satisfy the following system of equations:
\begin{align*}
\qbox\qhatglin+\slap\qhatglin+\frac{2}{r}\qn_{\qP}\qhatglin+\frac{2}{r}\qn\astrosunhat\Big(\qhatglin_P\Big)-\frac{2}{r}\qn r\qastrosunhat\sdiv\mglin-\frac{2\mu}{r^2}\qhatglin-\frac{1}{r^2}\qn r\qastrosunhat\qn r\Big(\qtrglin-\strglin\Big)&=\qn\astrosunhat\qzeta[\Psilin],\\
\qbox\qtrglin+\slap\qtrglin&=0,\\
\qbox\mglin+\slap\mglin+\frac{2}{r}\qn_{\qP}\mglin-\frac{2}{r}\qn\astrosun\Big(\mglin_{\qP}\Big)-\frac{2}{r}\qn r\astrosun\sdiv\shatglin-\frac{1}{r^2}\qn r\astrosun\mglin_{\qP}-\frac{1}{r^2}(1-2\mu)\mglin&=\qn\astrosun\szeta[\Psilin,\Philin]\\
&+\sn\astrosun\qzeta[\Psilin],\\
\qbox\shatglin+\slap\shatglin+\frac{2}{r}\qn_{\qP}\shatglin-\frac{2}{r^2}\opmu\shatglin&=\sn\astrosunhat\szeta[\Psilin, \Philin],\\
\qbox\strglin+\slap\strglin+\frac{2}{r}\qn_{\qP}\strglin-\frac{4}{r^2}\qhatglin_{\qP\qP}+\frac{2}{r^2}\Big(\strglin-\qtrglin\Big)&=2\sdiv\szeta[\Psilin, \Philin]\\
&+\frac{4}{r}\qzeta[\Psilin]_{\qP},\\
-\qdiv\qhatglin+\sdiv\mglin-\frac{1}{2}\qn\strglin&=\qzeta[\Psilin],\\
-\qdiv\mglin-\frac{1}{2}\sn\qtrglin+\sdiv\shatglin+\frac{1}{r}\mglin_{\qP}&=\szeta[\Psilin, \Philin].
\end{align*}
Here we have set
\begin{align*}
\qzeta[\Psilin]:&=-\frac{1}{r^2}\qhd\qexd\Big(r^3\mfZ\Psilin\Big),\\
\szeta[\Psilin, \Philin]:&=\frac{2}{r}(1-2\mu)\sdso\Big(\Psilin, \Philin\Big)+r\sn\mfZ\Psilin
\end{align*}
where $\Psilin$ and $\Philin$ are the invariant quantities associated to $\glin$.
\end{corollary}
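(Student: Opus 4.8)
The plan is to deduce the statement directly from Proposition \ref{propdecomposedlinearisedsystem}. A smooth solution $\glin$ to the equations of linearised gravity \eqref{eqnlinearisedeinsteinequations}--\eqref{eqnlorentzgauge} is, by definition of $\flin=\fgaumap(\glin)$, precisely a solution of the system appearing in the hypothesis of Proposition \ref{propdecomposedlinearisedsystem} with $\gamma=\glin$ and the $\reals$-linear map $\zeta=\fgaumap$. Applying that proposition, it therefore suffices to compute the $Q$-- and $S$--projections $\qzeta^{\glin}:=\widetilde{\fgaumap(\glin)}$ and $\szeta^{\glin}:=\fgaumap(\glin)-\widetilde{\fgaumap(\glin)}$ explicitly, and then to reorganise the seven resulting equations into the form displayed in Corollary \ref{corrdecomposedeqnslingrag}.

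First I would evaluate $\fgaumap(\glin)$ from its definition in Definition \ref{defngaugemap}. Two simplifications enter: by Theorem \ref{thmgaugeinvariantquantintermsofRWandZ} one has $\Phi(\glin)=\Philin$ and $\Psi(\glin)=\Psilin$, and since $\glin$ has vanishing projection to $l=0,1$ so does $\qhatglin_{\qP}$, whence the term $\tfrac1r(\qhatglin_{\qP})_{l=0,1}$ in the definition of $\fgaumap$ drops out. Writing $\qglin_{\qP}=\qhatglin_{\qP}+\tfrac12\qtrglin\,\qexd r$ to peel off the trace, one is left with
\[
\fgaumap(\glin)=\tfrac2r\qhatglin_{\qP}+\tfrac1r\big(\qtrglin-\strglin\big)\qexd r+\tfrac2r\mglin_{\qP}+\tfrac2r(1-2\mu)\sdso(\Psilin,\Philin)-\tfrac1{r^2}\qhd\qn\big(r^3\mfZ\Psilin\big)+r\sn\mfZ\Psilin .
\]
Splitting into $Q$-- and $S$--parts (recalling that $\qexd r$ and $\qhd\qn(r^3\mfZ\Psilin)$ are $Q$ one--forms, that $\mglin_{\qP}$, $\sdso(\Psilin,\Philin)$ and $r\sn\mfZ\Psilin$ are $S$ one--forms, and that on scalars $\qn=\qexd$) isolates the ``invariant'' contributions $\qzeta[\Psilin]=-\tfrac1{r^2}\qhd\qexd(r^3\mfZ\Psilin)$ and $\szeta[\Psilin,\Philin]=\tfrac2r(1-2\mu)\sdso(\Psilin,\Philin)+r\sn\mfZ\Psilin$ of the corollary, together with the $\glin$--dependent remainders $\tfrac2r\qhatglin_{\qP}+\tfrac1r(\qtrglin-\strglin)\qexd r$ in the $Q$--part and $\tfrac2r\mglin_{\qP}$ in the $S$--part.

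Next I would substitute these into the seven decomposed equations of Proposition \ref{propdecomposedlinearisedsystem}. The invariant pieces $\qzeta[\Psilin]$ and $\szeta[\Psilin,\Philin]$ pass straight through to the right--hand sides as stated. For the $\glin$--dependent pieces, one moves them to the left--hand side and expands the symmetrised covariant derivatives $\qn\astrosun(\cdot)$, $\sn\astrosun(\cdot)$ using the Leibniz rule and the connection coefficients of Lemma \ref{propptprframe}; the terms produced recombine with the curvature-- and $r$--weighted correction terms already present on the left in Proposition \ref{propdecomposedlinearisedsystem}, and crucially one uses the two divergence constraint equations of that proposition to trade divergences of $\glin$ for the remaining quantities. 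Carrying this bookkeeping through yields exactly the coefficients displayed in Corollary \ref{corrdecomposedeqnslingrag}; in particular, in the trace sector the $-2\qdiv\qzeta^{\glin}$ contribution cancels against the lower--order $\glin$--terms in the $\qtrgamma$--equation, leaving the homogeneous wave equation $\qbox\qtrglin+\slap\qtrglin=0$.

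The main obstacle is precisely this last reorganisation: verifying that the lower--order, $\glin$--dependent terms generated by $\fgaumap$ assemble — after use of the constraint equations — into the very particular coefficients of the corollary, and that the trace equation decouples completely. This is a lengthy but purely mechanical computation, and it is no accident that it closes: the gauge--map $\fgaumap$ was constructed in section \ref{Thegeneralisedwavegaugewithrespecttothepair} exactly so as to make it do so.
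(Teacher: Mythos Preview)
Your approach is exactly what the paper intends: Corollary \ref{corrdecomposedeqnslingrag} is stated without proof precisely because it follows by specialising Proposition \ref{propdecomposedlinearisedsystem} to $\gamma=\glin$, $\zeta=\fgaumap$, computing $\fgaumap(\glin)$ from Definition \ref{defngaugemap} (using that the $l=0,1$ term drops and that $\Phi(\glin)=\Philin$, $\Psi(\glin)=\Psilin$), and then absorbing the $\glin$-dependent pieces of $\qzeta^{\glin}$, $\szeta^{\glin}$ into the left-hand sides. One small remark: the divergence constraint equations are not actually needed to close the wave equations --- the reorganisation is a direct algebraic cancellation (for instance, in the $\qtrglin$ equation the contribution $-2\qdiv\qzeta[\Psilin]$ vanishes because $\qzeta[\Psilin]$ is, up to the $r^{-2}$ weight, the Hodge dual of an exact $Q$ one-form), so you can drop that clause from your description.
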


We also have the additional corollary regarding the decomposition of the residual pure gauge equation.
\begin{corollary}\label{corrdecomposedgaugeequation}
Let $V$ be a smooth solution to the residual pure gauge equation \eqref{eqnpuregauge} with vanishing projection to $l=0,1$. Then the projections of $V$ onto $\smoneconQ$ and $\smoneconS$ satisfy the following system of equations:
\begin{align*}
\qbox\qV+\slap\qV-\frac{2}{r}\qn\big(\qV_{\qP}\big)+\frac{2}{r}\qn r\,\sdiv\sV+\frac{2}{r^2}\qn r\,\qV_{\qP}&=0,\\
\qbox\sV+\slap\sV+\frac{1}{r^2}\ommu\sV&=0.
\end{align*}
\end{corollary}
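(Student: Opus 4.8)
The plan is to read off the two equations by projecting the $1$-form wave equation $\Box_{g_M}V=\fgaumap(\nabla\astrosun V)$ of \eqref{eqnpuregauge} onto its $Q$-- and $S$--components, exactly as in the derivation of Proposition \ref{propdecomposedlinearisedsystem} and Corollary \ref{corrdecomposedeqnslingrag}. It is worth recalling at the outset that, since $g_M$ is Ricci-flat, $\glin_V:=\nabla\astrosun V$ automatically solves the tensorial wave equation \eqref{eqnlinearisedeinsteinequations}, so that \eqref{eqnpuregauge} is nothing but the statement that $\glin_V$ also satisfies the generalised Lorenz relation \eqref{eqnlorentzgauge} --- this is how Proposition \ref{proppuregauge} is seen, and it is why the hypothesis on $V$ is exactly what is needed below.

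First I would decompose the left-hand side. Writing $V=\qV+\sV$ and substituting into the formulae for $\nabla$ and $\nabla\nabla$ acting on $1$-forms recorded in Appendix \ref{Decomposingtheequationsoflinearisedgravity}, tracing over the $Q$-- and then the $S$--directions produces the $Q$-- and $S$--parts of $\Box_{g_M}V$ as first- and second-order expressions in $\qV$ and $\sV$. The only background datum needed beyond the connection coefficients of Lemma \ref{propptprframe} is the identity $|\qexd r|^2_{\qg}=1-\mu$, which is what produces the coefficient $\tfrac{1}{r^2}\ommu$ in the $\sV$-equation.

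Next I would decompose the right-hand side. By the same $\nabla$-formulae, the $Q$--, mixed $\qmsm$-- and $S$--projections of $\nabla\astrosun V$ are again explicit first-order expressions in $\qV,\sV$ --- the $Q$-part being $\qn\astrosun\qV$ and the $S$-part $\sn\astrosun\sV+\tfrac{2}{r}\sg\,\qV_{\qP}$. Feeding these into Definition \ref{defngaugemap}, the three terms of $\fgaumap$ built from $\Psi(\nabla\astrosun V)$ and $\Phi(\nabla\astrosun V)$ drop out: by hypothesis $V$ solves \eqref{eqnpuregauge}, so $\nabla\astrosun V=\glin_V$ is a genuine solution of the equations of linearised gravity by Proposition \ref{proppuregauge}, whence Theorem \ref{thmgaugeinvariantquantintermsofRWandZ} forces $\Psi(\glin_V)=\Phi(\glin_V)=0$; and the term $\tfrac1r\big(\qhattau_{\qP}\big)_{l=0,1}$ vanishes because $V$, hence $\nabla\astrosun V$, has vanishing projection to $l=0,1$ (the map $V\mapsto\nabla\astrosun V$ being equivariant for the rotation action). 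What remains of $\fgaumap(\glin_V)$ are therefore only the first three, ``algebraic'', terms of Definition \ref{defngaugemap}, whose $Q$-- and $S$--projections are once more explicit in $\qV,\sV$.

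The final step is to equate the two decompositions and collect terms, and this bookkeeping is the part I expect to be the main obstacle. One must verify that every cross-term produced along the way --- in particular those involving $\qn_{\qP}\qV$, $\qn(\qV_{\qP})$, $\qexd r\otimes\qV_{\qP}$, $\qexd r\,\sdiv\sV$ and the $Q$-Hessian of $r$ --- either cancels against its partner or recombines into precisely the coefficient appearing in the claimed system, for which the commutation relations and integration-by-parts identities of Lemma \ref{lemmacommrelationsandidentities} together with the explicit connection coefficients of Lemma \ref{propptprframe} are the tools. As an independent check I would instead substitute $\glin=\nabla\astrosun V$ (a residual pure gauge solution, so $\Philin=\Psilin=0$ and the right-hand sides of Corollary \ref{corrdecomposedeqnslingrag} all vanish) into the already-decomposed system of that corollary, use that the $Q$-tracefree, $S$-tracefree and $S$-trace parts of $\nabla\astrosun V$ are $\qn\astrosunhat\qV$, $\sn\astrosunhat\sV$ and $2\sdiv\sV+\tfrac{4}{r}\qV_{\qP}$ respectively, and reconstruct the two $V$-equations after invoking that $\sn\astrosunhat\sdso$ (and the relevant $Q$-operators) annihilate only the $l=0,1$ modes --- though this route again requires the same order of manipulation, so I would treat the direct decomposition as primary.
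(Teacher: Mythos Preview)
Your proposal is correct and follows essentially the same route as the paper: the corollary is presented there without proof as a direct consequence of the $1$-form decomposition formulae computed in Appendix~\ref{Decomposingtheequationsoflinearisedgravity} (the expressions for $(\Box_{g_M}\omega)_\alpha$ and $(\Box_{g_M}\omega)_A$) together with the explicit form of $\fgaumap(\nabla\astrosun V)$, which is exactly the computation you outline. One minor remark: to kill the $\Phi,\Psi$-terms in $\fgaumap(\nabla\astrosun V)$ you do not actually need Proposition~\ref{proppuregauge} --- the vanishing of $\Phi(\nabla\astrosun V)$ and $\Psi(\nabla\astrosun V)$ is a purely algebraic fact valid for \emph{any} $V$ (this is the content of the final paragraph of the proof of Theorem~\ref{thmgaugeinvariantquantintermsofRWandZ}), so your argument, while correct, is slightly more circuitous than necessary.
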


\newpage

\bibliographystyle{siam}

\end{document}